\newcommand{\LamY}{\Lambda_y}
\newcommand{\LamX}{\Lambda_x}
\newcommand{\Ny}{N_y}
\newcommand{\Nx}{N_x}
\newcommand{\eX}{e_x}
\newcommand{\eY}{e_y}
\newcommand{\WY}{W^Y}
\newcommand{\QY}{Q^Y}
\newcommand{\Talpha}{T}
\newcommand{\Var}{\mathrm{Var}}
\newcommand{\T}{\top}
\newcommand{\F}{\mathrm{f}}
\newcommand{\R}{\mathbb{R}}
\newcommand{\E}{\mathbb{E}}
\newcommand{\Lam}{\Lambda}
\newcommand{\Ncal}{\mathcal{N}}
\newcommand{\iid}{\mathrm{i.i.d.}}
\def\+#1{\mathbb{#1}}
\def\*#1{\mathbf{#1}}
\def\z(#1){\mathbf{z}_{(#1)}}
\def\c(#1){\mathbf{#1}_{c}}
\def\n(#1){{#1}_{c,(i)}}
\def\s(#1){{#1}_{c,i}}
\def\onenorm(#1){\lVert#1\rVert_1}
\newcommand{\Lp}{\left(}
\newcommand{\Rp}{\right)}
\newcommand{\Ls}{\left[}
\newcommand{\Rs}{\right]}
\newcommand{\PP}{\mathbb{P}}
\newcommand{\Biggg}{\bBigg@{3.5}}
\def\Bigggl{\mathopen\Biggg}
\def\Bigggr{\mathclose\Biggg}
\newcommand{\norm}[1]{\left\lVert#1\right\rVert}
\definecolor{Blue}{rgb}{0,0,1}
\definecolor{Grey}{rgb}{.5,.5,.5}
\newtheorem{assumption}{Assumption}
\newtheorem{assumpS}{Assumption}
\newtheorem{assumpG}{Assumption}
\newtheorem{proposition}{Proposition}
\newtheorem{example}{Example}
\newtheorem{theorem}{Theorem}
\newtheorem{corollary}{Corollary}
\newcommand\tcaptab[1]{\captionsetup{position=top, font=normalsize, labelfont=bf, textfont=normalfont, justification=centering, margin=0mm, aboveskip=1mm, belowskip=0mm, labelsep=colon, singlelinecheck=false}\caption{#1}}
\newcommand\tcapfig[1]{\captionsetup{position=top, font=normalsize, labelfont=bf, textfont=normalfont, justification=centering, margin=0mm, aboveskip=2mm, belowskip=0mm, labelsep=colon, singlelinecheck=false}\caption{#1}}
\newcolumntype{L}[1]{>{\raggedright\let\newline\\\arraybackslash\hspace{0pt}}m{#1}}
\newcolumntype{C}[1]{>{\centering\let\newline\\\arraybackslash\hspace{0pt}}m{#1}}
\newcolumntype{R}[1]{>{\raggedleft\let\newline\\\arraybackslash\hspace{0pt}}m{#1}}
\setlist{nolistsep}
\setlist{noitemsep} 
\newcommand*{\thisdraft}{This draft: August 25, 2023}
\newcommand*{\firstdraft}{First draft: December 14, 2022} 
\begin{document}
 	\title{Target PCA: Transfer Learning Large Dimensional Panel Data\thanks{\scriptsize We thank Jose Blanchet, Kay Giesecke, Serena Ng, Neil Shephard and seminar and conference participants at Stanford, University of Rochester, Oxford, the NBER-NSF Time-Series Conference, California Econometrics Conference, SoFiE Conference, and NASMES for helpful comments.}}

\date{\thisdraft \\ \firstdraft}
	
		\author{Junting Duan\thanks{ \scriptsize Stanford University, Department of Management Science \& Engineering, Email: \texttt{duanjt@stanford.edu}.}
                \and
		Markus Pelger\thanks{\scriptsize Stanford University, Department of Management Science \& Engineering, Email: \texttt{mpelger@stanford.edu}.}
               \and
              Ruoxuan Xiong\thanks{ \scriptsize Emory University, Department of Quantitative Theory and Methods, Email: \texttt{ruoxuan.xiong@emory.edu}.}
	}
	
	\onehalfspacing

	\begin{titlepage}
		\maketitle
		\thispagestyle{empty}
		

		\begin{abstract}

This paper develops a novel method to estimate a latent factor model for a large target panel with missing observations by optimally using the information from auxiliary panel data sets. We refer to our estimator as target-PCA. Transfer learning from auxiliary panel data allows us to deal with a large fraction of missing observations and weak signals in the target panel. We show that our estimator is more efficient and can consistently estimate weak factors, which are not identifiable with conventional methods. We provide the asymptotic inferential theory for target-PCA under very general assumptions on the approximate factor model and missing patterns. In an empirical study of imputing data in a mixed-frequency macroeconomic panel, we demonstrate that target-PCA significantly outperforms all benchmark methods.
			
			\vspace{1cm}
			
			\noindent\textbf{Keywords:} Factor Analysis, Principal Components, Transfer Learning, Multiple Data Sets, Large-Dimensional Panel Data, Large $N$ and $T$, Missing Data, Weak Factors, Causal Inference

			\noindent\textbf{JEL classification:} C14, C38, C55, G12
		\end{abstract}
\end{titlepage}

	\begin{onehalfspacing}

		\section{Introduction}

Panel data with a large number of units and time periods are widely available in macroeconomics, finance, and many other areas of social sciences. In many cases, these panels can be well described by an approximate factor structure, that is, a small number of common factors explain a large portion of the co-movements. A common approach is to estimate latent factors with statistical methods from the panel of interest, which we refer to as target data. In the era of big data, there often exist auxiliary panels, that contain relevant information and share some common factors with the target panel. Combining the information in multiple panels can increase the efficiency of the estimated factor model for the target. Even more importantly, using auxiliary panels can identify the factors that only affect a small subset of units or are not detectable due to the missing observations in the target. The idea of using auxiliary panel data to estimate a model that is applied to a target data set is conceptually similar to transfer learning, which has been successfully used for machine learning tasks. 

There is a broad class of practical problems which our setup is relevant. A particularly important case is mixed-frequency data, which is empirically studied in this paper. For example, some macroeconomic time series are only available at a quarterly or lower frequency, for example, GDP, while other time series are available at a higher frequency, for example, stock returns. As long as some factors in stock returns are also correlated with the macroeconomic movements, they can be used to obtain higher frequency factors and imputed values for the lower frequency target panel. Another important case is casual inference on panel data with non-random treatment, where missing data in the target panel correspond to the unobserved counterfactual outcomes. The auxiliary data could be panels for another set of units or the same set of units but with different outcome variables, which are correlated with the target time series. By leveraging auxiliary data, the latent factor model can be more precisely estimated, thus improving the precision of missing data imputation in the target panel.

These examples illustrate the benefits, but also fundamental challenges, of using auxiliary data.
First, the cross-sectional dimensionality and the signal-to-noise ratio of the panels can be very different. Second, the target panel might include information not contained in the auxiliary data. Therefore, naive methods, such as simply concatenating target and auxiliary data to one large panel or using them separately, can be sub-optimal or infeasible.

We propose to estimate a latent factor model for the target data by optimally combining information from auxiliary panels with the target panel. We refer to this method as target-PCA (T-PCA). This method is broadly applicable but easy to implement: It applies principal component analysis (PCA) to a weighted average of the {second-moment matrices} of the target and auxiliary panels. {Our method can be interpreted as applying PCA to an auxiliary panel with a reward for factors that are useful for the target panel.} We show the consistency and provide the inferential theory for target-PCA under general assumptions on the latent factor model and missing observations. The asymptotic distribution is essential for two reasons: First, it provides guidance on selecting efficient weights in target-PCA; second, it provides confidence intervals for missing data imputation from the estimated latent factor model.

We show two important effects of the relative weights between target and auxiliary data in target-PCA. The first effect is the consistency effect for factor identification. This matters when neither the target nor the auxiliary data alone are sufficient to estimate all the factors that we care about. Factors cannot be consistently estimated from the target panel when their signal is too weak or when the partially observed data is insufficient, for example, when certain times are unobserved for the full panel. Suppose that these weak factors are strong in the auxiliary data, but the auxiliary data might not contain the other factors for the target. In this case, target-PCA can consistently estimate all factors, if we select the target weight for the auxiliary data at the right rate to account for the different dimensions of the panels.

The second effect is the efficiency effect in the estimation of latent factors and loadings. This effect arises when target and auxiliary data are observed with different noise levels. If the weighting in target-PCA properly accounts for the noise ratio between target and auxiliary data, then we can improve the efficiency of the estimated latent factor model. Hence, after selecting the target weight in the right order to ensure consistency, we can improve the efficiency by selecting the optimal scale of the weights. 

These two important effects show that the optimal selection of weights in target-PCA is a challenging problem that can depend on the relative factor strength, observation pattern, noise level, and dimensionality between target and auxiliary data. To address this problem, we develop the inferential theory for the estimated factors, loadings, and imputed values for a general weighting scheme in target-PCA. The inferential theory is then used as guidance for selecting the weights in target-PCA. The naive cases of concatenating the target and auxiliary data, or using only one of the panels, are special cases of our general method. We show that these special cases are generally less efficient and even lack identification in the worst case.

Our work contributes to four distinct areas. First, we contribute to the literature on large dimensional factor models by proposing a new setup where the latent factors can be jointly estimated from multiple panel data. Second, we provide a new solution to the problem of weak factors, which cannot be consistently estimated with conventional PCA estimators and require additional signals. We show how to leverage the information in supplementary panels to overcome the weak signal problem. Third, our paper complements the recent work on imputing missing data in large panels. We show that leveraging auxiliary data allows us to impute missing observations with higher precision and makes it possible to impute values that could otherwise not be imputed using only the target panel.
Last but not least, we contribute to causal inference and can estimate heterogeneous and time-dependent treatment effects for general interventions. 

Our asymptotic results are developed under the framework of an approximate latent factor structure for both target and auxiliary data, both with large cross-section and time-series dimensions. 
Our new setup of using multiple panels generalizes the existing factor modeling literature, which only uses one panel so far.
When the data is fully observed, \cite{bai2002determining} show that the factor model can be estimated with PCA applied to the covariance matrix of the data. \cite{bai2003inferential} and \cite{fan2013large} derive the consistency and asymptotic normality of the estimated factors, loadings and common components. Extensions of latent factor models with fully observed data include among others adding observable factors in \cite{bai2009panel}, sparse and interpretable latent factors in \cite{pelgerxiongsparse2020}, conditional loadings in \cite{fan2016projected}, \cite{pelger2018state}, \cite{chen2021semiparametric} and \cite{chen2022unified}, time-varying, locally estimated loadings in \cite{su2017time} and high-frequency estimation in \cite{pelger2019large}. 
The idea of increasing the efficiency by weighting panels differently is related to the GLS type weighting for PCA estimators suggested among others in \cite{breitung2011gls} and \cite{choi2012efficient}. \cite{boivin2006more} study empirically the benefits for PCA estimators when down-weighting or dropping uninformative information.  

Our work complements the recent work on estimating the latent factor model from one large panel with missing observations and developing entry-wise inferential theory. 
Our results extend the framework of \cite{xiong2019large} to the new setup of multiple panels.
Other closely related work includes the recent papers by \cite{jin2020factor}, \cite{bai2019matrix}, and \cite{cahan2021factor}. These papers differ in the algorithms to impute the missing observations, the generality of the missing patterns, and the proportion of required observed entries relative to the missing entries. \cite{bai2019matrix}, and \cite{cahan2021factor} leverage a block structure in missing data. \cite{jin2020factor} focuses on the case of missing-at-random and analyzes the EM estimator considered in \cite{stock1998}. We show that by leveraging auxiliary data, we not only increase the precision for imputing missing values, but can also accommodate more general missing patterns. The problem of missing data imputation has been actively studied in the matrix completion literature since \cite{candes2009exact}. The data imputation algorithms are largely based on rank-regularized methods with worst-case guarantees. Until recently, the entry-wise inferential theory is provided by the celebrated work of \cite{chen2019inference} under the assumption of i.i.d. sampling. In contrast, we build on the large dimensional factor modeling literature, allowing us to make progress on the inferential theory under general observation patterns.

Our proposal of using auxiliary data brings the idea of transfer learning to the estimation of latent factors. The concept of transfer learning is to apply a model estimated on auxiliary data to target data.  Transfer learning has been successfully used for machine learning tasks as surveyed by \cite{pan2009survey}. Related to our work, \cite{huang2021scaled} propose to forecast one target time series by scaling each of the auxiliary predictors by its predictive power on the target. In contrast, our target-PCA allows for a large cross-sectional dimension of the target with missing observations.    

Our work provides a complementary and novel solution for weak factor estimation. \cite{onatski2012asymptotics} has shown that conventional PCA estimators cannot consistently estimate factors that are too weak. \cite{onatski2010}, \cite{ahn2013}, \cite{bai2021approximate}, \cite{onatski2022} and others have studied the properties of weak factors and test the number of factors, with a focus on one panel.
\cite{lettau2020estimating} have demonstrated that including additional information from other moments of the data can overcome the problem for certain types of weak factors. Similarly, our novel use of additional information from the auxiliary data can upweight the weak signals, allowing for the identification and efficient estimation of weak factors. 

Our work is also complementary to the literature on mixed-frequency data imputation. A specific application of our general framework is to impute low-frequency observations in a target panel using auxiliary panels of higher frequency. One approach to dealing with data sampled at different frequencies has emerged in work by \cite{ghysels2002}, and \cite{andreou2010} using Mi(xed) Da(ta) S(ampling) (MIDAS) regressions, and its many extensions have been surveyed among others in \cite{ghysels2020}. The MIDAS regression relates the low-frequency time series that we wish to predict to observables at high and low frequencies. Our framework shares some similarities as it relates latent factors of higher frequency to a large panel of lower-frequency observations. Another alternative is to use state space models to deal with mixed frequency data - usually estimated using the Kalman filter (see \cite{bai2013} for a comparison with MIDAS). State space models are parameter-driven and impose different assumptions on the data generating process.\footnote{\cite{stock2016} discusses the issues with state space estimation of factor models with missing data.} Another complementary idea is based on interpolation arguments as for example in \cite{chow1971}. The recent work of \cite{ng2022} emphasizes the importance of residual correlation in the imputation of mixed frequency data. They propose a dynamic matrix completion approach by combining the state space setup and \cite{chow1971} type bridge regressions with latent factor models estimated from a large partially observed panel.

In simulation and empirical studies, we show the superior performance of our target-PCA method relative to benchmarks under a variety of settings.
Our comprehensive simulations compare target-PCA to the natural benchmarks of applying PCA to separate panels or simple concatenated panels. Target-PCA substantially outperforms the alternatives in- and out-of-sample under different observation patterns. Our empirical analysis shows the good performance of target-PCA for imputing missing values in popular macroeconomic panels. We demonstrate the potential of target-PCA for nowcasting macroeconomic panels by imputing unbalanced low-frequency panels with higher-frequency auxiliary data.

The rest of the paper is organized as follows. Section \ref{sec:model} introduces the model and target-PCA. Section \ref{sec: A Simplified Factor Model} illustrates the two important effects of weighting auxiliary data. Section \ref{subsec: assumptions} formalizes the assumptions on the observation pattern and approximate factor model. Section \ref{sec: results} provides the consistency and asymptotic results for our estimator, which can be used as guidance to select the weight in target-PCA. Section \ref{sec: discussion} discusses the extensions of our method, whose good performance compared to other methods is demonstrated by the extensive simulations in Section \ref{sec: simulation}. Section \ref{sec:empirical} shows the practical relevance of target-PCA through empirical examples of imputing missing entries in macroeconomic data. Section \ref{sec: conclusion} concludes the paper.

\section{Model Setup}\label{sec:model}

\subsection{Model}  \label{sec: sub_model}

We partially observe a target panel data set $Y$ with $T$ time periods and $N_y$ cross-sectional units, where both $T$ and $N_y$ are large. $Y \in \+R^{T \times N_y}$ has an approximate latent factor structure with $k_y$ common factors,
\[  
Y_{ti} = (F_y)_t^\top ({\Lam}^Y_y)_i + (\eY)_{ti}, \quad t=1,\cdots, T, \ i=1,\cdots, \Ny.
\]
Here, $Y_{ti}$ denotes the data for the $i$-th cross section at time $t$, $(F_y)_t$ is a $k_y\times 1$ vector of latent factors, $(\Lam^Y_y)_i$ is its corresponding latent factor loadings, $C_{ti}=(F_y)_t^\top (\Lam^Y_y)_i$ is the common component of $Y_{ti}$ and $(\eY)_{ti}$ is the idiosyncratic component of $Y_{ti}$. 
This factor model can also be written in a matrix notation as
\[
\underbrace{Y}_{T \times \Ny} =  \underbrace{ F_y}_{T\times k_y}  \underbrace{ {\Lam_y^Y}^\top}_{k_y\times \Ny}+  \underbrace{ \eY}_{T\times \Ny}.
\]

Our goal is to estimate the latent factor structure in $Y$. We are particularly interested in the important case, where some entries in $Y$ can be missing. In many practical applications, the panel $Y$ might not be informative enough to estimate the latent factor model. First, the factors in $Y$ can be weak, that is, affect only a small subset of cross-sectional units. In this case, it is possible that the factors cannot be separated from the noise, and conventional principal component analysis (PCA) fails to estimate them. Second, the observed data might be insufficient to estimate the full factor model, either because certain times are never observed in the panel, for example with low-frequency data, or because the missingness depends on the factor structure.

Our solution is to use additional information from auxiliary panel data $X$. Suppose we observe auxiliary data that contain relevant information and can be helpful for the estimation of latent factors in $Y$. For the exposition, we focus on the case where there is only one auxiliary panel data $X$ and study how to optimally use the information in $X$ to estimate the factors in $Y$. Our results can easily be extended to the case of multiple auxiliary panels, as discussed in Section \ref{subsec:multiple-panels}.

Suppose $X$ has an approximate latent factor structure with $T$ time periods and $N_x$ cross-section units, where $N_x$ is large. Both panels $X$ and $Y$ are observed for the same time periods. $X$ has $k_x$ factors   
\[
   \underbrace{X}_{T \times {\Nx}} = \underbrace{F_x}_{T \times k_x} \underbrace{ {\Lam_x^X}^\top}_{k_x \times {\Nx}} + \underbrace{\eX}_{T \times {\Nx}},
\]
where $(F_x)_t$ is a $k_x\times 1$ vector of latent factors, $(\Lam^X_x)_i$ is its corresponding latent factor loadings and $(\eX)_{ti}$ is the idiosyncratic error of $X$.

The auxiliary panel data $X$ can be useful when it has some factors in common with $Y$. Without loss of generality, we can use the rotation of the factor models such that the concatenated factors in $F_y$ and in $F_x$ are orthogonal. 
We denote by $F$ the union of the factors in $F_y$ and $F_x$. 
The total number of non-redundant latent factors in $F$ equals the rank of the concatenated panel matrix $[F_x,F_y]$. 
Given our normalization, $F^\top F/T$ is a diagonal matrix. 
It is notationally convenient to write both the target panel data $Y$ and auxiliary panel $X$ in terms of $F$, that is the union of all the non-redundant factors. Hence, the target $Y$ follows
\[
\underbrace{Y}_{T \times \Ny} =  \underbrace{ F}_{T\times k}  \underbrace{ \LamY^\top}_{k\times \Ny}+  \underbrace{ \eY}_{T\times \Ny},
\]
where $k$ is the total number of non-redundant latent factors in $F$, and $(\LamY)_i$ is a $k\times 1$ vector of loadings on $F$. Note that we explicitly allow the loadings to be zero for some factors. More specifically, for factors that are unique to $X$ and do not appear in $F_y$, the corresponding loadings in $\LamY$ are zero. Similarly, we express the auxiliary panel data $X$ as  
\[
  \underbrace{X}_{T \times {\Nx}} = \underbrace{F}_{T \times k} \underbrace{\LamX^{\top}}_{k \times {\Nx}} + \underbrace{\eX}_{T \times {\Nx}},
\]
where $(\LamX)_i$ is a vector of factor loadings for $F$ in $X$.

In an approximate factor model, a large part of the variation is explained by the factors, while the noise is only weakly dependent. We allow for the empirically relevant case, where not all factors in $Y$ are strong, but only affect a subset of the panel. 
Let $(\Lam_y)_{ij}$ be unit $i$'s loading for the $j$-th factor of $F$.
We refer to the $j$-th factor as a strong factor in $Y$, if it affects a large number of units in $Y$. Formally, factor $j$ is strong factor if it satisfies $N_y^{-1} \sum_{i=1}^{N_y} (\Lam_y)^2_{ij} >0$ for $N_y \rightarrow \infty$.  On the other hand, if $\sum_{i=1}^{N_y} (\Lam_y)^2_{ij}$ grows at a rate smaller than $\Ny$, then we refer to the $j$-th factor as a weak factor in $Y$. The weak factor only affects a small fraction of units, with the fraction converging to zero as $\Ny$ grows. Without loss of generality, we assume that all factors $F_x$ are strong factors in $X$. {As our objective is to estimate the factor model for $Y$, we are not interested in estimating the weak factors in $X$.} However, it is straightforward to extend our results to the case of weak factors in $X$.

For the cross-sectional dimension of $X$ and $Y$, we focus on the setting where $ \Ny/{\Nx}\rightarrow c\in [0,\infty)$. This includes two cases: $\Nx$ and $\Ny$ are of the same order (i.e., $c > 0$) and $\Nx$ is much larger than $\Ny$ (i.e., $c = 0$). When $c = 0$, we consider the finite $\Ny$ case in Section \ref{subsec:finite-dimension-target}. The setting of $\Ny/\Nx\rightarrow \infty$ is analogous and can be studied by similar arguments.

\subsection{Observation Patterns}\label{subsec:observation-pattern}

We allow the target data $Y$ to have missing observations. Let $\WY \in \{0,1\}^{T \times \Ny}$ be the observation pattern of $Y$, where $\WY_{ti}=1$ if $Y_{ti}$ is observed and zero otherwise. 
For simplicity, we assume the auxiliary data $X$ is fully observed, but our methods and results can be easily generalized to the case where $X$ is only partially observed as well.

We allow for very general observation patterns in $Y$. Whether an entry is observed or not can depend on whether other entries are observed, and on the factor model itself. The formal assumptions on the observation patterns are introduced in Section \ref{subsec: assumptions}. To provide some intuition, Figure \ref{fig: missing patterns} shows three important examples of the observation patterns that we are interested in. In the first example, entries are missing completely at random, that is, whether an entry is observed or not does not depend on whether other entries are observed or not. 

The second example is the observation pattern for control panels with staggered treatment adoption. Once a unit adopts the treatment, it stays treated afterwards, which can be modeled as missing values. This pattern is widely assumed in the literature on causal inference in panel data.

The third and attentive example shows the observation pattern of low-frequency time-series variables in $Y$, where variables in other data sets (including $X$) and applications are at a higher frequency. 
For example, assume that $Y$ is a macroeconomic panel data set, where the time series are only available at a quarterly frequency, but a downstream application requires these time series as inputs at a monthly frequency. The monthly observations in between the quarters are modeled as missing observations. Importantly, in this example, there is no information in $Y$ for the months where the quarterly variables are not reported, invalidating the existing latent factor estimation methods from partially observed $Y$ only \citep{bai2019matrix,cahan2021factor,xiong2019large}. In contrast, our proposed method can identify the latent factor values in these months by using auxiliary data of higher frequency.

\begin{figure}[t!]
    \tcapfig{Examples of observation patterns}
	\centering
	\begin{tabular}{L{5cm} L{5cm} L{5cm}}
	\includegraphics[width=0.3\textwidth,height=35mm]{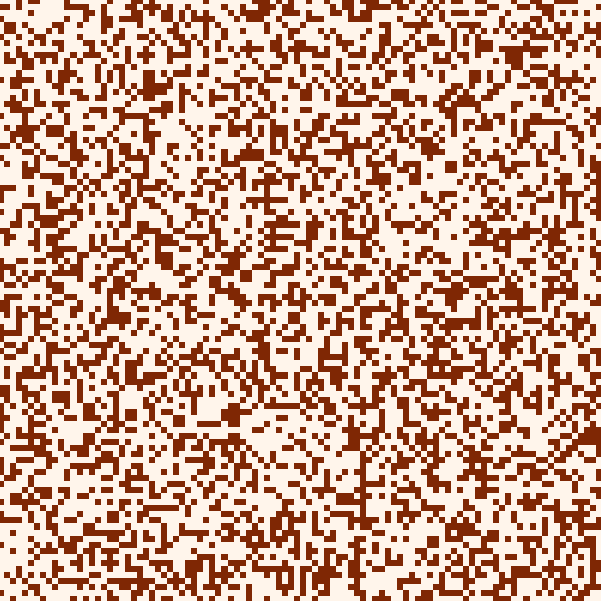} &	\includegraphics[width=0.3\textwidth, height=35mm]{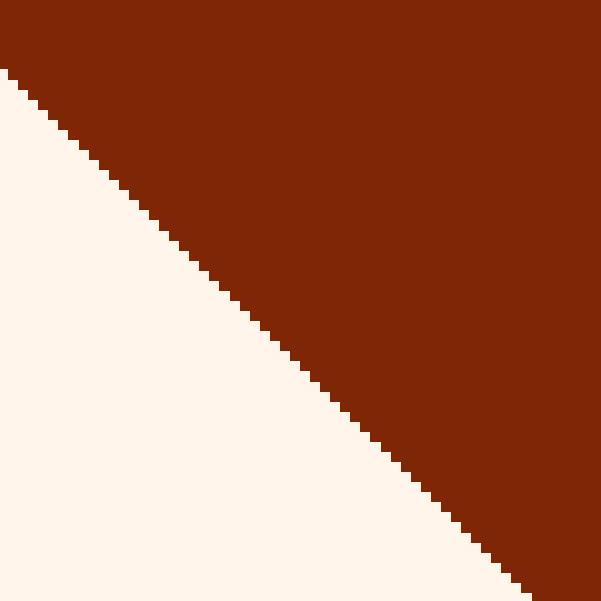} &
	\includegraphics[width=0.3\textwidth, height=35mm]{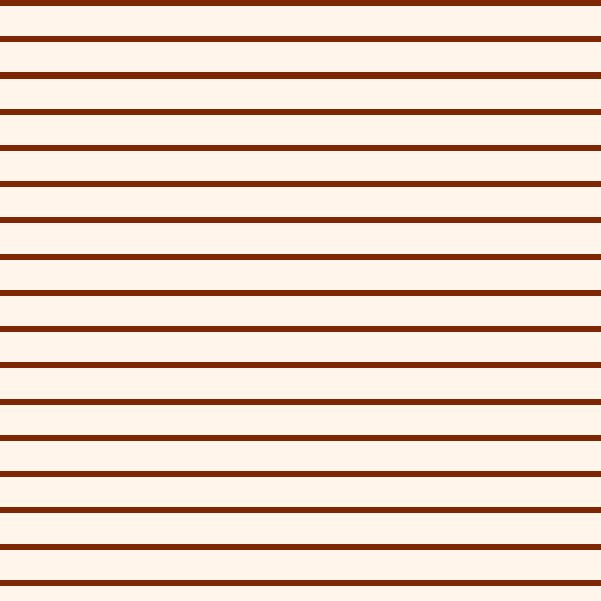} \\
	(a) Randomly missing & (b) Staggered treatment adoption & (c) Low-frequency observation
	\end{tabular}
	\floatfoot{These figures show examples of observation patterns. The entries with dark color denote observed entries, while the entries with light color denote missing entries. Each row represents the observation pattern for a time period and each column represents the observation pattern for a unit.}
	\label{fig: missing patterns}
\end{figure}

{\subsection{Main Objective and Key Challenges}

The main objective of this paper is to estimate a complete factor model, that is, all the relevant factors for the target $Y$, their loadings on the target, and the implied common component for the target, and to provide a complete inferential theory for all components of the factor model. There are three key challenges for this objective, specifically in the presence of missing observations, that invalidate the standard approaches for latent factor estimation.}
In particular, applying conventional PCA to either $Y$, $X$, or a simple concatenated panel of $X$ and $Y$ will either be infeasible, inconsistent, or inefficient.

First, the information in $Y$ may not be sufficient to estimate the factors at all time periods with conventional PCA methods. If some factors in $F_y$ are weak in $Y$, then the identification assumptions underlying PCA are violated, and a conventional PCA estimate cannot separate those factors from the noise. This problem can arise even for fully observed data. If the panel $Y$ has missing data, then this can pose additional challenges. For only partially observed data in $Y$, it is possible that even the strong factors in $F_y$ cannot be estimated for all time periods. The leading example is when $Y$ is only observed at a lower frequency, for example, annual data. In that case, we cannot infer the factor realizations for a higher frequency, for example, monthly, even for strong factors. As another complication, missingness can ``weaken'' the factor signal in the observed data, as the missing pattern can depend on the factor model and affect the effective sample size. In all these three cases, we can only learn the strong factors in $F_y$ from the target $Y$ for the time periods with sufficiently many observations. Hence, in these situations, we need to take advantage of the additional information in $X$.

Second, the auxiliary panel $X$ may not contain all the factors in $Y$, and hence applying PCA only to $X$ can fail to consistently estimate the factors for $Y$. In applications, there is no reason to assume a priori that the strong factors in a supplementary data set, that is not specifically targeted for $Y$, include all the factors $F_y$. 
Hence, we cannot ignore the information in $Y$, and need a way to combine the two panels.

Third, the dimensions of the panels $X$ and $Y$ can be very different. It is natural to assume that in many applications, the auxiliary panel $X$ is much larger in the hope that it contains some useful information for $Y$. If $\Nx$ is much larger than $\Ny$, then applying PCA to the concatenated panel of $X$ and $Y$ can only identify the factors $F_x$ in $X$, but not the factors that are unique to $Y$ and not included in $F_x$. This is because the target panel can receive a weight that is too low in a naively concatenated panel.

Therefore, the presence of these three challenges requires new methods to estimate all the factors in $Y$. Our target-PCA proposed in Section \ref{sec: estimator} below provides a solution to simultaneously address these three challenges. Target-PCA extracts the factors from $X$ that are the most useful for explaining $Y$, and efficiently combines them with the information in $Y$.

In the following, we focus on the practically relevant and challenging case where the auxiliary data $X$ does not contain all the factors in $F_y$, which is formalized in Assumption \ref{assump: factor shift}.
\begin{assumpG}[
] \label{assump: factor shift} \texttt{}  
There exist some strong factors in the target $Y$ that are not strong or not contained in the auxiliary panel $X$, that is, 
\[
{\rm rank}\Lp  \lim_{N_x \rightarrow \infty} \frac{1}{\Nx}\Lambda_{x,y_s}^\top \Lambda_{x,y_s}\Rp < k_{y_s},
\]
where $\Lambda_{x,y_s} \in \+R^{N_x \times k_{y_s}}$ denotes the loadings in $X$ that correspond to the strong factors in $F_y$, and $k_{y_s}$ denotes the number of strong factors in $Y$.
\end{assumpG}
Section \ref{sec: auxilliary is sufficient} discusses the simpler case, where the strong factors in $X$ include all factors $F_y$ for $Y$, and hence the auxiliary panel $X$ is sufficient to learn the factors in $Y$. This special case is a subset of our more comprehensive analysis.

\subsection{Estimator}
\label{sec: estimator}

We propose a novel estimator, target-PCA, to estimate the latent factors in $Y$ by combining the information in $Y$ and $X$. Then we use the estimated factor model to impute missing observations in $Y$.

To illustrate the intuition of our target-PCA estimator, we start with the case where $Y$ is fully observed. Recall, that in the conventional setup where we have only one fully-observed panel, we can estimate the latent factors that explain most variation in the data by minimizing the objective function of PCA. Our target-PCA estimator combines the PCA objective function for the auxiliary panel $X$ and the PCA objective function for the target $Y$ with a positive target weight $\gamma$ as
\begin{equation} \label{equ: target obj}
\min_{F,\LamX,\LamY} \ \  \underbrace{\sum_{i=1}^{\Nx} \sum_{t=1}^T \left(X_{ti}-F_t^\top (\LamX)_i\right)^2}_{\text{auxiliary error}}\ + \ \gamma \cdot  \underbrace{\sum_{j=1}^{\Ny} \sum_{t=1}^T \left(Y_{tj}-F_t^\top (\LamY)_j\right)^2}_{\text{target error}}.
\end{equation}
This combination aims to extract the latent factors of $Y$, by optimally weighting the information in $X$ and $Y$ through the parameter $\gamma$. 
{The target parameter $\gamma$ can be interpreted as a reward for factors that help to explain the target panel $Y$. In this sense, $\gamma$ is similar to a regularization parameter and controls the reward given to factors that reduce the error in explaining $Y$.}
As we will show next, $\gamma$ can also be interpreted as a relative weight for the information in $X$ and $Y$ in a concatenated panel.

We can write \eqref{equ: target obj} in matrix notation as
\begin{equation}\label{equ: target obj matrix}
    \min_{F,\LamX,\LamY} \text{trace}\left( \left(\begin{bmatrix} X^\T \\ \sqrt{\gamma} Y^\T \end{bmatrix} - \begin{bmatrix} \Lambda_x \\ \sqrt{\gamma} \Lambda_y \end{bmatrix} F^\T  \right) \left( \begin{bmatrix} X & \sqrt{\gamma} Y \end{bmatrix} - F \begin{bmatrix}  \Lambda_x^\T &\sqrt{\gamma} \Lambda_y^\T \end{bmatrix} \right)  \right).
\end{equation}
For exposition, we introduce the notation $Z^{(\gamma)} \in \R^{T \times ({\Nx}+\Ny)}$ as the combination of $X$ and $Y$ with target weight $\gamma$, i.e.,
\[
\begin{aligned}
Z^{(\gamma)} \coloneqq \begin{bmatrix} X & \sqrt{\gamma} Y \end{bmatrix} = F \Lam^{(\gamma)\top} + e^{(\gamma)}, \quad \text{where}\ & \Lam^{(\gamma)\top} \coloneqq \begin{bmatrix}  \Lambda_x^\T &\sqrt{\gamma} \Lambda_y^\T \end{bmatrix}  \in \R^{k \times ({\Nx}+\Ny)}, \\
&e^{(\gamma)} \coloneqq \begin{bmatrix}
\eX & \sqrt{\gamma}\eY\end{bmatrix} \in \R^{T\times ({\Nx}+\Ny)}.
\end{aligned}   
\]
With the identifying assumption $\Lam^{(\gamma)\top} \Lam^{(\gamma)}/({\Nx}+\Ny)=I_k$, we can concentrate out $F$ in the objective function \eqref{equ: target obj matrix}, and obtain $\Lam^{(\gamma)}$ from the following objective function\footnote{An alternative approach to obtain the optimal $F$, $\Lambda_x$ and $\Lambda_y$ in the objective function \eqref{equ: target obj} is based on the identifying assumption $F^{\top} F/T=I_k$. Under this identifying assumption, we concentrate out both $\Lambda_x$ and $\Lambda_y$ in the objective function \eqref{equ: target obj matrix}, and solve for $F$ from the objective function 
\begin{align}\label{equ: obj}
    \max_{F} \text{trace}\left( F^{\top} \big( Z^{(\gamma)} Z^{(\gamma)\top} \big) F \right).
\end{align}
{Appendix \ref{appendix:alter-estimator} provides more details about how to adapt this alternative approach to allow for missing observations.}
}
\begin{equation}
    \max_{\Lam^{(\gamma)}} \text{trace}\left( \Lam^{(\gamma)\top} \big(Z^{(\gamma)\top} Z^{(\gamma)} \big) \Lam^{(\gamma)} \right).
\end{equation} 
Hence, when $Y$ is fully observed, we can estimate $\Lam^{(\gamma)}$ by applying PCA to $Z^{(\gamma)\top} Z^{(\gamma)}$. Note that $Z^{(\gamma)\top} Z^{(\gamma)}/T$ is an estimator of the second moment of the weighted concatenated panel $ Z^{(\gamma)}$. We denote the second population moment of $ Z^{(\gamma)}$ as $\Sigma^{Z^{(\gamma)}}$, which equals the cross-sectional covariance matrix of $Z^{(\gamma)}$ in the case of demeaned data. Hence, target-PCA is equivalent to PCA on the weighted data $Z^{(\gamma)}$.

When $Y$ has missing observations, we can also apply PCA to an estimator of $\Sigma^{Z^{(\gamma)}}$, but we need to account for the missing observations in $Z^{(\gamma)}$ in the estimation of $\Sigma^{Z^{(\gamma)}}$.
Conceptually, we can use the time periods when both cross-sectional units $i$ and $j$ are observed to estimate $\Sigma^{Z^{(\gamma)}}_{ij}$, where $\Sigma^{Z^{(\gamma)}}_{ij}$ is the $(i,j)$-th entry in $\Sigma^{Z^{(\gamma)}}$. Formally, we introduce a new observation matrix $W^Z$, where $W_{ti}^Z = 1$ if $Z_{ti}^{(\gamma)}$ is observed and $0$ otherwise. For any two cross-sectional units $i$ and $j$, let $Q_{ij}^Z = \{t: W^Z_{ti} = W^Z_{tj} = 1\}$ be the set of time periods where both $i$ and $j$ are observed. We use the time periods in $Q_{ij}^Z$ to estimate $\Sigma^{Z^{(\gamma)}}_{ij}$
\[
    \tilde{\Sigma}^{Z^{(\gamma)}}_{ij} = \frac{1}{|Q_{ij}^Z|} \sum_{t \in Q_{ij}^Z} Z_{ti}^{(\gamma)}Z_{tj}^{(\gamma)}, \quad i,j=1, \cdots, {\Nx}+\Ny\, .
\]

With the identifying assumption $\tilde{\Lam}^{(\gamma)\top} \tilde{\Lam}^{(\gamma)}/({\Nx}+\Ny)=I_k$, the estimated loadings from PCA, denoted as $\tilde{\Lam}^{(\gamma)}$, are $\sqrt{{\Nx}+\Ny}$ times the eigenvectors of the $k$ largest eigenvalues of $\tilde{\Sigma}^{Z^{(\gamma)}}/({\Nx}+\Ny)$, that is,
\[
    \frac{1}{{\Nx}+\Ny} \tilde{\Sigma}^{Z^{(\gamma)}} \tilde{\Lam}^{(\gamma)} = \tilde{\Lam}^{(\gamma)} \tilde{D}^{(\gamma)}\, ,
\]
where $\tilde{D}^{(\gamma)}$ is the diagonal matrix of the largest $k$ eigenvalues of $\tilde{\Sigma}^{Z^{(\gamma)}}/({\Nx}+\Ny)$. In this step, we simultaneously estimate the factor loadings in $X$ and $Y$. 

In a second step, we regress the observed $Z^{(\gamma)}$ on $\tilde{\Lam}^{(\gamma)}$ to estimate the factors, that is, 
\[
\tilde{F}_t = \Bigg( \sum_{i=1}^{{\Nx}+\Ny} W^Z_{ti}\tilde{\Lam}_i^{(\gamma)} \tilde{\Lam}_i^{(\gamma)\top}  \Bigg)^{-1} \cdot \Bigg(  \sum_{i=1}^{{\Nx}+\Ny} W^Z_{ti}  Z_{ti}^{(\gamma)}\tilde{\Lam}_i^{(\gamma)}  \Bigg)\, , \quad t=1,\cdots, T.
\]
This step only uses the units that are observed at time $t$, and runs a weighted regression of the observed units' outcomes on the estimated loadings. The weight for units in $Y$ is $W_{ti}^Y \cdot \gamma$, while the weight for units in $X$ is $1$.

 Next we estimate the common component of $Y$ with the plug-in estimator $\tilde{C}_{ti} = \tilde{F}_t^\top (\tilde{\Lam}_y)_i$, and use the estimated common components to impute the missing entries in $Y$. Note that our proposed estimator for the latent factor model in the partially observed $Y$ is the same type of estimator as in \cite{xiong2019large}. The key distinction is that our estimator is applied to the weighted concatenated panel $Z^{(\gamma)}$, while the estimator in \cite{xiong2019large} is applied to $Y$ only.\footnote{{It is possible to use other methods to estimate the factor model when the missing pattern has specific structures. For example, we can use \cite{jin2020factor} when observations are missing-at-random or use \cite{bai2019matrix} and \cite{cahan2021factor} for block-missing. The asymptotic results for alternative methods would require a case-by-case analysis, but we expect the general insights for the choice of the target weight to stay the same.}}

As shown in the original objective function \eqref{equ: target obj}, the key of target-PCA is the target weight $\gamma$. There are three special cases of target-PCA with different values of $\gamma$: First, when $\gamma=0$, target-PCA degenerates to applying PCA to $X$. Second, when $\gamma=\infty$, target-PCA degenerates to applying PCA to $Y$. Third, when $\gamma = 1$, target-PCA is equivalent to applying PCA to the concatenated panel ${Z}^{(1)} = \big[X \,\,\, Y \big]$. The key problem is to select the target weight appropriately. Intuitively, the target weight $\gamma$ should not be too small in order to ensure that the selected factors are relevant for $Y$, and $\gamma$ cannot be too large in order to take advantage of the supplementary information in $X$.

The asymptotic distribution results for target-PCA require novel derivations and are not simply an application of \cite{xiong2019large} applied to $Z^{(\gamma)}$. A key challenge is that the target weight can go to infinity with the sample size, which would lead to exploding moments in some of the components of $Z^{(\gamma)}$ and hence violate the assumptions in existing frameworks. Therefore, we need to carefully consider the separate components of $Z^{(\gamma)}$, while allowing for joint asymptotics of $\gamma$ growing with the sample size.

The main insight of target-PCA is that properly weighting the covariance matrices of multiple panels allows for the consistent and efficient estimation of latent factors. As will be shown in the next sections, the optimality of combining auxiliary data boils down to two aspects: (a) the detection of weak signals in the target panel; and (b) the efficient estimation of the factor structure. Our target-PCA estimator simultaneously achieves them in one step through the target weight $\gamma$. In fact, these two aspects correspond to two important effects of the target weight $\gamma$, which will be thoroughly discussed in Section \ref{sec: A Simplified Factor Model}.
		
		\section{Two Fundamental Effects of Target Weight $\gamma$} \label{sec: A Simplified Factor Model}

In this section, we illustrate and highlight the two important effects of the target weight $\gamma$ in target-PCA: the consistency effect in factor identification and the efficiency effect in the estimation of factors and loadings. It is crucial to account for these two effects when choosing $\gamma$ in target-PCA. In this section, we use simple models to explain these fundamental effects, and then provide the results for the general case in Section \ref{sec: results}.

\subsection{Effect 1: Consistency Effect of Target Weight $\gamma$}\label{subsec: consistency effect}

The first important effect is the consistent estimation of the factors. This effect is relevant when some factors in the observed $Y$ are weak and cannot be identified by applying PCA to $Y$. For this case, if the weak factors in $Y$ are strong factors in $X$, then it is possible to identify these factors with target-PCA. Specifically, if we choose $\gamma = r\cdot N_x/N_y$ for some positive constant $r$, then target-PCA can consistently estimate these weak factors. The intuition is as follows.

Target-PCA essentially estimates the factors from a matrix that combines $X X^\T$ and $\gamma \tilde{Y} \tilde{Y}^\T$, where $\tilde{Y} = Y \odot W^Y$ replaces the missing entries in $Y$ with $0$. Suppose that the auxiliary panel $X$ is fully observed and the number of common observations between any two units of the target $Y$ is proportional to $T$ (the main setting studied in this paper). Then the top eigenvalues in $X X^\T$ and $\tilde{Y} \tilde{Y}^\T$ are at the order of $N_x$ and $ N_y$, respectively. If we select $\gamma = r\cdot N_x/N_y$ for some positive constant $r$, then the top eigenvalues in $X X^\T$ and $\gamma \tilde{Y} \tilde{Y}^\T$ are at the same order, and equivalently, the factor strengths of the strong factors in $X$ and in $Y$ are the same in the combined matrix of target-PCA. As the weak factors in $\tilde{Y}$ are strong factors in $X$, target-PCA can identify both the strong and weak factors in $\tilde{Y}$, leading to the consistency effect.

Next, we flesh out the intuition with a toy example. This two-factor example has the following key elements: (a) The dimension of the panel $X$ is much larger than $Y$, i.e., $N_y / N_x \rightarrow 0$. (b) In panel $Y$, factor 1 is strong, but factor 2 is weak. Hence, we can only identify factor 1 from panel $Y$. (c) In panel $X$, factor 2 is strong, but factor 1 has zero exposure to the units. Hence, we can only identify factor 2 from panel $X$. Specifically, the loadings in $Y$ follow 
\begin{align*}
 & \text{first factor loadings: }  (\LamY)_{i1}  \overset{\iid}{\sim} (0,\sigma_{\LamY}^2) \text{ for all } i\, ,\\
 & \text{second factor loadings: }  (\LamY)_{i2}  \overset{\iid}{\sim} (0,\sigma_{\LamY}^2) \text{ for } i < \Ny^{1/2}, \quad ({\LamY})_{i2} = 0 \text{ for } i \geq \Ny^{1/2}\, ,
\end{align*}
and loadings in $X$ follow
\begin{align*}
 & \text{first factor loadings: }  (\LamX)_{11} = \cdots = ({\LamX})_{{\Nx},1} = 0\, , \\
 & \text{second factor loadings: }  (\LamX)_{i2}  \overset{\iid}{\sim} (0,\sigma_{\LamX}^2) \text{ for all } i\, .
\end{align*}
For simplicity, both the factors and idiosyncratic errors are drawn independently as $F_{t1}, F_{t2} \overset{\iid}{\sim} (0,\sigma_F^2), (\eX)_{ti}\overset{\iid}{\sim} (0,\sigma_{\eX}^2),$ and $(\eY)_{ti} \overset{\iid}{\sim} (0,\sigma_{\eY}^2)$ for all $i$ and $t$. 
Furthermore, the factors and loadings have bounded fourth moments, and errors have bounded eighth moments. We assume that we can observe all the entries in $Y$. 
As we show in Section \ref{sec: results}, the insights from this simplified setting carry over to our general model with missing observations.\footnote{Note that the observation pattern affects the asymptotic variance, but not the convergence rate of the estimated factor model or the top eigenvalues of $\tilde{Y} \tilde{Y}^\T$, as shown in \cite{xiong2019large}. Therefore, the intuition for this toy example carries over to other observation patterns.}

In this example, separate PCA on either $X$ or $Y$ is not able to consistently estimate both factors. However, target-PCA can identify both factors with an appropriately chosen $\gamma$. Without error terms and missing observations, target-PCA estimates factors from
\[
\begin{aligned}
\frac{1}{\Nx+\Ny} Z^{(\gamma)} Z^{(\gamma)\T}
& =\frac{1}{\Nx+\Ny} \begin{bmatrix} X & \sqrt{\gamma} Y \end{bmatrix} \begin{bmatrix} X^\T \\ \sqrt{\gamma} Y^\T \end{bmatrix}  =\frac{1}{\Nx+\Ny} \left[ XX^\top +  \gamma Y Y^\top\right] \\
&= \begin{bmatrix}F^{(1)} & F^{(2)} \end{bmatrix} \Bigggl(\underbrace{\begin{bmatrix}\frac{\gamma \cdot \Ny \sigma_{\LamY}^2}{\Nx+\Ny}  & \\ & \frac{\Nx\sigma_{\LamX}^2+ \gamma \cdot \Ny^{1/2}\sigma_{\LamY}^2}{\Nx+\Ny} \end{bmatrix}}_{\coloneqq \hat\Sigma_{\Lam,t}^{(\gamma)}}  + o_p(1) \Bigggr) \begin{bmatrix} F^{(1)\top}   \\F^{(2)\top}   \end{bmatrix},
\end{aligned}    
\]
where $F^{(1)} \in \+R^{T\times 1}$ and $F^{(2)} \in \+R^{T\times 1}$ denote the vector of the first and second factors, respectively. As $\Nx, \Ny\rightarrow \infty$, $\hat\Sigma_{\Lam,t}^{(\gamma)}$ converges to
\begin{align*}
\lim_{\Nx, \Ny\rightarrow\infty} \hat\Sigma_{\Lam,t}^{(\gamma)}&=\lim_{\Nx, \Ny\rightarrow\infty} \frac{\Nx}{\Nx+\Ny}\Bigggl(\ \underbrace{\begin{bmatrix} 0 & 0  \\ 0 & \sigma_{\LamX}^2 \end{bmatrix}}_{\Sigma_{\LamX}} + \gamma \frac{\Ny}{\Nx} \cdot \underbrace{\begin{bmatrix} \sigma_{\LamY}^2  & 0\\ 0 & \frac{\sigma_{\LamY}^2}{\Ny^{1/2}}  \end{bmatrix}}_{{\Sigma_{\LamY},t}}\ \Bigggr) \\
&= \lim_{\Nx, \Ny\rightarrow\infty}\begin{pmatrix}  \gamma\frac{\Ny}{\Nx}\cdot \sigma_{\LamY}^2 & 0\\ 0 & \sigma_{\LamX}^2+\gamma\frac{\Ny^{1/2}}{\Nx}\cdot \sigma_{\LamY}^2 \end{pmatrix},
\end{align*}
where $\Sigma_{\LamX}$ and $\Sigma_{{\LamY},t}$ are the second-moment matrices of the loadings of $X$ and ${Y}$.\footnote{We use subscript $t$ in $\Sigma_{{\LamY},t}$ to account for the case where there are missing observations in $Y$ and the second-moment loading matrices of $\tilde{Y}=Y\odot W^Y$ is time-varying. } \footnote{{If $\Ny/\Nx \rightarrow \infty$, we can estimate the second moment matrix $Z^{(\gamma)} Z^{(\gamma)\T}/\Ny$. Selecting $\gamma = r \cdot \Nx/\Ny$ for a positive constant $r$ can still ensure that $\hat\Sigma_{\Lam,t}^{(\gamma)}$ converges to a full rank matrix, which allows for the consistent estimation of both factors. }}

The key idea for selecting $\gamma$ is to obtain full rank for the limit of the second-moment matrix $\hat\Sigma_{\Lam,t}^{(\gamma)}$. Choosing $\gamma = r\cdot N_x/N_y$ with some positive constant $r$ ensures that both eigenvalues in the limit of $\hat\Sigma_{\Lam,t}^{(\gamma)}$ are of the same order, and therefore, both factors can be identified from $ Z^{(\gamma)} Z^{(\gamma)\T}/(\Nx+\Ny)$. If $\gamma$ is not chosen at this rate, for example, $\gamma = 1$, then only the second factor can be identified as a strong factor in the concatenated panel. We formalize the above discussion in the following proposition, which is a special case of the general Theorem \ref{thm: consistency for loadings}, {and provide further details in Section \ref{sec: results}.}

\begin{proposition} \label{prop: consistency example}
Under the data generating process and observation pattern described in this section, let $\delta_{\Ny,T} = \min(\Ny,T)$ and assume that $\Ny/\Nx \rightarrow 0$. Target-PCA with $\gamma = r\cdot {\Nx}/{\Ny}$ for some positive scaling constant $r$ can consistently estimate the latent factors. As $T, {\Nx}, {\Ny} \rightarrow \infty$, there exists some rotation matrix $H$ such that
\[
\delta_{{\Ny},T} \left( \frac{1}{T} \sum_{t=1}^T \left\|\tilde{F}_t - H F_t \right\|^2 \right) = O_p(1).
\]
If $\gamma = O(1)$, then $\tilde{F}_t$ can be inconsistent.
\end{proposition}

\subsection{Effect 2: Efficiency Effect of Target Weight $\gamma$}\label{subsec: efficiency effect}

After we have selected $\gamma$ in the right order to ensure consistency, we can improve the efficiency by selecting the optimal scale of $\gamma$. We call this the efficiency effect of $\gamma$. The essence of this effect is to use the target weight $\gamma$ to balance the idiosyncratic noise levels between $X$ and $Y$, thus achieving the smallest asymptotic variance in the estimation of factors, loadings, and common components. 

In this section, we illustrate the efficiency effect of $\gamma$ through a simple one-factor model, where factor identification is not a concern and where we can focus on the efficiency effect. The key element is that the idiosyncratic noise levels in $X$ and $Y$ are different. 

More specifically, we assume that $X$ and $Y$ contain the same latent factor. Factors, loadings and idiosyncratic errors are drawn independently from $F_t \stackrel{\iid}{\sim} (0,\sigma_F^2), ({\LamY})_i\stackrel{\iid}{\sim}(0,\sigma_{\LamY}^2), (\LamX)_i\stackrel{\iid}{\sim}(0,\sigma_{\LamX}^2)$, $(\eX)_{ti}\stackrel{\iid}{\sim}(0,\sigma_{\eX}^2),$ and $({\eY})_{ti}\stackrel{\iid}{\sim}(0,\sigma_{\eY}^2)$. Furthermore, the factors and loadings have bounded fourth moments, and errors have bounded eighth moments. Suppose all entries in $Y$ are missing at random with observation probability $\PP(\WY_{ti}=1)=p$ and the number of units in $X$ and $Y$ are at the same order, i.e., $\Ny/\Nx \rightarrow c$ for some $c$ bounded away from $0$. As shown in our general model, the main conclusions do not depend on these specific assumptions.

In this setting, Proposition \ref{prop: efficiency example} provides the asymptotic distribution of the common components of $Y$. The result is a special case of the general Theorem \ref{thm: asymptotic distribution} in Section \ref{sec: results}. We optimize $\gamma$ by minimizing the asymptotic variance.

\begin{proposition}\label{prop: efficiency example}
Under the data generating process and observation pattern described in this section, let $\delta_{\Ny, T} = \min(\Ny,T)$ and suppose $\Ny / \Nx\rightarrow c\in (0,\infty)$. As $T,{\Nx},{\Ny} \rightarrow \infty$, the asymptotic distribution of the estimated common component of $Y$ for any $i$ and $t$ is 
\[
\sqrt{\delta_{\Ny, T}} (\Sigma_{C,ti}^{(\gamma)})^{-1/2} \Lp\tilde{C}_{ti} - C_{ti} \Rp \overset{d}{\rightarrow} \Ncal(0,1),
\]
where 
\begin{align*}
\Sigma_{C,ti}^{(\gamma)} = & \frac{\delta_{{\Ny}T}}{T}\frac{\sigma_{\eY}^2}{p\sigma_F^2}F_t^2 + \frac{\delta_{{\Ny}T}}{T}\left(\frac{1}{p}-1\right) \sigma_F^{-4}\Var(F_t^2)({\LamY})_i^2F_t^2 \\
& + \frac{\delta_{{\Ny}T}}{{\Ny}}({\LamY})_i^2\left(\sigma_{\LamX}^2 + \gamma\frac{{\Ny}}{{\Nx}} p\sigma_{\LamY}^2\right)^{-2} \left(\frac{{\Ny}}{{\Nx}}\sigma_{\LamX}^2\sigma_{\eX}^2 + \gamma^2\frac{\Ny^2}{\Nx^2} p\sigma_{\LamY}^2\sigma_{\eY}^2 \right).
\end{align*}
The optimized $\gamma$ that minimizes $\Sigma_{C,ti}^{(\gamma)}$ is $\gamma^* = \sigma_{\eX}^2/\sigma_{\eY}^2$ for any $i$ and $t$.
\end{proposition}
The optimized $\gamma^*$ minimizes $\Sigma_{C,ti}^{(\gamma)}$, which is the asymptotic variance of the common components as a function of $\gamma$. In this case, the optimized $\gamma^*$ is the ratio of the variances of idiosyncratic errors $\sigma_{\eX}^2$ and $\sigma_{\eY}^2$. We up-weight the target panel $Y$ if $\sigma_{\eY}^2 <\sigma_{\eX}^2$; otherwise, we down-weight the target panel $Y$.  Interestingly and counter-intuitively, when observations are missing at random, the optimized $\gamma^*$ is the same for all $i$ and $t$, and does not depend on any of the following parameters, even though $\Sigma_{C,ti}^{(\gamma)}$ is a function of these parameters: the observation probability $p$, the value of factors $F_t$ and loadings $\Lambda_i$, variance of factors $\sigma_F^2$ and loadings $\sigma_{\LamX}^2$ and $\sigma_{\LamY}^2$, and number of cross-section units $\Nx$ and $\Ny$. The intuition in this setting is the same as that for choosing the optimal weighting matrix for a weighted least squares (WLS) estimator. Specifically, suppose the heteroscedasticity of residuals does not depend on values of covariates as in this setting; then the optimal weighting matrix in WLS is proportional to the inverse variance of residuals and does not directly depend on the covariates and model parameters.

\begin{figure}[t!]
    \tcapfig{Relative MSE for different $\sigma_{\eX}/ \sigma_{\eY}$ and ${\Nx}/{\Ny}$}
     \centering
     \begin{subfigure}[b]{0.32\textwidth}
         \centering
         \includegraphics[width=\textwidth]{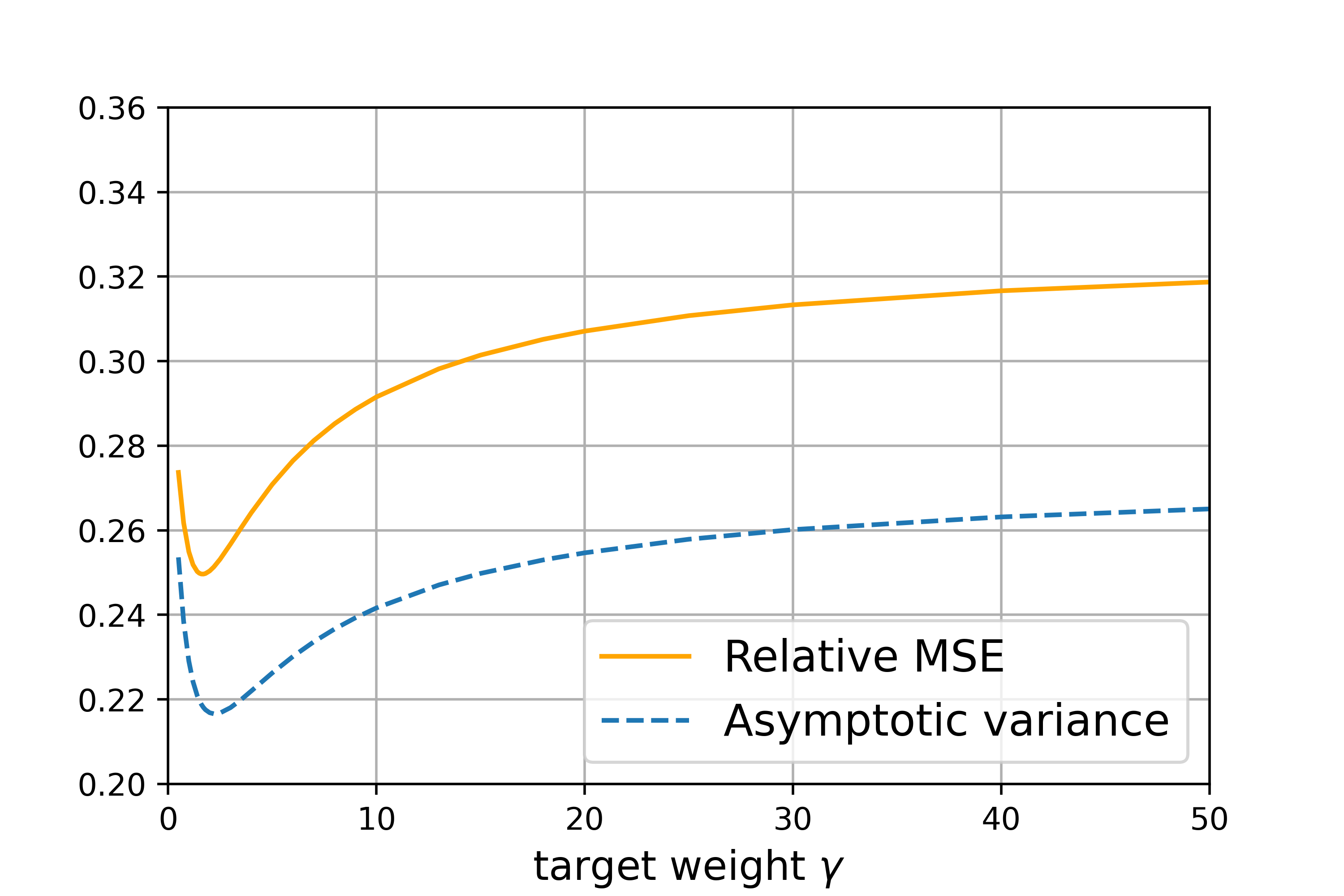}
         \includegraphics[width=\textwidth]{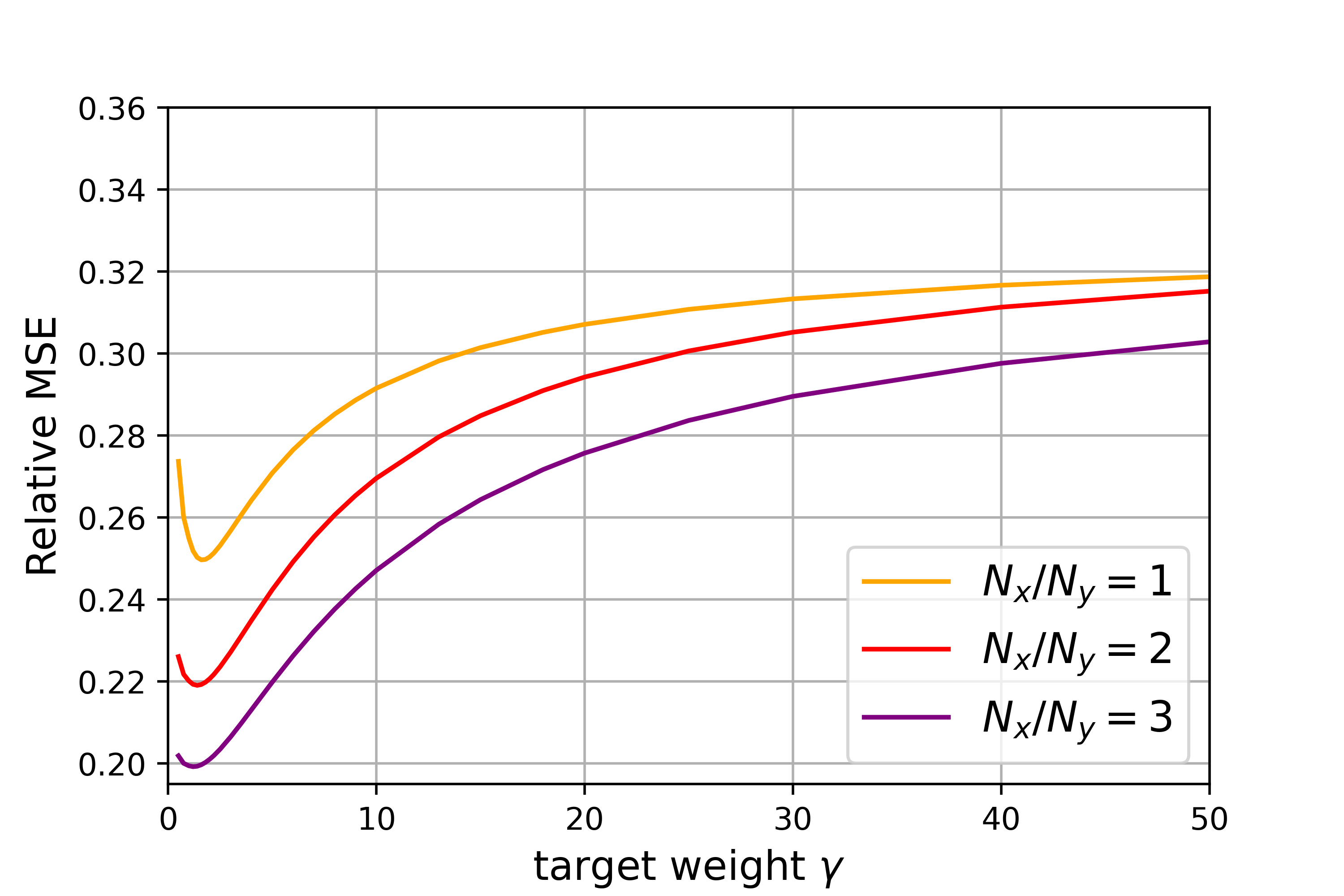}
         \caption{$\sigma_{\eX}=6$}
         \label{fig: sub_sigma_e=6}
     \end{subfigure}
     \hfill
     \begin{subfigure}[b]{0.32\textwidth}
         \centering
         \includegraphics[width=\textwidth]{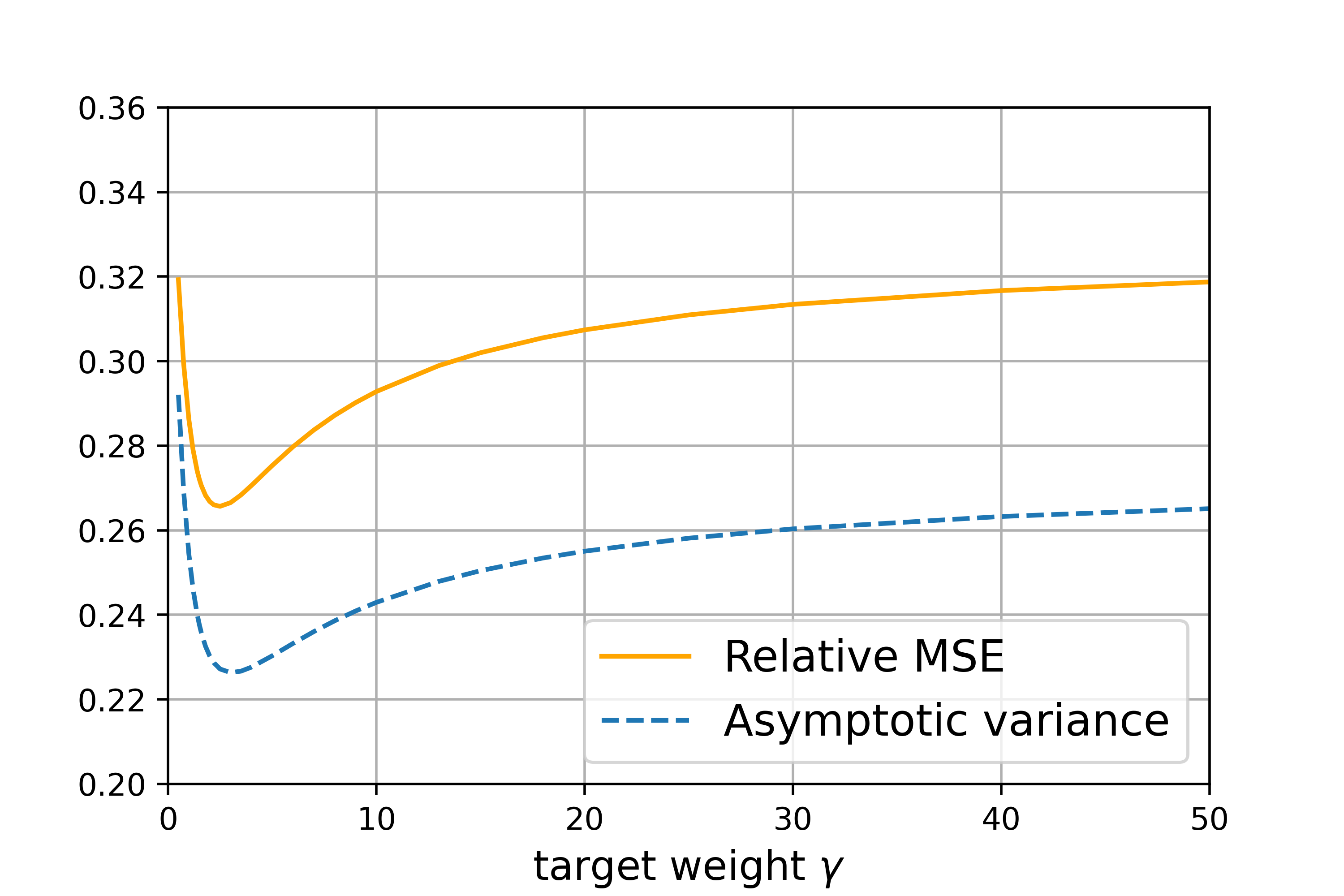}
         \includegraphics[width=\textwidth]{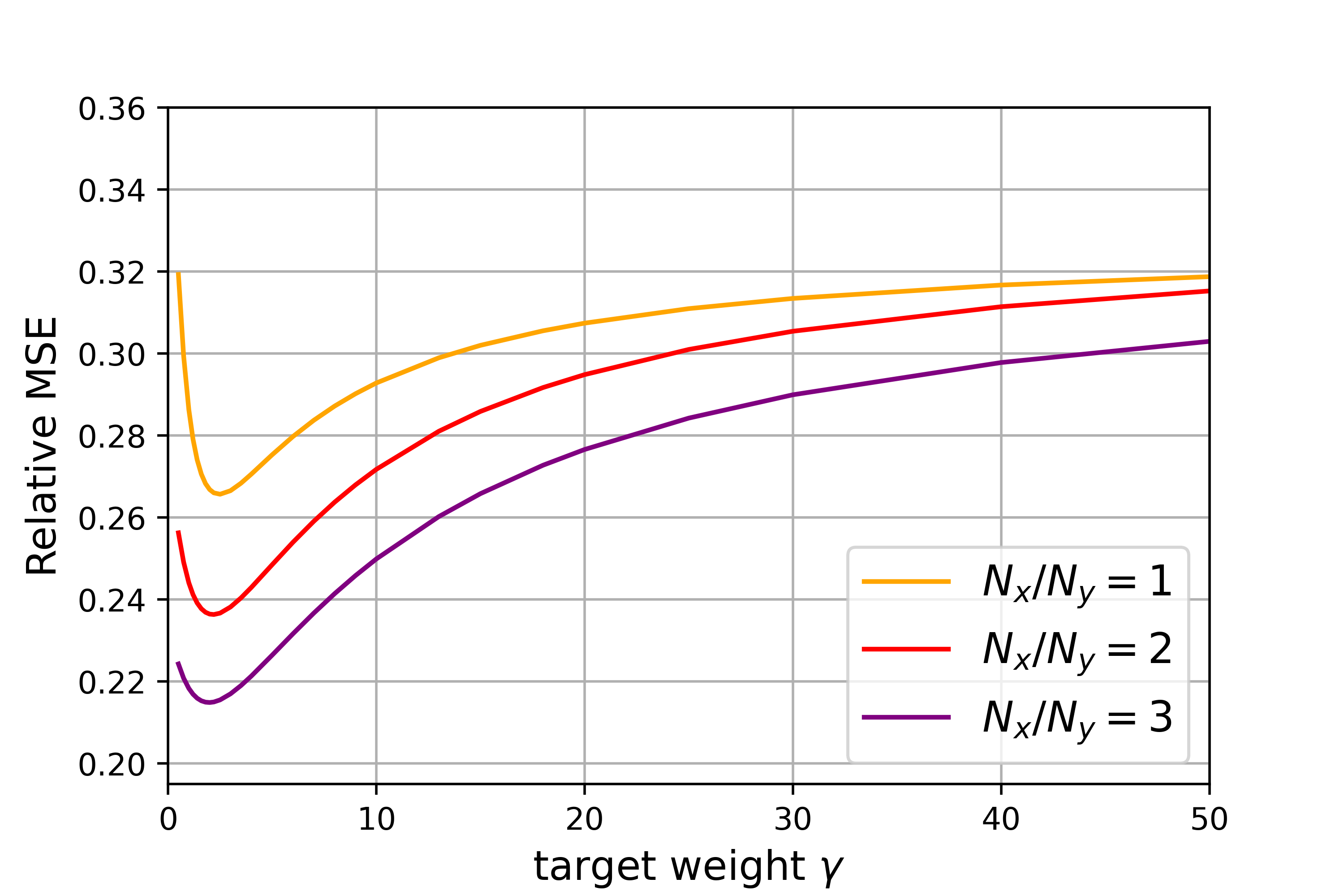}
         \caption{$\sigma_{\eX}=7$}
         \label{fig: sub_sigma_e=7}
     \end{subfigure}
     \hfill
     \begin{subfigure}[b]{0.32\textwidth}
         \centering
         \includegraphics[width=\textwidth]{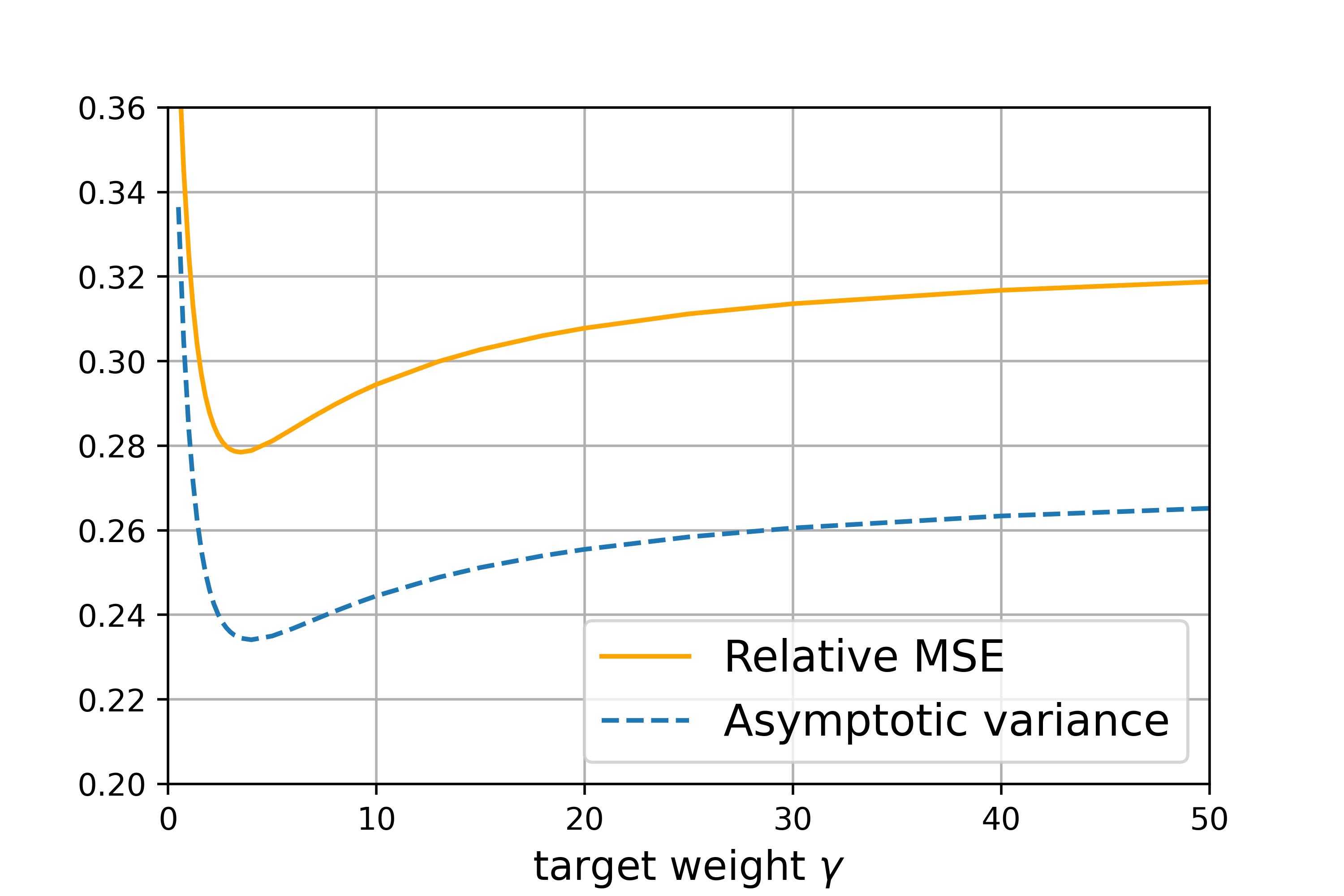}
         \includegraphics[width=\textwidth]{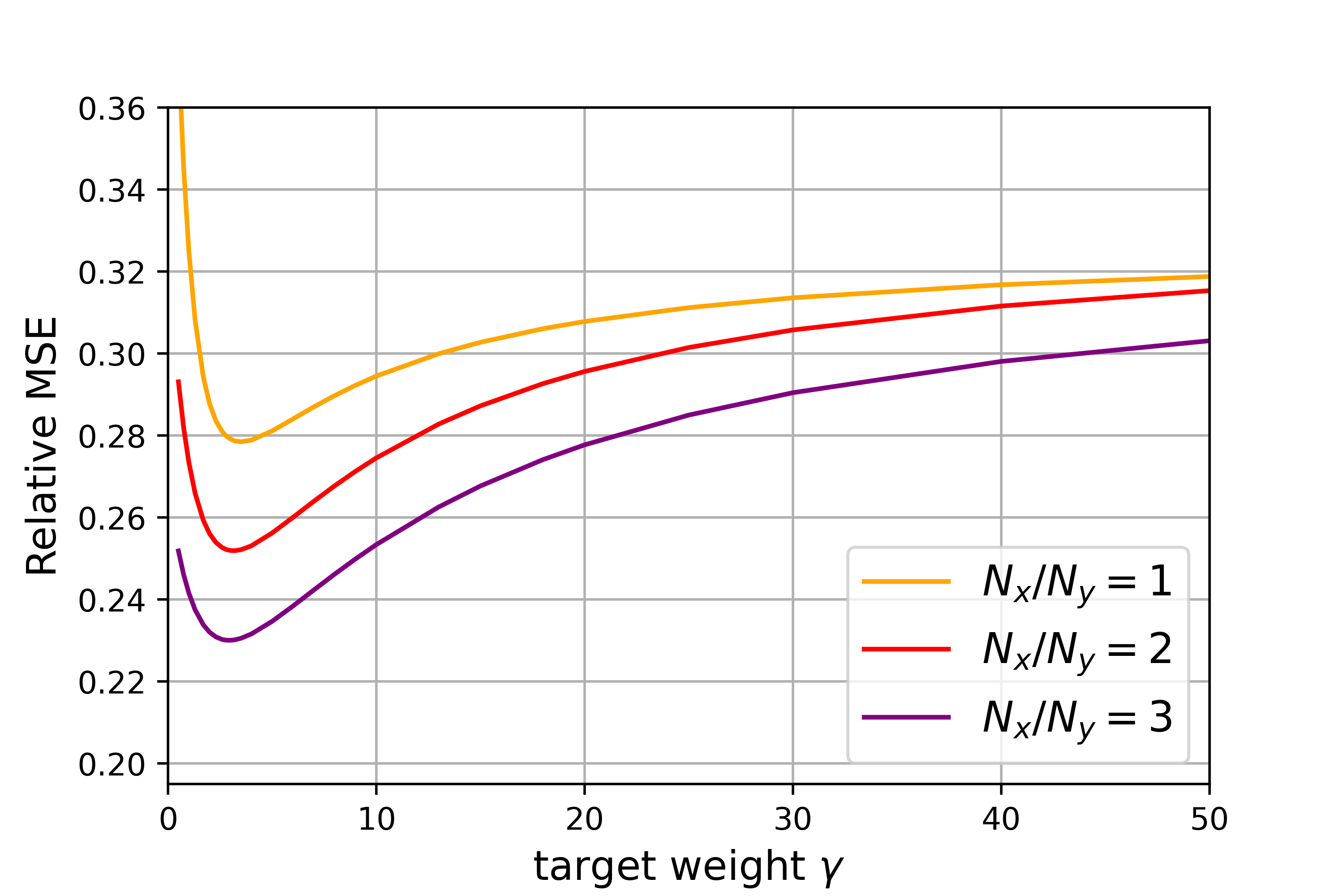}
         \caption{$\sigma_{\eX}=8$}
         \label{fig: sub_sigma_{\eX}=8}
     \end{subfigure}
    \label{fig: shape of gamma curve 2}
    \floatfoot{These figures show the relative MSE of the estimated common components of all entries in $Y$. The data generating process and the observation pattern follow the example described in Section \ref{subsec: efficiency effect}. Specifically, the factors, loadings, and errors are generated from normal distributions with mean zero and $\sigma_F=\sigma_{\LamY}=\sigma_\Lam=1$ and $\sigma_{\eY}=4$, ${\Ny}=T=200$, and observation probability $p=0.6$. For the subplots in the first row, we set ${\Nx}=200$. We run 200 simulations for each setup.}
\end{figure}
 
We illustrate the efficiency effect of this simple model in a simulation. Figure \ref{fig: shape of gamma curve 2} shows the relative mean squared error (relative MSE) of the estimated common components and the theoretical asymptotic covariance $\Sigma_{C,ti}^{(\gamma)}$ averaged over all entries in $Y$ as a function of $\gamma$. We consider different combinations of the noise ratio $\sigma_{\eX}/\sigma_{\eY}$ and the dimension ratio $\Nx/\Ny$ for this one-factor example.\footnote{Section \ref{sec: simulation} presents a comprehensive simulation analysis and provides further details for the simulation setup, including the formal definition of relative MSE.} 

There are three main takeaways from Figure \ref{fig: shape of gamma curve 2}. First, and most importantly, the target weight $\gamma$ that minimizes the relative MSE equals the optimized $\gamma^*$ that minimizes the asymptotic variance. Hence, we can use the inferential theory as a guidance for selecting the target weight $\gamma$. We will further elaborate on the optimal selection in Section \ref{sec: results}. Second, we confirm that the optimized $\gamma^*$ only varies with $\sigma_{\eX}/\sigma_{\eY}$, but not ${\Nx}/{\Ny}$ in this missing at random example. Third, the effect of optimizing $\gamma$ is larger if panel $Y$ and $X$ are more different in terms of noise and dimension. In all cases, PCA on $Y$ only (target PCA with $\gamma = \infty$) or on $X$ only (target PCA with $\gamma = 0$) is not optimal. The difference in relative MSE between target PCA with the optimized $\gamma^\ast$ and the corner case of PCA on $Y$ increases with $\Nx/\Ny$ and $\sigma_{\eY}/\sigma_{\eX}$ (i.e., the information in $X$ is more useful). The difference in relative MSE between target PCA with the optimized $\gamma^\ast$ and the corner case of PCA on $X$ only increases with $\Ny/\Nx$ and $\sigma_{\eX}/\sigma_{\eY}$ (i.e., the information in $X$ is less useful).         

In summary, the consistency and efficiency effects combined provide the protocol for selecting $\gamma$. First, we select the rate of $\gamma$ as $\gamma = r\cdot {\Nx}/{\Ny}$ to ensure consistency under general conditions. In the second step, we select the scaling positive constant $r$ to minimize the asymptotic variance of the common component to ensure an efficient estimation.

\section{Assumptions}\label{subsec: assumptions}

In this section, we lay out the assumptions on the observation patterns of $Y$ and the approximate factor models on both $X$ and $Y$. First, we introduce the assumptions on the observation pattern.

\begin{assumpG}[Observation pattern]\label{assump: obs pattern} \texttt{} 
\begin{enumerate}
    \item The observation matrix $\WY$ is independent of the factors $F$ and idiosyncratic errors $e_y$.
    \item Let $\QY_{ij}= \{t: \WY_{ti}=\WY_{tj}=1\}$ be the set of time periods when both units $i$ and $j$ of $Y$ are observed. For any given observation matrix $\WY$, there exists a positive constant $q$ such that ${|\QY_{ij}|}/{T}\geq q>0$ for all $i,j$. Let $q_{ij} = \lim_{T\rightarrow \infty}{|\QY_{ij}|}/{T}$ and $q_{ij,hl} = \lim_{T\rightarrow \infty}{|\QY_{ij}\cap \QY_{hl}|}/{T}$. For any $i,j,h,l$, $q_{ij}$ and $q_{ij,hl}$ are positive constants bounded away from 0.
    Furthermore, the number of observed units in $Y$ at any time period $t$ is proportional to $\Ny$, i.e., there exists a positive constant $q'$ such that $N_y^{-1}\sum_{i=1}^{\Ny}W^Y_{ti} \geq q' >0$ for all $t$.
\end{enumerate}
\end{assumpG}

Assumption \ref{assump: obs pattern} allows for very general observation patterns. 
The observation matrix $\WY$ can depend on the cross-sectional information, for example, the factor loadings $\Lambda_y$ or time-invariant observed covariates of the units. For the purpose of identification, Assumption \ref{assump: obs pattern} rules out the dependence between $\WY$ and $F$. Note that the time and cross-section dimensions are symmetric in the estimation of common components. Therefore, the symmetric case where $\WY$ depends on $F$ but is independent of $\Lambda_y$ is allowed by swapping the roles of $N$ and $T$ in the estimation of the factor model. We assume that $\WY$ is independent of $e_y$, which is conceptually similar to the unconfoundedness assumption in \cite{rosenbaum1983central}.

In Assumption \ref{assump: obs pattern}, we assume that the number of time periods, for which any four units are simultaneously observed, grows at the order of $T$. This assumption implies that every entry in the covariance matrix $\Sigma^{Z^{(\gamma)}}$ can be consistently estimated at the rate $\sqrt{T}$, and ensures that the asymptotic variances of the estimated factors, loadings and common components from Target-PCA are well-defined. In Appendix \ref{appendix: sparser observation pattern}, we generalize Assumption \ref{assump: obs pattern} to allow $|\QY_{ij}|$ to grow sub-linearly in $T$. This does not change the conceptual arguments, but leads to a more complex notation.

We assume that both, $X$ and $Y$, follow an approximate factor model similar to \cite{bai2003inferential}. We allow for non-trivial time-series dependency of factors, and non-trivial cross-sectional dependency of loadings in $X$, in $Y$, and between $X$ and $Y$. In addition, the idiosyncratic errors can be weakly correlated in $X$, $Y$, and between $X$ and $Y$, in both the time-series and cross-sectional dimensions. The asymptotic distributions are based on general martingale central limit theorems. However, we make a key relaxation for the factor model relative to \cite{bai2003inferential}. Some factors in $Y$ are allowed to be weak and only affect a small subset of units in $Y$. However, those factors have to be strong in $X$ in order to identify them from $Z^{(\gamma)}$ with a properly chosen $\gamma$ (as suggested in Section \ref{sec: A Simplified Factor Model}).

The assumptions for the general factor model are stated in Assumptions \ref{assump: factor model} and \ref{assump: additional assumptions} in Appendix \ref{subsec: general assumptions}. They are delegated to the Appendix as most elements are standard, but fairly technical. Our consistency result in Theorem \ref{thm: consistency for loadings} and asymptotic distribution result in Theorem \ref{thm: asymptotic distribution} are derived under Assumptions \ref{assump: obs pattern}, \ref{assump: factor model} and \ref{assump: additional assumptions}.

We use the simplified model in Example \ref{example: simplified factor model} to illustrate how our factor model generalizes the conventional approximate factor models. It allows us to highlight the relaxation of weak factors. Appendix \ref{appendix: simplified model} shows how our main results are simplified for this example and shows that this example is a special case of our general framework.
\begin{example}[Simplified factor model]  \label{example: simplified factor model}
There exists constant $C<\infty$ such that
\begin{enumerate}
    \item Factors: $F_t \overset{\text{i.i.d.}}{\sim} (0,\Sigma_F)$ and $\E\|F_t\|^4 \leq C$ for any $t.$
    \item Loadings: 
    $(\LamX)_{i} \overset{\text{i.i.d.}}{\sim} (0,\Sigma_{\LamX})$, where $\Sigma_{\LamX}$ is positive semidefinite. 
    $(\LamY^{\mathrm{full}})_{i} \overset{\text{i.i.d.}}{\sim} (0,\Sigma^{\mathrm{full}}_{\LamY})$  and the loading of the $j$-th factor $(\LamY)_{ij} = (\LamY^{\mathrm{full}})_{ij} \cdot (U_{y})_{ij}$, where $\Sigma^{\mathrm{full}}_{\LamY}$ is positive definite and the Bernoulli random variable $(U_{y})_{ij} \in \{0,1\}$ is independent in $i$ with $\PP((U_{y})_{ij} = 1) = p_{j}$ for some $p_{j} \in [0,1]$.
    Furthermore, $\E\|(\LamX)_i\|^4\leq C$, $\E\|(\LamY)_i\|^4\leq C$, $N_y^{-1}\sum_{i=1}^{\Ny}(\LamY)_i(\LamY)_i^\top \overset{p}{\rightarrow} \Sigma_{\LamY}$, and $ \Sigma_{\LamX}+\Sigma_{\LamY}$ is positive definite. For any $t,$ $N_y^{-1}\sum_{i=1}^{\Ny}\WY_{ti}(\LamY)_i (\LamY)_i^\top \overset{p}{\rightarrow} \Sigma_{\LamY,t}$ and  $\Sigma_{\LamX}+\Sigma_{\LamY,t}$ is positive definite.
    \item Idiosyncratic errors: $(\eX)_{ti}\overset{\text{i.i.d.}}{\sim} (0,\sigma_{\eX}^2), (\eY)_{ti}\overset{\text{i.i.d.}}{\sim} (0,\sigma_{\eY}^2)$, $\E(\eX)^8_{ti}\leq C, \E(\eY)^8_{ti}\leq C$.
    \item Independence: $F,\LamX,\LamY,\eX$ and $\eY$ are independent.
\end{enumerate}
\end{example}

The main difference to the conventional factor models is the loadings, while the factors and errors capture the stylized properties in a usual factor setup. The simplified model in this example assumes that all observations are i.i.d. Allowing for more complex dependencies as in our general model, does not change the arguments, but makes the notation more complex. The key element is the strength of the factors measured by their loadings. Specifically, we measure the strength of the factors by the fraction $p_j$ of units in $Y$ that are affected by the corresponding factor. The error terms are non-systematic with bounded eigenvalues in the covariance matrix. 

The assumptions on the loadings $(\LamY)_{i}$ account for three cases of factor strength in $Y$. First, if $p_{j}$ is bounded away from $0$ as $\Ny$ grows, then the $j$-th factor is a strong factor in $Y$. Second, if $p_{j}$ decays to $0$ but is nonzero as $\Ny$ grows, then the $j$-th factor is a weak factor in $Y$. Third, if $p_{j}$ is $0$ for all $\Ny$, then $Y$ does not contain the $j$-th factor. Note that $\Sigma_{\LamX}$ can be rank deficient, implying that the loadings of some factors can be zero for units in $X$. However, our assumption on $(\LamX)_{i}$ rules out the case of weak factors in $X$ as the estimation of weak factors in $X$ is not our objective. We assume that $ \Sigma_{\LamX}+\Sigma_{\LamY}$ is positive definite to ensure that each factor in $F$ is strong in at least one of the two panels $X$ and $Y$. Specifically, weak factors in $Y$ are strong in $X$. Hence, all factors can be identified with target-PCA with a properly chosen $\gamma$. 

The loading assumption also imposes assumptions on the missing pattern in $Y$ to identify all factors when combining the partially observed $Y$ and $X$. More specifically, the second-moment matrix $\Sigma_{\LamY,t}$ does not need to be full rank in Assumption \ref{assump: simplified factor model}.2, which relaxes the full-rank assumption of $\Sigma_{\LamY,t}$ in \cite{xiong2019large}. However, we assume that $\Sigma_{\LamX}+\Sigma_{\LamY,t}$ is positive definite, so that target-PCA can identify all factors from $X$ and partially observed $Y$.

We assume that the number of factors $k$ can be consistently estimated. Given a consistent estimator for the number of factors, we can treat $k$ as known. After selecting $\gamma$ at the rate $r \cdot N_x/N_y$ for some positive scaling constant $r$, the estimation of the number of factors from the weighted concatenated panel $Z^{(\gamma)}$ is the same as in \cite{xiong2019large}. Hence, given the various bounds and expansions derived in this paper, it seems possible to extend the estimator for the number of factors developed in \cite{bai2002determining} to our case of general missing values. A promising alternative is to use cross-validation arguments. However, this is non-trivial for complex missing patterns that can depend on the factor model itself. In summary, given our analysis and selecting $\gamma$ at the right rate, we can transform the problem of estimating the number of factors into a familiar setup. Furthermore, in our empirical analysis, we show that our estimator can be robust to the number of factors once $\gamma$ is selected appropriately.

\section{Inferential Theory} \label{sec: results}

In this section, we provide the asymptotic results of the estimated factor model from target-PCA under general assumptions on the approximate factor model and missing patterns. We present the consistency result in Section \ref{subsec: consistency} and asymptotic normality results in Section \ref{subsec: asymptotic normality}.

\subsection{Consistency} \label{subsec: consistency} 

The loadings and factors can be consistently estimated only if $\gamma$ is properly chosen. The consistency result is an important intermediate step to show the inferential theory in Section \ref{subsec: asymptotic normality}.

\begin{theorem} \label{thm: consistency for loadings}
Let $\delta_{\Ny, T} = \min(\Ny,T)$ and suppose that $\Ny / \Nx \rightarrow c\in [0,\infty)$. Under Assumptions \ref{assump: obs pattern} and \ref{assump: factor model}, for $T,\Nx,\Ny \rightarrow \infty$:
\begin{enumerate}
    \item If $\gamma = r \cdot \Nx/\Ny$ for some positive constant $r$, then $\Sigma_{\Lam}^{(\gamma)}:=\lim_{\Nx, \Ny\rightarrow \infty} {\Nx}{(\Nx+\Ny)^{-1}} (\Sigma_{\LamX} + \gamma {\Ny}/{\Nx} \cdot \Sigma_{{\LamY}})$ and $\Sigma_{\Lam,t}^{(\gamma)}:=\lim_{\Nx, \Ny\rightarrow \infty} {\Nx}{(\Nx+\Ny)^{-1}} (\Sigma_{\LamX} + \gamma {\Ny}/{\Nx} \cdot \Sigma_{{\LamY},t})$ are positive definite. It holds that
    \begin{equation}
        \delta_{\Ny,T} \left( \frac{1}{\Nx+\Ny }\sum_{i=1}^{\Nx+\Ny } \left\|\tilde{\Lam}_i^{(\gamma)} - H^{(\gamma)}\Lam_i^{(\gamma)}\right\|^2 \right) = O_p(1),
    \end{equation}
    \begin{equation}
        \delta_{\Ny,T} \left( \frac{1}{T }\sum_{t=1}^{T } \left\|\tilde{F}_t - (H^{(\gamma)\top})^{-1} F_t\right\|^2 \right) = O_p(1),  
    \end{equation}
    where $H^{(\gamma)} = T^{-1}(\Nx+\Ny )^{-1}(\tilde{D}^{(\gamma)})^{-1}\tilde{\Lam}^{(\gamma)\top} \Lam^{(\gamma)} F^\top F $. This implies that the estimated loadings $(\tilde{\Lambda}_y)_i$ and common components $\tilde{C}_{ti}$ of $Y$ are consistent. 
    \item Under Assumption \ref{assump: factor shift}, if $\gamma$ and $N_x/N_y$ are not of the same order, then $\Sigma_{\Lam,t}^{(\gamma)}$ is not positive definite when $\Ny/\Nx\rightarrow 0$.
    If $\Sigma_{\Lam,t}^{(\gamma)}$ is not positive definite, then $\tilde{F}_t$ does not converge at the rate $\delta_{\Ny, T}$ and can be inconsistent for {the factors that are strong in $Y$ but not in $X$.}
\end{enumerate}
\end{theorem}

Theorem \ref{thm: consistency for loadings} states the consistency effect of $\gamma$ and generalizes Proposition \ref{prop: consistency example} to general factor models. According to Theorem \ref{thm: consistency for loadings}, choosing $\gamma = r\cdot \Nx/\Ny$ with some positive constant $r$ ensures that factors, loadings, and common components of $Y$ can be consistently estimated up to a rotation matrix $H^{(\gamma)}$. The convergence rate is at the smaller of $\sqrt{\Ny}$ and $\sqrt{T}$. This rate is the same as the convergence rate in \cite{bai2002determining} that applies PCA to $Y$ when all factors are strong in $Y$. This rate makes sense for target-PCA: When we up-weight $Y$ by a rate of $ \Nx/\Ny$, the error from $Y$ is always a leading term in target-PCA.  

{If $\gamma$ is not selected at the right order, the estimates of the factors that are strong in $Y$, but not in $X$, are inconsistent unless stronger assumptions are imposed. Specifically, if $N_y = N_x^\alpha$ for some $\alpha \in (0, 1)$ and $\gamma = 1$, then the consistency requires $(N_x/N_x^{\alpha}) \cdot (1/T) \rightarrow 0$, that is, a larger $T$, analogous to Assumption A4 in \cite{bai2021approximate}. However, this assumption is not required if $\gamma$ is chosen at the order of $N_x/N_y$.}

\subsection{Asymptotic Normality} \label{subsec: asymptotic normality}

Based on the consistency results in Theorem \ref{thm: consistency for loadings}, we develop the inferential theory for target-PCA in this section. Theorem \ref{thm: asymptotic distribution} shows the asymptotic distribution of estimated factors, estimated loadings of $Y$, and estimated common components of $Y$ with target weight $\gamma = r\cdot \Nx/\Ny$ for every positive scaling constant $r$ in target-PCA under general assumptions. We show the asymptotic distribution of $Y$ because the factor model and common components of $Y$ are of our primary interest. The asymptotic distribution of $X$ can be shown analogously.

\begin{theorem} \label{thm: asymptotic distribution}
Define $\delta_{\Ny, T} = \min(\Ny ,T)$. Suppose that $\Ny / \Nx\rightarrow c\in [0,\infty)$ and $\gamma = r\cdot \Nx/\Ny$ for some positive constant $r$. If the eigenvalues of $\Sigma_F \Lp\Sigma_{\LamX}+r\Sigma_{\LamY}\Rp$ are distinct, then under Assumptions \ref{assump: obs pattern}, \ref{assump: factor model} and \ref{assump: additional assumptions}, as $T,\Nx,\Ny  \rightarrow \infty$ we have for each $i$ and $t$:
\begin{enumerate}
\item For $\sqrt{T}/\Ny  \rightarrow 0$, the asymptotic distribution of the estimated loadings of $Y$ is
\begin{equation} \label{equ: dist of loadings}
    \sqrt{T} (\Sigma_{\LamY,i}^{(\gamma)})^{-1/2} \left((H^{(\gamma)})^{-1} (\tilde{\Lam}_y)_i - (\LamY)_i\right) \overset{d}{\rightarrow} \Ncal(0,I_k),
\end{equation}
where 
\[
\Sigma_{\LamY,i}^{(\gamma)} = \Sigma_{\LamY,i}^{(\gamma), {\rm obs}} + \Sigma_{\LamY,i}^{(\gamma), {\rm miss}},
\]
$\Sigma_{\LamY,i}^{(\gamma), {\rm obs}} = \Sigma_F^{-1}(\Sigma_{\Lam}^{(\gamma)})^{-1}  \Gamma_{\LamY,i}^{(\gamma),{\rm obs}}(\Sigma_{\Lam}^{(\gamma)})^{-1} \Sigma_F^{-1}$, 
$\Gamma_{\LamY,i}^{(\gamma),{\rm obs}}$ is defined in Assumption \ref{assump: additional assumptions}.6, $\Sigma_{\LamY,i}^{(\gamma), {\rm miss}} = \Sigma_F^{-1}(\Sigma_{\Lam}^{(\gamma)})^{-1}  h_{i+\Nx}^{(\gamma)}((\LamY)_i)(\Sigma_{\Lam}^{(\gamma)})^{-1} \Sigma_F^{-1}$, and the function $h_i^{(\gamma)}(\cdot)$ is defined in Assumption \ref{assump: additional assumptions}.8.
\item For $\sqrt{T}/\Ny  \rightarrow 0$ and $\sqrt{\Ny}/T  \rightarrow 0$,
\begin{itemize}
\item Case 1: If all the factors in $F_y$ are strong factors in $Y$, then the asymptotic distribution of the estimated factors is
\begin{equation}  \label{equ: dist of factors}
    \sqrt{\delta_{\Ny,T}} (\Sigma_{F,t}^{(\gamma)})^{-1/2} (H^{(\gamma)\top} \tilde{F}_t -F_t )\overset{d}{\rightarrow} \Ncal(0,I_k),
\end{equation}
where 
\[
\Sigma_{F,t}^{(\gamma)} =  \frac{\delta_{\Ny,T }}{N_y}\Sigma_{F,t}^{(\gamma), {\rm obs}}  + \frac{\delta_{\Ny,T}}{T}\Sigma_{F,t}^{(\gamma), {\rm miss}},
\]
$\Sigma_{F,t}^{(\gamma), {\rm obs}} = ({\Sigma}_{\Lam,t}^{(\gamma)})^{-1}\Gamma^{(\gamma),{\rm obs}}_{F,t}({\Sigma}_{\Lam,t}^{(\gamma)})^{-1}$, $\Gamma^{(\gamma),{\rm obs}}_{F,t}$ is defined in Assumption \ref{assump: additional assumptions}.7, $\Sigma_{F,t}^{(\gamma), {\rm miss}} = ({\Sigma}_{\Lam,t}^{(\gamma)})^{-1} \cdot g_t^{(\gamma)}\big((\Sigma_{\Lam}^{(\gamma)})^{-1}\Sigma_F^{-1}F_t\big)\cdot({\Sigma}_{\Lam,t}^{(\gamma)})^{-1}$, and the function $g_t^{(\gamma)}(\cdot)$ is defined in Assumption \ref{assump: additional assumptions}.8;
\item Case 2: Suppose some factors in $F_y$ are weak factors in $Y$. Let $F_{t,w}$ be the weak factors in $Y$, and let $F_{t,s} = F_t \backslash {F_{t,w}}$ be the remaining strong factors in $F_t$. For simplicity of notation, we assume that the loadings of the weak factors $F_{t,w}$ are asymptotically orthogonal to the loadings of the strong factors $F_{t,s}$. 
The asymptotic distribution of the estimated weak factors $(H^{(\gamma)\top} \tilde{F}_t)_{w}$ corresponding to $F_{t,w}$ is 
\begin{equation}  \label{equ: dist of factors weak}
    \sqrt{\delta_{N_w,T}} (\Sigma_{F_{w},t}^{(\gamma)})^{-1/2} ((H^{(\gamma)\top} \tilde{F}_t)_{w} -F_{t,w})\overset{d}{\rightarrow} \Ncal(0,I_k),
\end{equation}
where $\delta_{N_w,T} = \min(N_w,T)$, $N_w = \min({\Ny^2}/{{g(\Ny)}},{\Nx})$, $g(\Ny)$ is the rate at which $\sum_{i=1}^{\Ny}(\LamY)_{i,w}^2$ grows,
\[
\Sigma_{F_{w},t}^{(\gamma)} =  \frac{\delta_{N_w,T}}{N_w }\Sigma_{F_{w},t}^{(\gamma),{\rm obs}}  + \frac{\delta_{N_w,T}}{\Talpha}\Sigma_{F_{w},t}^{(\gamma),{\rm miss}}, \]
$\Sigma_{F_w,t}^{(\gamma), {\rm obs}} = ({\Sigma}_{\Lam,t,w}^{(\gamma)})^{-1}\Gamma^{(\gamma),{\rm obs}}_{F_w,t}({\Sigma}_{\Lam,t,w}^{(\gamma)})^{-1}$, $\Gamma^{(\gamma),{\rm obs}}_{F_w,t}$ is defined in Assumption \ref{assump: additional assumptions}.7, ${\Sigma}_{\Lam,t,w}^{(\gamma)}$ and $\Sigma^{(\gamma),{\rm miss}}_{F_{w},t}$ are respectively the diagonal blocks of  ${\Sigma}_{\Lam,t}^{(\gamma)}$ and $\Sigma^{(\gamma),{\rm miss}}_{F,t}$ corresponding to the weak factors.\footnote{The asymptotic distribution of the estimated strong factors $(H^{(\gamma)\top} \tilde{F}_t)_{s}$ is
$
\sqrt{\delta_{N_y,T}} (\Sigma_{F_{s},t}^{(\gamma)})^{-1/2} ((H^{(\gamma)\top} \tilde{F}_t)_{s} -F_{t,s})\overset{d}{\rightarrow} \Ncal(0,I_k),
$
where $\Sigma_{F_{s},t}^{(\gamma)}$ is the diagonal block of $\Sigma_{F,t}^{(\gamma)}$ corresponding to the strong factors $F_{t,s}$.
}
\end{itemize}
\item For $\sqrt{T }/\Ny  \rightarrow 0$ and $\sqrt{\Ny}/T  \rightarrow 0$, the asymptotic distribution of the estimated common components of $Y$ is 
\begin{equation}  \label{equ: dist of common component}
    \sqrt{\delta_{\Ny, T }} (\Sigma_{C,ti}^{(\gamma)})^{-1/2} (\tilde{C}_{ti} - C_{ti} ) \overset{d}{\rightarrow} \Ncal(0,1),
\end{equation}
where 
\begin{align*}
\Sigma_{C,ti}^{(\gamma)} =&\frac{\delta_{\Ny, T }}{T } F_t^\top \Lp \Sigma_{\LamY,i}^{(\gamma),{\rm obs}} + \Sigma_{\LamY,i}^{(\gamma),{\rm miss}}\Rp F_t + \frac{\delta_{\Ny, T }}{\Ny } (\LamY)_i^\top \Sigma_{F,t}^{(\gamma),{\rm obs}} (\LamY)_i \\
& + \frac{\delta_{\Ny, T }}{T } (\LamY)_i^\top \Sigma_{F,t}^{(\gamma),{\rm miss}}(\LamY)_i  -2 \frac{\delta_{\Ny, T }}{T } (\LamY)_i^\top   \Sigma_{\LamY,F,i,t}^{(\gamma),{\rm miss, cov}}  F_t,    
\end{align*}
$\Sigma_{\LamY,F,i,t}^{(\gamma),{\rm miss, cov}} = (\Sigma_{\Lam,t}^{(\gamma)})^{-1} \cdot g_{i,t}^{(\gamma),{\rm cov}}\big((\LamY)_i, (\Sigma_\Lam^{(\gamma)})^{-1}\Sigma_F^{-1}F_t\big) \cdot(\Sigma_\Lam^{(\gamma)})^{-1}  \Sigma_{F}^{-1}$, and function $g_{i,t}^{(\gamma),{\rm cov}}(\cdot,\cdot)$ is defined in Assumption \ref{assump: additional assumptions}.8.
\end{enumerate}
\end{theorem}

The factors, loadings, and common components are asymptotically normally distributed. The asymptotic variance differs from the conventional PCA in \cite{bai2003inferential} in three aspects. First, the asymptotic variances $\Sigma_{\LamY,i}^{(\gamma)}$, $\Sigma_{F,t}^{(\gamma)}$ and $\Sigma_{C,ti}^{(\gamma)}$ depend on $\gamma$ (or equivalently $r$ when $\gamma = r \cdot \Nx/\Ny$). This will be important, as we will use it as the criteria to select the scale of $\gamma$. Second, in the case of missing data, we have additional correction terms to capture the additional uncertainty due to missingness. These correction terms follow the same structure and arguments as in \cite{xiong2019large}. Without missing data, the correction matrices in the variance disappear. Third, as shown in Theorem \ref{thm: asymptotic distribution}.2, the strong and weak factors in $Y$ have different convergence rates.\footnote{The estimated weak factors $F_{t,w}$ have faster convergence rates than the estimated strong factors $F_{t,s}$. This is a consequence of $g(\Ny)$ growing at a smaller rate than $\Ny$, which implies that $\delta_{N_w,T}^{1/2}$ grows at a larger rate than $\delta_{N_y,T}^{1/2}$. This result might seem to be counterintuitive at first glance, but makes sense after a careful analysis of the factor estimation errors. Note that the weak factors $F_{t,w}$ are essentially estimated from $X$. Their estimation errors are dominated by the cross-sectional average of the errors of $X$. In contrast, the strong factors $F_{t,s}$ are mainly estimated from $Y$ and their estimation errors are dominated by those of $Y$, which then converge to zero at a slower rate as $Y$ has fewer units than $X$. The asymptotic distribution of the estimated common components is dominated by that of the strong factors in $Y$, as strong factors have a slower convergence rate than weak factors. Therefore, the two cases in Theorem \ref{thm: asymptotic distribution}.2 lead to the same case in Theorem \ref{thm: asymptotic distribution}.3.} Importantly, this separation between strong and weak factors does not affect the asymptotic distribution of the estimated common components of $Y$. Hence, the asymptotic variance of the common components allows us to select an efficient target weight $\gamma$ independent of the factor strength.

The distribution results of Theorem \ref{thm: asymptotic distribution} simplify under Example \ref{example: simplified factor model}, and we can provide explicit expressions for the asymptotic variances. Appendix \ref{appendix: simplified model} shows the analytical expression of the asymptotic variances under the simplified factor model, which allows us to gain intuition on how $\gamma$ affects the efficiency of the estimation. The asymptotic variances depend on the following key quantities: the target weight $\gamma$, the noise ratio (NR) $\sigma_{\eX}/\sigma_{\eY}$, the dimension ratio (DR) $\Nx/\Ny$ and the dependency structure in the missing pattern. We illustrate this dependency and the effect on the optimal choice of $\gamma$ with a simulation example based on the model in Example \ref{example: simplified factor model} and different observation patterns. 

We {aim to select} the optimized $\gamma^\ast$ as the efficient target weight for the common components. 
The numerical examples in Table \ref{tab: gamma} illustrate how the optimized $\gamma^\ast$ that minimizes $\sum_{t,i} \Sigma_{C,ti}^{(\gamma)}$ depends on the observation pattern, noise ratio and dimension ratio, in a one-factor model. 
The fraction of observed data $p$ in $Y$ varies between 60\%, 75\% and 90\% with the four missing patterns: missing-at-random, block-missing, staggered-missing and mixed-frequency. The dimension ratio is either $\Nx/\Ny=1$ or 4, and the noise ratio $\sigma_{\eX}/\sigma_{\eY}$varies between 0.25, 1 and 4.

\begin{table}[t!]
    \centering
	\tcaptab{Optimized target weight $\gamma^*$ for different missing patterns and noise ratios} 
	\label{tab: gamma}
	\makebox[0pt][c]{\parbox{\textwidth}{%
	\begin{minipage}[c]{\linewidth}
	\centering
    \begin{tabular}{cc|ccc|ccc}
	\toprule
	& $p$ & 	\multicolumn{3}{c|}{ $\Nx/\Ny = 1$} &\multicolumn{3}{c}{ $\Nx/\Ny = 4$} \\ 
	& & NR=0.25 & NR=1 & NR=4
	 & NR=0.25 & NR=1 & NR=4 \\ 
	\midrule
	 \multirow{3}{*}{\ \includegraphics[width=15mm, height=14.5mm]{fig/thm/random.png} } 
	 &60\% &0.25 &1.00 &4.00 &0.25 &1.00 &4.00  \\
	 &75\% &0.25 &1.00 &4.00 &0.25 &1.00 &4.00  \\
	 &90\% &0.25 &1.00 &4.00 &0.25 &1.00 &4.00  \\
	 \midrule
	 \multirow{3}{*}{\ \includegraphics[width=15mm, height=14.5mm]{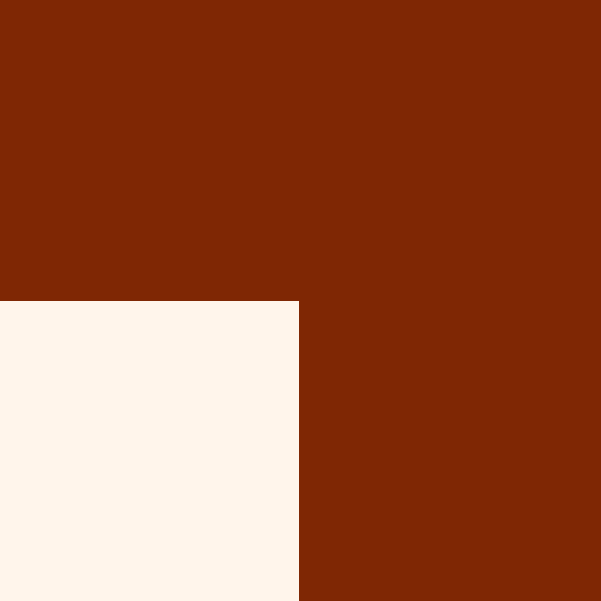} } 
	 &60\% &0.61 &1.75 &4.25 &1.95 &5.09 &7.00 \\
	 &75\% &0.42 &1.53 &4.35 &1.06 &3.62 &6.12 \\
	 &90\% &0.28 &1.15 &4.18 &0.40 &1.62 &4.61 \\ 
	 \midrule
	 \multirow{3}{*}{\ \includegraphics[width=15mm, height=14.5mm]{fig/thm/stagger.png} } 
	 &60\% &0.55 &1.96 &4.66 &1.69 &5.52 &7.84\\
	 &75\% &0.39 &1.47 &4.35 &0.93 &3.26 &5.87\\
	 &90\% &0.28 &1.13 &4.13 &0.40 &1.58 &4.51\\
	 \midrule
	 \multirow{3}{*}{\ \includegraphics[width=15mm, height=14.5mm]{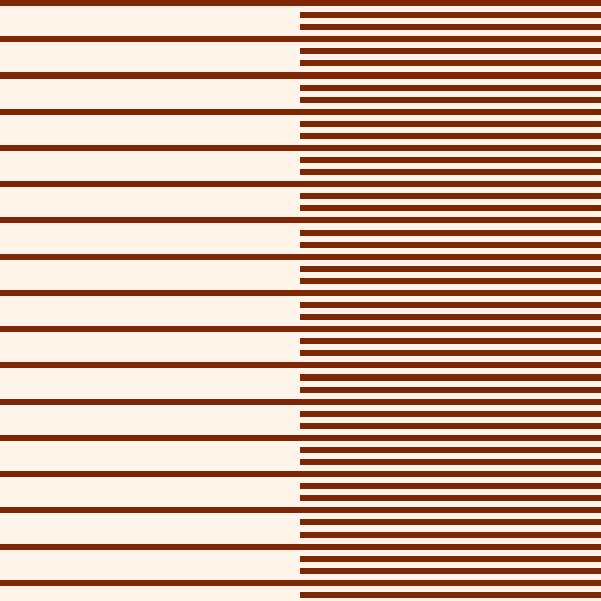} } 
	 &60\% &0.35 &1.72 &4.41 &0.99 &4.43 &6.89\\
	 &75\% &0.35 &1.41 &4.26 &0.79 &3.00 &5.64\\
	 &90\% &0.30 &1.15 &4.10 &0.48 &1.78 &4.60\\
	\bottomrule
    \end{tabular}
	\end{minipage}
	\hfill
	\floatfoot{This table reports the optimized $\gamma^*$ for different missing patterns, dimension ratios $\Nx/\Ny$ and noise ratios (NR) $\sigma_{\eX}^2/\sigma_{\eY}^2$. In this table, the optimized $\gamma^*$ equals $\Nx/\Ny$ multiply by the efficient scaling $r^*$ obtained by minimizing $\sum_{t,i} \Sigma_{C,ti}^{(\gamma)}$ in Corollary \ref{thm: simplified factor model}. The figures on the left show the observation patterns, with the shaded entries indicating the missing entries. We set $\Ny=T=50$ and the fraction of observed entries $p$ to $60\%, 75\%$ and $90\%$. We generate a one-factor model where factors, loadings, and errors are drawn from normal distributions with $\Sigma_F=\Sigma_{\LamX}=\Sigma_{\LamY}=1$. We let $\sigma_{\eX}^2=1$ and $\sigma_{\eY}^2=4$ for $\text{NR}=0.25$, $\sigma_{\eX}^2=\sigma_{\eY}^2=1$ for $\text{NR}=1$, and $\sigma_{\eX}^2=4$ and $\sigma_{\eY}^2=1$ for $\text{NR}=4$. The missing patterns in this table are generated as follows: (a) Missing uniformly at random: Entries are independently observed with probability $p$. (b) Block-missing pattern: $2(1-p)$ fraction of randomly selected units are missing from time $0.5\cdot T$. (c) Staggered treatment pattern: All units are in the control group for $t<c\cdot T$. Starting from time $t=c\cdot T,$ $(t/T-c)$ fraction of randomly selected units are in the treated group at time $t$. (d) Mixed-frequency observation: Entries in the first half of the units are simultaneously observed at every $t_1$ time period and entries in the second half of the units are simultaneously observed at every $t_2$ time period. For $p=60\%$, $t_1 = 1/35\%$ and $t_2=1/85\%$; for $p=75\%$, $t_1 = 1/60\%$ and $t_2=1/90\%$; for $p=90\%$, $t_1 = 1/80\%$ and $t_2=1$.}
	}}

\end{table}

This numerical example illustrates three points. First, the optimized $\gamma^\ast$ can substantially deviate from the naive concatenating weight 1, even when the dimensions of $X$ and $Y$ are the same. In particular, for complex missing patterns, the optimized target weight can deviate substantially from equally weighting the panels. Second, when observations are missing at random, the optimized $\gamma^\ast$ only depends on the noise NR, but not on the dimension ratio $\Nx/\Ny$ or fraction of observed entries $p$, confirming Proposition \ref{prop: efficiency example} in the illustration of the efficiency effect. However, for more complex observation patterns, the optimized $\gamma^\ast$ can depend on $\Nx/\Ny$, $p$, and other quantities related to the observation pattern. Specifically, the optimized $\gamma^\ast$ generally increases with $\Nx/\Ny$ and $1- p$, implying that when the number of observations (or effective sample size) on $Y$ is small compared to $X$, the optimized $\gamma^\ast$ increases to balance the relative contributions of the two panels. Third, the optimized $\gamma^\ast$ grows with a larger dependency in the missing pattern.  For the four observation patterns considered in Table \ref{tab: gamma}, the dependency between the entries of $W^Y$ is generally the highest for the block-missing pattern, followed by the staggered treatment pattern, then the mixed-frequency pattern, while missing-at-random has the lowest dependency. A higher dependency implies a smaller effective sample size in $Y$ and hence a larger $\gamma^\ast$.

One application of our asymptotic distribution theory is to test causal effects. The fundamental problem in causal inference is that we observe an outcome either for the control or the treated data, but not for both at the same time. The unknown counterfactual of what the treated observations could have been without treatment can be naturally modeled as a data imputation problem. The same arguments as in \cite{xiong2019large} for how to use the results for causal inference apply to target-PCA. We can test in an analogous way for point-wise treatment effects that can be heterogeneous and time-dependent under general adoption patterns where the units can be affected by unobserved factors. Importantly, by optimally leveraging auxiliary data, we allow for more general adoption patterns and more precise estimates of the counterfactual outcomes.

\subsection{Selection of Target Weight $\gamma$} \label{sec: selection of gamma}

As we have seen, selecting $\gamma$ appropriately is crucial for the consistent and efficient estimation of the latent factor model on $Y$. We suggest a two-stage approach for choosing $\gamma$ based on our inferential theory. 

In the first stage, we select $\gamma^{\text{first-stage}} = \Nx/\Ny$ to consistently estimate the latent factor model. Based on Theorem \ref{thm: consistency for loadings}, we can consistently estimate all the factors and loadings using target-PCA with $\gamma^{\text{first-stage}}$. In the second stage, we estimate the asymptotic variance of the common component $\Sigma_{C,ti}^{(\gamma)}$ using the estimated factor model from the first stage and the inferential theory in Theorem \ref{thm: asymptotic distribution} and Corollary \ref{thm: simplified factor model}. We then select the scaling constant to achieve efficiency by minimizing a linear combination of the estimated $\hat \Sigma_{C,ti}^{(\gamma)}$. If our goal is to estimate all common components as precisely as possible, then the objective function is to minimize $\sum_{t,i} \Sigma_{C,ti}^{(\gamma)}$. This is the objective that we use in our applications. If we want to impute the missing entries in $Y$ as precisely as possible, then an appropriate objective function would be to minimize 
$\sum_{t,i} (1- W_{ti}^Y) \cdot \Sigma_{C,ti}^{(\gamma)}$. 

Generally, cross-validation could be an alternative approach for selecting $\gamma$. The idea of a cross-validation approach is to mask some observed entries in $Y$ and select the value of $\gamma$ that can most precisely estimate these masked out-of-sample entries. The challenge lies in how to mask the observed entries, and an appropriate implementation is more complicated than what it might appear to be. In particular, some naive masking schemes, such as random masking, may not be appropriate if the actual missing pattern is more complex. Intuitively, the masking should replicate the missing pattern, in order to ensure that the imputation errors of the masked entries can (unbiasedly) estimate the imputation errors of the missing entries in $Y$ (if imputing missing entries in $Y$ is our main goal). However, estimating the propensity of missingness is challenging and can be sensitive to the specification of a propensity model. In addition, the observation pattern can depend on the latent factor model itself, which further complicates the estimation. Our selection criterion based on the inferential theory avoids these issues. 
		
		\section{Extensions} \label{sec: discussion}

\subsection{Auxiliary Data $X$ Sufficient to Estimate All Factors} \label{sec: auxilliary is sufficient}

So far, we have focused on the important setting where the auxiliary data $X$ is not sufficient to estimate all the factors for the target $Y$, which is assumed in Assumption \ref{assump: factor shift}. A simpler case is where $X$ already contains all the information to learn the factors in $Y$, that is, all the factors in $Y$ are strong in $X$. This is a special case of our more general analysis, where we can consistently estimate the factors by applying PCA to $X$ and we would only use target-PCA to achieve higher efficiency. Theorem \ref{thm: asymptotic distribution} applies with minor modifications. 

For this case, if $\Ny/\Nx\rightarrow 0$, then choosing $\gamma = r$ with any constant $r$ is equivalent to applying PCA on $X$, which has the convergence rate of $\Nx$. Here, choosing $\gamma = r\cdot \Nx/\Ny$ is sub-optimal, since it up-weights $Y$ and thus slows down the convergence rate of the estimated factors to $\Ny$.
If $\Ny/\Nx\rightarrow c$ with $c$ bounded away from 0, then it is beneficial to include the target panel $Y$ in the estimation of factors to improve the efficiency. Specifically, we can select the optimized $\gamma^*=r$ based on the asymptotic normality results similar to Theorem \ref{thm: asymptotic distribution}.

This setting is relevant 
when the target panel $Y$ has a low-frequency observation pattern with no available information in $Y$ for some time periods. If $X$ contains all the necessary information to estimate the latent factors in $Y$, then we can accurately impute the value of $Y$ in the periods with no observations and, hence, obtain an imputed target panel with higher frequency observations.

We recommend the choice of target weight $\gamma=r\cdot \Nx/\Ny$ with a positive constant $r$ from the main setting for robustness. Even in the case where all relevant factors are strong in $X$, selecting  $\gamma=r\cdot \Nx/\Ny$ ensures consistent estimation. Importantly, this rate also guarantees consistency, when not all factors can be estimated from $X$. 
In practice, we do not know if all factors for $Y$ are strong factors in $X$. In many applications, $X$ might not be selected in a targeted way, and hence it is likely that factors needed for $Y$ can be missing or weak in $X$. 
The selection procedure in Section \ref{sec: selection of gamma} provides a robust solution.

\subsection{Finite Cross-Sectional Dimension of Target Data}\label{subsec:finite-dimension-target}

So far, we have focused on the case where $\Ny \rightarrow \infty$. In some practical applications, $\Ny$ may be finite \citep{huang2021scaled}.
Our results can be extended to the case of finite $\Ny$ with minor technical modifications.\footnote{We need to modify the definition of $\Sigma_{\LamY,t}$ in Assumption \ref{assump: simplified obs pattern}.3  to $\Sigma_{\LamY,t} = \frac{1}{{\Ny}}\sum_{i=1}^{\Ny}\WY_{ti}(\LamY)_i (\LamY)_i^\top$, when $\Ny$ is finite. } Specifically, we consider the choice of $\gamma$ in two different settings depending on the nature of units in $Y$. 

In the first setting, the units in $Y$ are similar to the units in $X$, and the idiosyncratic noise level in $X$ and $Y$ are at the same scale, that is, $\+E[(\eY)_{ti}^2] / \+E[(\eX)_{ti}^2] = O(1)$. For this case, if all the factors in $Y$ can be identified by applying PCA to $X$, then selecting $\gamma = 1$ (i.e., PCA on $X$) is optimal. This is the degenerate case that asymptotically does not require $Y$ for target-PCA.

In the second setting, units in $Y$ are (weighted) averages of $M$ units, where $M$ is much larger than $\Ny$ and can be of the same order as $\Nx$. An important example is when $Y$ are the principal components from another panel. In this case, if $\+E[(\eY)_{ti}^2]/ \+E[(\eX)_{ti}^2]  = O(1/M)$, then we should choose $\gamma = r\cdot \Nx$, such that target-PCA can identify the factors in $Y$ that are either weak or nonexistent in $X$.\footnote{This case is the same as choosing $\gamma =  O(\Nx/\Ny) = O(\Nx)$ for the consistency effect, that is, our general rule for selecting the rate of $\gamma$ still applies.} 

The following two propositions formalize the above discussion about choosing $\gamma$ in each of the two different settings. 

\begin{proposition}  \label{prop: finite Y 1}
Suppose $\+E[(\eY)_{ti}^2] / \+E[(\eX)_{ti}^2] = O(1)$. If all the factors can be identified in $X$ and $\gamma=r$ for some positive constant $r$, then Theorem \ref{thm: consistency for loadings} holds with convergence rate $\delta_{\Nx,T} = \min(\Nx, T)$, 
Theorem \ref{thm: asymptotic distribution}.2 holds with convergence rate $\sqrt{\delta_{\Nx,T}}$, and the asymptotic variance of the factors is independent of $Y$.
\end{proposition}

\begin{proposition}   \label{prop: finite Y 2}
Suppose $\+E[(\eY)_{ti}^2] / \+E[(\eX)_{ti}^2] = O(1/M)$ for $M \rightarrow \infty$. If  $\gamma = r\cdot \Nx$ for some positive constant $r$, then Theorem \ref{thm: consistency for loadings} holds with convergence rate $\delta_{\Nx,T,M} = \min(\delta_{\Nx,T},M)$, and Theorem \ref{thm: asymptotic distribution}.2 holds with convergence rate $\sqrt{\delta_{\Nx,T,M}}$.
\end{proposition}

\subsection{Multiple Panels}\label{subsec:multiple-panels}

Target-PCA can be generalized to the setting with multiple auxiliary panels $X_1,\cdots, X_{m_x}$ or/and multiple target panels $Y_1,\cdots,Y_{m_y}$, where $m_x$ and $m_y$ are the numbers of auxiliary and target panels.

In the case of multiple auxiliary panels $X_1,\cdots, X_{m_x}$, we can combine all auxiliary panels and the target panel into one panel with weights $\eta_1, \cdots, \eta_{m_x}$ for the auxiliary panels
\[ Z^{(\eta_1, \cdots, \eta_{m_x})} = \begin{bmatrix} \eta_1 X_1 & \eta_2 X_2 & \cdots & \eta_{m_x} X_m & Y   \end{bmatrix} \]
and apply our proposed estimator in Section \ref{sec: estimator} to $Z^{(\eta_1, \cdots, \eta_{m_x})}$. If $m_x= 1$, then the problem collapses to the setup in the main setting, and selecting the target weight as $\gamma$ is the same as choosing the source weight as $\eta_1 = 1/\gamma$. Based on our previous discussion, we should choose $\eta_1 = r_1 \cdot \Ny/N_{x,1}$ for some positive constant $r_1$, where $N_{x,1}$ is the number of units in $X_1$. If there are multiple auxiliary panels, then the natural generalization is to choose $\eta_j = r_j \cdot \Ny/N_{x,j}$ for any $j \in \{1,\cdots, m_x\}$ and some positive constant $r_j$. Such a choice of $\eta_j$ accounts for the case where different auxiliary panels may have different sets of factors that are useful for identifying the factors in $Y$ (consistency effect), and may have different idiosyncratic noise levels (efficiency effect).

In the case of multiple target panels $Y_1,\cdots,Y_{m_x}$, we can use a sequential approach for estimating the factor model and imputing missing values in each target panel. Concretely, we first apply target-PCA to $X$ and $Y_1$, and impute the missing observations in $Y_1$. Second, we treat $X$ and the imputed panel $Y_1$ as two auxiliary panels, and combine them with $Y_2$ to estimate the factor model and impute missing values in $Y_2$. We repeat this procedure until the factor model is estimated and missing values are imputed for all the targets. Essentially, this sequential approach is conceptually the same as target-PCA with only one target panel, but with a more complicated notation. 

\subsection{Anchored Time Series} \label{sec: anchor}
So far, all information for the factor model and the data imputation was based on the cross-sectional dependency in the contemporaneous outcomes in $Y$ and $X$. However, in the case of persistent time series, the prior realizations in $Y$ can provide useful information for imputing missing entries. We propose an extension of target-PCA that takes advantage of prior realizations and the contemporaneous dependency structure. It can be interpreted as anchoring the imputed values around the prediction of a time-series model and correcting them with the innovations around the time-series model estimated from the contemporaneous auxiliary data.

We focus on the practically relevant case of low-frequency data in $Y$, which is combined with high-frequency supplementary data $X$. In this case, the contemporaneous low-frequency outcomes in $Y$ are not sufficient to impute the higher frequency missing values. As a concrete example, we refer to our empirical study, where we consider a panel $Y$ of annually observed macroeconomic time series, while $X$ is a panel of monthly observed supplementary data. Many of these macroeconomic time series are highly persistent, and hence their differenced time series fluctuate around their mean value. Hence, the prior realizations of these time series can serve as anchor points.  

A simple modification of target-PCA allows us to include prior values as anchor points, while the formal theoretical results continue to hold. Concretely, we replace missing entries in $Y$ by their most recent observed values, and then apply target-PCA as before. For example, for yearly observed $Y$, we only observe entries in January, 
while observations from February to December are completely missing.
In this case, we construct the anchored time series by filling in the missing entries from February to December with the observations in January of the corresponding year. 
{This is a valid approach, if the common component based on prior factor realizations is an unbiased estimator of the common component in the next period. This is the case under the assumption that the factors have constant means, while errors have zero means.}  
Under this assumption, the consistency result of Theorem \ref{thm: consistency for loadings} and the asymptotic distribution of Theorem \ref{thm: asymptotic distribution} continue to hold.  In particular, the optimal choice of $\gamma$ follows the same arguments as for the conventional target-PCA.

Intuitively, the target weight implies a weighted average between the estimate of the common component of $Y$ from the last period and the common component estimated on $X$ from the current period. Without the contemporaneous observations in $X$ or past values of $Y$, we cannot make any statements about the higher frequency observations in $Y$. The prior value of the common component of $Y$ can be a noisy estimate of the next period's value, and hence benefits from the cross-sectional contemporaneous information in $X$ to correct the variation and reduce the variance. The weight on the anchored $Y$ panel could be interpreted as a prior for the past observation. If we put no weight on $X$, we would simply use the prior common component to impute the missing values. In the other extreme case, where we put all weight on $X$, we only use the factor model in $X$ for imputation, but ignore the prior values. The optimal choice of $\gamma$ minimizes the variance of the weighted average of these two extreme estimators. 

The extension to more complex time-series models is beyond the scope of this paper, but the general logic still applies. The weight $\gamma$ would imply a weighted average of a time-series model forecast of the common component and the contemporaneous realization of the common component based on auxiliary data.
		
		\section{Simulation} \label{sec: simulation}

In simulations, we show the superior performance of our target-PCA method relative to benchmarks under a variety of settings. For comparison, we include the three natural benchmark methods that apply PCA either to $Y$, a simple concatenated panel of $X$ and $Y$, or $X$ and $Y$ separately and combine those factors. These are naive estimation methods for a target panel with auxiliary data. In more detail, we compare the following estimators:
\begin{enumerate}
\item {\bf T-PCA}: Target-PCA with optimized $\gamma^*$ selected as $\gamma= r \cdot N_y/N_x$ with $r$ minimizing $\sum_{t,i} \sum_{C,ti}^{(\gamma)}$.
\item {\bf  $\text{XP}_Y$}: PCA estimator of \cite{xiong2019large} applied only to $Y$ (special case of target-PCA with $\gamma=\infty$).
\item {\bf $\text{XP}_{Z^{(1)}}$}: PCA estimator of \cite{xiong2019large} applied to the concatenated panel $Z^{(1)}=[X,Y]$ (special case of target-PCA with $\gamma=1$).
\item {\bf SE-PCA}: Separately estimate factors from $X$ and $Y$ with the method of \cite{xiong2019large}, combine the two sets of factors and estimate loadings of the combined factors to impute missing values on $Y$.\footnote{Note that there is no simple way to determine the number of factors extracted from $X$ and $Y$ respectively. When the factor number is $k$ for other methods, we simply combine $k$ factors from $X$ and $k$ factors from $Y$ in SE-PCA. Note that SE-PCA uses $2k$ factors in total to estimate the common components and impute missing entries in $Y$. Therefore, SE-PCA is more likely to identify all the factors, but at the cost of an efficiency loss compared to other methods.} 
\end{enumerate}

We generate data from a two-factor model $Y_{ti} = F_t^\top (\LamY)_i+({\eY})_{ti}$ and $X_{ti} = F_t^\top (\LamX)_i+(\eX)_{ti}$, where $F_t \overset{\iid}{\sim} \Ncal(0,I_2), (\LamY)_i \overset{\iid}{\sim} \Ncal(0,I_2), (\LamX)_i \overset{\iid}{\sim} \Ncal(0,I_2), (\eX)_{ti}\overset{\iid}{\sim} \Ncal(0,\sigma_{\eX}^2)$ and $(\eY)_{ti} \overset{\iid}{\sim} \Ncal(0,\sigma_{\eY}^2).$ We consider three missing patterns for target $Y$: 
\vspace{-1mm}
\begin{enumerate} 
    \item {\it Missing-at-random:} Entries of $Y$ are missing uniformly at random.
    \item {\it Low-frequency observation:} Entries in $Y$ are observed at a lower frequency and only every second time-series observation is available.
    \item {\it Missingness depends on loadings:} Entries of $Y$ are missing conditional on a unit-specific characteristic $S_i = \mathbbm{1}(|(\LamY)_{i2}|>\text{threshold}).$ This means that units that are more exposed to the second factor are more likely to be missing in $Y$. In this case, we assume $(\LamX)_{i1}=0$, that is, factor 1 is not included in $X$.
\end{enumerate}
\vspace{-2mm}

The detailed description of observation patterns and data-generating processes is in Table \ref{tab: RMSE}.

Table \ref{tab: RMSE} compares the performance of the four methods in estimating the common components of the target $Y$. Specifically, it reports the relative mean squared error (relative MSE) of the estimated common components of the observed, missing, and all entries in $Y$, which is defined as
\[
\text{relative MSE}_{\mathcal{M}} = \frac{\sum_{(t,i)\in \mathcal{M}}(\tilde{C}_{ti}-C_{ti})^2}{\sum_{(t,i)\in \mathcal{M}}(C_{ti})^2},
\]
where $\mathcal{M}$ denotes the set of either observed, missing, or all entries in $Y$. The MSE for the observed values can be interpreted as an in-sample evaluation, while the MSE for the imputed values serves as an out-of-sample evaluation as it evaluates the model on data that was not used in the estimation.

\begin{table*}[t!]
\centering
\begin{threeparttable}
\tcaptab{Relative MSE for different estimators}
\label{tab: RMSE}
\begin{tabular}{cccccc}
\toprule
Observation Pattern & \ \ $\mathcal{M}$ \ \ &\ \ T-PCA \ \ &\ \ $\text{XP}_Y$ \ \ &\ \ $\text{XP}_{Z^{(1)}}$ \ \ &\ \ SE-PCA \ \ \\
\midrule
\multirow{3}{*}{ \includegraphics[width=0.1\textwidth, height=14.5mm]{fig/thm/random.png} } 
&obs  &{\bf0.184} &0.408 &0.224 &0.530\\
&miss &{\bf0.182} &0.414 &0.220 &0.564\\
&all  &{\bf0.183} &0.411 &0.222 &0.547\\
\midrule
\multirow{3}{*}{\ \includegraphics[width=0.1\textwidth, height=14.5mm]{fig/thm/low_fre_pattern.png} } 
&obs  &{\bf0.291} &- &0.846 &1.059\\
&miss &{\bf1.029} &- &1.119 &1.104\\
&all  &{\bf0.656} &- &0.979 &1.080\\
\midrule
\multirow{3}{*}{\includegraphics[width=0.1\textwidth, height=14.5mm]{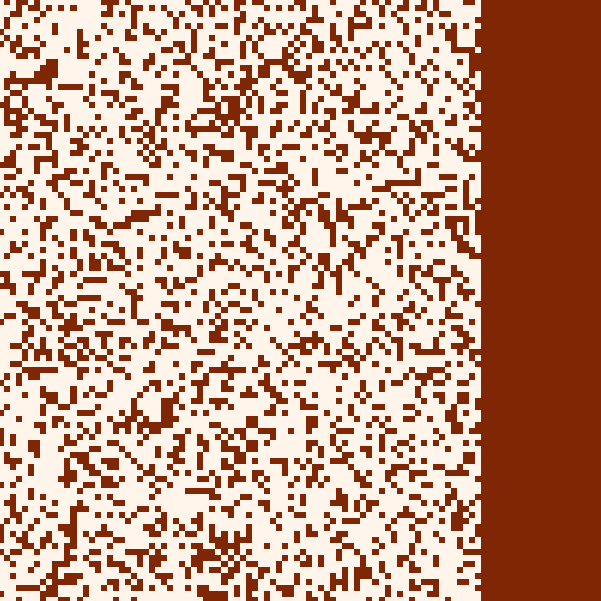}} 
&obs  &{\bf0.219} &0.238 &0.262 &0.280\\
&miss &{\bf0.252} &0.293 &0.287 &0.356\\
&all  &{\bf0.244} &0.280 &0.281 &0.338\\
\bottomrule
\end{tabular}
\floatfoot{This table reports the relative MSE of T-PCA (our benchmark method), $\text{XP}_Y$ (PCA on $Y$), $\text{XP}_{Z^{(1)}}$ (PCA on concatenated panel) and SE-PCA (separate PCA). The figures on the left show patterns of missing observations with each row representing the observation pattern for a  specific time period, and the shaded entries indicating the observed entries. Bold numbers indicate the best relative model performance. We generate a two-factor model and the observation patterns are generated as follows: (a) Missing uniformly at random:  $\sigma_{\eX}=1$, $\sigma_{\eY}=4,$ and entries of $Y$ are missing independently with observation probability $p=\PP(\WY_{ti}=1)=0.5.$  (b) Low-frequency observation: $\sigma_{\eX}=16, \sigma_{\eY}=4$, and entries in $Y$ are only observed every second time period. (c) Missingness depends on loadings: $(\LamX)_{i1}=0, \sigma_{\eY}=2, \sigma_{\eX}=4$ and define a unit-specific characteristic $S_i = \mathbbm{1}(|(\LamY)_{i,2}|>0.1).$ Entries are missing independently with observation probability $p=0.2$ if $S_i=1$, and $p=1$ if $S_i=0.$ We assume $\Nx=\Ny=T=200$ and run 200 simulations for each setup.}
\end{threeparttable}
\end{table*}

As shown in Table \ref{tab: RMSE}, target-PCA performs well under different observation patterns and dominates the other benchmarks. Our estimator has the smallest relative MSEs as compared to the three benchmark methods, whereas the three benchmark methods are either infeasible or inefficient in different settings. These results hold for the observed and imputed observations for all types of missing patterns.

In the case of missing-at-random, target-PCA has the smallest relative MSEs as compared to the three benchmark methods. This is the setting where all benchmark methods can identify all the factors in $Y$, but target-PCA is more efficient by appropriately using the information in the auxiliary panel. The gain is particularly large relative to using only $Y$ or estimating the factors separately from $X$ and $Y$. Both cases have a much smaller effective sample size than that of target-PCA, leading to more than double of the relative MSE.

In the setting of low-frequency observations, target-PCA continues to dominate the benchmark methods. This is a particularly interesting case, as the panel $Y$ is not sufficient for estimating the full factor model, and hence $\text{XP}_Y$ is not feasible. As estimating latent factors on $Y$ is not feasible in some periods, SE-PCA degenerates to PCA on $X$, and therefore the performance of SE-PCA solely depends on the auxiliary panel $X$. When $X$ has a low signal-to-noise ratio (i.e., $\sigma_{\eX}$ is large), SE-PCA can perform poorly. $\text{XP}_Z$ simultaneously uses the information in both $X$ and $Y$, and therefore performs the best among the three benchmark methods. Target-PCA further improves upon $\text{XP}_Z$ by efficiently weighting the two panels $X$ and $Y$.

Target-PCA also has the smallest relative MSE when missingness depends on the loadings. This setting can be viewed as endogenously missing data: 
The second factor has a weak signal on the observed entries of $Y$, but is important to model the missing data in $Y$.
This setting is similar to, but more complicated than, our toy example in Section \ref{subsec: consistency effect}. The auxiliary panel $X$ only contains the second factor, but not the first one; target $Y$ contains both of the two factors, but the second factor is relatively weak on the observed entries of $Y$ because the missing pattern depends on the factor loadings of $Y$. In this setting, $\text{XP}_Y$ performs worse than target-PCA because $\text{XP}_Y$ can hardly detect the second factor using the observed entries in $Y$. $\text{XP}_{Z^{(1)}}$ performs worse than target-PCA mainly because $\text{XP}_{Z^{(1)}}$ does not properly weight the two panels to account for their differences in the idiosyncratic noise levels. SE-PCA also performs worse than target-PCA because each separate estimation is noisier than our combined estimation. 

Our results are robust to modifying the parameters of the simulations. The Internet Appendix collects extensive robustness results, where we vary the noise variances and the fraction of observed entries for the models. Target-PCA continues to perform well and to dominate the benchmark methods. We conclude that target-PCA with its more comprehensive use of the target and auxiliary panels is better than the conventional benchmarks under various settings.

		\section{Empirical Results}\label{sec:empirical}
\subsection{Data} \label{sec: empirical data}
In our empirical study, we show the good performance of target-PCA for imputing missing values in popular macroeconomic panels. Our empirical analysis uses two standard data sets from the Federal Reserve Economic Data (FRED) of the St. Louis Fed. 

Our first macroeconomic panel is the FRED-MD macroeconomic database introduced by \cite{FRED_MD}.\footnote{We use the version of December 2021.} This dataset consists of 127 monthly macroeconomic variables which are classified into 8 groups: (a) output and income, (b) labor market, (c) housing, (d) consumption, orders, and inventories, (e) money and credit, (f) bond and exchange rates, (g) prices, and (h) stock market. Among them, we use the 120 time series that are fully observed over the time window from 01/1960 to 12/2020. We obtain stationary time series by applying the standard data transformation suggested by \cite{FRED_MD} to each time series and then normalize them to have zero mean and unit standard deviation. In Section \ref{sec: empirical 1}, we use this data set to evaluate the precision of different imputation methods. 

Our second macroeconomic panel contains 58 quarterly-observed macroeconomic time series from two categories, the national income \& product accounts category and the flow of funds category of the  FRED database. These 58 time series are represented as percentage changes relative to the prior year. Table IA.2 in the Internet Appendix provides a complete description of the data. We normalize the time series to have zero mean and unit standard deviation. In Section \ref{sec: empirical 2}, we combine this low-frequency panel with the higher-frequency panel of monthly-observed FRED-MD data. This is an example where the higher frequency data is not available and our imputation results provide a nowcasting time series of higher frequency.

\subsection{Comparison with Benchmark Methods} \label{sec: empirical 1}
In this section, we compare the imputation accuracy of target-PCA with the previously considered benchmark methods using the 120 monthly macroeconomic variables from the FRED-MD macroeconomic database. We take the 19 variables in the interest and exchange rates group as our target panel $Y$, and the remaining 101 variables in the other 7 groups as the auxiliary panel $X$. Table IA.1 in the Internet Appendix provides a list of the variables in $Y$. Having a good model to explain and impute interest and exchange rate time series is of economic interest.

We compare the out-of-sample imputation results for the target panel $Y$. We mask some entries in $Y$ as missing values with different missing patterns, and compare for various methods the imputation accuracy of the masked entries. Specifically, we consider the following four masking/missing patterns: 
\begin{enumerate} 
    \item {\it Missing-at-random:} The entries of $Y$ are missing uniformly at random.
    \item {\it Block-missing:} A subset of the macroeconomic variables in $Y$ is completely missing during some time periods.
    \item {\it Low-frequency observation:} All variables in $Y$ are observed at a lower than the desired frequency (annually instead of monthly).
    \item {\it Censoring:} The entries of $Y$ are missing if their values exceed a certain threshold.
\end{enumerate}

Table \ref{tab: empirical OOS RMSE} provides a detailed description of the masking mechanism and the percentage of observed values. This table compares the imputation accuracy of our target-PCA estimator with the three benchmark estimators $\text{XP}_Y$ (using only $Y$), $\text{XP}_{Z^{(1)}}$ (naive concatenation), and $\text{SE-PCA}$ (separate PCA). Target-PCA selects the optimized $\gamma^*$ based on our theory as $\gamma= r \cdot N_y/N_x$ with $r$ minimizing $\sum_{t,i} \sum_{C,ti}^{(\gamma)}$. We report the relative MSE of using estimated common components to impute masked entries in $Y$ for various estimators. The relative MSE in this case represents the out-of-sample imputation performance.

\begin{table}[t!] 
\centering
\begin{threeparttable}
\tcaptab{Out-of-sample relative MSE for different methods on FRED-MD}
\label{tab: empirical OOS RMSE}
\begin{tabular}{ccccccc}
\toprule
&  {\begin{tabular}{@{}c@{}}
Observation Pattern \\ (missing ratio)
\end{tabular}}  & {\begin{tabular}{@{}c@{}}
Factor \\ Number
\end{tabular}}  &\  T-PCA \  & \ $\text{XP}_Y$ \ & \ $\text{XP}_{Z^{(1)}}$  \ & \ SE-PCA  \ \\
\midrule
\multirow{5}{*}{ \includegraphics[width=0.13\textwidth, height=2.4cm]{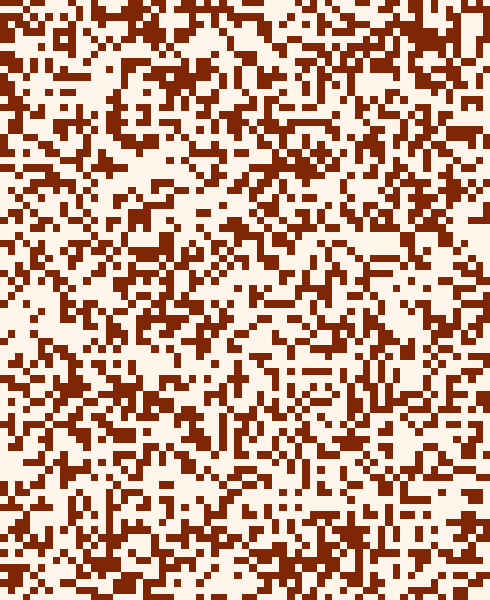}}
&\multirow{5}{*}{\begin{tabular}{@{}c@{}}
missing-at-random  \\  (40\%)
\end{tabular}}
&$k=1$  &\textbf{0.785}  &0.789 &0.986 &0.800 \\
& &$k=2$  &\textbf{0.488} &0.503 &0.968 &0.500 \\
& &$k=3$  &\textbf{0.485} &0.683 &0.926 &0.675 \\
& &$k=4$  &\textbf{0.491} &0.813 &0.797 &0.795 \\
& &$k=5$  &\textbf{0.483} &1.363 &0.615 &1.355 \\
\midrule
\multirow{5}{*}{ \includegraphics[width=0.13\linewidth, height=2.4cm]{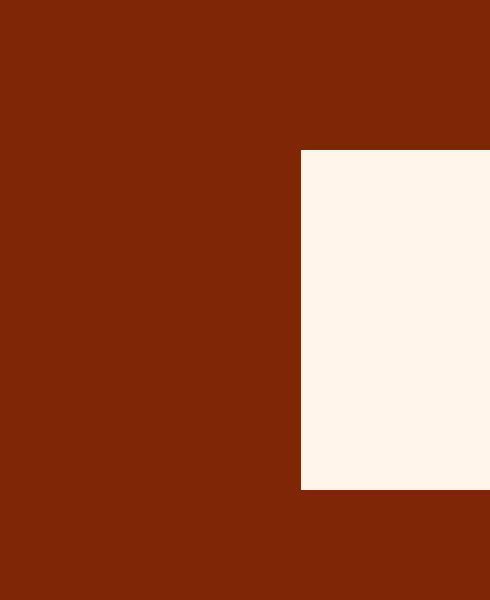}}
&\multirow{5}{*}{\begin{tabular}{@{}c@{}}
block-missing  \\   (19\%)
\end{tabular}} 
&$k=1$  &\textbf{0.958}  &1.018 &0.971 &1.003 \\
& &$k=2$  &\textbf{0.710} &0.805 &0.961 &0.852 \\
& &$k=3$  &\textbf{0.713} &0.796 &0.974 &0.803 \\
& &$k=4$  &\textbf{0.778} &0.783 &0.974 &0.781 \\
& &$k=5$  &\textbf{0.792} &2.601 &0.935 &2.584 \\
\midrule
\multirow{5}{*}{ \includegraphics[width=0.13\linewidth, height=2.4cm]{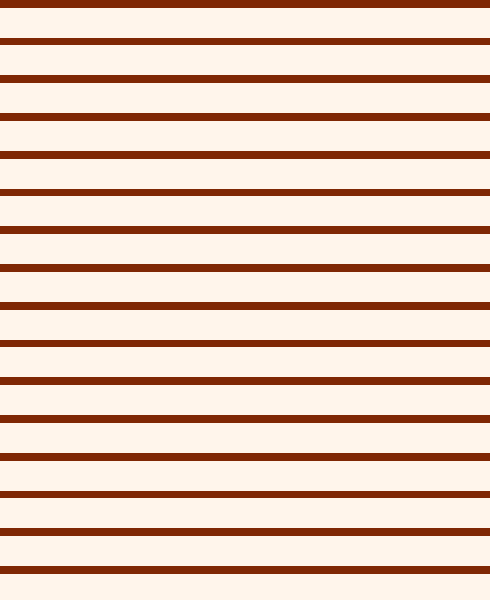}}
&\multirow{5}{*}{\begin{tabular}{@{}c@{}}
low-frequency \\   (92\%)
\end{tabular}} 
&$k=1$  &\textbf{0.942} &0.949 &1.019 &1.009 \\
& &$k=2$  &\textbf{0.927} &1.140 &0.931 &1.149 \\
& &$k=3$  &\textbf{0.926} &1.213 &0.936 &1.223 \\
& &$k=4$  &\textbf{0.910} &1.212 &1.095 &1.234 \\
& &$k=5$  &\textbf{1.017} &1.251 &1.092 &1.280 \\
\midrule
\multirow{5}{*}{ \includegraphics[width=0.13\linewidth, height=2.4cm]{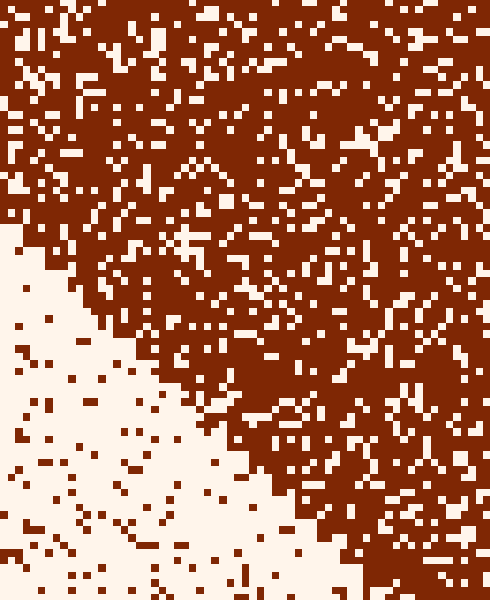}}
&\multirow{5}{*}{\begin{tabular}{@{}c@{}}
Censoring \\ (40\%)
\end{tabular}}
   &$k=1$  &\textbf{0.927}  &- &0.996 &0.995 \\
&  &$k=2$  &\textbf{0.881}  &- &0.996 &0.994 \\
&  &$k=3$  &\textbf{0.892}  &- &0.993 &0.992 \\
&  &$k=4$  &\textbf{0.885}  &- &0.990 &0.987 \\
& &$k=5$  &\textbf{0.869}   &- &0.984 &0.981 \\
\bottomrule
\end{tabular}
\end{threeparttable}
\floatfoot{This table reports the relative out-of-sample MSEs of the target panel $Y$ on the FRED-MD macroeconomic panel for three benchmark methods with different numbers of latent factors. The target panel $Y$ contains the 19 variables in the interest and exchange rates group, and the remaining 101 variables from $X$ from 01/1960 to 12/2020.
The masking/ missing patterns in this table are generated as follows. (a) Missing uniformly at random: We randomly mask each entry in $Y$ with probability $0.4$. We repeat this random masking 100 times and report the average relative MSE. (b) Block-missing: we mask the period from 01/1980 to 12/2009 for the 7 time series in $Y$ related to bond prices: TB3MS, TB6MS, GS1, GS5, GS10, AAA, and BAA. (c) Low-frequency observation: We mask the observations of $Y$ from February to December each year. Only for this case, the input target data for the latent factor model estimation is anchored at the most recent observed value as described in Section \ref{sec: anchor}. This means we augment $Y$ with the anchored time series that use the observation in January of the corresponding year for February to December for estimating the various latent factor models. 
(d) Censoring: We mask entries whose absolute value exceeds a threshold in $Y$. We set the threshold to 0.6, in order to mask approximately 40\% of the entries. 
The plots on the left column of this table illustrate these four missing patterns. The entries with dark color denote observed entries, and the entries with light color indicate that entries are missing. 
}
\end{table}

Target-PCA method dominates the benchmark methods for all missing patterns. As shown in Table \ref{tab: empirical OOS RMSE}, target-PCA achieves the smallest MSEs compared to other methods. In the case of missing-at-random and block-missing patterns, our target-PCA is more efficient since it can appropriately combine the information from the target and auxiliary panel. This is also supported by our theoretical results in Section \ref{sec: results}. In the low-frequency observation setting, target-PCA performs better by leveraging the information from the auxiliary panel. For the low-frequency masked data, we use the time series anchored at the most recent observed entries in $Y$ as described in Section \ref{sec: anchor}. In more detail, for this specific case, we fill the missing entries in $Y$ (February to December) with the observations in January of the corresponding year before applying target-PCA. 
Similarly, we also use these anchored time series for the $\text{XP}_Y$ and $\text{SE-PCA}$ estimators, which otherwise would not be applicable. The anchoring improves the performance for the low-frequency case, but the relative qualitative results are the same without anchoring. Our target-PCA estimator is also the best for the case of censoring, which could cause weak signals in the observed entries. In this case, $\text{XP}_Y$ is not applicable since there are time periods where all units have large absolute values and thus are missing. For this censored masking, the missing pattern can depend on the noise and factor realization, and hence violate our assumptions. Nevertheless, our method still performs very well and better than the benchmark methods.

Our target-PCA estimator is robust to the number of latent factors. Table \ref{tab: empirical OOS RMSE} compares the out-of-sample results for different numbers of factors. First, target-PCA dominates the benchmarks uniformly in the number of factors, that is, for the same number of factors it always performs better. Second, the results of target-PCA are very robust to the choice of $k$. Target-PCA with two factors is close to optimal and outperforms all other estimators even if they use more factors.
The benchmark methods are not stable in the number of factors. For a larger number of factors ($k=5$), $\text{XP}_Y$ and SE-PCA can perform substantially worse than the other two methods, especially for the missing-at-random and block-missing patterns. A plausible reason is that both $\text{XP}_Y$ and SE-PCA estimate the factors from regressing the observed $Y$ on estimated loadings, which can be very noisy for higher-order factors in $Y$ which are weaker. Both target-PCA and $\text{XP}_{Z^{(1)}}$ do not have this problem as factors are estimated from the regression on the combined panel of $X$ and $Y$. Target-PCA further improves upon $\text{XP}_{Z^{(1)}}$ by choosing an appropriate $\gamma$ for the consistency and efficiency effects.

\begin{figure}[t!]
    \tcapfig{Imputed time series of 6-month treasury yield for block-missing pattern}
        \centering
    \begin{tabular}{c}
         \includegraphics[width=0.805\textwidth]{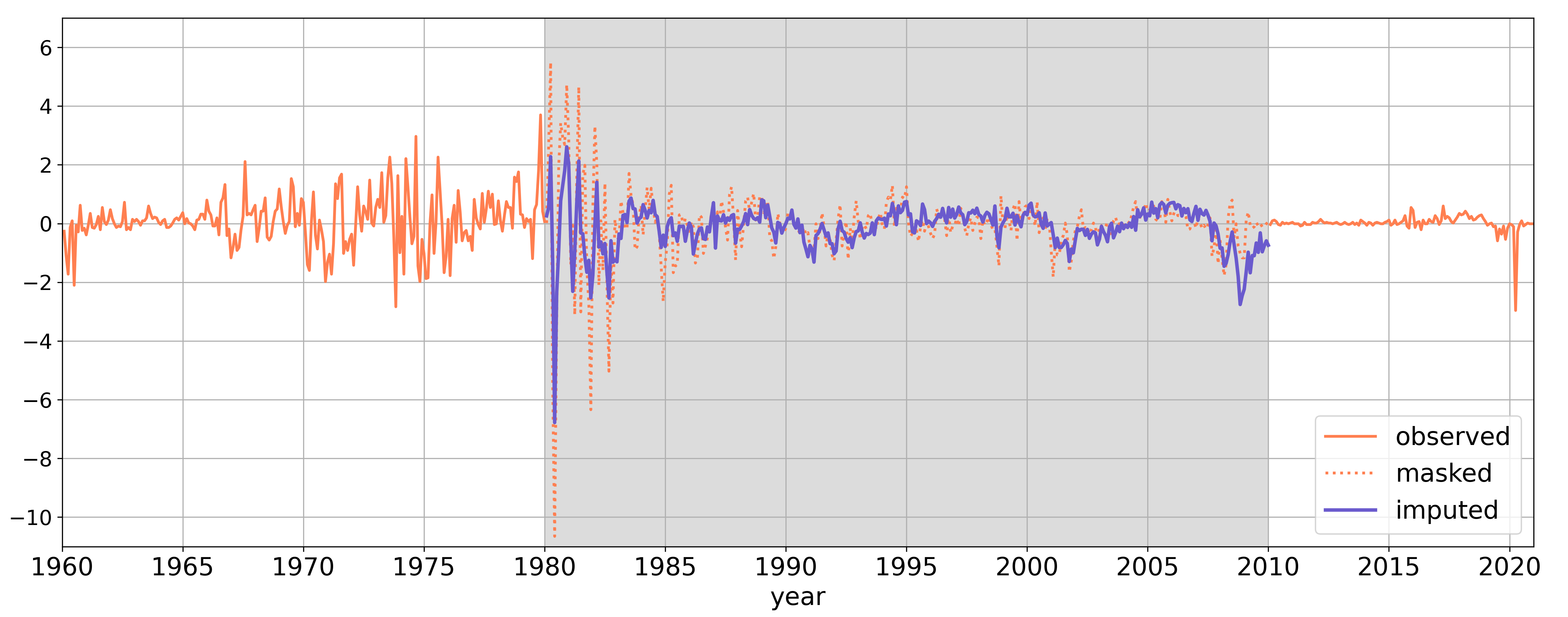}
    \end{tabular}
    \floatfoot{This figure shows the time series of the 6-month treasury yield for the block-missing pattern. The orange line shows the actual time series and the true masked values. The purple line denotes the imputed values with target-PCA with $k=3$ latent factors. The gray block indicates the missing period for the out-of-sample imputation. }
    \label{fig: time series}
\end{figure}

\begin{figure}[t!]
    \tcapfig{Imputed time series of rate spread for low-frequency missing pattern}
          \begin{subfigure}[b]{1.0\textwidth}
          \centering
\includegraphics[width=0.83\textwidth]{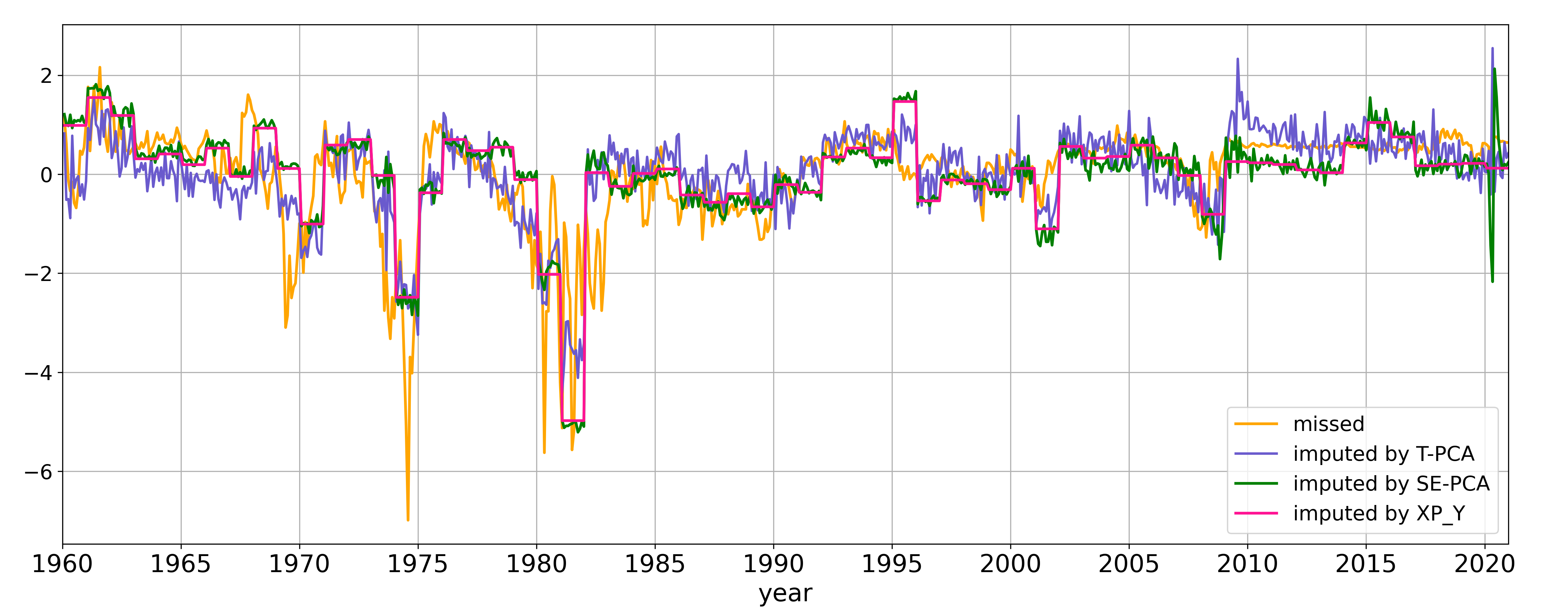}         
\caption{Full time period (01/1960 - 12/2020)}
    \end{subfigure}
          \begin{subfigure}[b]{1.0\textwidth}
          \centering
\includegraphics[width=0.83\textwidth]{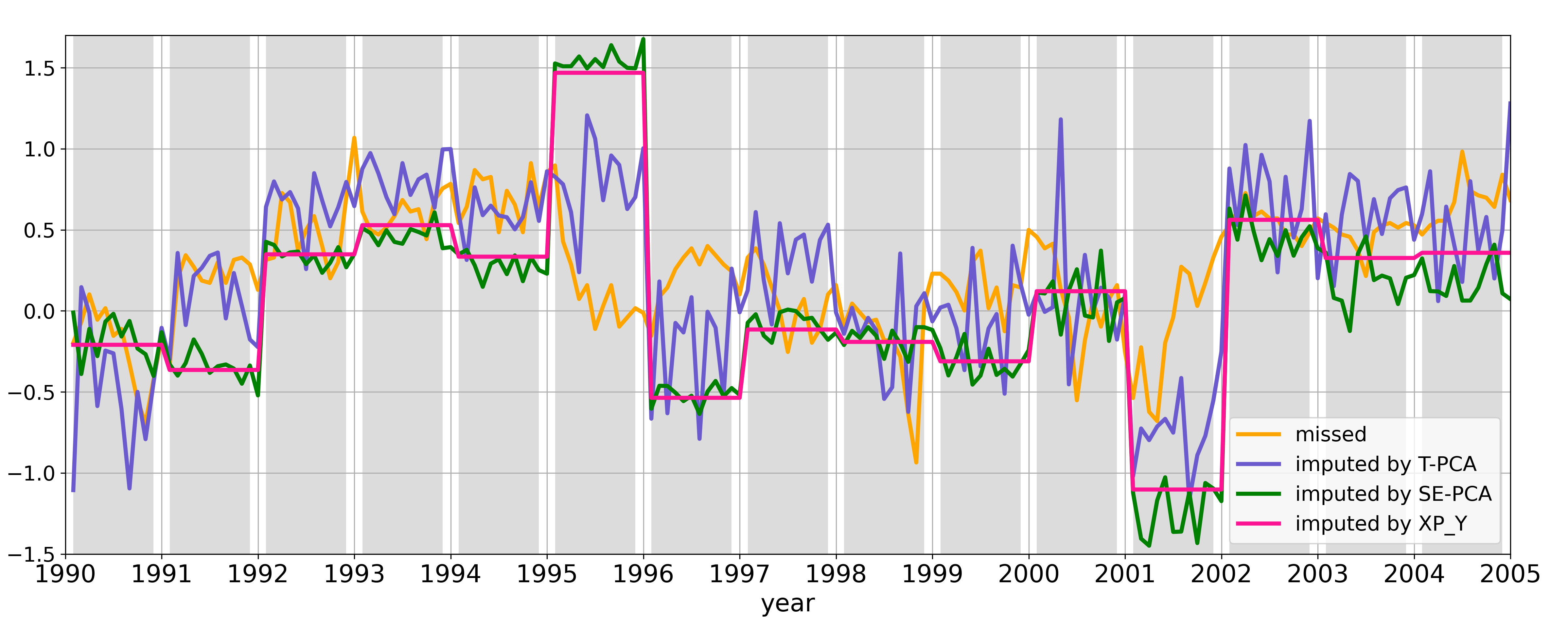}         
\caption{Zoom-in (01/1990 - 12/2005)}
    \end{subfigure}
    \vspace{-0.5cm}
    \floatfoot{This figure shows the time series of the spread between the 3-month treasury and Fed Funds rate for the low-frequency missing pattern. The orange line shows the actual time series and the true masked values. Observations are only observed in January and missing from February to December each year. The gray blocks in the zoom-in plot indicate the missing periods for the out-of-sample imputation with different methods. The number of factors equals $k=4$ for each imputation method.}
    \label{fig: time series_low_frequency}
\end{figure}

Target-PCA results in a meaningful time series of imputed values. Figures \ref{fig: time series} and \ref{fig: time series_low_frequency} illustrate the precise imputation of target-PCA for representative examples. Figure \ref{fig: time series} plots the imputed time series with target-PCA and the actual values of the 6-month treasury bill for the block-missing pattern. Our target-PCA imputation is quite accurate for the masked/missing time period as it captures well the fluctuation in the real time series. Figure \ref{fig: time series_low_frequency} plots the imputed time series with target-PCA and the actual time series of the spread between the 3-month treasury and Fed Funds rate for the low-frequency observation pattern. As a reference, we also include the time series imputed by the benchmark methods $\text{XP}_Y$ and $\text{SE-PCA}$. The lower panel zooms in to show the time period between 1990 and 2005. Target-PCA imputes the real time series reasonably well and visibly better than the benchmarks. In particular, a proper weight on $X$ seems to be essential to capture the contemporaneous fluctuations in the missing blocks.

\subsection{Nowcasting with Target-PCA} \label{sec: empirical 2}

An important practical problem is nowcasting macroeconomic panel data. We use target-PCA to impute unbalanced low-frequency macroeconomic panel data that is not available at a high frequency. The imputed high-frequency data represents a nowcasted panel, that is of interest to itself and also for downstream applications that require higher frequency data.\footnote{{A widely used approach for nowcasting is based on state-space models. Those models are complementary to our approach. State-space models are usually dynamic models for low dimensions that impose distributional assumptions. In contrast, our method is designed for large panels and focuses on leveraging cross-sectional information. For example, when a panel is of mixed frequency, our method can take advantage of high-frequency observations to ``nowcast'' the low-frequency variables. An interesting direction for future work is to extend our framework to dynamic factor models that optimally use auxiliary data.}}

In this section, we illustrate the time series of imputed nowcast data that are not available at a higher frequency. We impute the monthly values of only quarterly-observed time series of the FRED database. Our target panel consists of 58 macroeconomic time series of the national income and product accounts category and the flow of funds category. It includes several fundamental macroeconomic time series needed in economic research. To impute the monthly values of our target, we 
construct our auxiliary data by the 120 monthly macroeconomic indicators of FRED-MD from the first empirical study.

Target-PCA imputation performs well for the observed quarterly time periods and results in economically meaningful imputed values. Figure \ref{fig: time series in empirical 2} illustrates the imputation with target-PCA with the representative Gross Domestic Product (GDP) time series. We only observe quarterly-frequency GDP data, which is denoted by the orange ``+''. The monthly observed auxiliary panel $X$, allows us to use target-PCA to impute the monthly values of the GDP time series, which is plotted as the purple curve in Figure \ref{fig: time series in empirical 2}.\footnote{Our imputation can be used for economic stock and flow variables. In the case of a flow variable, the sum of monthly observations has to add up to lower frequencies like quarterly and annual values. For example, for GDP growth, the quarterly growth is the sum of the monthly growths within the quarter. Our imputed monthly values represent an appropriate moving average of the GDP growth; that is, the imputed values represent the growth over the previous three months. Obviously, the imputed values can always be transformed into monthly GDP growth. In summary, in our imputation process, we do not need to distinguish between flow and stock variables, as we are not directly imputing the monthly GDP growth. In practice, it is important to consider the interpretation of imputed values.}
Target-PCA captures the unknown and unobserved variation between two quarterly GDP observations using the monthly observed auxiliary data. 
Figure \ref{fig:example2} in the Appendix collects further examples, which all demonstrate the good performance of target-PCA for meaningful imputed values between the quarters. Our results use $k=5$ latent factors. As discussed in Section \ref{sec: empirical 1}, the target-PCA estimator is robust to the number of latent factors, and the results are very similar for different numbers of factors.\footnote{The factor model is estimated only once on the unbalanced panel. Applying target-PCA on an expanding window to avoid using future data gives essentially identical results. The results are available upon request.}

\begin{figure}[t!]
    \tcapfig{Quarterly observed GDP vs. monthly imputed GDP with target-PCA}
    \centering
        \includegraphics[width=0.9\textwidth]{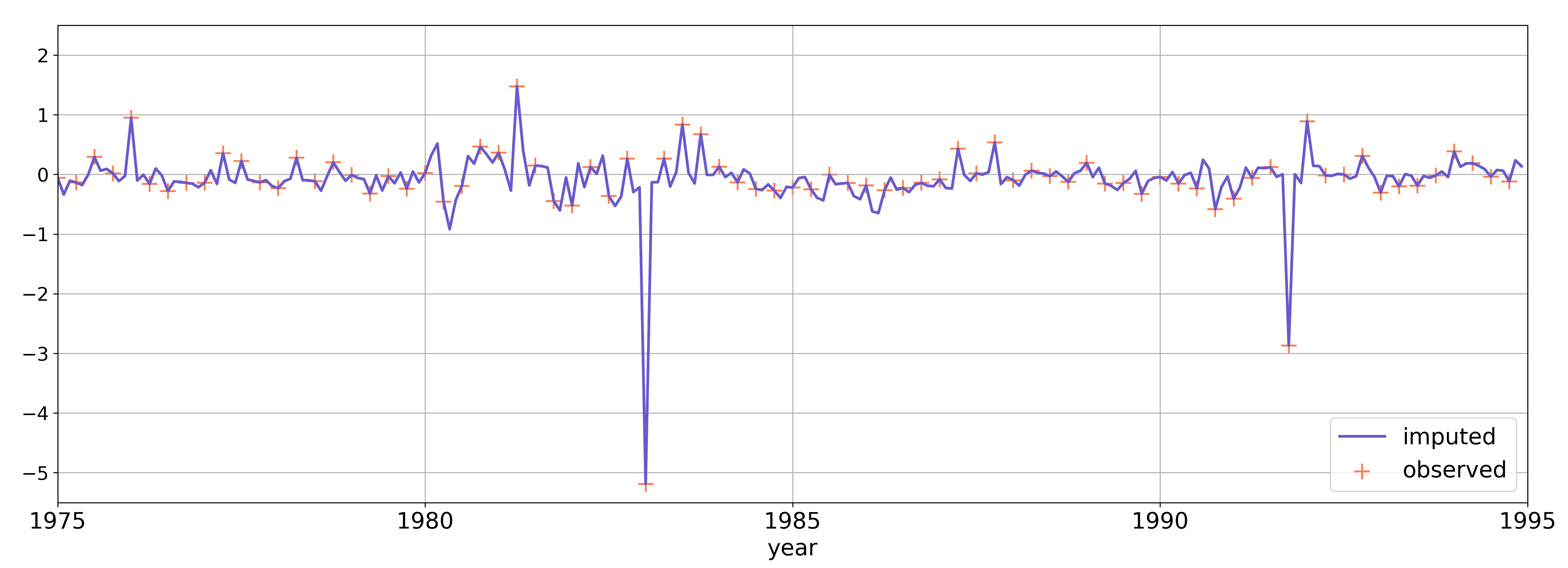}
    \floatfoot{This figure shows the time series of observed GDP and monthly imputed GDP with target-PCA. The orange ``+''  denotes the quarterly observed values of the GDP time series, and the purple curve denotes the imputed values of the GDP time series with target-PCA for $k=5$ latent factors. We only impute for the months when the observation is missing and take the actual values when the data is observed. The time series represent percentage changes relative to the prior year.
    }
    \label{fig: time series in empirical 2}
\end{figure}
		
		\section{Conclusion} \label{sec: conclusion}
This paper proposes our novel method target-PCA to estimate a latent factor model for target data by optimally weighting and combining the information from auxiliary data that is relevant to our target. This method is broadly applicable but easy to implement: It applies principal component analysis to a weighted average of the covariance matrices of the target and auxiliary panels. Target-PCA is particularly beneficial when the target data have missing observations, which could handle some of the scenarios that conventional methods cannot solve. A leading example is the mixed-frequency observation pattern, where conventional PCA cannot estimate the factors in times without any observations.

To optimally combine the auxiliary data, we need to overcome the differences in dimensionality and noise ratio between the two panels. Target-PCA tackles these problems by introducing a target weight $\gamma$ and combining information from the auxiliary data with the weighted target data. We show two essential effects of this target weight for target-PCA: the consistency and efficiency effects. First, by selecting the target weight at the right rate, we can ensure the consistent estimation of all factors in the target panel, including weak factors. Second, by selecting the scaling of the target weight to account for the noise ratios between the panels, we improve the efficiency of the estimated factor model.

We develop the inferential theory for the estimated factors, loadings, and imputed values of target-PCA under very general assumptions on the approximate factor model and missing patterns. The asymptotic results are used to construct confidence intervals and provide guidance on choosing the optimal target weight for target-PCA. In an empirical analysis, we illustrate the benefit of our approach with the imputation of unbalanced macroeconomic panel data.

 \end{onehalfspacing}

\singlespacing
\bibliographystyle{econometrica}
{\small
\bibliography{reference}
}
\onehalfspacing

\appendix

\renewcommand{\thesubsection}{\Alph{section}.\arabic{subsection}}
\setcounter{table}{0}
\setcounter{figure}{0}
\renewcommand{\thetable}{A.\arabic{table}}
\renewcommand{\thefigure}{A.\arabic{figure}}
	    	    

%
%

\section{General Assumptions}\label{subsec: general assumptions}

\paragraph{Notation.} Let $C<\infty$ denote a generic constant. Let $\norm{v} $ denote the vector norm and $\norm{A} = trace(A^\top A)^{1/2}$ the Frobenius norm of matrix $A$.

\begin{assumpG}[Factor model]  \label{assump: factor model}
\texttt{}
 \begin{enumerate}
 \item  Factors: $\forall t$, $\E\|F_t\|^4 \leq C$. There exists a positive definite $k\times k$ matrix $\Sigma_F$ such that $\frac{1}{T}\sum_{t=1}^T F_t F_t^\top \overset{p}{\rightarrow} \Sigma_F$ and $\E \left\|\sqrt{T}\left(\frac{1}{T}\sum_{t=1}^T F_t F_t^\top -\Sigma_F\right)\right\|^2 \leq C$. Furthermore, for any $\QY_{ij}$, $\frac{1}{|Q^Z_{ij}|} \sum_{t \in Q_{ij}^Z} F_t F_t^\top \overset{p}{\rightarrow} \Sigma_F$ and $\E \left\|\sqrt{|Q^Z_{ij}|}\left(\frac{1}{|Q^Z_{ij}|}\sum_{t \in Q^Z_{ij}} F_t F_t^\top -\Sigma_F\right)\right\|^2 \leq C$.

 \item  Loadings: loadings are independent of factors and errors, and $\LamX$ is independent with $\LamY$. $\E\|({\LamX})_i\|^4 \leq C$, $\E\|({\LamY})_i\|^4 \leq C,$ $\frac{1}{\Nx}\sum_{i=1}^{\Nx}({\LamX})_i({\LamX})_i^\top \overset{p}{\rightarrow}\Sigma_{{\LamX}}$, $\frac{1}{\Ny}\sum_{i=1}^{\Ny}({\LamY})_i({\LamY})_i^\top \overset{p}{\rightarrow}\Sigma_{{\LamY}}$, and $\Sigma_{{\LamX}}+ \Sigma_{\LamY}$ is a positive definite matrix. Furthermore, for any $t$, $\frac{1}{\Ny}\sum_{i=1}^{\Ny} \WY_{ti}({\LamY})_i ({\LamY})_i^\top \overset{p}{\rightarrow} \Sigma_{{\LamY},t},$ and $\Sigma_{\LamX} +  \Sigma_{{\LamY},t}$ is a positive definite matrix. 
\item  Idiosyncratic errors: 
\begin{enumerate}
\item $\E(\eX)_{ti}=\E({\eY})_{ti}=0, \E(\eX)_{ti}^8 \leq C$ and $\E({\eY})_{ti}^8 \leq C$.
\item $\E\Ls(\eX)_{ti}(\eX)_{si}\Rs = \gamma^{(\eX)}_{st,i}$ with $|\gamma_{st,i}^{(\eX)}| \leq \gamma_{st}$ and $\E\Ls({\eY})_{ti}({\eY})_{si}\Rs = \gamma^{({\eY})}_{st,i}$ with $|\gamma_{st,i}^{({\eY})}| \leq \gamma_{st}$ for some $\gamma_{st}$ and all $i$, $\sum_{s=1}^T \gamma_{st} \leq C$ for all $t$. Furthermore, $\gamma_{tt,i}^{(\eX)}/\gamma_{tt,i}^{(\eY)}$ is bounded away from zero for all $t,i$.
\item $\E\Ls(\eX)_{ti}(\eX)_{tj}\Rs = \tau_{ij,t}^{(\eX)}$ with $|\tau_{ij,t}^{(\eX)}| \leq \tau_{ij}^{(\eX)}$ for some $\tau_{ij}^{(\eX)}$ and all $t$, $\sum_{j=1}^{\Nx} \tau_{ij}^{(\eX)} \leq C$ for all $i$; $\E\Ls({\eY})_{ti}({\eY})_{tj}\Rs = \tau_{ij,t}^{({\eY})}$ with $|\tau_{ij,t}^{({\eY})}| \leq \tau_{ij}^{({\eY})}$ for some $\tau_{ij}^{({\eY})}$ and all $t$, $\sum_{j=1}^{\Ny} \tau_{ij}^{({\eY})} \leq C$ for all $i$; $\E\Ls({\eY})_{ti}(\eX)_{tj}\Rs = \tau_{ij,t}^{({\eY},\eX)}$ with $|\tau_{ij,t}^{({\eY},\eX)}| \leq \tau_{ij}^{({\eY},\eX)}$ for some $\tau_{ij}^{({\eY},\eX)}$ and all $t$, $\sum_{i=1}^{\Ny} \tau_{ij}^{({\eY},\eX)} \leq C$ for all $j$, and $\sum_{j=1}^{\Nx} \tau_{ij}^{({\eY},\eX)} \leq C$ for all $i$.
\item $\E\Ls(\eX)_{ti}(\eX)_{sj}\Rs = \tau_{ij,ts}^{(\eX)},$ and $\sum_{j=1}^{\Nx} \sum_{s=1}^T|\tau_{ij,ts}^{(\eX)}|\leq C$ for all $i$ and $t;$ $\E\Ls({\eY})_{ti}({\eY})_{sj}\Rs = \tau_{ij,ts}^{({\eY})},$ and $\sum_{j=1}^{\Ny} \sum_{s=1}^T|\tau_{ij,ts}^{({\eY})}|\leq C$ for all $i$ and $t$; $\E\Ls(\eX)_{ti}({\eY})_{sj}\Rs = \tau_{ij,ts}^{({\eY},\eX)},$ for all $i$ and $t$, $\sum_{i=1}^{\Nx} \sum_{s=1}^T|\tau_{ij,ts}^{({\eY},\eX)}|\leq C$.
\item $\E \Big [ \frac{1}{|\QY_{ij}|^{1/2}}\sum_{t \in \QY_{ij}}\Lp({\eY})_{ti}({\eY})_{tj} - \E\Ls({\eY})_{ti} ({\eY})_{tj}\Rs\Rp \Big]^4 \leq C$, $\E \Big [ \frac{1}{T^{1/2}}\sum_{t =1}^T \big((\eX)_{ti}(\eX)_{tj}-\E[(\eX)_{ti} \\ (\eX)_{tj}]\big) \Big]^4 \leq C$ and $\E \Big [ \frac{1}{|\QY_{jj}|^{1/2}}\sum_{t \in \QY_{jj}}\Lp(\eX)_{ti}({\eY})_{tj} - \E\Ls(\eX)_{ti}({\eY})_{tj}\Rs\Rp \Big]^4 \leq C$.
\end{enumerate}
\item  Weak dependence between factor and idiosyncratic errors: $\E \left\|\frac{1}{\sqrt{|Q^Y_{ij}|}}\sum_{t \in Q^Y_{ij}}F_t ({\eY})_{tj} \right\|^2 \leq C$ for any $i,j=1,\cdots,\Ny$. $\E \left\|\frac{1}{\sqrt{|Q^Y_{ii}|}}\sum_{t \in Q^Y_{ii}}F_t (\eX)_{tj'} \right\|^2 \leq C$ and $\E \left\|\frac{1}{\sqrt{T}}\sum_{t =1}^T F_t (\eX)_{tj'} \right\|^2 \leq C$ for any $j'=1,\cdots,\Nx$.
  \end{enumerate}
\end{assumpG}

\begin{assumpG}[Moment conditions and central limit theorems] \label{assump: additional assumptions} \texttt{}
\begin{enumerate}
\item For any $i=1,\cdots,\Ny$, there is $\E\Big\|\sqrt{\frac{T}{\Nx}}\sum_{j=1}^{\Nx} ({\LamX})_j \frac{1}{|\QY_{ii}|}\sum_{s\in \QY_{ii}}F_s^\top (\eX)_{sj}\Big\|^2\leq C$,
and \\ $\E\Big\|\sqrt{\frac{T}{\Ny}}\sum_{j=1}^{\Ny} ({\LamY})_j \frac{1}{|\QY_{ij}|}\sum_{s\in \QY_{ij}}F_s^\top ({\eY})_{sj}\Big\|^2\leq C$.
 \item  $\E \Big\|\frac{\sqrt{\Nx T}}{\Nx^2}\sum_{i,j=1}^{\Nx }({\LamX})_j({\LamX})_j^\top \frac{1}{T}\sum_{s=1}^T F_s({\LamX})_i^\top (\eX)_{si} \Big\|^2\leq C$,\\
 $\E \Big\|\frac{\sqrt{\Ny T}}{\Ny^2}\sum_{i,j=1}^{\Ny}({\LamY})_j({\LamY})_j^\top \frac{1}{{|\QY_{ij}|}}\sum_{s\in \QY_{ij}} F_s({\LamY})_i^\top ({\eY})_{si} \Big\|^2\leq C$, \\
 $\E \Big\|\frac{\sqrt{\Nx T}}{\Ny\Nx }\sum_{i=1}^{\Nx }\sum_{j=1}^{\Ny}({\LamY})_j({\LamY})_j^\top \frac{1}{{|\QY_{jj}|}}\sum_{s\in \QY_{jj}} F_s({\LamX})_i^\top (\eX)_{si} \Big\|^2\leq C$,\\
 $\E \Big\|\frac{\sqrt{\Ny T}}{\Ny\Nx }\sum_{i=1}^{\Ny}\sum_{j=1}^{\Nx }({\LamX})_j({\LamX})_j^\top \frac{1}{{|\QY_{ii}|}}\sum_{s\in \QY_{ii}} F_s({\LamY})_i^\top ({\eY})_{si} \Big\|^2\leq C$.
 \item For any $i=1,\cdots,\Ny,$ $\E\Big\| \sqrt{\frac{T}{\Nx }}
 \sum_{j=1}^{\Nx}  ({\LamX})_j \frac{1}{|\QY_{ii}|}\sum_{s\in \QY_{ii}}\Lp(\eX)_{sj}({\eY})_{si}-\E\Ls(\eX)_{sj}({\eY})_{si}\Rs\Rp \Big\|^2\leq C,$ and $\E\Big\| \sqrt{\frac{T}{\Ny}}
 \sum_{j=1}^{\Ny} ({\LamY})_j \frac{1}{|\QY_{ij}|}\sum_{s\in \QY_{ij}}\Lp({\eY})_{si}({\eY})_{sj}-\E\Ls({\eY})_{si}({\eY})_{sj}\Rs\Rp \Big\|^2\leq C$.
 \item  For any $t=1,\cdots, T$, $\E \Big\|\sqrt{\frac{T}{\Nx^3}}\sum_{i,j=1}^{\Nx }\frac{1}{T}\sum_{s=1}^T\phi_{ij,st} \Lp(\eX)_{si} (\eX)_{sj} - \E\Ls(\eX)_{si}(\eX)_{sj}\Rs\Rp \Big\|^2\leq C$ 
 with $\phi_{ij,st} = ({\LamX})_j({\LamX})_i^\top$, $({\LamX})_j({\LamX})_i^\top F_s$, $({\LamX})_j(\eX)_{ti}$, $\E \Big\|\sqrt{\frac{T}{\Ny^3}}\sum_{i,j=1}^{\Ny}\frac{1}{|\QY_{ij}|}\sum_{s\in \QY_{ij}}\phi_{ij,st} (({\eY})_{si} ({\eY})_{sj} - \E\Ls({\eY})_{si}({\eY})_{sj}\Rs) \Big\|^2 \leq C$ with $\phi_{ij,st} = ({\LamY})_j({\LamY})_i^\top$, $\WY_{ti}({\LamY})_j({\LamY})_i^\top F_s$, $\WY_{ti}({\LamY})_j({\eY})_{ti}$, and moreover, $\E \Big\|\sqrt{\frac{T}{\Nx^2\Ny}}\sum_{i=1}^{\Nx} \sum_{j=1}^{\Ny}\frac{1}{|\QY_{jj}|}\sum_{s\in \QY_{jj}}\phi_{ij,st} \Lp(\eX)_{si} ({\eY})_{sj} - \E\Ls(\eX)_{si}({\eY})_{sj}\Rs\Rp \Big\|^2\leq C$ with $\phi_{ij,st} = ({\LamY})_j({\LamX})_i^\top$, $({\LamY})_j({\LamX})_i^\top F_s,$ $\WY_{tj}({\LamX})_i({\LamY})_j^\top F_s$, $({\LamY})_j(\eX)_{ti}$, $\WY_{tj}({\LamX})_i({\eY})_{tj}$.
 \item For any $t$ and $j=1,\cdots,\Ny,$ $\E\Big\|\sqrt{\frac{{T}}{\Nx}}\sum_{i=1}^{\Nx} \big(\frac{1}{|\QY_{jj}|}\sum_{s\in \QY_{jj}}F_sF_s^\top - \frac{1}{T}\sum_{s=1}^TF_s F_s^\top\big) ({\LamX})_i (\eX)_{ti}\Big\|^4\leq C$,
and $\E\Big\|\sqrt{\frac{{T}}{\Ny}} \sum_{i=1}^{\Ny} \big(\frac{1}{|(Q_{ij}^*|}\sum_{s\in Q_{ij}^*}F_sF_s^\top - \frac{1}{T}\sum_{s=1}^TF_s F_s^\top\big)\WY_{ti} ({\LamY})_i ({\eY})_{ti}\Big\|^4\leq C$ with $Q_{ij}^*=\QY_{ii}$ or $\QY_{ij}.$
 \item $\frac{\Nx}{\Nx+\Ny}\Big( \frac{\sqrt{T}}{\Nx}\sum_{j=1}^{\Nx} ({\LamX})_j({\LamX})_j^\top \frac{1}{|\QY_{ii}|}\sum_{t\in \QY_{ii}}F_t({\eY})_{ti} +\gamma\cdot \frac{\sqrt{T}}{\Nx}\sum_{j=1}^{\Ny} ({\LamY})_j({\LamY})_j^\top \frac{1}{|\QY_{ij}|}\sum_{t\in \QY_{ij}}F_t ({\eY})_{ti}\Big)$  $\overset{d}{\rightarrow} \Ncal(0,\Gamma_{{\LamY},i}^{(\gamma),{\rm obs}})$ for any $\gamma = r\cdot \Nx/\Ny$ and $i=1,\cdots,\Ny.$
 \item  For any $t$, $\frac{1}{\sqrt{\Nx}}\sum_{i=1}^{\Nx} ({\LamX})_i (\eX)_{ti} \overset{d}{\rightarrow}\Ncal(0,\Sigma_{{\LamX} \eX,t})$, $\frac{1}{\sqrt{\Ny}}\sum_{i=1}^{\Ny} \WY_{ti}({\LamY})_i ({\eY})_{ti} \overset{d}{\rightarrow}\Ncal(0,\Sigma_{{\LamY}{\eY},t})$. For any $\gamma = r\cdot \Nx/\Ny$, $\Gamma_{F,t}^{(\gamma),{\rm obs}}:=\lim\frac{\Nx \Ny}{(\Nx+\Ny)^2}\Sigma_{{\LamX} \eX,t} + (\gamma \frac{\Ny}{\Nx+\Ny})^2 \Sigma_{{\LamY}{\eY},t}$. If there is some weak factor $F_{w}$ in $Y$ whose loading $\sum_{i=1}^{\Ny}(\LamY)^2_{i,w}$ grows at the rate $g(\Ny)$, then let $N_w= \min(\Ny^2/g(\Ny),\Nx)$. For $F_w$, $\frac{1}{\sqrt{g(\Ny)}} \sum_{i=1}^{\Ny} W^Y_{ti} (\Lam_y)_{i,w} (\eY)_{ti}  \overset{d}{\rightarrow}\Ncal(0,\Sigma_{{\Lam_{y,w}}{\eY},t})$, and for any $\gamma = r\cdot \Nx/\Ny$, $\Gamma_{F_w,t}^{(\gamma),{\rm obs}}:=\lim\frac{\Nx N_w}{(\Nx+\Ny)^2}\Sigma_{{\LamX} \eX,t} +\gamma^2 \frac{{ g(\Ny)N_w}}{(\Nx+\Ny)^2} \Sigma_{{\Lam_{y,w}}{\eY},t}$.
 \item  Define the filtration $\mathcal{G}^t = \sigma(\cup_{s=1}^T \mathcal{G}^t_{s})$ with $\mathcal{G}^t_{s} = \sigma(\{\WY_{ij},i \leq s,\text{all}\ j\},\LamY, v_t^{(\gamma)})$ generated by $\{\WY_{ij},i \leq s,\text{all}\ j\}, \LamY$ and $v_t^{(\gamma)} = (\Sigma_\Lam^{(\gamma)})^{-1}\Sigma_F^{-1}F_t$. For every $i$, $t$, and $u_i = (\LamY)_i,$ it holds
\[
\sqrt{T} \begin{bmatrix}
X_i^{(\gamma)}u_i \\ \mathbf{X}_t^{(\gamma)}v_t^{(\gamma)}
\end{bmatrix} 
\overset{d}{\rightarrow} \mathcal{N} \left( 0, \begin{bmatrix} h_i^{(\gamma)}(u_i) & g_{i,t}^{(\gamma),\text{cov}}(u_i,v_t^{(\gamma)})^\top \\ g_{i,t}^{(\gamma),\text{cov}}(u_i,v_t^{(\gamma)}) & g_t^{(\gamma)}(v_t^{(\gamma)})
\end{bmatrix} \right)\quad \mathcal{G}^t - \text{stably}.
\]
When $i=1,\cdots,\Nx$, $X_i^{(\gamma)} = \gamma\cdot \frac{1}{\Nx+\Ny}\sum_{j=1}^{\Ny} ({\LamY})_j({\LamY})_j^\top \big(\frac{1}{|\QY_{jj}|}\sum_{t\in \QY_{jj}}F_t F_t^\top - \frac{1}{T}\sum_{t=1}^TF_t F_t^\top\big)$, and when $i=\Nx+1,\cdots,\Nx+\Ny$, $X_i^{(\gamma)} =  \frac{1}{\Nx+\Ny}\sum_{j=1}^{\Nx} ({\LamX})_j({\LamX})_j^\top \big(\frac{1}{|\QY_{i'i'}|}\sum_{t\in \QY_{i'i'}}F_t F_t^\top - \frac{1}{T}\sum_{t=1}^TF_t F_t^\top\big) + \gamma\cdot \frac{1}{\Nx+\Ny}\sum_{j=1}^{\Ny} ({\LamY})_j({\LamY})_j^\top \big(\frac{1}{|\QY_{i'j}|}\sum_{t\in \QY_{i'j}}F_t F_t^\top - \frac{1}{T}\sum_{t=1}^TF_t F_t^\top\big)$ with $i'=i-\Nx$. For any $t$, $\mathbf{X}_t^{(\gamma)} = \frac{1}{\Nx+\Ny}\sum_{i=1}^{\Nx}X^{(\gamma)}_i({\LamX})_i({\LamX})_i^\top + \gamma\cdot \frac{1}{\Nx+\Ny}\sum_{i=1}^{\Ny} X^{(\gamma)}_{i+\Nx} \WY_{ti}({\LamY})_i ({\LamY})_i^\top$. 
\end{enumerate}
\end{assumpG}

\vspace{3mm}

\onehalfspacing

The assumptions on the factor structure in $Y$ are on the same level of generality as in \cite{xiong2019large} with the generalization that factors in $Y$ can be weak. The approximate factor model in $X$ is also on a similar level of generality as in \cite{bai2003inferential} with the generalization that not all factors in $Y$ are included in $X$. The assumptions essentially assume that for a properly selected $\gamma$ the combined weighted panel $Z^{(\gamma)}$ satisfies the assumptions in \cite{xiong2019large}.

In more detail, Assumption \ref{assump: factor model} describes an approximate factor structure and is at a similar level of generality as \cite{bai2003inferential}: 
	(1) Assumption \ref{assump: factor model}.1 states that each factor has a nontrivial variation for the observed time periods.
	(2) We assume loadings are random but independent of factors and errors in Assumption \ref{assump: factor model}.2. This is a standard assumption. 
  Assumption \ref{assump: factor model}.2 ensures that the loadings are systematic for both the full observed weighted matrix $Z^{(\gamma)}$ and the partially observed $\Tilde{Z}^{(\gamma)}$ with our proposed target weight $\gamma = r\cdot \Nx/\Ny$. This assumption is needed to show the consistency of loadings and factors. This assumption deviates from the usual factor model assumptions as neither $\Sigma_{{\LamX}}$ nor $\Sigma_{\LamY}$ has to be full rank. 
(3) Assumption \ref{assump: factor model}.3 is a standard weak dependency assumption in the noise and allows the errors to be time-series and cross-sectionally weakly correlated.
	(4) Assumption \ref{assump: factor model}.4 allows the factors and idiosyncratic errors to be weakly correlated.

Assumption \ref{assump: additional assumptions} is not necessary to show the consistency of loadings and factors but is only used to show the asymptotic normality of the estimators. The assumptions are closely related to the moment and CLT assumptions in \cite{bai2003inferential}. Assumptions \ref{assump: additional assumptions}.1-\ref{assump: additional assumptions}.5 bound the second moments of certain averages. Assumption \ref{assump: additional assumptions}.6 and \ref{assump: additional assumptions}.7 state the necessary central limit theorems. Assumption \ref{assump: additional assumptions}.8 is specific to the missing value problem and introduces the correction terms that appear in the asymptotic distribution. These terms emerge because our estimator averages over different numbers of observations for different entries in the covariance matrix. 
Assumption \ref{assump: additional assumptions}.8 assumes a central limit theorem for $X_i^{(\gamma)}$ and $\mathbf{X}_t^{(\gamma)}$. The conventional CLT of the form 
		\begin{align*}
			\sqrt{T} \begin{bmatrix}
				\text{vec}(X_i^{(\gamma)}) \\ \text{vec}(\mathbf{X}_t^{(\gamma)})
			\end{bmatrix} \xrightarrow{d} \Ncal \Lp 0, \begin{bmatrix}
				\Psi_i & ({\Psi}^{\text{cov}}_{i,t})^\T  \\ {\Psi}^{\text{cov}}_{i,t}  & {\Psi}_t
			\end{bmatrix} \Rp
		\end{align*}
		would not be sufficient as $X_i^{(\gamma)}$ and $\mathbf{X}_t^{(\gamma)}$ are multiplied with the random variables $u_i$ and $v_t^{(\gamma)}$ in $\sqrt{T} \begin{bmatrix}
			(X_{i}^{(\gamma)} u_i)^\T & (\mathbf{X}_t^{(\gamma)} v_t^{(\gamma)})^\T
		\end{bmatrix}$. Hence, the asymptotic variances of these products are quadratic functions in the elements of those random variables given by $h_i^{(\gamma)}(u_i)$ and $g_t^{(\gamma)}(v_t^{(\gamma)})$ and take the form of $h_i^{(\gamma)}(u_i) = (u_i^\T \otimes I_k) \Psi_i (u_i \otimes I_k)$ and $g_t^{(\gamma)}(v_t^{(\gamma)}) = (v_t^{(\gamma)\T} \otimes I_k) {\Psi}_t (v_t^{(\gamma)} \otimes I_k)$.

In Assumption \ref{assump: additional assumptions}.8 we require a central limit theorem for stable convergence in law which is stronger than the conventional central limit theorem for convergence in distribution. This is because the asymptotic variance in Assumption \ref{assump: additional assumptions}.8 depends on $(\LamY)_i$, $(\LamX)_i$ and $F_t$, which are random variables. Thus, we deal with a mixed normal limit, and stable convergence in law ensures that the normal distribution of the central limit theorem will be independent of $(\LamY)_i$, $(\LamX)_i$, and $F_t$. The stable convergence in law result implies that the estimated factors and common components normalized by their random standard deviation converge to a standard normal distribution. More specifically, Assumption \ref{assump: additional assumptions}.8 implies that $X_i^{(\gamma)}$ and $\textbf{X}_t^{(\gamma)}$ jointly converge $\mathcal{G}^t$-stably for $(N_x, N_y, T) \rightarrow \infty$ to a mixed normal distribution, whose asymptotic variance is random but measurable with respect to the sigma-field $\mathcal{G}^t$. Assumption \ref{assump: additional assumptions}.8 is needed for the asymptotic distribution of the variance correction term in Theorem \ref{thm: asymptotic distribution} as its asymptotic variance is random. Our simplified factor model specified by Assumptions \ref{assump: simplified factor model} satisfies a central limit theorem for stable convergence in law.

\section{A Simplified Factor Model} \label{appendix: simplified model}
We present a simplified factor model with the stronger Assumptions \ref{assump: simplified factor model} and \ref{assump: simplified obs pattern}, which substantially simplifies the notation but conveys the main conceptual insights of the general model. It allows us to highlight the effect of the target weight, the relaxation of weak factors and missing observations. In particular, under these assumptions, we can provide explicit expressions for the asymptotic variances in Theorem \ref{thm: asymptotic distribution}.

The consistency results are based on the simplified Assumption \ref{assump: simplified factor model} that assumes that all observations are i.i.d. The key element is the strength of the factors measured by their loadings. Specifically, we measure the strength of the factors by the fraction $p_j$ of units in $Y$ that are affected by the corresponding factor. The error terms are non-systematic with bounded eigenvalues in the covariance matrix. Allowing for more complex dependency as in our general model does not change the arguments, but makes the notation more complex.


\begin{assumpS}[Simplified factor model]  \label{assump: simplified factor model}
There exists constant $C<\infty$ such that
\begin{enumerate}
    \item Factors: $F_t \overset{\text{i.i.d.}}{\sim} (0,\Sigma_F)$ and $\E\|F_t\|^4 \leq C$ for any $t.$
    \item Loadings: 
    %
    $(\LamX)_{i} \overset{\text{i.i.d.}}{\sim} (0,\Sigma_{\LamX})$, where $\Sigma_{\LamX}$ is positive semidefinite. 
    $(\LamY^{\mathrm{full}})_{i} \overset{\text{i.i.d.}}{\sim} (0,\Sigma^{\mathrm{full}}_{\LamY})$  and the loading of the $j$-th factor $(\LamY)_{ij} = (\LamY^{\mathrm{full}})_{ij} \cdot (U_{y})_{ij}$, where $\Sigma^{\mathrm{full}}_{\LamY}$ is positive definite and the Bernoulli random variable $(U_{y})_{ij} \in \{0,1\}$ is independent in $i$ with $\PP((U_{y})_{ij} = 1) = p_{j}$ for some $p_{j} \in [0,1]$.
    Furthermore, $\E\|(\LamX)_i\|^4\leq C$, $\E\|(\LamY)_i\|^4\leq C$, $N_y^{-1}\sum_{i=1}^{\Ny}(\LamY)_i(\LamY)_i^\top \overset{p}{\rightarrow} \Sigma_{\LamY}$, and $ \Sigma_{\LamX}+\Sigma_{\LamY}$ is positive definite. For any $t,$ $N_y^{-1}\sum_{i=1}^{\Ny}\WY_{ti}(\LamY)_i (\LamY)_i^\top \overset{p}{\rightarrow} \Sigma_{\LamY,t}$ and  $\Sigma_{\LamX}+\Sigma_{\LamY,t}$ is positive definite.
    \item Idiosyncratic errors: $(\eX)_{ti}\overset{\text{i.i.d.}}{\sim} (0,\sigma_{\eX}^2), (\eY)_{ti}\overset{\text{i.i.d.}}{\sim} (0,\sigma_{\eY}^2)$, $\E(\eX)^8_{ti}\leq C, \E(\eY)^8_{ti}\leq C$.
    \item Independence: $F,\LamX,\LamY,\eX$ and $\eY$ are independent.
\end{enumerate}
\end{assumpS}

Our generative model for $(\LamY)_{i}$ accounts for three cases of factor strength on $Y$. First, if $p_{j}$ is bounded away from $0$ as $\Ny$ grows, then the $j$-th factor is a strong factor in $Y$. Second, if $p_{j}$ decays to $0$ but is nonzero as $\Ny$ grows, then the $j$-th factor is a weak factor in $Y$. Third, if $p_{j}$ is $0$ for all $\Ny$, then $Y$ does not contain the $j$-th factor. Note that $\Sigma_{\LamX}$ can be rank deficient, implying that the loadings of some factors can be zero for units in $X$. However, our assumption on $(\LamX)_{i}$ rules out the case of weak factors in $X$ as the estimation of weak factors in $X$ is not our objective. We assume that $ \Sigma_{\LamX}+\Sigma_{\LamY}$ is positive definite to ensure that each factor in $F$ is strong in at least one of the two panels $X$ and $Y$. Specifically, weak factors in $Y$ are strong in $X$. Hence, all factors can be identified with target-PCA with a properly chosen $\gamma$.

Furthermore, Assumption \ref{assump: simplified factor model}.2 imposes assumptions on the missing pattern in $Y$ to identify all factors when combining the partially observed $Y$ and $X$. More specifically, the second-moment matrix $\Sigma_{\LamY,t}$ does not need to be full rank in Assumption \ref{assump: simplified factor model}.2, which relaxes the full-rank assumption of $\Sigma_{\LamY,t}$ in \cite{xiong2019large}. However, we assume that $\Sigma_{\LamX}+\Sigma_{\LamY,t}$ is positive definite, so that target-PCA can identify all factors from $X$ and partially observed $Y$. 

The asymptotic results require additional restrictions on the observation pattern, as stated in Assumption \ref{assump: simplified obs pattern} below. 

\begin{assumpS}[{Moment conditions under partial observations}] \label{assump: simplified obs pattern} \texttt{}
For any $i$, there exist constants $\omega_i^{(1)}$, $\omega_i^{(2,1)}$, $\omega_i^{(2,2)}$, $\omega_i^{(2,3)}$ and $\omega_i^{(3)}$, such that $N_y^{-1}\sum_{j=1}^{\Ny}\frac{q_{ij}}{q_{ii}q_{jj}}\overset{p}{\rightarrow} \omega_i^{(1)}$, $N_y^{-2}\sum_{j,l=1}^{\Ny} \frac{q_{ii,jl}}{q_{ii}q_{jl}}\overset{p}{\rightarrow}\omega_i^{(2,1)}$, $N_y^{-2}\sum_{j,l=1}^{\Ny} \frac{q_{jj,il}}{q_{jj}q_{il}}\overset{p}{\rightarrow}\omega_i^{(2,2)}$, $N_y^{-2}\sum_{j,l=1}^{\Ny}  \frac{q_{ij,il}}{q_{ij}q_{il}}\overset{p}{\rightarrow}\omega_i^{(2,3)}$, and $\Ny^{-3}\sum_{j,l,h=1}^{\Ny} \frac{q_{il,jh}}{q_{il}q_{jh}}\overset{p}{\rightarrow}\omega_i^{(3)}$. Furthermore, there exist constants $\omega^{(1)}$, $\omega^{(2)}$ and $\omega^{(3)}$, such that $N_y^{-1}\sum_{i=1}^{\Ny}\omega_i^{(1)}\overset{p}{\rightarrow}\omega^{(1)}$, $N_y^{-1}\sum_{i=1}^{\Ny}\omega_i^{(2,1)}=N_y^{-1}\sum_{i=1}^{\Ny}\omega_i^{(2,2)}\overset{p}{\rightarrow}\omega^{(2)}$, and $N_y^{-1}\sum_{i=1}^{\Ny}\omega_i^{(3)}\overset{p}{\rightarrow}\omega^{(3)}$.\end{assumpS}

Assumption \ref{assump: simplified obs pattern} introduces interpretable parameters that fully capture the information in the observation pattern, analogous to \cite{xiong2019large}. These key parameters are $\omega_i^{(1)}$,  
$\omega_i^{(2,1)}$, $\omega_i^{(2,2)}$, $\omega_i^{(2,3)}$, $\omega_i^{(3)}$, $\omega^{(1)}$, $\omega^{(2)}$, and $\omega^{(3)}$, and they are relevant for the asymptotic distribution. All these
 quantities are the averages of $\frac{q_{il,jh}}{q_{il} q_{jh}}$, where $\frac{q_{il,jh}}{q_{il} q_{jh}}$ roughly measures the correlation in the observation patterns for unit $i$, $j$, $h$ and $l$. The superscript $m \in \{1,2,3\}$ in $\omega_i^{(m)}$, $\omega_i^{(m,m^\prime)}$ and $\omega^{(m)}$ refers to the number of indices over which $\frac{q_{il,jh}}{q_{il} q_{jh}}$ is averaged. Roughly speaking, $\omega_i^{(m)}$ and $\omega_i^{(m,m^\prime)}$ measure the average correlation in the observation patterns between unit $i$ and any other $m$ units. $\omega^{(m)}$ is the average of $\omega_i^{(m)}$ or $\omega_i^{(m,m^\prime)}$ over $i$. Note that
 $\omega_i^{(2,1)}$,  $\omega_i^{(2,2)}$ and  $\omega_i^{(2,3)}$ are closely connected, but they are slightly different in that the two indices over which $\frac{q_{il,jh}}{q_{il} q_{jh}}$ are averaged are different. 

In the special case, when observations are missing at random with observed probability $p$, we have $\frac{q_{il,jh}}{q_{il} q_{jh}} = 1$ for four distinct units $i$, $j$, $h$ and $l$, and we can show that
$\omega_{i}^{(2,3)} = {1}/{p}$, $\omega_i^{(1)} = \omega_i^{(2,1)} = \omega_i^{(2,2)}= \omega_i^{(3)} = 1$, and $\omega^{(1)}=\omega^{(2)}=\omega^{(3)} = 1$. For other observation patterns, $\omega_i^{(\cdot)}$, $\omega_i^{(\cdot,\cdot)}$ and $\omega^{(\cdot)}$ tend to increase if there are stronger correlations in whether entries are observed across units and across time. In addition, these quantities tend to decrease with the fraction of observed entries.\footnote{See Table 5 in \cite{xiong2019large} for the values of these quantities under different observation patterns and fractions of observed entries.} In Corollary \ref{thm: simplified factor model} below, we show that these quantities can summarize all the information in the observation pattern that is relevant for the efficiency of the estimated factor model. 

The simplified factor model specified by Assumptions \ref{assump: simplified factor model} and \ref{assump: simplified obs pattern} satisfies the general Assumptions \ref{assump: factor model} and \ref{assump: additional assumptions}, as stated in Proposition \ref{prop: link of assumptions}.

\begin{proposition}  \label{prop: link of assumptions}
The simplified factor model specified in Assumptions \ref{assump: simplified factor model} and \ref{assump: simplified obs pattern} is a special case of the general approximate factor model assumed in Assumptions \ref{assump: factor model} and \ref{assump: additional assumptions}. Specifically, Assumptions \ref{assump: obs pattern} and \ref{assump: simplified factor model} imply Assumption \ref{assump: factor model}, and Assumptions \ref{assump: obs pattern}, \ref{assump: simplified factor model} and \ref{assump: simplified obs pattern} imply Assumption \ref{assump: additional assumptions}. 
\end{proposition}


The distribution results of Theorem \ref{thm: asymptotic distribution} simplify under Assumptions \ref{assump: simplified factor model} and \ref{assump: simplified obs pattern}, and we can provide explicit expressions for the asymptotic variances. Corollary \ref{thm: simplified factor model} shows the analytical expression of the asymptotic variances under the simplified factor model, which allows us to gain intuition on how $\gamma$ affects the efficiency of the estimation.

\begin{corollary} \label{thm: simplified factor model}
Under Assumptions $\ref{assump: obs pattern}$, $\ref{assump: simplified factor model}$ and $\ref{assump: simplified obs pattern}$, and for $\gamma = r\cdot \Nx/\Ny$ with some positive constant $r$, the asymptotic distributions in Theorem \ref{thm: asymptotic distribution} hold as $T,\Nx,\Ny \rightarrow \infty$. 
In addition, if we assume that $q_{ij}$ and $q_{ij,hl}$ are independent of $(\LamX)_m(\LamX)_m^\top$ and $(\LamY)_m(\LamY)_m^\top$ for any $i,j,h,l,m$, then the asymptotic variances in Theorem \ref{thm: asymptotic distribution} can be explicitly written as follows:
\begin{enumerate}
\item The asymptotic variance of the estimated loadings of $Y$ in \eqref{equ: dist of loadings} is
\[
\Sigma_{\LamY,i}^{(\gamma)} = \Sigma_{\LamY,i}^{(\gamma),{\rm obs}} +  \Sigma_{\LamY,i}^{(\gamma),{\rm miss}},
\]
where 
\begin{align*}
    &\Sigma_{\LamY,i}^{(\gamma),{\rm obs}} = \frac{1}{q_{ii}}\sigma_{\eY}^2\Sigma_F^{-1}, \\
    & \Sigma_{\LamY,i}^{(\gamma),{\rm miss}} = \Big(\frac{1}{q_{ii}}-1\Big) \Sigma_F^{-1}((\LamY)_i^\top \otimes I_k)\Xi_F ((\LamY)_i \otimes I_k)\Sigma_F^{-1} + \Big(\omega_{i}^{(2,3)}-\frac{1}{q_{ii}}\Big)\Sigma_F^{-1}\Lp \Sigma_{\LamX} + r\Sigma_{\LamY}\Rp^{-1}\\
    & \qquad \quad \quad \ \ \Ls \sigma_{\eY}^2r^2\Sigma_{\LamY}\Sigma_F\Sigma_{\LamY} + \Lp (\LamY)_i^\top \otimes r \Sigma_{\LamY}\Rp\Xi_F\Lp (\LamY)_i \otimes r \Sigma_{\LamY}\Rp \Rs\Lp \Sigma_{\LamX} + r\Sigma_{\LamY}\Rp^{-1}\Sigma_F^{-1},
\end{align*}
and $\Xi_F = \Var(\text{vec}(F_tF_t^\top))$.

\item 
For case 1 in Theorem \ref{thm: asymptotic distribution}.2, the asymptotic variance of the estimated factors in \eqref{equ: dist of factors} is
\begin{align*}
\Sigma_{F,t}^{(\gamma)} & = \frac{\delta_{\Ny,T}}{\Ny}\cdot \Sigma_{F,t}^{(\gamma),{\rm obs}}  + \frac{\delta_{\Ny,T}}{T}\cdot \Sigma_{F,t}^{(\gamma),{\rm miss}},
\end{align*}
where
\begin{align*}
&\Sigma_{F,t}^{(\gamma),{\rm obs}} =  (\Sigma_{\LamX}+r\Sigma_{{\LamY},t})^{-1}\Big(\frac{\Ny}{\Nx}\sigma_{\eX}^2\Sigma_{\LamX}+ r^2\sigma_{\eY}^2\Sigma_{{\LamY},t}\Big) (\Sigma_{\LamX}+r\Sigma_{{\LamY},t})^{-1}, \\
&\Sigma_{F,t}^{(\gamma),{\rm miss}}  = 
(\Sigma_{\LamX}+r\Sigma_{{\LamY},t})^{-1} \Lp I_k \otimes F_t^\top\Sigma_{F}^{-1}\big(\Sigma_{\LamX}+ r \Sigma_{\LamY}\big)^{-1}\Rp\bigg[  \Lp \Sigma_{\LamX} \otimes r\Sigma_{\LamY} + r\Sigma_{{\LamY},t} \otimes \Sigma_{\LamX} \Rp \\
&\qquad \qquad \quad \Xi_F \Big[(\omega^{(1)}-1)\Lp \Sigma_{\LamX} \otimes r\Sigma_{\LamY} + r\Sigma_{{\LamY},t} \otimes \Sigma_{\LamX} \Rp + (\omega^{(2)}-1)\Lp  r\Sigma_{{\LamY},t} \otimes r\Sigma_{\LamY} \Rp\Big] + \\
&\qquad \qquad \quad\Lp r\Sigma_{\LamY,t} \otimes r\Sigma_{\LamY}\Rp \Xi_F \Big[(\omega^{(2)}-1) \Lp \Sigma_{\LamX} \otimes r\Sigma_{\LamY} + r\Sigma_{{\LamY},t} \otimes \Sigma_{\LamX} \Rp+ (\omega^{(3)}-1) \\
&\qquad \qquad \quad\Lp r\Sigma_{\LamY,t} \otimes r\Sigma_{\LamY}\Rp\Big]\bigg]
\Lp I_k \otimes (\Sigma_{\LamX}+r\Sigma_{\LamY})^{-1}\Sigma_{F}^{-1}F_t\Rp\Lp\Sigma_{\LamX}+r\Sigma_{{\LamY},t}\Rp^{-1}. 
\end{align*}
For case 2 in Theorem \ref{thm: asymptotic distribution}.2, the asymptotic variance of the estimated weak factors in \eqref{equ: dist of factors weak} is
\begin{align*}
\Sigma_{F_w,t}^{(\gamma)} & = \frac{\delta_{N_w,T}}{N_w}\cdot \Sigma_{F_w,t}^{(\gamma),{\rm obs}}  + \frac{\delta_{N_w,T}}{T}\cdot \Sigma_{F_w,t}^{(\gamma),{\rm miss}},
\end{align*}
where
\[
\Sigma_{F_w,t}^{(\gamma),{\rm obs}} =  (\Sigma_{\LamX}+r\Sigma_{{\LamY},t})_w^{-1}\Big(\frac{N_w}{\Nx}\sigma_{\eX}^2\Sigma_{\LamX,w}+ r^2\frac{p_w N_w}{\Ny}\sigma_{\eY}^2\Sigma_{{\LamY},t,w}\Big) (\Sigma_{\LamX}+r\Sigma_{{\LamY},t})^{-1}_w,
\]
$N_w = \min({\Ny}/{p_w}, \Nx)$, $p_w$ is defined in Assumption \ref{assump: simplified factor model}.2 as the fraction of units in $Y$ that are affected by the weak factors,
$(\Ny p_w)^{-1}\sum_{i=1}^{\Ny}W^Y_{ti}(\LamY)_{i,w}(\LamY)_{i,w}^\top \overset{p}{\rightarrow} \Ncal(0,\Sigma_{{\LamY},t,w})$, $(\Sigma_{\LamX}+r\Sigma_{{\LamY},t})_w^{-1}$, $\Sigma_{\LamX,w}$ and $\Sigma^{(\gamma),{\rm miss}}_{F_{w},t}$ are respectively the diagonal block of  $(\Sigma_{\LamX}+r\Sigma_{{\LamY},t})^{-1}$, $\Sigma_{\LamX}$ and $\Sigma^{(\gamma),{\rm miss}}_{F,t}$ corresponding to the weak factors.

\item  The asymptotic variance of the estimated common components of $Y$ in \eqref{equ: dist of common component} is
\begin{align*}
\Sigma_{C,ti}^{(\gamma)} =&\frac{\delta_{\Ny, T }}{T } F_t^\top \Lp \Sigma_{\LamY,i}^{(\gamma),{\rm obs}} + \Sigma_{\LamY,i}^{(\gamma),{\rm miss}}\Rp F_t + \frac{\delta_{\Ny, T }}{\Ny } (\LamY)_i^\top \Sigma_{F,t}^{(\gamma),{\rm obs}} (\LamY)_i \\
& + \frac{\delta_{\Ny, T }}{T } (\LamY)_i^\top \Sigma_{F,t}^{(\gamma),{\rm miss}}(\LamY)_i  -2 \frac{\delta_{\Ny, T }}{T } (\LamY)_i^\top   \Sigma_{\LamY,F,i,t}^{(\gamma),{\rm miss, cov}}  F_t,    
\end{align*}
where 
\begin{align*}
\Sigma_{\LamY,F,i,t}^{(\gamma),{\rm miss,cov}} = &\Lp\Sigma_{\LamX}+r\Sigma_{\LamY,t}\Rp^{-1} \Lp I_k \otimes F_t^\top\Sigma_{F}^{-1}\big(\Sigma_{\LamX}+ r \Sigma_{\LamY}\big)^{-1}\Rp\Big[ \Lp\Sigma_{\LamX}\otimes r\Sigma_{\LamY} + r\Sigma_{\LamY,t}\otimes  \Sigma_{\LamX}\Rp \\
&\Xi_F\Lp (\omega_i^{(1)}-1)((\LamY)_i\otimes \Sigma_{\LamX}) + (\omega_i^{(2,2)}-1) ((\LamY)_i \otimes r\Sigma_{\LamY})\Rp+ (r\Sigma_{\LamY,t}\otimes r\Sigma_{\LamY})\Xi_F\\ &   \Lp(\omega_i^{(2,1)}-1)((\LamY)_i\otimes \Sigma_{\LamX}) + (\omega_i^{(3)}-1) \Lp(\LamY)_i\otimes r\Sigma_{\LamY}\Rp\Rp \Big]\Lp\Sigma_{\LamX}+r\Sigma_{\LamY}\Rp^{-1}\Sigma_F^{-1}.
\end{align*}
\end{enumerate}
\end{corollary}

The simplified model provides a clear interpretation of how the asymptotic variance is affected by missingness and the target weight. The explicit expressions in Corollary \ref{thm: simplified factor model} show how the asymptotic variances depend on the following key quantities: the target weight $\gamma$, the noise ratio (NR) $\sigma_{\eX}/\sigma_{\eY}$, the dimension ratio (DR) $\Nx/\Ny$ and the dependency structure in the missing pattern captured by the parameters defined in Assumption \ref{assump: simplified obs pattern}. Without missing data, the correction matrices in the variance disappear. Some correction terms also disappear when data is missing at random, as in this case it holds that $\omega_{i}^{(2,3)} = 1/p$, $\omega_i^{(1)} = \omega_i^{(2,1)} = \omega_i^{(2,2)}= \omega_i^{(3)} = 1$, and $\omega^{(1)}=\omega^{(2)}=\omega^{(3)} = 1$.

\section{Sparser Observation Pattern} \label{appendix: sparser observation pattern}

In the main text, we have considered the case where the number of observed entries in $Y$ is of the order $\Ny T$. In this section, we generalize our results to the case where the target $Y$ has even more missing observations. For this case, we can still estimate the factor model using our target-PCA method, but we have a slower convergence rate due to more missingness. Specifically, we assume that the number of observed time periods in $Y$ can grow sublinearly in $T$.


\edef\oldassumption{\the\numexpr\value{assumption}+1}

\renewcommand{\theassumption}{G\oldassumption$^\prime$}
\begin{assumption}
\label{assump: obs pattern weak signals}
(Sparser Observation Pattern) 
\begin{enumerate}
    \item The observation matrix $\WY$ is independent of the factors $F$ and idiosyncratic errors $e_y$.
    \item For any given observation matrix $\WY$, there exist positive constants $q_1$, $q_2$ and $\alpha \in (0,1]$ such that $q_1 \geq {|\QY_{ij}|}/{T^\alpha}\geq q_2>0$ for all $i,j$. Furthermore, let $q_{ij} = \lim_{T\rightarrow \infty}{|\QY_{ij}|}/{T^\alpha}$ and $q_{ij,hl} = \lim_{T\rightarrow \infty}{|\QY_{ij}\cap \QY_{hl}|}/{T^\alpha}$. For any $i,j,h,l$, $q_{ij}$ and $q_{ij,hl}$ are positive constants bounded away from 0.
\end{enumerate}
\end{assumption}


\let\theassumption\origtheassumption

Assumption \ref{assump: obs pattern weak signals} generalizes Assumption \ref{assump: obs pattern}.
More specifically, Assumption \ref{assump: obs pattern weak signals}.1 is the same as Assumption \ref{assump: obs pattern}.1. Assumption \ref{assump: obs pattern weak signals}.2 is more general than Assumption \ref{assump: obs pattern}.2, which is the special case for $\alpha = 1$. We assume that the number of time periods when any two and any four units are observed are both proportional to $T^\alpha$. This assumption includes the important example of low-frequency observations in Figure \ref{fig: missing patterns}(c), where the observed time periods are the same for all units and are at the order of $T^\alpha$. Under Assumption \ref{assump: obs pattern weak signals}, if $X$ contains all the factors in $Y$, the latent factor model on $Y$ can still be consistently estimated from target-PCA and the estimators are asymptotically normal, but with a slower convergence rate.


\begin{proposition} \label{thm: asymptotic distribution weak signals}
Define $\delta_{\Ny, T^\alpha} = \min(\Ny ,T^\alpha)$ and select $\gamma$ such that $\Sigma^{(\gamma)}_{\Lam,t}$ is a positive definite matrix and the eigenvalues of $\Sigma_F\Lp \Sigma_{\LamX}+\gamma{\Ny}/{\Nx}\cdot\Sigma_{\LamY}\Rp$ are distinct. Suppose Assumptions \ref{assump: obs pattern weak signals} and \ref{assump: factor model} hold, and Assumption \ref{assump: additional assumptions} holds with $\sqrt{T}$ replaced by $\sqrt{T^\alpha}$. As $T,\Nx,\Ny  \rightarrow \infty$ we have for each $i$ and $t$:
\begin{enumerate}
\item For $\sqrt{T^\alpha}/\Ny \rightarrow 0,$ the asymptotic distribution of the loadings of $Y$ in Theorem \ref{thm: asymptotic distribution}.1 continues to hold with convergence rate $\sqrt{T^\alpha}$.

\item For $\sqrt{T^\alpha}/\Ny  \rightarrow 0$ and $\sqrt{\Ny}/T^\alpha \rightarrow 0$, the asymptotic distribution of the factors in Theorem \ref{thm: asymptotic distribution}.2 and the asymptotic distribution of the common components of $Y$ in Theorem \ref{thm: asymptotic distribution}.3 continue to hold with convergence rate $\sqrt{\delta_{\Ny, T^\alpha}}$.
\end{enumerate}
\end{proposition}

Note that the convergence rate of the estimated loadings for each unit in $Y$ is $\sqrt{T^\alpha}$, and the convergence rate of the estimated factors and common components equals the smaller value of $\sqrt{T^\alpha}$ and $\sqrt{\Ny}$. This convergence rate makes sense, as for each unit in $Y$, we only have $T^\alpha$ time-series observations to estimate its factor loadings. 

Choosing $\gamma$ under Assumption \ref{assump: obs pattern weak signals} is conceptually similar to choosing $\gamma$ in our base case. Target-PCA essentially estimates the latent factor model from a matrix that combines the sample covariance matrix of $X$ (that is $X^\T X/T$) and the sample covariance matrix of $Y$ weighted by $\gamma$ (that is $\gamma \cdot \tilde{Y}^\T \tilde{Y}/T^\alpha$ if $Y$ has a low-frequency observation pattern). As the top eigenvalues in $X^\T X/T$ and of $\gamma \cdot \tilde{Y}^\T \tilde{Y}/T^\alpha$ are at the order of $\Nx$ and $\gamma \Ny$,\footnote{If $\alpha = 1$, then this is equivalent to selecting $\gamma$ such that the top eigenvalues of $X^\T X$ and of $\gamma \cdot \tilde{Y}^\T \tilde{Y}$ are at the same order (or equivalently, $X X^\T$ and $\gamma \cdot \tilde{Y} \tilde{Y}^\T$ given that for any matrix $A$, the nonzero eigenvalues of $A A^\T$ are the same as the nonzero eigenvalues of $A^\T A$). } selecting $\gamma = r\cdot \Nx/\Ny$ ensures that the top eigenvalues in two sample covariance matrices are of the same order, and the strong factors in either $X$ or $Y$ can be identified with target-PCA. In summary, the general logic and selection rules from the main setting carry over to this more general case.


{\section{An Alternative Estimator}\label{appendix:alter-estimator}
An alternative approach to estimate $F$, $\Lambda_x$ and $\Lambda_y$ is based on the identification assumption $F^{\top} F/T=I_k$, which allows us solve for $F$ from the objective function 
\begin{align}\label{equ: obj}
    \max_{F} \text{trace}\left( F^{\top} \big( Z^{(\gamma)} Z^{(\gamma)\top} \big) F \right),
\end{align}
when $Y$ is fully observed. The solution to this optimization problem is obtained by applying PCA to $Z^{(\gamma)}Z^{(\gamma)\top}$. This is essentially the same as applying PCA to the $T \times T$ time-series second moment matrix of $Z^{(\gamma)}$, denoted by $\Sigma^{Z(\gamma),\mathsf{time}}$, with the $(s,t)$-th entry
(suppose zero idiosyncratic noise for simplicity)
\[\Sigma^{Z(\gamma),\mathsf{time}}_{s,t} = F_s^\T \left(\lim_{N_x, N_y\rightarrow \infty} \frac{N_x}{N_x + N_y} \Sigma_{\LamX} + \gamma \frac{N_y}{N_x + N_y} \Sigma_{\LamY} \right) F_t. \]
Without missing values, the estimation of the latent factor model by applying PCA to either the cross-sectional or time-series second-moment matrix of $Z^{(\gamma)}$ is equivalent. However, this changes in the case of only partially observed data.

When $Y$ has missing observations, we can estimate $\Sigma^{Z(\gamma),\mathsf{time}}_{(s,t)}$ using an approach analogous to the one in Section \ref{sec: estimator}:
\[\hat{\Sigma}^{Z(\gamma),\mathsf{time}}_{s,t} =  \frac{N_x}{N_x + N_y} \cdot 
\underbrace{\frac{1}{N_x}  \sum_{i = 1}^{N_x} X_{si} X_{ti}}_{\mathrm{estimate~}F_s^\T \Sigma_{\LamX} F_t }  + \gamma \frac{N_y}{N_x + N_y} \underbrace{\frac{1}{|Q_{st}|}  \sum_{j \in Q_{st}} Y_{sj} Y_{tj}}_{\mathrm{estimate~}F_s^\T \Sigma_{\LamY} F_t },    \]
where $Q_{st}$ is the set of units in $Y$ that are observed at both time $s$ and $t$. Then, we can estimate $F$ by applying PCA to $\hat\Sigma^{Z(\gamma),\mathsf{time}}$. 

In order to ensure consistency of the PCA factors based on $\hat{\Sigma}^{Z(\gamma),\mathsf{time}}$, we need to require that the following holds for any $s$ and $t$, 
\[\frac{1}{|Q_{st}|}  \sum_{j \in Q_{st}} Y_{sj} Y_{tj} = F_t^\T  \left(\frac{1}{|Q_{st}|} \sum_{j \in Q_{st}} \left(\LamY \right)_{tj}\left(\LamY \right)_{tj}^\T  \right) F_s + O_P(1) =   F_s^\T \Sigma_{\LamY} F_t + O_P(1), \]
where the dominating eigenvalues have to come from $F_s^\T \Sigma_{\LamY} F_t$, that is, the eigenvectors of the largest eigenvalues of $|Q_{st}|^{-1}  \sum_{j \in Q_{st}} Y_{sj} Y_{tj}$ converge to the factors up to a rotation. This implies that the observation pattern of a unit has to be independent of its loadings. For causal inference applications, this assumption is less benign than Assumption \ref{assump: obs pattern} that the observation pattern of a unit is independent of the factors. Specifically, the treatment adoption pattern is commonly allowed to depend on the characteristics of units (e.g. the loadings) in observational studies. }


\newpage

\section{Empirical Study}\label{appendix: empirical}

\begin{figure}[H]
    \tcapfig{Examples of quarterly observed vs. monthly imputed time series with target-PCA}
          \begin{subfigure}[b]{1.0\textwidth}
          \centering
        \includegraphics[width=0.9\textwidth]{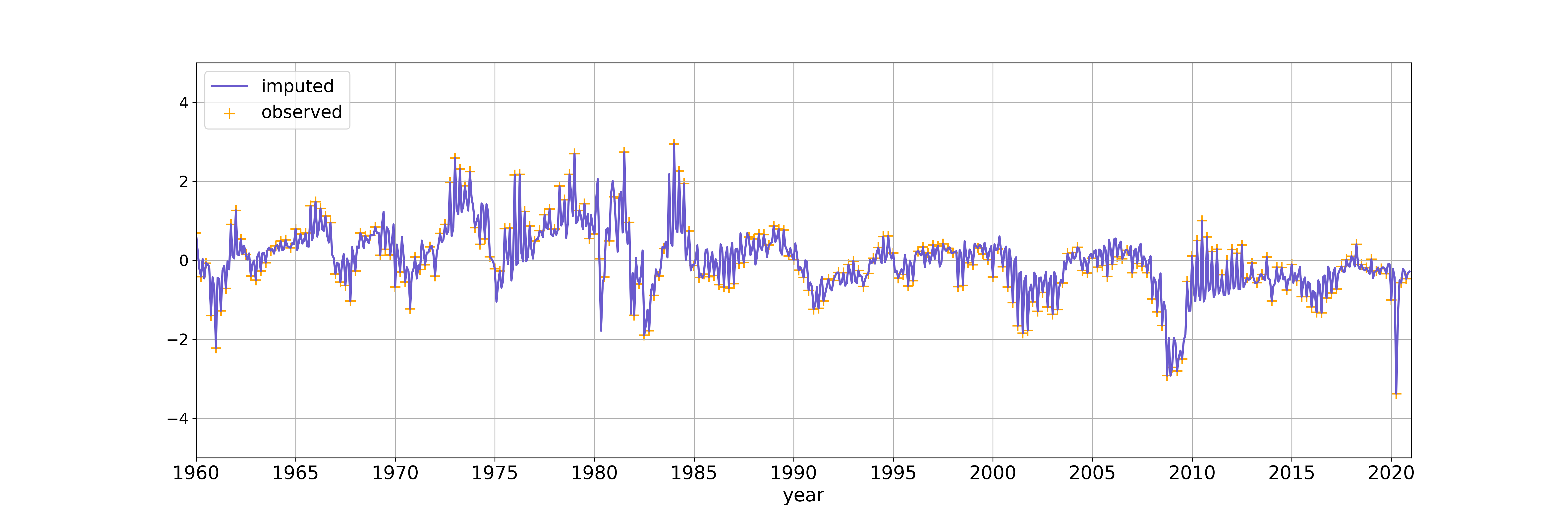}
\caption{Goods}
    \end{subfigure}
          \begin{subfigure}[b]{1.0\textwidth}
          \centering
        \includegraphics[width=0.9\textwidth]{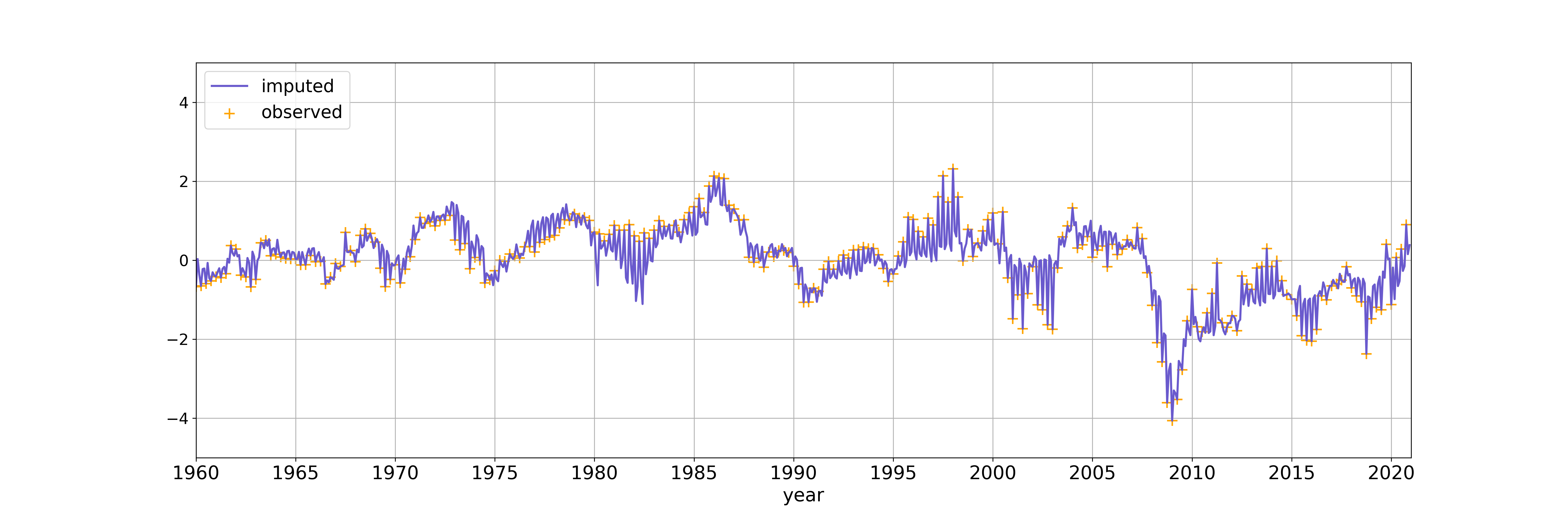}
\caption{Total liabilities and equity in domestic financial sectors}
    \end{subfigure}    
          \begin{subfigure}[b]{1.0\textwidth}
          \centering
        \includegraphics[width=0.9\textwidth]{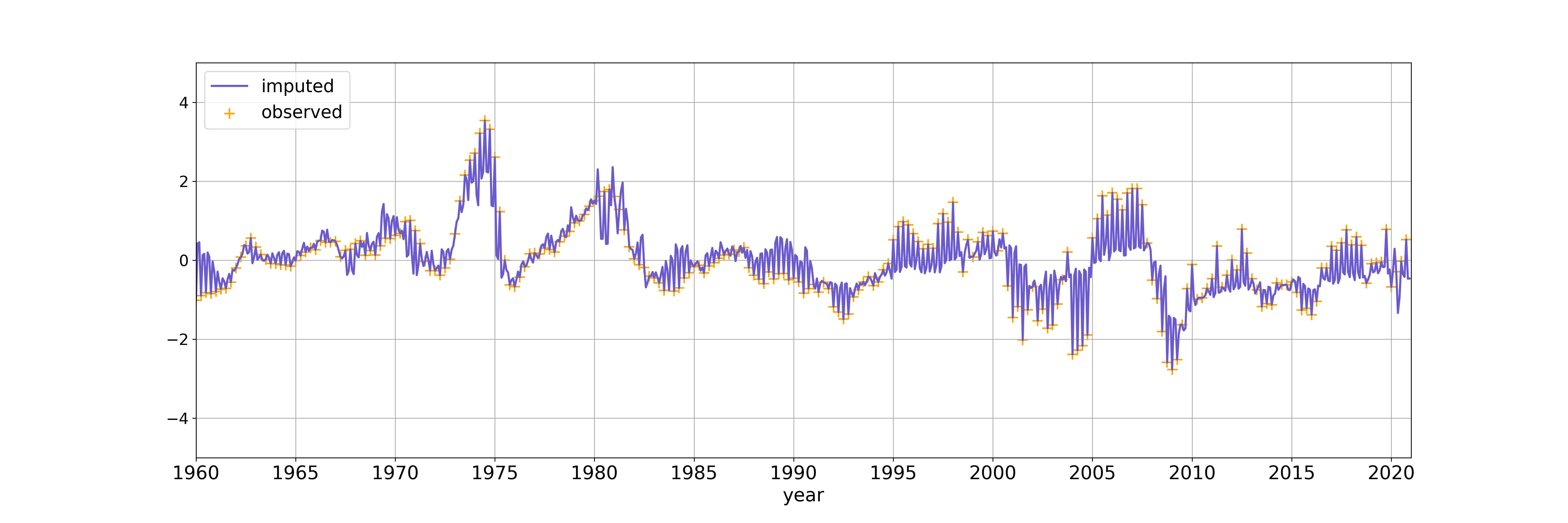}
\caption{Net worth (IMA) of state and local governments}
    \end{subfigure}        
    \floatfoot{This figure shows examples of observed monthly imputed time series with target-PCA. The quarterly observed target consists of macroeconomic time series of the national income \& product accounts category and the flow of funds category from the FRED database. The orange ``+''  denotes the quarterly observed values of the macroeconomic time series, and the purple curve displays the imputed values of the time series with target-PCA for $k=5$ latent factors. The time series are percentage changes relative to the prior year.
    }
    \label{fig:example2}
\end{figure}


\end{document}


\title{Internet Appendix to Target PCA: Transfer Learning Large Dimensional Panel Data}

\date{June 7, 2023}
	
        \author{Junting Duan\thanks{ \scriptsize Stanford University, Department of Management Science \& Engineering, Email: \texttt{duanjt@stanford.edu}.}
                \and
		Markus Pelger\thanks{\scriptsize Stanford University, Department of Management Science \& Engineering, Email: \texttt{mpelger@stanford.edu}.}
               \and
              Ruoxuan Xiong\thanks{ \scriptsize Emory University, Department of Quantitative Theory and Methods, Email: \texttt{ruoxuan.xiong@emory.edu}.}
	}
	
	\onehalfspacing
	
	\begin{titlepage}
    \maketitle
    \thispagestyle{empty}
    
    \begin{abstract}
    This Internet Appendix collects the detailed proofs for all the theoretical statements in the main text, the data description and additional simulation results.

    \vspace{1cm}
	\noindent\textbf{Keywords:} Factor Analysis, Principal Components, Transfer Learning, Multiple Data Sets, Large-Dimensional Panel Data, Large $N$ and $T$, Missing Data, Weak Factors, Causal Inference
	\noindent\textbf{JEL classification:} C14, C38, C55, G12
	\end{abstract}

\end{titlepage}

\newpage

    \tableofcontents

\setcounter{page}{1}
\begin{onehalfspacing}
\setcounter{section}{0}
\renewcommand\thesection{IA.\Alph{section}} 


\setcounter{table}{0}
\setcounter{figure}{0}
\renewcommand{\thetable}{IA.\arabic{table}}
\renewcommand{\thefigure}{IA.\arabic{figure}}

\section{Data for Empirical Study}

\begin{table}[H] \small
    \tcaptab{Selected target time series from the interest and exchange rates category}
    \centering
    \begin{tabular}{ll} 
    \toprule
        abbreviation \qquad \qquad & description  \\
        \midrule
        FEDFUNDS & Effective Federal Funds Rate\\
        TB3MS    & 3-Month Treasury Bill \\
        TB6MS    & 6-Month Treasury Bill \\
        GS1      & 1-Year Treasury Rate  \\
        GS5      & 5-Year Treasury Rate  \\
        GS10     & 10-Year Treasury Rate \\
        AAA      & Moody's Seasoned Aaa Corporate Bond Yield \\
        BAA      & Moody's Seasoned Baa Corporate Bond Yield\\
        TB3SMFFM & 3-Month Treasury C Minus FEDFUNDS \\
        TB6SMFFM & 6-Month Treasury C Minus FEDFUNDS \\
        T1YFFM & 1-Year Treasury C Minus FEDFUNDS \\
        T5YFFM & 5-Year Treasury C Minus FEDFUNDS\\
        T10YFFM & 10-Year Treasury C Minus FEDFUNDS\\
        AAAFFM & Moody's Aaa Corporate Bond Minus FEDFUNDS \\
        BAAFFM & Moody's Baa Corporate Bond Minus FEDFUNDS \\
        EXSZUSx & Switzerland / U.S. Foreign Exchange Rate\\
        EXJPUSx & Japan / U.S. Foreign Exchange Rate \\
        EXUSUKx & U.S. / U.K. Foreign Exchange Rate\\
        EXCAUSx & Canada / U.S. Foreign Exchange Rate\\
        \bottomrule
    \end{tabular}
    \label{tab: empirical 1 target}
    \floatfoot{This table lists the selected target time series from the interest and exchange rates category of the FRED-MD data. Further details are in the appendix of \cite{FRED_MD}. The time series selected here are the ones without missing values from 01/1960 to 12/2020.}
\end{table}

\begin{table}[H] 
\tcaptab{Quarterly-observed target time series from the national income \& product accounts and the flow of funds categories}
\label{tab: flow of funds} 
      \renewcommand{\arraystretch}{1.1}
\centering
\begin{adjustbox}{max height=1.1\textheight}
\renewcommand{\arraystretch}{0.75}
        \setlength{\tabcolsep}{1.8pt}
        {\scriptsize
\begin{tabular}{l}
\toprule
Description \\
\midrule
Real Gross National Product \\
Gross National Product \\
Real Gross Domestic Product \\
Gross Domestic Product \\
Gross Value Added: GDP: Business \\
Gross Domestic Product: Services \\
Gross Domestic Product: Goods \\
Gross Domestic Product: Durable Goods \\
Gross Domestic Product: Nondurable Goods \\
Gross Domestic Purchases \\
Gross Domestic Product: Terms of Trade Index \\
Gross Domestic Product: Research and Development \\
All Sectors; Corporate and Foreign Bonds; Liability \\
All Sectors; Total Debt Securities; Liability \\
All Sectors; U.S. Government Agency Securities; Liability \\
Banks in U.S.-Affiliated Areas; Debt Securities; Asset \\
Domestic Nonfinancial Sectors; Debt Securities; Asset \\
Domestic Financial Sectors; Corporate Equities; Liability \\
Domestic Financial Sectors; U.S. Direct Investment Abroad; Asset \\
Domestic Financial Sectors; Debt Securities; Liability \\
Domestic Financial Sectors; Debt Securities; Asset \\
Domestic Financial Sectors; Total Liabilities and Equity \\
Domestic Financial Sectors; Short-Term Loans Including Security Repurchase Agreements; Liability \\
Domestic Financial Sectors; Total Assets (Does Not Include Land) \\
Domestic Financial Sectors; Total Currency and Deposits; Asset \\
Domestic Financial Sectors; Total Financial Assets \\
Federal Government; Checkable Deposits and Currency; Asset\\
Federal Government; Debt Securities; Liability \\
Federal Government Retirement Funds; Debt Securities; Asset \\
Federal Government; Total Financial Assets \\
Federal Government; Total Liabilities \\
Federal Government; Total Assets (Does Not Include Land) \\
Federal Government; Treasury Securities; Liability \\
Federal Government; Total Mortgages; Asset \\
Finance Companies; Debt Securities; Liability \\
GSEs and Agency- and GSE-Backed Mortgage Pools; U.S. Government Agency Securities; Liability \\
Households and Nonprofit Organizations; Corporate Equities; Asset, Market Value Levels \\
Mutual Funds; Debt Securities; Asset (Market Value) \\
Nonfinancial Corporate Business; Corporate Equities; Liability, Market Value Levels \\
Nonfinancial Noncorporate Business; Real Estate at Market Value, Market Value Levels \\
Rest of the World; Corporate Bonds; Asset, Transactions \\
Rest of the World; U.S. Corporate Equities; Asset, Transactions \\
Rest of the World; U.S. Corporate Equities; Asset \\
Rest of the World; Foreign Direct Investment in U.S.; Asset (Current Cost) \\
Rest of the World; Total Financial Assets \\
Rest of the World; Total Liabilities and Equity \\
Rest of the World; Treasury Securities; Asset \\
Security Brokers and Dealers; Debt Securities; Asset \\
State and Local Governments; Debt Securities; Asset \\
State and Local Governments; U.S. Government Loans; Liability\\
State and Local Governments; Total Liabilities \\
State and Local Governments; Municipal Securities; Liability \\
State and Local Governments; Total Currency and Deposits; Asset \\
State and Local Governments; Total Financial Assets \\
State and Local Governments; Total Mortgages; Asset \\
State and Local Governments; Net Worth (IMA) \\
State and Local Governments; Trade Payables; Liability, Transactions \\
State and Local Governments; Treasury Securities, Including SLGS; Asset \\
\bottomrule
\end{tabular}
}
\end{adjustbox}
    \floatfoot{This table lists the selected quarterly target time series from the national income \& product accounts and the flow of funds categories from the FRED database.}
\end{table}

\section{Additional Simulation Results}

\begin{table}[H] \small
\centering
\begin{threeparttable}
\tcaptab{Relative MSE for missing-at-random pattern}
\begin{tabular}{cccccc}
\toprule
Observation Pattern of $Y$ & \ \ $\mathcal{M}$ \ \ &\ \ T-PCA \ \ &\ \ $\text{XP}_Y$ \ \ &\ \ $\text{XP}_{Z^{(1)}}$ \ \ &\ \ SE-PCA \ \ \\
\midrule
\multirow{3}{*}{$p=0.3$, $\sigma_{\eX}=16$} 
 &obs &0.720 &0.760 &1.088 &1.015 \\
&miss &0.727 &0.787 &1.139 &1.132 \\
&all  &0.725 &0.778 &1.124 &1.096 \\
\midrule
\multirow{3}{*}{$p=0.5$, $\sigma_{\eX}=16$}
&obs  &0.396 &0.408 &0.932 &0.559\\
&miss &0.398 &0.414 &0.964 &0.596\\
&all  &0.397 &0.411 &0.948 &0.578\\
\midrule
\multirow{3}{*}{$p=0.7$, $\sigma_{\eX}=16$} 
&obs  &0.268 &0.272 &0.867 &0.383\\
&miss &0.269 &0.275 &0.891 &0.401\\
&all  &0.268 &0.273 &0.875 &0.388\\
\midrule[1.1pt]
\multirow{3}{*}{$p=0.3$, $\sigma_{\eX}=4$} 
 &obs &0.440 &0.760 &0.440 &0.946 \\
&miss &0.430 &0.787 &0.430 &1.049 \\
&all  &0.433 &0.778 &0.433 &1.018 \\
\midrule
\multirow{3}{*}{$p=0.5$, $\sigma_{\eX}=4$}
&obs  &0.256 &0.408 &0.256 &0.538\\
&miss &0.253 &0.414 &0.253 &0.573\\
&all  &0.255 &0.411 &0.255 &0.556\\
\midrule
\multirow{3}{*}{$p=0.7$, $\sigma_{\eX}=4$} 
&obs  &0.183 &0.272 &0.183 &0.373\\
&miss &0.182 &0.275 &0.182 &0.391\\
&all  &0.183 &0.273 &0.183 &0.379\\
\midrule[1.1pt]
\multirow{3}{*}{$p=0.3$, $\sigma_{\eX}=1$} 
 &obs &0.325 &0.760 &0.397 &0.927 \\
&miss &0.314 &0.787 &0.386 &1.028 \\
&all  &0.317 &0.778 &0.390 &0.998 \\
\midrule
\multirow{3}{*}{$p=0.5$, $\sigma_{\eX}=1$}
&obs  &0.184 &0.408 &0.224 &0.530\\
&miss &0.182 &0.414 &0.220 &0.564\\
&all  &0.183 &0.411 &0.222 &0.547\\
\midrule
\multirow{3}{*}{$p=0.7$, $\sigma_{\eX}=1$} 
&obs  &0.127 &0.272 &0.158 &0.368\\
&miss &0.127 &0.275 &0.157 &0.385\\
&all  &0.127 &0.273 &0.158 &0.373\\
\bottomrule
\end{tabular}
\floatfoot{This table reports the relative MSE of different estimation methods for missing-at-random and different parameter choices. We compare T-PCA (our benchmark method), $\text{XP}_Y$ (PCA on $Y$), $\text{XP}_{Z^{(1)}}$ (PCA on concatenated panel) and SE-PCA (separate PCA). We generate a two-factor model as follows: Factors $F_t \overset{i.i.d.}{\sim} \Ncal(0,I_2)$, loadings $(\LamX)_i \overset{i.i.d.}{\sim} \Ncal(0,I_2), (\LamY)_i \overset{i.i.d.}{\sim} \Ncal(0,I_2)$, and errors $(\eX)_{ti}\overset{i.i.d.}{\sim} \Ncal(0,\sigma_{\eX}^2), (\eY)_{ti} \overset{i.i.d.}{\sim} \Ncal(0,\sigma_{\eY}^2)$ with different $\sigma_{\eX}$ and fixed $\sigma_{\eY}=4.$ The entries of $Y$ are missing independently at random with observation probability $p$. We set $\Nx=\Ny=T=200$ and run 200 simulations for each setup.}
\end{threeparttable}
\end{table}

\begin{table}[H] \small
\centering
\begin{threeparttable}
\tcaptab{Relative MSE for low-frequency observation pattern}
\begin{tabular}{cccccc}
\toprule
Observation Pattern of $Y$ & \ \ $\mathcal{M}$ \ \ &\ \ T-PCA \ \ &\ \ $\text{XP}_Y$ \ \ &\ \ $\text{XP}_{Z^{(1)}}$ \ \ &\ \ SE-PCA \ \ \\
\midrule
\multirow{3}{*}{$p=0.3$, $\sigma_{\eX}=16$} 
 &obs &0.450 &- &1.096 &1.160 \\
&miss &1.058 &- &1.346 &1.234 \\
&all  &0.872 &- &1.264 &1.210 \\
\midrule
\multirow{3}{*}{$p=0.5$, $\sigma_{\eX}=16$}
&obs  &0.291 &- &0.846 &1.059\\
&miss &1.029 &- &1.119 &1.104\\
&all  &0.656 &- &0.979 &1.080\\
\midrule
\multirow{3}{*}{$p=0.7$, $\sigma_{\eX}=16$} 
&obs  &0.227 &- &0.816 &1.016\\
&miss &1.011 &- &1.033 &1.051\\
&all  &0.460 &- &0.879 &1.026\\
\midrule[1.1pt]
\multirow{3}{*}{$p=0.3$, $\sigma_{\eX}=8$} 
 &obs &0.431 &- &0.537 &0.591 \\
&miss &0.565 &- &0.611 &0.622 \\
&all  &0.522 &- &0.586 &0.611 \\
\midrule
\multirow{3}{*}{$p=0.5$, $\sigma_{\eX}=8$}
&obs  &0.273 &- &0.329 &0.480\\
&miss &0.476 &- &0.491 &0.502\\
&all  &0.373 &- &0.408 &0.490\\
\midrule
\multirow{3}{*}{$p=0.7$, $\sigma_{\eX}=8$} 
&obs  &0.210 &- &0.258  &0.436\\
&miss &0.434 &- &0.436  &0.450\\
&all  &0.276 &- &0.310  &0.440\\
\midrule[1.1pt]
\multirow{3}{*}{$p=0.3$, $\sigma_{\eX}=4$} 
 &obs &0.374 &- &0.372 &0.355 \\
&miss &0.363 &- &0.366 &0.369 \\
&all  &0.365 &- &0.367 &0.364 \\
\midrule
\multirow{3}{*}{$p=0.5$, $\sigma_{\eX}=4$}
&obs  &0.230 &- &0.230 &0.243\\
&miss &0.256 &- &0.256 &0.250\\
&all  &0.242 &- &0.242 &0.246\\
\midrule
\multirow{3}{*}{$p=0.7$, $\sigma_{\eX}=4$} 
&obs  &0.173 &- &0.173 &0.196\\
&miss &0.208 &- &0.208 &0.202\\
&all  &0.183 &- &0.183 &0.198\\
\bottomrule
\end{tabular}
\floatfoot{This table reports the relative MSE of different estimation methods for a low-frequency observation pattern and different parameter choices. We compare T-PCA (our benchmark method), $\text{XP}_Y$ (PCA on $Y$), $\text{XP}_{Z^{(1)}}$ (PCA on concatenated panel) and SE-PCA (separate PCA). We generate a two-factor model as follows: Factors $F_t \overset{i.i.d.}{\sim} \Ncal(0,I_2)$, loadings $(\LamX)_i \overset{i.i.d.}{\sim} \Ncal(0,I_2), (\LamY)_i \overset{i.i.d.}{\sim} \Ncal(0,I_2)$, and errors $(\eX)_{ti}\overset{i.i.d.}{\sim} \Ncal(0,\sigma_{\eX}^2)$ and $(\eY)_{ti} \overset{i.i.d.}{\sim} \Ncal(0,\sigma_{\eY}^2)$ with different $\sigma_{\eX}$ and fixed $\sigma_{\eY}=4.$ The rows of $Y$ are missing-at-random with observation probability $p$. We set $\Nx=\Ny=T=200$ and run 200 simulations for each setup. Note that in this setting, $\text{XP}_Y$ is not applicable and separate-PCA degenerates to PCA using only the panel $X$.}
\end{threeparttable}
\end{table}

\begin{table}[H] \small
\centering
\begin{threeparttable}
\caption{Relative MSE for missingness depends on loadings pattern}
\begin{tabular}{cccccc}
\toprule
Observation Pattern of $Y$ & \ \ $\mathcal{M}$ \ \ &\ \ T-PCA \ \ &\ \ $\text{XP}_Y$ \ \ &\ \ $\text{XP}_{Z^{(1)}}$ \ \ &\ \ SE-PCA \ \ \\
\midrule
\multirow{3}{*}{$p_1=0.2$, $\sigma_{\eX}=8$} 
 &obs &1.014 &1.104 &1.161 &1.384 \\
&miss &1.097 &1.286 &1.213 &1.738 \\
&all  &1.077 &1.243 &1.200 &1.655 \\
\midrule
\multirow{3}{*}{$p_1=0.4$, $\sigma_{\eX}=8$}
&obs  &0.466 &0.503 &0.629 &0.675\\
&miss &0.473 &0.522 &0.636 &0.740\\
&all  &0.470 &0.513 &0.633 &0.712\\
\midrule
\multirow{3}{*}{$p_1=0.6$, $\sigma_{\eX}=8$} 
&obs  &0.305 &0.322 &0.451  &0.444\\
&miss &0.303 &0.324 &0.452  &0.465\\
&all  &0.304 &0.323 &0.451  &0.452\\
\midrule[1.1pt]
\multirow{3}{*}{$p_1=0.2$, $\sigma_{\eX}=6$} 
&obs  &0.499 &0.545 &0.614 &0.688 \\
&miss &0.546 &0.641 &0.643 &0.866 \\
&all  &0.535 &0.619 &0.636 &0.824 \\
\midrule
\multirow{3}{*}{$p_1=0.4$, $\sigma_{\eX}=6$}
&obs  &0.245 &0.264 &0.319 &0.354\\
&miss &0.251 &0.277 &0.325 &0.389\\
&all  &0.249 &0.272 &0.323 &0.374\\
\midrule
\multirow{3}{*}{$p_1=0.6$, $\sigma_{\eX}=6$} 
&obs  &0.162 &0.172 &0.216 &0.238\\
&miss &0.162 &0.174 &0.217 &0.249\\
&all  &0.162 &0.173 &0.216 &0.242\\
\midrule[1.1pt]
\multirow{3}{*}{$p_1=0.2$, $\sigma_{\eX}=4$} 
&obs  &0.219 &0.238 &0.262 &0.280\\
&miss &0.252 &0.293 &0.287 &0.356\\
&all  &0.244 &0.280 &0.281 &0.338\\
\midrule
\multirow{3}{*}{$p_1=0.4$, $\sigma_{\eX}=4$}
&obs  &0.111 &0.119 &0.132 &0.150\\
&miss &0.117 &0.129 &0.138 &0.165\\
&all  &0.114 &0.125 &0.135 &0.158\\
\midrule
\multirow{3}{*}{$p_1=0.6$, $\sigma_{\eX}=4$} 
&obs  &0.072 &0.077 &0.087 &0.102\\
&miss &0.074 &0.080 &0.089 &0.107\\
&all  &0.073 &0.078 &0.087 &0.104\\
\bottomrule
\end{tabular}
\floatfoot{This table reports the relative MSE of different estimation methods when missingness depends on the loadings and for different parameter choices. We compare T-PCA (our benchmark method), $\text{XP}_Y$ (PCA on $Y$), $\text{XP}_{Z^{(1)}}$ (PCA on concatenated panel) and SE-PCA (separate PCA). We generate a two-factor model as follows: Factors $F_t \overset{i.i.d.}{\sim} \Ncal(0,I_2)$, loadings $(\LamY)_i \overset{i.i.d.}{\sim} \Ncal(0,I_2), (\LamX)_{i2}\overset{i.i.d.}{\sim} \Ncal(0,1),$ and errors $(\eX)_{ti}\overset{i.i.d.}{\sim} \Ncal(0,\sigma_{\eX}^2)$ and $(\eY)_{ti} \overset{i.i.d.}{\sim} \Ncal(0,\sigma_{\eY}^2)$. We assume $(\LamX)_{i1}=0$ and $\sigma_{\eY} = 0.5\cdot\sigma_{\eX}.$ Furthermore, we define a unit-specific characteristic $S_i = \mathbbm{1}(|(\LamY)_{i,2}|>0.1)$. The entries of $Y$ are missing independently with observation probability $p_1$ if $S_i=1$ and $p_2=1$ if $S_i=0.$ We set $\Nx=\Ny=T=200$ and run 200 simulations for each setup.}
\end{threeparttable}
\end{table}


\section{Proofs}

\subsection{Proof of Theorem \ref{thm: consistency for loadings}}
According to Assumptions \ref{assump: factor model}.2, $\Sigma_{\LamX}$ and $\Sigma_{{\LamY},t}$ are positive semi-definite, and $\Sigma_{\LamX} + \Sigma_{{\LamY},t}$ is positive definite. Therefore, when $\gamma = r \cdot \Nx/\Ny$ with some positive constant $r$, the weighted second moment matrix
\[
\Sigma_{\Lam,t}^{(\gamma)}=\lim \frac{\Nx}{\Nx+\Ny}\Lp\Sigma_{\LamX} + \gamma \frac{\Ny}{\Nx}\cdot \Sigma_{{\LamY},t}\Rp = \lim \frac{\Nx}{\Nx+\Ny}\Lp\Sigma_{\LamX} + r\cdot \Sigma_{{\LamY},t}\Rp
\]
is positive definite. Similarly, we can show that the matrix $\Sigma_{\Lam}^{(\gamma)}=\lim \frac{\Nx}{\Nx+\Ny}\Lp\Sigma_{\LamX} + r\cdot \Sigma_{{\LamY}}\Rp$ is positive definite. By Assumption \ref{assump: factor shift}, $\Sigma_{\LamX}$ is not positive definite. When $\gamma \neq r\cdot \Nx/\Ny$ for any constant $r$ and $\Ny/\Nx\rightarrow 0$, the second moment matrix $\Sigma_{\Lam,t}^{(\gamma)}$ is not positive definite as well.

In the following, we prove that when $\gamma = r\cdot \Nx/\Ny$, we can consistently estimate the loadings, factors, and thus the common components of $Y$. Based on the definition of target-PCA in Section \ref{sec: estimator}, we can plug the expression of $\tilde{\Sigma}^{Z^{(\gamma)}}$ into $\frac{1}{\Nx+\Ny } \tilde{\Sigma}^{Z^{(\gamma)}} \tilde{\Lam}^{(\gamma)} = \tilde{\Lam}^{(\gamma)} \tilde{D}^{(\gamma)}$ and obtain the decomposition for the estimated combined loadings $\tilde{\Lam}^{(\gamma)}$ as
\[
\begin{aligned}
\tilde{\Lam}_i^{(\gamma)} &= \frac{1}{\Nx+\Ny } (\tilde{D}^{(\gamma)})^{-1} \sum_{j=1}^{\Nx+\Ny }  \tilde{\Lam}^{(\gamma)}_j \tilde{\Sigma}_{ji}^{Z^{(\gamma)}} \\
& = H_i^{(\gamma)}{\Lam}_i^{(\gamma)} + \underbrace{ \frac{1}{\Nx+\Ny} (\tilde{D}^{(\gamma)})^{-1}   \sum_{j=1}^{\Nx+\Ny}  \tilde{\Lam}^{(\gamma)}_j  \Lam_i^{(\gamma)\top} \frac{1}{|Q^Z_{ij}|}\sum_{t\in Q^Z_{ij}} F_t e_{tj}^{(\gamma)}}_{(a)} \\
& \ \  \qquad \qquad \ +\underbrace{\frac{1}{\Nx+\Ny} (\tilde{D}^{(\gamma)})^{-1} \sum_{j=1}^{\Nx+\Ny}  \tilde{\Lam}_j^{(\gamma)}  \Lam_j^{(\gamma)\top} \frac{1}{|Q^Z_{ij}|}\sum_{t\in Q^Z_{ij}} F_t e_{ti}^{(\gamma)}}_{(b)}\\
& \qquad \qquad \ \ \ +\underbrace{\frac{1}{\Nx+\Ny} (\tilde{D}^{(\gamma)})^{-1}  \sum_{j=1}^{\Nx+\Ny}  \tilde{\Lam}_j^{(\gamma)}   \frac{1}{|Q^Z_{ij}|}\sum_{t\in Q^Z_{ij}} e_{tj}^{(\gamma)} e_{ti}^{(\gamma)}}_{(c)},
\end{aligned}
\]
where $H_i^{(\gamma)} = (\tilde{D}^{(\gamma)})^{-1} \frac{1}{\Nx+\Ny}  \sum_{j=1}^{\Nx+\Ny}  \tilde{\Lam}_j^{(\gamma)}  \Lam_j^{(\gamma)\top} \frac{1}{|Q^Z_{ij}|}\sum_{t\in Q^Z_{ij}} F_t F_t^\top.$
As is shown, the estimated combined loadings are related to the true combined loadings through $\tilde{\Lam}_i^{(\gamma)} = H_i^{(\gamma)}{\Lam}_i^{(\gamma)} + (a) + (b)+(c)$ up to the rotation matrix $H_i^{(\gamma)}$ for index $i.$ Since the estimation of factors requires the same rotation matrix for all $\Lam_i^{(\gamma)}$, we consider the unified rotation matrix $H^{(\gamma)}$ defined as 
\begin{align*}
H^{(\gamma)} = \frac{1}{T(\Nx+\Ny)}(\tilde{D}^{(\gamma)})^{-1} \tilde{\Lam}^{(\gamma)\top} \Lam^{(\gamma)} F^\top F.  
\end{align*}
This yields the decomposition $\tilde{\Lam}_i^{(\gamma)} = H^{(\gamma)}{\Lam}_i^{(\gamma)} + (H^{(\gamma)}_i - H^{(\gamma)}){\Lam}_i^{(\gamma)} + (a) + (b)+(c)$, based on which we derive the consistency result of the estimated loadings.

To simplify notation, we define the following four terms
\begin{align*}
&\eta_{ij} = \frac{1}{|Q^Z_{ij}|} \sum_{t \in Q^Z_{ij}}  \Lam_i^{(\gamma)\top} F_t e_{tj}^{(\gamma)}, 
\qquad \quad \xi_{ij} = \frac{1}{|Q^Z_{ij}|} \sum_{t \in Q^Z_{ij}}  \Lam_j^{(\gamma)\top} F_t e_{ti}^{(\gamma)},\\
&\gamma(i,j) =\frac{1}{|Q^Z_{ij}|} \sum_{t \in Q^Z_{ij}} \E[e_{ti}^{(\gamma)} e_{tj}^{(\gamma)}], 
\qquad \ \ \zeta_{ij} = \frac{1}{|Q^Z_{ij}|} \sum_{t \in Q^Z_{ij}} e^{(\gamma)}_{ti}e^{(\gamma)}_{tj} - \gamma(i,j).
\end{align*}
We omit the superscript $(\gamma)$ in these four terms to save space. It holds that
\begin{align*}
  \tilde{\Lam}_i^{(\gamma)} - H_i^{(\gamma)} \Lam_i^{(\gamma)} = \frac{1}{\Nx+\Ny} (\tilde{D}^{(\gamma)})^{-1} \sum_{j=1}^{\Nx+\Ny}   \left( \tilde{\Lam}_j^{(\gamma)} \eta_{ij} + \tilde{\Lam}_j^{(\gamma)} \xi_{ij} + \tilde{\Lam}_j^{(\gamma)} \zeta_{ij} + \tilde{\Lam}_j^{(\gamma)} \gamma(i,j) \right).
\end{align*}
We provide bounds for $\eta_{ij},\xi_{ij},\zeta_{ij}$ and $\gamma(i,j)$ in the following.

\begin{lemma} \label{lemma: order of the four terms}
Under Assumption \ref{assump: obs pattern} and Assumption \ref{assump: factor model}, suppose $\Ny /\Nx\rightarrow c\in [0,\infty)$ and $\gamma = r\cdot \Nx/\Ny$, then as $T, \Nx, \Ny \rightarrow \infty$, we have 
\begin{enumerate}
    \item $\frac{1}{(\Nx+\Ny)^2}\sum_{i,j=1}^{\Nx+\Ny}\eta_{ij}^2 = O_p\Lp\frac{1}{\Talpha}\Rp$;
    \item $\frac{1}{(\Nx+\Ny)^2}\sum_{i,j=1}^{\Nx+\Ny}\xi_{ij}^2 = O_p\Lp\frac{1}{\Talpha}\Rp$;
    \item $\frac{1}{(\Nx+\Ny)^2}\sum_{i,j=1}^{\Nx+\Ny}\zeta_{ij}^2 = O_p\Lp\frac{1}{\Talpha}\Rp;$
    \item $\frac{1}{(\Nx+\Ny)^2}\sum_{i,j=1}^{\Nx+\Ny}\gamma^2(i,j) \leq  \frac{C}{\Ny}$.
\end{enumerate}
\end{lemma}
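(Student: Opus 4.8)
The plan is to treat parts 1--3 as variance-type bounds for mean-zero time averages and part 4 as a deterministic cross-sectional bias bound, in each case organizing the double sum by the block structure that the $\gamma$-weighting induces. On the $X$-block the weighted loadings and errors coincide with the raw $X$-quantities, while on the $Y$-block they equal $\sqrt{\gamma}$ times the raw $Y$-quantities; hence each summand $\eta_{ij},\xi_{ij},\zeta_{ij},\gamma(i,j)$ carries a factor $1$, $\sqrt{\gamma}$, or $\gamma$ according as $(i,j)$ lies in the $X$-$X$, mixed, or $Y$-$Y$ block. Because $\gamma=r\Nx/\Ny$, a block with $n_i\times n_j$ indices and squared weight $\gamma^{k}$ contributes of order $\gamma^{k}n_i n_j$ to $\sum_{i,j}(\cdot)^2$, and matching the weight to the block always produces $O(\Nx^2)$ once the term count is taken into account; dividing by $(\Nx+\Ny)^2\asymp\Nx^2$ then absorbs the weighting. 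This compensation is the organizing principle throughout.

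For part 1 I would fix $(i,j)$ and expand $\eta_{ij}^2=|Q^Z_{ij}|^{-2}\sum_{t,s\in Q^Z_{ij}}\Lam_i^{(\gamma)\top}F_tF_s^\top\Lam_i^{(\gamma)}\,e_{tj}^{(\gamma)}e_{sj}^{(\gamma)}$ and take expectations. Using that the idiosyncratic errors are mean zero, independent of the factors, and weakly serially dependent, together with the bounded fourth moments of factors and loadings from Assumption \ref{assump: factor model}, the diagonal $t=s$ terms dominate while the off-diagonal terms are summable, so the expectation of the unweighted $\eta_{ij}^2$ is $O(1/|Q^Z_{ij}|)$. Assumption \ref{assump: obs pattern} supplies a uniform lower bound $|Q^Z_{ij}|\ge c_0 T$, making each such expectation $O(1/T)$. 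Reinstating the block weights and summing gives $\sum_{i,j}\E[\eta_{ij}^2]\le C(r)\,\Nx^2/T$; dividing by $(\Nx+\Ny)^2$ and applying Markov's inequality yields the claimed $O_p(1/T)$. Part 2 is the same argument after interchanging the roles of $i$ and $j$ (now $\Lam_j^{(\gamma)}$ is paired with $e_{ti}^{(\gamma)}$), and part 3 follows the identical template, with $\zeta_{ij}$ a centered average of error products whose second moment is again $O(1/|Q^Z_{ij}|)=O(1/T)$ under bounded fourth moments and weak serial dependence.

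Part 4 behaves differently because $\gamma(i,j)$ is deterministic, so averaging over $t$ does not shrink $\E[e_{ti}^{(\gamma)}e_{tj}^{(\gamma)}]$; the reduction must instead come from cross-sectional weak dependence. On the $Y$-$Y$ block each entry carries a factor $\gamma$, so $\sum_{i,j\in Y}\gamma^2(i,j)=\gamma^2\sum_{i,j\in Y}\big(|Q^Z_{ij}|^{-1}\sum_{t}\E[(\eY)_{ti}(\eY)_{tj}]\big)^2$. Bounding the squared entries by $\max_{i,j}|\cdot|$ times the bounded row sum $\sum_j|\cdot|\le C$ (the weak cross-sectional dependence in Assumption \ref{assump: factor model}) gives $\sum_{i,j\in Y}\gamma^2(i,j)\le r^2 C\,\Nx^2/\Ny$. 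The $X$-$X$ and mixed blocks contribute only $O(\Nx)$ each (the mixed block vanishing if the two panels' errors are cross-independent, and otherwise remaining subdominant), so after dividing by $(\Nx+\Ny)^2\asymp\Nx^2$ the $Y$-$Y$ block is the bottleneck and the total is $O(1/\Ny)$, which is exactly the stated $C/\Ny$.

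The main obstacle I anticipate is the bookkeeping that matches the $\gamma$-amplification of the $Y$-errors against the block term-counts: for the variance terms this must cancel cleanly into a $1/T$ rate with no residual $\Nx/\Ny$ blow-up, whereas for the bias term the same accounting is designed \emph{not} to cancel, leaving the slower $1/\Ny$ rate. The two places where the assumptions must be invoked with care are the uniform overlap bound $|Q^Z_{ij}|\ge c_0 T$ (so that $1/|Q^Z_{ij}|$ can be replaced by $1/T$ throughout) and the verification that the weak-serial and weak-cross-sectional dependence conditions continue to apply after the errors are multiplied by the weight $\sqrt{\gamma}$.
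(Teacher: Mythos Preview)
Your approach is essentially the same as the paper's: block decomposition by $X$-$X$, mixed, and $Y$-$Y$ with weights $1,\gamma,\gamma^2$, a first-moment bound plus Markov for parts 1--3, and the cross-sectional weak-dependence row-sum bound for part 4. The paper invokes the packaged moment bounds (Assumption~\ref{assump: factor model}.4 for $\E\|\,|Q^Z_{ij}|^{-1/2}\sum_{t\in Q^Z_{ij}}F_t e^{(\gamma)}_{tj}\|^2\le C$ and Assumption~\ref{assump: factor model}.3(e) for the centered error products) rather than rederiving the $t=s$ diagonal-dominance you describe, but the logic is identical.

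One bookkeeping slip in part 4: the mixed block is \emph{not} $O(\Nx)$ or subdominant. With weight $\gamma$ and $\sum_{j}(\tau_{ij}^{(\eY,\eX)})^2\le C$ for each $i$, the mixed block contributes $\gamma\cdot C\Nx = rC\,\Nx^2/\Ny$, which is the same order as the $Y$-$Y$ block, exactly as in the paper's display
\[
\frac{C}{(\Nx+\Ny)^2}\Big(\Ny\tfrac{\Nx^2}{\Ny^2}+\Nx\tfrac{\Nx}{\Ny}+\Nx\Big)\le \frac{C}{\Ny}.
\]
This does not change the final $C/\Ny$ bound, but the claim that the $Y$-$Y$ block alone is the bottleneck is inaccurate.
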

\begin{proof}
1. By Assumptions \ref{assump: obs pattern}, \ref{assump: factor model}.2 and \ref{assump: factor model}.4, there is 
\begin{align*}
\E \Ls\frac{1}{(\Nx+\Ny)^2} \sum_{i,j=1}^{\Nx+\Ny} \eta^2_{ij} \Rs &= \frac{1}{(\Nx+\Ny)^2}\sum_{i,j=1}^{\Nx+\Ny} \E \Ls\Lam_i^{(\gamma)\top} \frac{1}{|Q^Z_{ij}|}\sum_{t\in Q^Z_{ij}} F_t \ee_{tj} \Rs^2  \\
& \leq \frac{1}{(\Nx+\Ny)^2}\sum_{i,j=1}^{\Nx+\Ny} \E\Lv\Lam_i^{(\gamma)}\Rv^2 \cdot \E\left\| \frac{1}{|Q^Z_{ij}|}\sum_{t\in Q^Z_{ij}} F_t \ee_{tj} \right\|^2 \\
& \leq \frac{C}{\Talpha}.
\end{align*}
 As a result, it holds that $\frac{1}{(\Nx+\Ny)^2}\sum_{i,j}\eta_{ij}^2 = O_p\Lp\frac{1}{\Talpha}\Rp$. \\
2. By the same arguments, we can show that $\frac{1}{(\Nx+\Ny)^2}\sum_{i,j}\xi_{ij}^2 = O_p\Lp\frac{1}{\Talpha}\Rp$. \\
3. Following from Assumption \ref{assump: factor model}.3(e), it holds that
\begin{align*}
\E \Ls\frac{1}{(\Nx+\Ny)^2} \sum_{i,j=1}^{\Nx+\Ny} \zeta^2_{ij} \Rs &= \frac{1}{(\Nx+\Ny)^2}\sum_{i,j=1}^{\Nx+\Ny} \E \left [ \frac{1}{|Q_{ij}^Z|}\sum_{t \in Q^Z_{ij}}\Lp\ee_{ti}\ee_{tj} - \E(\ee_{ti} \ee_{tj})\Rp    \right]^2  \\
& \leq  \frac{C}{(\Nx+\Ny)^2} \left( \sum_{i,j=\Nx+1}^{\Nx+\Ny} \gamma^2 \frac{1}{|Q^Z_{ij}|}+ \sum_{i=1}^{\Nx} \sum_{j=\Nx+1}^{\Nx+\Ny} \gamma \frac{1}{|Q^Z_{ij}|} + \sum_{i,j=1}^{\Nx} \frac{1}{|Q^Z_{ij}|}  \right) \\
& \leq \frac{C}{\Talpha}.
\end{align*}
Therefore, $\frac{1}{(\Nx+\Ny)^2}\sum_{i,j}\zeta_{ij}^2 = O_p\Lp\frac{1}{\Talpha}\Rp$ as claimed.\\
4. By definition, we have
\begin{align*}
\frac{1}{(\Nx+\Ny)^2} \sum_{i,j=1}^{\Nx+\Ny} \gamma^2(i,j) = \frac{1}{(\Nx+\Ny)^2} \sum_{i,j=1}^{\Nx+\Ny} \frac{1}{|Q^Z_{ij}|^2}\left(\sum_{t\in Q^Z_{ij}} \E\Ls\ee_{ti}\ee_{tj}\Rs\right)^2.
\end{align*}
According to Assumption \ref{assump: factor model}.3(c), the RHS of the above equation can be bounded by
\begin{align*}
\text{RHS} \leq &\frac{1}{(\Nx+\Ny)^2}  \sum_{i,j=1}^{\Ny} \gamma^2 
(\tau_{ij}^{(\eY)})^2
+ \frac{2}{(\Nx+\Ny)^2} \sum_{i=1}^{\Nx} \sum_{j=1}^{\Ny} \gamma (\tau_{ij}^{(\eY,\eX)})^2  + \frac{1}{(\Nx+\Ny)^2}\sum_{i,j=1}^{\Nx}  (\tau_{ij}^{(\eX)})^2 \\
\leq &  \frac{C}{(\Nx+\Ny)^2} \left( \Ny\frac{\Nx^2}{\Ny^2} + \Nx \frac{\Nx}{\Ny} + \Nx
\right)\\
\leq & \frac{C}{\Ny}.
\end{align*}
\end{proof}

\vspace{5mm}
\begin{lemma}   \label{lemma: D} 
Under Assumption \ref{assump: obs pattern} and Assumption \ref{assump: factor model}, suppose $\Ny/\Nx\rightarrow c\in [0,\infty)$ and $\gamma = r\cdot \Nx/\Ny$. Then as $T, \Nx, \Ny \rightarrow \infty$, we have
\begin{enumerate}
    \item $\frac{1}{(\Nx+\Ny)^2} \tilde{\Lam}^{(\gamma)\top} \left(  (\tilde{Z}^{(\gamma)\top} \tilde{Z}^{(\gamma)} ) \odot \left[ \frac{1}{|Q^Z_{ij}|} \right] \right) \tilde{\Lam}^{(\gamma)} = \tilde{D}^{(\gamma)} \overset{p}{\rightarrow} D^{(\gamma)}$;
    \item $\frac{1}{(\Nx+\Ny)^2} \tilde{\Lam}^{(\gamma)\top} \left(  ((F \Lam^{(\gamma)\top}) \odot W^Z)^\top(( F\Lam^{(\gamma)\top}) \odot W^{Z}) \odot \left[ \frac{1}{|Q^Z_{ij}|} \right] \right) \tilde{\Lam}^{(\gamma)} \overset{p}{\rightarrow} D^{(\gamma)}$;
    \item $\frac{1}{(\Nx+\Ny)^2} \tilde{\Lam}^{(\gamma)\top} \left( \Lam^{(\gamma)}  \frac{F^\top F}{T}  \Lam^{(\gamma)\top} \right) \tilde{\Lam}^{(\gamma)}  \overset{p}{\rightarrow} D^{(\gamma)}$;
\end{enumerate}
\noindent where $\tilde{Z}^{(\gamma)} = Z^{(\gamma)} \odot W^Z$ and $D^{(\gamma)} = \text{diag}\Lp d_1^{(\gamma)}, \cdots, d_k^{(\gamma)}\Rp$ are the eigenvalues of $ \Sigma_F \Sigma_{\Lam}^{(\gamma)}$.
\end{lemma}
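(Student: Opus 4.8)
The plan is to prove the three statements so that the three quadratic forms are shown to share one probability limit $D^{(\gamma)}$: first I record an exact identity for claim~1, then reduce claim~1 to claim~2 to claim~3 up to $o_p(1)$ using Lemma~\ref{lemma: order of the four terms} and the observation-pattern assumption, and finally identify the common limit through an eigenvalue argument. For the identity, note that $\tilde{\Lam}^{(\gamma)}$ solves the eigenproblem $\frac{1}{\Nx+\Ny}\tilde{\Sigma}^{Z^{(\gamma)}}\tilde{\Lam}^{(\gamma)} = \tilde{\Lam}^{(\gamma)}\tilde{D}^{(\gamma)}$ under the normalization $\frac{1}{\Nx+\Ny}\tilde{\Lam}^{(\gamma)\top}\tilde{\Lam}^{(\gamma)} = I_k$; left-multiplying by $\frac{1}{\Nx+\Ny}\tilde{\Lam}^{(\gamma)\top}$ and inserting $\tilde{\Sigma}^{Z^{(\gamma)}} = (\tilde{Z}^{(\gamma)\top}\tilde{Z}^{(\gamma)})\odot[1/|Q^Z_{ij}|]$ gives the first displayed equality, so claim~1 reduces to proving $\tilde{D}^{(\gamma)}\overset{p}{\rightarrow} D^{(\gamma)}$.

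Next I would reduce claim~1 to claim~2. Writing $\tilde{Z}^{(\gamma)} = (F\Lam^{(\gamma)\top})\odot W^Z + e^{(\gamma)}\odot W^Z$ and expanding $\tilde{Z}^{(\gamma)\top}\tilde{Z}^{(\gamma)}$ produces a signal--signal block, which is exactly the matrix inside claim~2, together with two signal--noise blocks and one noise--noise block. Conjugating each block by $\frac{1}{(\Nx+\Ny)^2}\tilde{\Lam}^{(\gamma)\top}(\cdot)\tilde{\Lam}^{(\gamma)}$ and expanding entrywise, the three extra blocks are built from the quantities $\eta_{ij},\xi_{ij},\zeta_{ij},\gamma(i,j)$ of Lemma~\ref{lemma: order of the four terms}; with the normalization bounding $\frac{1}{\Nx+\Ny}\sum_i\lVert\tilde{\Lam}^{(\gamma)}_i\rVert^2$, Cauchy--Schwarz and the $O_p(1/\Talpha)$ and $O(1/\Ny)$ rates of that lemma make each block $o_p(1)$. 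The reduction of claim~2 to claim~3 is analogous: the $(i,j)$ entry of the claim-2 Gram equals $\Lam^{(\gamma)\top}_i\big(\frac{1}{|Q^Z_{ij}|}\sum_{t\in Q^Z_{ij}}F_tF_t^\top\big)\Lam^{(\gamma)}_j$, while claim~3 uses $\frac{F^\top F}{T}$ in place of the bracket; under Assumption~\ref{assump: obs pattern} every overlap $|Q^Z_{ij}|$ is of order $T$, so a law of large numbers for $F_tF_t^\top$ renders the deviation uniformly negligible once conjugated by $\tilde{\Lam}^{(\gamma)}$ and bounded as before. Hence all three matrices coincide up to $o_p(1)$.

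It then remains to identify their common limit. After the two reductions, $\frac{1}{\Nx+\Ny}\tilde{\Sigma}^{Z^{(\gamma)}}$ agrees, on the span of its leading eigenvectors, with the rank-$k$ matrix $\frac{1}{\Nx+\Ny}\Lam^{(\gamma)}\frac{F^\top F}{T}\Lam^{(\gamma)\top}$ up to a perturbation that is $o_p(1)$ in operator norm. The nonzero eigenvalues of this rank-$k$ matrix equal those of its $k\times k$ dual $\frac{F^\top F}{T}\cdot\frac{\Lam^{(\gamma)\top}\Lam^{(\gamma)}}{\Nx+\Ny}$, which converges to $\Sigma_F\Sigma_{\Lam}^{(\gamma)}$; by eigenvalue continuity (Weyl's inequality) the $k$ leading eigenvalues $\tilde{D}^{(\gamma)}$ therefore converge to $D^{(\gamma)} = \mathrm{diag}(d_1^{(\gamma)},\dots,d_k^{(\gamma)})$. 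Because the claim-3 form is the quadratic form of this same matrix against its (normalized) leading eigenvectors, it converges to the same diagonal of eigenvalues, and the $o_p(1)$ equivalences then carry the limit back to claims~1 and~2.

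I expect the main obstacle to be the missing-data amplification in the weighted panel. Since $\gamma = r\cdot\Nx/\Ny$ rescales the $Y$-block and $\Ny$ may be of smaller order than $\Nx$, the noise--noise contributions $\gamma(i,j)$ converge only at the slow rate $1/\Ny$ and must be shown not to accumulate inside the double-summed quadratic form; likewise the Hadamard reweighting by heterogeneous, possibly small overlaps $|Q^Z_{ij}|$ makes the claim-2--to--claim-3 step delicate, because $\frac{1}{|Q^Z_{ij}|}\sum_{t\in Q^Z_{ij}}F_tF_t^\top$ cannot be pulled outside the double sum without a bound that is uniform in $(i,j)$. Supplying these uniform controls, and ensuring the operator-norm perturbation from the noise remains $o_p(1)$ after the $1/(\Nx+\Ny)$ scaling, is the crux of the argument, and is precisely what Lemma~\ref{lemma: order of the four terms} and Assumption~\ref{assump: obs pattern} are designed to furnish.
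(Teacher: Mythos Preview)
Your overall strategy matches the paper's: use Lemma~\ref{lemma: order of the four terms} and Cauchy--Schwarz to kill the noise and $\Delta_{F,ij}$ blocks, then an eigenvalue argument to identify the limit. But there is a logical gap in your third step. Your reductions (claim~1 $\approx$ claim~2 $\approx$ claim~3) are proved only as quadratic forms against the \emph{specific} matrix $\tilde{\Lam}^{(\gamma)}$, yet you then invoke Weyl's inequality, which requires the full operator-norm difference $\frac{1}{\Nx+\Ny}\tilde{\Sigma}^{Z^{(\gamma)}} - \frac{1}{\Nx+\Ny}\Lam^{(\gamma)}\frac{F^\top F}{T}\Lam^{(\gamma)\top}$ to be $o_p(1)$. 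Knowing that $\tilde{\Lam}^{(\gamma)\top}(A-M)\tilde{\Lam}^{(\gamma)}$ is small does not by itself control the eigenvalues of $A$ relative to those of $M$; and the sentence ``the claim-3 form is the quadratic form of this same matrix against its (normalized) leading eigenvectors'' is confused, since $\tilde{\Lam}^{(\gamma)}$ are eigenvectors of $\tilde{\Sigma}^{Z^{(\gamma)}}$, not of the rank-$k$ signal matrix $M$.

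The paper closes this gap by proving the Cauchy--Schwarz bounds \emph{uniformly} over the sphere $\Gamma = \{\lambda : \lambda^\top\lambda = \Nx+\Ny\}$ from the outset: it defines Rayleigh quotients $R(\lambda), \tilde{R}(\lambda), R^*(\lambda)$ corresponding to the three matrices and shows $\sup_{\lambda\in\Gamma}|R(\lambda)-R^*(\lambda)| \overset{p}{\to} 0$. That uniform bound simultaneously delivers the eigenvalue convergence (via the variational characterization $\tilde{d}_1^{(\gamma)} = \sup_\lambda R(\lambda) \to d_1^{(\gamma)}$, then sequentially on orthogonal complements) and the convergence of the three specific quadratic forms. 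The fix is immediate in your framework too: the only property of $\tilde{\Lam}^{(\gamma)}$ you actually use in the Cauchy--Schwarz step is its normalization $\frac{1}{\Nx+\Ny}\sum_i\|\tilde{\Lam}^{(\gamma)}_i\|^2 = k$, so the same bound holds for every $\lambda\in\Gamma$. State it that way first, and then either Weyl or the paper's variational argument becomes legitimate, after which the $o_p(1)$ equivalences against $\tilde{\Lam}^{(\gamma)}$ give claims~2 and~3.
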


\begin{proof}
Let $\lambda \in \R^{(\Nx+\Ny)\times 1}$. Define $\Gamma = \{ \lambda| \lambda^\top \lambda =\Nx+\Ny\}$, and let
\begin{align*}
    &R(\lambda) = \frac{1}{(\Nx+\Ny)^2} \lambda^\top  \left(  (\tilde{Z}^{(\gamma)\top} \tilde{Z}^{(\gamma)} ) \odot \left[ \frac{1}{|Q^Z_{ij}|} \right] \right) \lambda, \\
    &\tilde{R}(\lambda) = \frac{1}{(\Nx+\Ny)^2} \lambda^{(\gamma)\top} \left(  ((F \Lam^{(\gamma)\top}) \odot W^Z)^\top(( F\Lam^{(\gamma)\top}) \odot W^Z) \odot \left[ \frac{1}{|Q^Z_{ij}|} \right] \right) \lambda, \\
    &R^*(\lambda) = \frac{1}{(\Nx+\Ny)^2} \lambda^\top \left( \Lam^{(\gamma)} \frac{F^\top F}{T}  \Lam^{(\gamma)\top} \right) \lambda.
\end{align*}
First of all, we prove that as $T, \Nx, \Ny \rightarrow \infty$,
\begin{align*}
&(1) = \sup_{\lambda \in \Gamma} \frac{1}{(\Nx+\Ny)^2} \left| \lambda^\top \left(  ((\ee)^\top \odot (W^Z)^\top)(\ee \odot W^Z) \odot \left[ \frac{1}{|Q^Z_{ij}|} \right] \right) \lambda  \right| \overset{p}{\rightarrow} 0, \\
&(2) = \sup_{\lambda \in \Gamma}  \frac{1}{(\Nx+\Ny)^2} \left| \lambda^\top \left(  ((\ee)^\top \odot (W^Z)^\top))(F\Lam^{(\gamma)\top} \odot W^Z) \odot \left[ \frac{1}{|Q^Z_{ij}|} \right] \right) \lambda \right| \overset{p}{\rightarrow} 0.
\end{align*}
Observe that
\begin{align*}
    (1) &= \sup_{\lambda \in \Gamma}\frac{1}{(\Nx+\Ny)^2} \left|\sum_{i,j=1}^{\Nx+\Ny}  \lambda_i\lambda_j \frac{1}{|Q^Z_{ij}|}\sum_{t\in Q^Z_{ij}}\ee_{ti}\ee_{tj} \right| \\
    & \leq \sup_{\lambda \in \Gamma} \underbrace{\left(\frac{1}{(\Nx+\Ny)^2}\sum_{i,j=1}^{\Nx+\Ny}\lambda_i^2 \lambda_j^2\right)^{1/2}}_{=1}\cdot \left(\frac{1}{(\Nx+\Ny)^2}\sum_{i,j=1}^{\Nx+\Ny} \Bigg(\frac{1}{|Q^Z_{ij}|}\sum_{t\in Q^Z_{ij}}\ee_{ti}\ee_{tj}\Bigg)^2\right)^{1/2} \\
    & =  \left(\frac{1}{(\Nx+\Ny)^2}\sum_{i,j=1}^{\Nx+\Ny} (\zeta_{ij}+\gamma(i,j))^2\right)^{1/2}.
\end{align*}
According to Lemma \ref{lemma: order of the four terms}, $\frac{1}{(\Nx+\Ny)^2}\sum_{i,j=1}^{\Nx+\Ny}\zeta_{ij}^2 = o_p(1)$ and $ \frac{1}{(\Nx+\Ny)^2}\sum_{i,j=1}^{\Nx+\Ny}\gamma^2(i,j) = o(1)$. As a result, $(1)\overset{p}{\rightarrow} 0.$ Consider (2), we have
\begin{align*}
    (2) & = \sup_{\lambda \in \Gamma}\frac{1}{(\Nx+\Ny)^2}\left| \sum_{i,j=1}^{\Nx+\Ny} \lambda_i \lambda_j \frac{1}{|Q^Z_{ij}|}\sum_{t\in Q^Z_{ij}} \ee_{ti}F_t^\top \Lam_j^{(\gamma)} \right| \\
    &\leq \sup_{\lambda \in \Gamma} \underbrace{\left(\frac{1}{(\Nx+\Ny)^2}\sum_{i,j=1}^{\Nx+\Ny}\lambda_i^2 \lambda_j^2\right)^{1/2}}_{=1}\cdot \left(\frac{1}{(\Nx+\Ny)^2}\sum_{i,j=1}^{\Nx+\Ny} \xi_{ij}^2 \right)^{1/2} = o_p(1)
\end{align*}
following from Lemma \ref{lemma: order of the four terms}. Combining these two terms, it holds that
\begin{align*}
\sup_{\lambda \in \Gamma} |R(\lambda) - \tilde{R}(\lambda)| \leq 
& \sup_{\lambda \in \Gamma} \frac{1}{(\Nx+\Ny)^2} \left| \lambda^\top \left(  ((\ee)^\top \odot (W^Z)^\top)(\ee \odot W^Z) \odot \left[ \frac{1}{|Q^Z_{ij}|} \right] \right) \lambda  \right| \\
&+  \sup_{\lambda \in \Gamma} \frac{2}{(\Nx+\Ny)^2} \left| \lambda^\top \left(  ((\ee)^\top \odot (W^Z)^\top))(F\Lam^\top \odot W^Z) \odot \left[ \frac{1}{|Q^Z_{ij}|} \right] \right) \lambda \right| \\
&\overset{p}{\rightarrow} 0.
\end{align*}
Furthermore, for any $\lambda \in \Gamma$,
\begin{align*}
\tilde{R}(\lambda) - R^*(\lambda) &= \frac{1}{(\Nx+\Ny)^2} \sum_{i,j=1}^{\Nx+\Ny} \lambda_i \lambda_j \Lam_i^{(\gamma)\top} \underbrace{\left(  \frac{1}{|Q^Z_{ij}|} \sum_{t \in Q^Z_{ij}} F_t F_t^\top - \frac{1}{T}F^\top F \right)}_{\Delta_{F,ij}} \Lam_j^{(\gamma)} \\
& \leq \underbrace{\left( \frac{1}{(\Nx+\Ny)^2} \sum_{i,j=1}^{\Nx+\Ny} \lambda_i^2 \lambda_j^2 \right)^{1/2}}_{=1}  \cdot \left(  \frac{1}{(\Nx+\Ny)^2} \sum_{i,j=1}^{\Nx+\Ny} \Big( \Lam_i^{(\gamma)\top}  \Delta_{F,ij} \Lam_j^{(\gamma)} \Big)^2 \right)^{1/2}.
\end{align*}
By Assumptions \ref{assump: factor model}.1 and \ref{assump: factor model}.2, we have
\begin{align*}
    &\E \left[\frac{1}{(\Nx+\Ny)^2} \sum_{i,j=1}^{\Nx+\Ny} \Big( \Lam_i^{(\gamma)\top}  \Delta_{F,ij} \Lam_j^{(\gamma)} \Big)^2  \right] \\
    \leq & \frac{1}{(\Nx+\Ny)^2} \sum_{i,j=1}^{\Nx+\Ny} \E\Ls \Lv\Lam_i^{(\gamma)}\Rv^2 \Lv\Lam_j^{(\gamma)}\Rv^2\Rs\cdot \E\| \Delta_{F,ij}\|^2\ \leq \frac{C}{\Talpha}.
\end{align*}
Therefore, $\sup_{\lambda \in \Gamma} |\tilde{R}(\lambda) - R^*(\lambda)| \overset{p}{\rightarrow} 0 $. We can also derive $\sup_{\lambda \in \Gamma} |R(\lambda) - R^*(\lambda)| \overset{p}{\rightarrow} 0 $ by the decomposition $R(\lambda) - R^*(\lambda) = R(\lambda) - \tilde{R}(\lambda) + \tilde{R}(\lambda) - R^*(\lambda)$. As a result, $|\sup_{\lambda \in \Gamma} R(\lambda) - \sup_{\lambda \in \Gamma} R^*(\lambda)| \leq \sup_{\lambda \in \Gamma} |R(\lambda) - R^*(\lambda)| \overset{p}{\rightarrow} 0$. Since $\sup_{\lambda \in \Gamma} R^*(\lambda) \overset{p}{\rightarrow}d_1^{(\gamma)}$ where $d_1^{(\gamma)}$ is the largest eigenvalue of $ \Sigma_F  \Sigma_{\Lam}^{(\gamma)},$ we have $\sup_{\lambda \in \Gamma} R(\lambda) \overset{p}{\rightarrow}d_1^{(\gamma)}$. By definition, $\tilde{\Lam}_1^{(\gamma)} = \arg \sup_{\lambda \in \Gamma} R(\lambda)$, thus $R(\tilde{\Lam}_1^{(\gamma)})=\tilde{d}_1^{(\gamma)} \overset{p}{\rightarrow} d_1^{(\gamma)}$ and $\tilde{R}(\tilde{\Lam}_1^{(\gamma)}), R^*(\tilde{\Lam}_1^{(\gamma)}) \overset{p}{\rightarrow} d_1^{(\gamma)}$. We repeat this procedure sequentially using the orthonormal subspace of $\Gamma$ and finally complete our proof.
\end{proof}

\vspace{5mm}
\begin{lemma}   \label{lemma: H,Q}
Under Assumption \ref{assump: obs pattern} and Assumption \ref{assump: factor model}, suppose $\Ny/\Nx\rightarrow c\in [0,\infty)$ and $\gamma = r\cdot \Nx/\Ny$. For $T, \Nx, \Ny \rightarrow \infty$ it holds that
\begin{enumerate}
\item $\frac{1}{\Nx+\Ny} \Lam^{(\gamma)\top} \tilde{\Lam}^{(\gamma)} \overset{p}{\rightarrow} Q^{(\gamma)} = \Sigma_{F}^{-1/2} \Upsilon^{(\gamma)} (D^{(\gamma)})^{1/2} ,$ where the diagonal entries of $D^{(\gamma)}$ are eigenvalues of $\Sigma_F^{1/2}\Sigma_{\Lam}^{(\gamma)}\Sigma_F^{1/2}$, and $\Upsilon^{(\gamma)}$ is the corresponding eigenvector matrix such that $\Upsilon^{(\gamma)\top} \Upsilon^{(\gamma)} = I$;
\item $H^{(\gamma)} \overset{p}{\rightarrow} (Q^{(\gamma)})^{-1}$, where $H^{(\gamma)} = T^{-1}(\Nx+\Ny)^{-1} (\tilde{D}^{(\gamma)})^{-1} \tilde{\Lam}^{(\gamma)\top} \Lam^{(\gamma)} F^\top F$.
\end{enumerate}

\end{lemma}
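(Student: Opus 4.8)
The plan is to identify the limiting rotation by treating $\tilde{\Lam}^{(\gamma)}$ as the eigenvectors of the sample matrix and reading off $Q^{(\gamma)}$ from the two quadratic-form limits already established in Lemma \ref{lemma: D}. Throughout, write $B = \frac{1}{\Nx+\Ny}\Lam^{(\gamma)\top}\tilde{\Lam}^{(\gamma)}$, so that Part 1 is the claim $B \overset{p}{\rightarrow} Q^{(\gamma)}$ and $\frac{1}{\Nx+\Ny}\tilde{\Lam}^{(\gamma)\top}\Lam^{(\gamma)} = B^\top$.

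First I would left-multiply the defining eigen-equation $\frac{1}{\Nx+\Ny}\tilde{\Sigma}^{Z^{(\gamma)}}\tilde{\Lam}^{(\gamma)} = \tilde{\Lam}^{(\gamma)}\tilde{D}^{(\gamma)}$ by $\frac{1}{\Nx+\Ny}\Lam^{(\gamma)\top}$ to get $\frac{1}{(\Nx+\Ny)^2}\Lam^{(\gamma)\top}\tilde{\Sigma}^{Z^{(\gamma)}}\tilde{\Lam}^{(\gamma)} = B\tilde{D}^{(\gamma)}$. Expanding $\tilde{\Sigma}^{Z^{(\gamma)}}$ through $\tilde{Z}^{(\gamma)} = (F\Lam^{(\gamma)\top} + \ee)\odot W^Z$ into a pure signal term plus cross and noise terms, the same Cauchy--Schwarz bounds used in Lemma \ref{lemma: D}, combined with the order estimates for $\eta_{ij},\xi_{ij},\zeta_{ij}$ and $\gamma(i,j)$ from Lemma \ref{lemma: order of the four terms}, show that every term other than $\frac{1}{(\Nx+\Ny)^2}\Lam^{(\gamma)\top}(\Lam^{(\gamma)}\frac{F^\top F}{T}\Lam^{(\gamma)\top})\tilde{\Lam}^{(\gamma)}$ is $o_p(1)$. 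Using $\frac{1}{\Nx+\Ny}\Lam^{(\gamma)\top}\Lam^{(\gamma)}\overset{p}{\rightarrow}\Sigma_{\Lam}^{(\gamma)}$, $\frac{F^\top F}{T}\overset{p}{\rightarrow}\Sigma_F$ and $\tilde{D}^{(\gamma)}\overset{p}{\rightarrow}D^{(\gamma)}$ (Lemma \ref{lemma: D}.1), and noting that $B$ is bounded by the normalization $\frac{1}{\Nx+\Ny}\tilde{\Lam}^{(\gamma)\top}\tilde{\Lam}^{(\gamma)} = I$ together with the boundedness of $\frac{1}{\Nx+\Ny}\Lam^{(\gamma)\top}\Lam^{(\gamma)}$, any probability limit $\bar{B}$ of $B$ must satisfy $\Sigma_{\Lam}^{(\gamma)}\Sigma_F \bar{B} = \bar{B}D^{(\gamma)}$. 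Hence the columns of $\bar{B}$ are eigenvectors of $\Sigma_{\Lam}^{(\gamma)}\Sigma_F$; by the similarity $\Sigma_F^{1/2}\Sigma_{\Lam}^{(\gamma)}\Sigma_F^{1/2}\Upsilon^{(\gamma)} = \Upsilon^{(\gamma)}D^{(\gamma)}$ these are $\Sigma_F^{-1/2}\Upsilon^{(\gamma)}$ up to a diagonal scaling, so $\bar{B} = \Sigma_F^{-1/2}\Upsilon^{(\gamma)}G$ for some diagonal $G$.

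To pin down $G$ I would invoke Lemma \ref{lemma: D}.3, whose left-hand side is exactly $B^\top \frac{F^\top F}{T} B \overset{p}{\rightarrow} D^{(\gamma)}$, so in the limit $\bar{B}^\top\Sigma_F\bar{B} = D^{(\gamma)}$. Substituting $\bar{B} = \Sigma_F^{-1/2}\Upsilon^{(\gamma)}G$ and using $\Upsilon^{(\gamma)\top}\Upsilon^{(\gamma)} = I$ gives $G^2 = D^{(\gamma)}$, hence $G = (D^{(\gamma)})^{1/2}$ and $\bar{B} = \Sigma_F^{-1/2}\Upsilon^{(\gamma)}(D^{(\gamma)})^{1/2} = Q^{(\gamma)}$. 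Since the eigenvalues in $D^{(\gamma)}$ are distinct the limit is unique (after the standard sign convention on eigenvectors), so every subsequence shares it and $B \overset{p}{\rightarrow} Q^{(\gamma)}$, which is Part 1. Part 2 is then algebra: the continuous mapping theorem with $\tilde{D}^{(\gamma)}\overset{p}{\rightarrow}D^{(\gamma)}$, $B^\top\overset{p}{\rightarrow}Q^{(\gamma)\top}$ and $\frac{F^\top F}{T}\overset{p}{\rightarrow}\Sigma_F$ gives $H^{(\gamma)}\overset{p}{\rightarrow}(D^{(\gamma)})^{-1}Q^{(\gamma)\top}\Sigma_F$; plugging in $Q^{(\gamma)\top}\Sigma_F = (D^{(\gamma)})^{1/2}\Upsilon^{(\gamma)\top}\Sigma_F^{1/2}$ collapses this to $(D^{(\gamma)})^{-1/2}\Upsilon^{(\gamma)\top}\Sigma_F^{1/2}$, which equals $(Q^{(\gamma)})^{-1}$ because $\Upsilon^{(\gamma)}$ is orthogonal.

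The step I expect to be the main obstacle is the exact identification of $Q^{(\gamma)}$: showing that $\bar{B}$ is not merely some eigenvector matrix of $\Sigma_{\Lam}^{(\gamma)}\Sigma_F$ but precisely $\Sigma_F^{-1/2}\Upsilon^{(\gamma)}(D^{(\gamma)})^{1/2}$. This forces me to use both quadratic-form limits together (Lemma \ref{lemma: D}.1 fixes the eigenvector directions while Lemma \ref{lemma: D}.3 fixes the diagonal normalization), and to run the argument along subsequences while handling the sign and uniqueness of eigenvectors under the distinct-eigenvalue assumption; the negligibility of the noise terms in the off-diagonal product $\Lam^{(\gamma)\top}\tilde{\Sigma}^{Z^{(\gamma)}}\tilde{\Lam}^{(\gamma)}$, though routine given Lemma \ref{lemma: order of the four terms}, must also be verified carefully.
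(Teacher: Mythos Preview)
Your proposal is correct and follows the same high-level strategy as the paper: left-multiply the defining eigen-equation, show the non-signal terms are $o_p(1)$ via Lemma \ref{lemma: order of the four terms}, and read off the limit of $B=\frac{1}{\Nx+\Ny}\Lam^{(\gamma)\top}\tilde{\Lam}^{(\gamma)}$ from the eigenvector characterization. The difference lies only in how the limit is pinned down. The paper premultiplies by $\big(\frac{F^\top F}{T}\big)^{1/2}$ rather than by $\frac{1}{\Nx+\Ny}\Lam^{(\gamma)\top}$ alone, so that $R_{T,\Nx\Ny}^{(\gamma)}=\big(\frac{F^\top F}{T}\big)^{1/2}B$ appears as the eigenvector matrix of a \emph{symmetric} perturbation $B_{T,\Nx\Ny}^{(\gamma)}+d_{T,\Nx\Ny}^{(\gamma)}(R_{T,\Nx\Ny}^{(\gamma)})^{-1}\overset{p}{\rightarrow}\Sigma_F^{1/2}\Sigma_\Lam^{(\gamma)}\Sigma_F^{1/2}$; eigenvector continuity for symmetric matrices with distinct eigenvalues (as in Proposition 1 of \cite{bai2003inferential}) then gives the limit in one stroke, with the normalization and sign handled implicitly. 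You instead keep the non-symmetric equation $\Sigma_\Lam^{(\gamma)}\Sigma_F\bar{B}=\bar{B}D^{(\gamma)}$, which fixes the eigenvector directions but leaves a diagonal scaling $G$, and then bring in Lemma \ref{lemma: D}.3 separately to get $G^2=D^{(\gamma)}$. Your route is a bit more self-contained (no external citation) but requires the explicit subsequence/tightness argument and the two-equation combination you flag at the end; the paper's symmetrization trick packages both the direction and the normalization into a single eigenvector-continuity step.
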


\begin{proof}
Left-multiplying $\frac{1}{\Nx+\Ny} \tilde{\Sigma}^{Z^{(\gamma)}} \tilde{\Lam}^{(\gamma)} = \tilde{\Lam}^{(\gamma)} \tilde{D}^{(\gamma)}$ on both sides by $\frac{1}{\Nx+\Ny}(\frac{F^\top F}{T})^{1/2} \Lam^{(\gamma)\top}$, we have
\[
\left( \frac{F^\top F}{T} \right)^{1/2}  \frac{ \Lam^{(\gamma)\top}}{\Nx+\Ny}  \tilde{\Sigma}^{Z^{(\gamma)}} \frac{ \tilde{\Lam}^{(\gamma)} }{\Nx+\Ny}= \left(\frac{ F^\top F }{T} \right)^{1/2} \left( \frac{\Lam^{(\gamma)\top} \tilde{\Lam}^{(\gamma)}}{\Nx+\Ny} \right) \tilde{D}^{(\gamma)}.
\]
This can be rewritten as
\[
\left( \frac{F^\top F}{T} \right)^{1/2}  \frac{ \Lam^{(\gamma)\top}\Lam^{(\gamma)}}{\Nx+\Ny}  \left(\frac{F^\top F}{T}\right) \frac{ \Lam^{(\gamma)\top} \tilde{\Lam}^{(\gamma)} }{\Nx+\Ny} + d_{T,\Ny\Nx}^{(\gamma)}= \left(\frac{ F^\top F }{T} \right)^{1/2} \left( \frac{\Lam^{(\gamma)\top} \tilde{\Lam}^{(\gamma)}}{\Nx+\Ny} \right) \tilde{D}^{(\gamma)},
\]
where 
\[
d_{T, \Nx\Ny}^{(\gamma)} = \left( \frac{F^\top F}{T} \right)^{1/2} \frac{\Lam^{(\gamma)\top}}{\Nx+\Ny} 
\left( \tilde{\Sigma}^{Z^{(\gamma)}} - \Lam^{(\gamma)} \frac{F^\top F}{T} \Lam^{(\gamma)\top}  \right) 
\frac{\tilde{\Lam}^{(\gamma)}}{\Nx+\Ny}.
\]
Plugging the expansion of $\tilde{\Sigma}^{Z^{(\gamma)}}_{ij}$ into $d_{T,\Nx \Ny}^{(\gamma)}$, we obtain for any $i$ and $j$,
\begin{align*}
\left(\tilde{\Sigma}^{Z^{(\gamma)}} - \Lam^{(\gamma)} \frac{F^\top F}{T} \Lam^{(\gamma)\top}   \right)_{ij} = & \Lam_i^{(\gamma)\top} \Bigg[ \underbrace{\frac{1}{|Q^Z_{ij}|} \sum_{t \in Q^Z_{ij}} F_t F_t^\top -  \frac{1}{T} F^\top F}_{\Delta_{F,ij}} \Bigg] \Lam_j^{(\gamma)} +  \frac{1}{|Q^Z_{ij}|} \sum_{t \in Q^Z_{ij}} F_t^\top \Lam_i^{(\gamma)} \ee_{tj} \\
&   +  \frac{1}{|Q^Z_{ij}|} \sum_{t \in Q^Z_{ij}} F_t^\top \Lam_j^{(\gamma)} \ee_{ti}  + \frac{1}{|Q^Z_{ij}|} \sum_{t \in Q^Z_{ij}} \ee_{ti} \ee_{tj}\\
= & \Lam_i^{(\gamma)\top} \Delta_{F,ij} \Lam_j^{(\gamma)}  + \eta_{ij} + \xi_{ij} + \zeta_{ij} + \gamma(i,j).
\end{align*}
Each $(r,s)$-th entry of the component
$\frac{\Lam^{(\gamma)\top}}{\Nx+\Ny} 
\left( \tilde{\Sigma}^{Z^{(\gamma)}} - \Lam^{(\gamma)} \frac{F^\top F}{T} \Lam^{(\gamma)\top}  \right)\frac{\tilde{\Lam}^{(\gamma)}}{\Nx+\Ny}$ of $d_{T,\Nx \Ny}^{(\gamma)}$ can be bounded by
\begin{align*}
    & \frac{1}{(\Nx+\Ny)^2}\sum_{i,j=1}^{\Nx+\Ny}\Lam_{ir}^{(\gamma)}\tilde{\Lam}^{(\gamma)}_{js}\left(\Lam_i^{(\gamma)\top} \Delta_{F,ij}\Lam_j^{(\gamma)} + \eta_{ij}+ \xi_{ij}+ \zeta_{ij}+ \gamma(i,j)\right) \\
\leq & \left( \frac{1}{(\Nx+\Ny)^2}\sum_{i,j=1}^{\Nx+\Ny} (\Lam_{ir}^{(\gamma)})^2 (\tilde{\Lam}_{js}^{(\gamma)})^2\right)^{1/2}\cdot \left[\bigg(\frac{1}{(\Nx+\Ny)^2}\sum_{i,j=1}^{\Nx+\Ny} \big(\Lam_i^{(\gamma)\top} \Delta_{F,ij}\Lam_j^{(\gamma)}\big)^2 \bigg)^{1/2}\right. \\
 & \qquad \qquad+ \bigg(\frac{1}{(\Nx+\Ny)^2}\sum_{i,j=1}^{\Nx+\Ny} \eta_{ij}^2 \bigg)^{1/2} + \bigg(\frac{1}{(\Nx+\Ny)^2}\sum_{i,j=1}^{\Nx+\Ny} \xi_{ij}^2 \bigg)^{1/2}  \\
&\left. \qquad \qquad+  \bigg(\frac{1}{(\Nx+\Ny)^2}\sum_{i,j=1}^{\Nx+\Ny} \zeta_{ij}^2 \bigg)^{1/2}+\bigg(\frac{1}{(\Nx+\Ny)^2}\sum_{i,j=1}^{\Nx+\Ny} \gamma^2(i,j) \bigg)^{1/2} \right].
\end{align*} 
Based on Lemma \ref{lemma: order of the four terms} and the proof of Lemma \ref{lemma: D}, we can deduce that $d_{T,\Nx \Ny}^{(\gamma)} = o_p(1)$. Let
\[
B_{T,\Nx \Ny}^{(\gamma)} = \left( \frac{F^\top F}{T} \right)^{1/2} \frac{\Lam^{(\gamma)\top} \Lam^{(\gamma)}}{\Nx+\Ny}   \left( \frac{F^\top F}{T} \right)^{1/2},
\]
and 
\[
R_{T,\Nx \Ny}^{(\gamma)} =  \left(\frac{ F^\top F }{T} \right)^{1/2} \frac{\Lam^{(\gamma)\top} \tilde{\Lam}^{(\gamma)}}{\Nx+\Ny} .
\]
It holds that 
\[
(B_{T,\Nx \Ny}^{(\gamma)} + d_{T,\Nx \Ny}^{(\gamma)}(R_{T,\Nx \Ny}^{(\gamma)})^{-1}) R_{T,\Nx \Ny}^{(\gamma)} = R_{T,\Nx \Ny}^{(\gamma)} \tilde{D}^{(\gamma)}.
\]
Note that $B_{T,\Nx \Ny}^{(\gamma)} + d_{T,{\Nx}\Ny}^{(\gamma)}(R_{T,{\Nx}\Ny}^{(\gamma)})^{-1} \overset{p}{\rightarrow} B^{(\gamma)}=\Sigma_F^{1/2}\Sigma_\Lam^{(\gamma)} \Sigma_F^{1/2}$ by Assumption \ref{assump: factor model} and the fact that $d_{T,{\Nx}\Ny}^{(\gamma)}=o_p(1).$ Because the eigenvalues of $B^{(\gamma)}$ are distinct, the eigenvalues of $B_{T,{\Nx}\Ny}^{(\gamma)} + d_{T,{\Nx}\Ny}^{(\gamma)}(R_{T,{\Nx}\Ny}^{(\gamma)})^{-1}$ will also be distinct for large $T,{\Nx}$ and $\Ny$ by the continuity of eigenvalues. With similar arguments as the proof of Proposition 1 in \cite{bai2003inferential}, it holds that
\[
\frac{\Lam^{(\gamma)\top} \tilde{\Lam}^{(\gamma)}}{{\Nx}+\Ny} \overset{p}{\rightarrow} \Sigma_{F}^{-1/2} \Upsilon^{(\gamma)} (D^{(\gamma)})^{1/2} =:Q^{(\gamma)},
\]
where $D^{(\gamma)}$ is the diagonal matrix consisting of eigenvalues of $\Sigma_{F}^{1/2}  \Sigma_\Lam^{(\gamma)}  \Sigma_{F}^{1/2}$, and $\Upsilon^{(\gamma)}$ is the corresponding eigenvector matrix such that $\Upsilon^{(\gamma)\top} \Upsilon^{(\gamma)} = I$. Note that the eigenvalues of $\Sigma_{F}^{1/2}  \Sigma_\Lam^{(\gamma)}  \Sigma_{F}^{1/2}$ are the same as the eigenvalues of $\Sigma_F \Sigma_{\Lam}^{(\gamma)}$. Furthermore, it holds that 
\begin{align*}
H^{(\gamma)} &= T^{-1}(\Nx+\Ny)^{-1} (\tilde{D}^{(\gamma)})^{-1} \tilde{\Lam}^{(\gamma)\top} \Lam^{(\gamma)} F^\top F\\
&\overset{p}{\rightarrow}  (D^{(\gamma)})^{-1} (D^{(\gamma)})^{1/2} \Upsilon^{(\gamma)\top} \Sigma_{F}^{-1/2} \Sigma_F 
=  (D^{(\gamma)})^{-1/2} \Upsilon^{(\gamma)\top} \Sigma_{F}^{1/2} = (Q^{(\gamma)})^{-1}.
\end{align*}
\end{proof}

\vspace{5mm}
\begin{proof}
\noindent {Proof of Theorem \ref{thm: consistency for loadings}.1:}

(1) Proof of $\frac{1}{\Nx+\Ny }\sum_{i=1}^{\Nx+\Ny } \left\|\tilde{\Lam}_i^{(\gamma)} - H^{(\gamma)}\Lam_i^{(\gamma)}\right\|^2 = O_p(\frac{1}{\delta_{\Ny,T}})$.

Observe that 
\begin{align*}
 \frac{1}{\Nx+\Ny} \sum_{i=1}^{\Nx+\Ny} \Lv\tilde{\Lam}^{(\gamma)}_i - H^{(\gamma)}\Lam^{(\gamma)}_i\Rv^2 \leq\ & \frac{1}{\Nx+\Ny} \sum_{i=1}^{\Nx+\Ny} \Lv\tilde{\Lam}^{(\gamma)}_i - H^{(\gamma)}_i \Lam^{(\gamma)}_i\Rv^2 \\
 &+ \frac{1}{\Nx+\Ny} \sum_{i=1}^{\Nx+\Ny} \Lv(H_i^{(\gamma)} - H^{(\gamma)}) \Lam_i^{(\gamma)}\Rv^2.   
\end{align*}
We bound the two terms on the RHS respectively in the following. For the first term, we have the decomposition 
\begin{align*}
 &\tilde{\Lam}_i^{(\gamma)} - H_i^{(\gamma)} \Lam_i^{(\gamma)} \\
 = \ & \frac{1}{\Nx+\Ny} (\tilde{D}^{(\gamma)})^{-1}  \left( \sum_{j=1}^{\Nx+\Ny} \tilde{\Lam}_j^{(\gamma)} \eta_{ij} +   \sum_{j=1}^{\Nx+\Ny} \tilde{\Lam}_j^{(\gamma)} \xi_{ij} + \sum_{j=1}^{\Nx+\Ny} \tilde{\Lam}_j^{(\gamma)} \zeta_{ij} +  \sum_{j=1}^{\Nx+\Ny} \tilde{\Lam}_j^{(\gamma)} \gamma(i,j) \right),   
\end{align*}
which follows that
\begin{align*}
& \frac{1}{\Nx+\Ny} \sum_{i=1}^{\Nx+\Ny} \Lv\tilde{\Lam}_i^{(\gamma)} - H_i^{(\gamma)} \Lam_i^{(\gamma)} \Rv^2 \\
\leq & 4 \big \|(\tilde{D}^{(\gamma)})^{-1} \big \|^2  \cdot \frac{1}{\Nx+\Ny} \sum_{i=1}^{\Nx+\Ny} \frac{1}{(\Nx+\Ny)^2} \left( \Bigg \|\sum_{j=1}^{\Nx+\Ny} \tilde{\Lam}_j^{(\gamma)} \eta_{ij}\Bigg \|^2 + \Bigg \|\sum_{j=1}^{\Nx+\Ny} \tilde{\Lam}_j^{(\gamma)} \xi_{ij}\Bigg \|^2\right. \\
&\qquad \qquad \qquad\qquad\qquad \qquad\qquad\qquad\qquad\left.+ \Bigg \|\sum_{j=1}^{\Nx+\Ny} \tilde{\Lam}_j^{(\gamma)} \zeta_{ij}\Bigg \|^2+  \Bigg \|\sum_{j=1}^{\Nx+\Ny} \tilde{\Lam}_j^{(\gamma)} \gamma(i,j)\Bigg \|^2 \right).
\end{align*}
Each term $\frac{1}{\Nx+\Ny}\sum_{i=1}^{\Nx+\Ny}\frac{1}{(\Nx+\Ny)^2} \big\|\sum_{j=1}^{\Nx+\Ny} \tilde{\Lam}_j^{(\gamma)} \phi_{ij} \big\|^2$ with $\phi_{ij} = \eta_{ij}, \xi_{ij}, \zeta_{ij}, \gamma(i,j)$ on the RHS can be bounded by
\begin{align*}
\frac{1}{\Nx+\Ny}\sum_{i=1}^{\Nx+\Ny}\frac{1}{(\Nx+\Ny)^2} \Bigg \|\sum_{j=1}^{\Nx+\Ny} \tilde{\Lam}_j^{(\gamma)} \phi_{ij}\Bigg \|^2  \leq \underbrace{ \frac{1}{\Nx+\Ny} \sum_{j=1}^{\Nx+\Ny} \|\tilde{\Lam}_j^{(\gamma)}\|^2}_{O_p(1)} \cdot  \frac{1}{(\Nx+\Ny)^2} \sum_{i,j=1}^{\Nx+\Ny} \phi_{ij}^2. 
\end{align*} 
By Lemma \ref{lemma: order of the four terms} and $\Lv(\tilde{D}^{(\gamma)})^{-1}\Rv=O_p(1)$ proved in Lemma \ref{lemma: D}, we conclude that 
\[
\frac{1}{\Nx+\Ny}\sum_{i=1}^{\Nx+\Ny}  \Lv\tilde{\Lam}_i^{(\gamma)} - H_i^{(\gamma)} \Lam_i^{(\gamma)}  \Rv^2 = O_p\big(\frac{1}{\Talpha}\big) + O_p\big(\frac{1}{\Ny}\big) =  O_p\big(\frac{1}{\delta_{\Ny, \Talpha}}\big),
\]
where $\delta_{\Ny, \Talpha} = \min(\Ny,T).$

The second part $\frac{1}{\Nx+\Ny} \sum_{i=1}^{\Nx+\Ny} \Lv(H_i^{(\gamma)} - H^{(\gamma)}) \Lam_i^{(\gamma)}\Rv^2$ can be bounded by 
\begin{align*}
& \frac{1}{\Nx+\Ny} \sum_{i=1}^{\Nx+\Ny} \Lv ( H_i^{(\gamma)}-H^{(\gamma)})\Lam_i^{(\gamma)} \Rv^2 \\
 = & \frac{1}{\Nx+\Ny} \sum_{i=1}^{\Nx+\Ny}  \left\|  \frac{1}{\Nx+\Ny} (\tilde{D}^{(\gamma)})^{-1} \sum_{j=1}^{\Nx+\Ny}  \tilde{\Lam}^{(\gamma)}_j  \Lam_j^{(\gamma)\top} \Lp \frac{1}{|Q^Z_{ij}|}\sum_{t \in Q^Z_{ij}} F_t F_t^\top -  \frac{1}{T} \sum_{t=1}^{T} F_t F_t^\top \Rp \Lam_i^{(\gamma)} \right\|^2 \\
 \leq  & \underbrace{\left\| (\tilde{D}^{(\gamma)})^{-1}\right\|^2}_{O_p(1)} \cdot \underbrace{\bigg( \frac{1}{\Nx+\Ny} \sum_{j=1}^{\Nx+\Ny} \|\tilde{\Lam}_j^{(\gamma)}\|^2 \bigg)}_{O_p(1)} \cdot \Delta^{(\gamma)}, 
\end{align*}
where $\Delta^{(\gamma)} \coloneqq \frac{1}{(\Nx+\Ny)^2} \sum_{i,j=1}^{\Nx+\Ny} \|\Lam_i^{(\gamma)}\|^2\|\Lam_j^{(\gamma)}\|^2 \cdot \left\| \frac{1}{|Q^Z_{ij}|}\sum_{t \in Q^Z_{ij}} F_t F_t^\top -  \frac{1}{T} \sum_{t=1}^{T} F_t F_t^\top \right\|^2.$
Since there is $\E\Ls\Delta^{(\gamma)}\Rs\leq \frac{C}{\Talpha}$, we conclude that $ \frac{1}{\Nx+\Ny} \sum_{i=1}^{\Nx+\Ny} \Lv ( H_i^{(\gamma)}-H^{(\gamma)})\Lam_i^{(\gamma)} \Rv^2 = O_p\big(\frac{1}{\Talpha} \big)$. 

Combining these two parts, we have
\[
\frac{1}{\Nx+\Ny} \sum_{i=1}^{\Nx+\Ny}  \Lv\tilde{\Lam}_i^{(\gamma)} - H^{(\gamma)} \Lam_i^{(\gamma)}  \Rv^2 = O_p \big( \frac{1}{\Talpha} \big) + O_p \big( \frac{1}{\delta_{\Ny, \Talpha}} \big)=O_p \big( \frac{1}{\delta_{\Ny, \Talpha}} \big)
\]
as claimed.

\vspace{5mm}
{(2) Proof of $\frac{1}{T}\sum_{t=1}^{T} \left\|\tilde{F}_t - (H^{(\gamma)\top})^{-1}F_t\right\|^2 = O_p(\frac{1}{\delta_{\Ny,T}})$.}

We derive the estimated factors $\tilde{F}_t$ by regressing the observed $Z_{ti}^{(\gamma)}$ on $\tilde{\Lam}_i^{(\gamma)}$ as
\[
\tilde{F}_t = \Lp \sum_{i=1}^{\Nx+\Ny} W_{ti}^Z \tilde{\Lam}_i^{(\gamma)} \tilde{\Lam}_i^{(\gamma)\top} \Rp^{-1} \Lp \sum_{i=1}^{\Nx+\Ny} W_{ti}^Z  Z_{ti}^{(\gamma)} \tilde{\Lam}_i^{(\gamma)} \Rp, \quad t = 1, \cdots, T.
\] 
For analysis, we define an auxiliary $\tilde{F}_t^*$ as
\[
\tilde{F}_t^* := \Lp \sum_{i=1}^{\Nx+\Ny} W_{ti}^Z H^{(\gamma)}\Lam_i^{(\gamma)} \Lam_i^{(\gamma)\top} H^{(\gamma)\top} \Rp^{-1} \Lp \sum_{i=1}^{\Nx+\Ny} W_{ti}^Z  Z_{ti}^{(\gamma)} \tilde{\Lam}_i^{(\gamma)} \Rp.
\]
We have the decomposition
\[
\frac{1}{T}\sum_{t=1}^{T} \left\|\tilde{F}_t - (H^{(\gamma)\top})^{-1}F_t\right\|^2 \leq \frac{1}{T}\sum_{t=1}^{T} \left\|\tilde{F}_t^* - (H^{(\gamma)\top})^{-1}F_t\right\|^2 + \frac{1}{T}\sum_{t=1}^{T} \left\|\tilde{F}_t^* - \tilde{F}_t\right\|^2.
\]
We bound the two terms on the RHS in the following. For the first term, $\tilde{F}^*_t$ can be decomposed as
\begin{align*}
\tilde{F}_t^* =& (H^{(\gamma)\top})^{-1} F_t + (H^{(\gamma)\top})^{-1}  (\tilde{\Sigma}_{\Lam,t}^{(\gamma)})^{-1} \Lp  \frac{1}{\Nx+\Ny} \sum_{i=1}^{\Nx+\Ny} W_{ti}^Z \Lam_i^{(\gamma)} \ee_{ti} \Rp  \\
& + (H^{(\gamma)\top})^{-1}  (\tilde{\Sigma}_{\Lam,t}^{(\gamma)})^{-1}(H^{(\gamma)})^{-1}  \Lp \frac{1}{\Nx+\Ny}\sum_{i=1}^{\Nx+\Ny} W_{ti}^Z (\tilde{\Lam}_i^{(\gamma)}-H^{(\gamma)}\Lam_i^{(\gamma)})Z_{ti}^{(\gamma)}\Rp,
\end{align*}
where $ \tilde{\Sigma}_{\Lam,t}^{(\gamma)} = \frac{1}{\Nx+\Ny} \sum_{i=1}^{\Nx+\Ny} W_{ti}^Z \Lam_i^{(\gamma)} \Lam_i^{(\gamma)\top} \overset{p}{\rightarrow}\Sigma_{\Lam,t}^{(\gamma)} \succ 0$. Therefore, it holds that
\begin{align*}
    \frac{1}{T}\sum_{t=1}^{T} \Lv\tilde{F}_t^* - (H^{(\gamma)\top})^{-1}F_t\Rv^2 \leq & \ \frac{C}{T}\sum_{t=1}^T \Lv  \frac{1}{\Nx+\Ny} \sum_{i=1}^{\Nx+\Ny} W_{ti}^Z \Lam_i^{(\gamma)} \ee_{ti}\Rv^2 \\
    &+ \frac{C}{T}\sum_{t=1}^T \Lv \frac{1}{\Nx+\Ny}\sum_{i=1}^{\Nx+\Ny} W_{ti}^Z (\tilde{\Lam}_i^{(\gamma)}-H^{(\gamma)}\Lam_i^{(\gamma)})Z_{ti}^{(\gamma)}\Rv^2.
\end{align*}
By Assumption \ref{assump: factor model}.3, there is
\begin{align*}
    &\E\Lv \frac{1}{\Nx+\Ny} \sum_{i=1}^{\Nx+\Ny} W_{ti}^Z \Lam_i^{(\gamma)} \ee_{ti}\Rv^2 =\sum_{r=1}^k \frac{1}{(\Nx+\Ny)^2}\sum_{i,j=1}^{\Nx+\Ny}\E\Ls W_{ti}^ZW_{tj}^Z\Lam_{ir}^{(\gamma)}\Lam_{jr}^{(\gamma)}\Rs\E\Ls \ee_{ti}\ee_{tj}\Rs \leq \frac{C}{\Ny},
\end{align*}
and based on the first step,
\begin{align*}
    &\frac{1}{T}\sum_{t=1}^T \Lv \frac{1}{\Nx+\Ny}\sum_{i=1}^{\Nx+\Ny} W_{ti}^Z (\tilde{\Lam}_i^{(\gamma)}-H^{(\gamma)}\Lam_i^{(\gamma)})Z_{ti}^{(\gamma)}\Rv^2\\
    \leq & \frac{1}{T}\sum_{t=1}^T \frac{1}{\Nx+\Ny}\sum_{i=1}^{\Nx+\Ny}\Lp Z_{ti}^{(\gamma)}\Rp^2 \cdot \frac{1}{\Nx+\Ny}\sum_{i=1}^{\Nx+\Ny}\Lv \tilde{\Lam}^{(\gamma)}_i - H^{(\gamma)}\Lam_i^{(\gamma)}\Rv^2 = O_p(\frac{1}{\delta_{\Ny,T}}).
\end{align*}
Thus, $\frac{1}{T}\sum_{t=1}^{T} \Lv\tilde{F}_t^* - (H^{(\gamma)\top})^{-1}F_t\Rv^2 = O_p(\frac{1}{\delta_{\Ny,T}})$. 

Consider the second term $\frac{1}{T}\sum_{t=1}^{T} \left\|\tilde{F}_t^* - \tilde{F}_t\right\|^2$. The difference $\tilde{F}_t^* - \tilde{F}_t$ can be expanded as
\begin{align*}
\tilde{F}_t^* - \tilde{F}_t   = &  \Lp \frac{1}{\Nx+\Ny} \sum_{i=1}^{\Nx+\Ny} W_{ti}^Z \tilde{\Lam}_i^{(\gamma)} \tilde{\Lam}_i^{(\gamma)\top} \Rp^{-1}\cdot\\
&\Ls  \frac{1}{\Nx+\Ny} \sum_{i=1}^{\Nx+\Ny} W_{ti}^Z \tilde{\Lam}_i^{(\gamma)} \tilde{\Lam}_i^{(\gamma)\top}  - \frac{1}{\Nx+\Ny}  \sum_{i=1}^{\Nx+\Ny} W_{ti}^Z H^{(\gamma)}\Lam_i^{(\gamma)} \Lam_i^{(\gamma)\top} H^{(\gamma)\top}  \Rs
 \tilde{F}^*_t.
\end{align*} 
Observe that 
\begin{align*}
&\Lv \frac{1}{\Nx+\Ny} \sum_{i=1}^{\Nx+\Ny} W_{ti}^Z \tilde{\Lam}_i^{(\gamma)} \tilde{\Lam}_i^{(\gamma)\top}  - \frac{1}{\Nx+\Ny}  \sum_{i=1}^{\Nx+\Ny} W_{ti}^Z H^{(\gamma)}\Lam_i^{(\gamma)} \Lam_i^{(\gamma)\top} H^{(\gamma)\top} \Rv \\
\leq \ &\frac{1}{\Nx+\Ny} \sum_{i=1}^{\Nx+\Ny}\Lp\Lv \tilde{\Lam}_i^{(\gamma)}\Rv + \Lv H^{(\gamma)}\Lam_i^{(\gamma)}\Rv\Rp \Lv \tilde{\Lam}_i^{(\gamma)}-H^{(\gamma)}{\Lam}_i^{(\gamma)} \Rv \\
\leq \ & \Lp \frac{1}{\Nx+\Ny} \sum_{i=1}^{\Nx+\Ny}\Lp\Lv \tilde{\Lam}_i^{(\gamma)}\Rv + \Lv H^{(\gamma)}\Lam_i^{(\gamma)}\Rv\Rp^2\Rp^{1/2} \cdot \Lp \frac{1}{\Nx+\Ny} \sum_{i=1}^{\Nx+\Ny}\Lv \tilde{\Lam}_i^{(\gamma)}-H^{(\gamma)}{\Lam}_i^{(\gamma)} \Rv^2\Rp^{1/2}.
\end{align*} 
By $\frac{1}{\Nx+\Ny}\tilde{\Lam}^{(\gamma)\top}\tilde{\Lam}^{(\gamma)} = I_k$ and the consistency results of loadings, we have 
\[
\Lv \frac{1}{\Nx+\Ny} \sum_{i=1}^{\Nx+\Ny} W_{ti}^Z \tilde{\Lam}_i^{(\gamma)} \tilde{\Lam}_i^{(\gamma)\top}  - \frac{1}{\Nx+\Ny}  \sum_{i=1}^{\Nx+\Ny} W_{ti}^Z H^{(\gamma)}\Lam_i^{(\gamma)} \Lam_i^{(\gamma)\top} H^{(\gamma)\top} \Rv = O_p\big(\frac{1}{\sqrt{\delta_{\Ny,T}}}\big).
\]
Therefore, $\frac{1}{\Nx+\Ny} \sum_{i=1}^{\Nx+\Ny} W_{ti}^Z \tilde{\Lam}_i^{(\gamma)} \tilde{\Lam}_i^{(\gamma)\top} \overset{p}{\rightarrow}(Q^{(\gamma)})^{-1}\Sigma_{\Lam,t}^{(\gamma)}((Q^{(\gamma)})^{-1})^\top \succ 0$. To simplify notation, we let $\Delta_{\Lam,t} =\frac{1}{\Nx+\Ny} \sum_{i=1}^{\Nx+\Ny} W_{ti}^Z \tilde{\Lam}_i^{(\gamma)} \tilde{\Lam}_i^{(\gamma)\top}  - \frac{1}{\Nx+\Ny}  \sum_{i=1}^{\Nx+\Ny} W_{ti}^Z H^{(\gamma)}\Lam_i^{(\gamma)} \Lam_i^{(\gamma)\top} H^{(\gamma)\top}$. Similar to the first term, the second term can be bounded by
\[
\frac{1}{T}\sum_{t=1}^{T} \left\|\tilde{F}_t^* - \tilde{F}_t\right\|^2 \leq \frac{C}{T}\sum_{t=1}^{T} \Lv\Delta_{\Lam,t}\Rv^2 \Lv \tilde{F}^*_t\Rv^2.
\]
For $\Delta_{\Lam,t}$, we have
\begin{align*}
\Delta_{\Lam,t} = &  \frac{1}{\Nx+\Ny} \sum_{i=1}^{\Nx+\Ny} W_{ti}^Z \tilde{\Lam}_i^{(\gamma)} \tilde{\Lam}_i^{(\gamma)\top}- \frac{1}{\Nx+\Ny} \sum_{i=1}^{\Nx+\Ny} W_{ti}^Z H^{(\gamma)} \Lam_i^{(\gamma)} \Lam_i^{(\gamma)\top} H^{(\gamma)\top}\\
= & \frac{1}{\Nx+\Ny} \sum_{i=1}^{\Nx+\Ny}\left[H^{(\gamma)}  W_{ti}^Z \Lam_i^{(\gamma)} (\tilde{\Lam}_i^{(\gamma)} - H^{(\gamma)}_i\Lam_i^{(\gamma)} )^\top+  W_{ti}^Z  (\tilde{\Lam}_i^{(\gamma)} - H^{(\gamma)}_i\Lam_i^{(\gamma)} )\Lam_i^{(\gamma)\top} H^{(\gamma)\top}  \right]
\\
 & + \frac{1}{\Nx+\Ny}  \sum_{i=1}^{\Nx+\Ny} \left[ W_{ti}^Z H^{(\gamma)} \Lam_i^{(\gamma)} (H_i^{(\gamma)} \Lam_i^{(\gamma)} - H^{(\gamma)} \Lam_i^{(\gamma)})^\top  + W_{ti}^Z  (H^{(\gamma)}_i\Lam_i^{(\gamma)} - H^{(\gamma)} \Lam_i^{(\gamma)})(H^{(\gamma)}\Lam_i^{(\gamma)})^\top \right]  \\
& + \frac{1}{\Nx+\Ny} \sum_{i=1}^{\Nx+\Ny} W_{ti}^Z  ( \tilde{\Lam}_i^{(\gamma)} - H^{(\gamma)}\Lam_i^{(\gamma)})(\tilde{\Lam}_i^{(\gamma)} - H^{(\gamma)}\Lam_i^{(\gamma)})^\top,
\end{align*}
which implies that
\begin{align*}
    \Lv\Delta_{\Lam,t}\Rv^2 \leq & 10\cdot \Lv \frac{1}{\Nx+\Ny} \sum_{i=1}^{\Nx+\Ny}H^{(\gamma)}  W_{ti}^Z \Lam_i^{(\gamma)} (\tilde{\Lam}_i^{(\gamma)} - H^{(\gamma)}_i\Lam_i^{(\gamma)} )^\top\Rv^2 \\
    &+ 10\cdot \Lv\frac{1}{\Nx+\Ny}  \sum_{i=1}^{\Nx+\Ny}  W_{ti}^Z H^{(\gamma)} \Lam_i^{(\gamma)}\Lam_i^{(\gamma)\top} (H_i^{(\gamma)}  - H^{(\gamma)})^\top \Rv^2 \\
    &+ 5\cdot \Lv \frac{1}{\Nx+\Ny} \sum_{i=1}^{\Nx+\Ny} W_{ti}^Z  ( \tilde{\Lam}_i^{(\gamma)} - H^{(\gamma)}\Lam_i^{(\gamma)})(\tilde{\Lam}_i^{(\gamma)} - H^{(\gamma)}\Lam_i^{(\gamma)})^\top\Rv^2.
\end{align*}
According to the proof of the first term, we have
\[
\frac{1}{T}\sum_{t=1}^T \Lv \tilde{F}^*_t\Rv^2 \leq \frac{1}{T}\sum_{t=1}^T \Lv \tilde{F}^*_t-(H^{(\gamma)\top})^{-1}F_t\Rv^2 + \frac{1}{T}\sum_{t=1}^T \Lv (H^{(\gamma)\top})^{-1}F_t\Rv^2 = O_p(1).
\]
Based on the proof of the consistency results of loadings, 
\begin{align*}
    \frac{1}{T}\sum_{t=1}^{T} \left\|\tilde{F}_t^* - \tilde{F}_t\right\|^2 \leq & \  \frac{C}{T}\sum_{t=1}^{T} \Lv \tilde{F}^*_t\Rv^2 \Bigg[ \frac{1}{\Nx+\Ny}\sum_{i=1}^{\Nx+\Ny}\Lv H^{(\gamma)} \Lam_i^{(\gamma)} \Rv^2 \cdot \frac{1}{\Nx+\Ny}\sum_{i=1}^{\Nx+\Ny} \Lv\tilde{\Lam}_i^{(\gamma)} - H^{(\gamma)}_i\Lam_i^{(\gamma)} \Rv^2\\
    & + \frac{1}{\Nx+\Ny}\sum_{i=1}^{\Nx+\Ny}\Lv H^{(\gamma)} \Lam_i^{(\gamma)} \Rv^2 \cdot \frac{1}{\Nx+\Ny}\sum_{i=1}^{\Nx+\Ny} \Lv \Lp H^{(\gamma)}_i-H^{(\gamma)}\Rp\Lam_i^{(\gamma)} \Rv^2 \\
    & + \frac{1}{\Nx+\Ny}\sum_{i=1}^{\Nx+\Ny}\Lv \tilde{\Lam}_i^{(\gamma)} - H^{(\gamma)}\Lam_i^{(\gamma)}\Rv^2\cdot \frac{1}{\Nx+\Ny}\sum_{i=1}^{\Nx+\Ny}\Lv \tilde{\Lam}_i^{(\gamma)} - H^{(\gamma)}\Lam_i^{(\gamma)}\Rv^2\Bigg]\\
    = & \ O_p\big( \frac{1}{\delta_{\Ny,T}}\big).
\end{align*}
Combining this with the first term, we complete our proof.
\end{proof}

\vspace{8mm}
\subsection{Proof of Theorem \ref{thm: asymptotic distribution}}

\subsubsection{Proof of Theorem \ref{thm: asymptotic distribution}.1}
\begin{lemma}   \label{lemma: HDH}
Under Assumptions \ref{assump: obs pattern} and \ref{assump: factor model}, suppose that ${\Ny}/ \Nx\rightarrow c \in [0,\infty)$ and $\gamma = r\cdot \Nx/\Ny$, it holds that
\[
(H^{(\gamma)})^{-1} (\tilde{D}^{(\gamma)})^{-1} H^{(\gamma)} = \left( \frac{1}{T}F^\top F \right)^{-1} \left( \frac{1}{\Nx+\Ny}\Lam^{(\gamma)\top} \Lam^{(\gamma)} \right)^{-1}    + o_p(1).
\]
\end{lemma}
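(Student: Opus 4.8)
The plan is to treat this as an algebraic identity among probability limits: every matrix in the product has already been shown to converge, so the strategy is to compute the limit of $(H^{(\gamma)})^{-1}(\tilde{D}^{(\gamma)})^{-1}H^{(\gamma)}$, simplify it using the explicit form of the rotation $Q^{(\gamma)}$, and then check that the right-hand side of the claimed equality has the same limit. First I would invoke the convergence results already in hand: from Lemma \ref{lemma: D}, $\tilde{D}^{(\gamma)}\overset{p}{\rightarrow}D^{(\gamma)}$, and from Lemma \ref{lemma: H,Q}, $H^{(\gamma)}\overset{p}{\rightarrow}(Q^{(\gamma)})^{-1}$ with $Q^{(\gamma)}=\Sigma_F^{-1/2}\Upsilon^{(\gamma)}(D^{(\gamma)})^{1/2}$. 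Because $\Sigma_{\Lam}^{(\gamma)}$ is positive definite in the regime $\gamma=r\cdot\Nx/\Ny$ (established at the start of the proof of Theorem \ref{thm: consistency for loadings}) and the diagonal entries of $D^{(\gamma)}$ are strictly positive and distinct, all matrices involved are invertible in the limit, so the continuous mapping theorem applies to matrix inversion and multiplication and yields
\[
(H^{(\gamma)})^{-1}(\tilde{D}^{(\gamma)})^{-1}H^{(\gamma)} \overset{p}{\rightarrow} Q^{(\gamma)}(D^{(\gamma)})^{-1}(Q^{(\gamma)})^{-1}.
\]

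Next I would reduce this limit. Writing $(Q^{(\gamma)})^{-1}=(D^{(\gamma)})^{-1/2}\Upsilon^{(\gamma)\top}\Sigma_F^{1/2}$ via $\Upsilon^{(\gamma)\top}\Upsilon^{(\gamma)}=I$, and using that $D^{(\gamma)}$ is diagonal so its powers commute, the central diagonal factors collapse as $(D^{(\gamma)})^{1/2}(D^{(\gamma)})^{-1}(D^{(\gamma)})^{-1/2}=(D^{(\gamma)})^{-1}$, giving
\[
Q^{(\gamma)}(D^{(\gamma)})^{-1}(Q^{(\gamma)})^{-1} = \Sigma_F^{-1/2}\Upsilon^{(\gamma)}(D^{(\gamma)})^{-1}\Upsilon^{(\gamma)\top}\Sigma_F^{1/2}.
\]
The key step is recognizing the middle block: since $D^{(\gamma)}$ and $\Upsilon^{(\gamma)}$ diagonalize $\Sigma_F^{1/2}\Sigma_{\Lam}^{(\gamma)}\Sigma_F^{1/2}=\Upsilon^{(\gamma)}D^{(\gamma)}\Upsilon^{(\gamma)\top}$, inverting both sides gives $\Upsilon^{(\gamma)}(D^{(\gamma)})^{-1}\Upsilon^{(\gamma)\top}=\Sigma_F^{-1/2}(\Sigma_{\Lam}^{(\gamma)})^{-1}\Sigma_F^{-1/2}$. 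Substituting this back and cancelling the outer $\Sigma_F^{\pm 1/2}$ factors produces the clean limit $\Sigma_F^{-1}(\Sigma_{\Lam}^{(\gamma)})^{-1}$.

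Finally I would show the right-hand side shares this limit: by the law of large numbers $\frac{1}{T}F^\top F\overset{p}{\rightarrow}\Sigma_F$ and $\frac{1}{\Nx+\Ny}\Lam^{(\gamma)\top}\Lam^{(\gamma)}\overset{p}{\rightarrow}\Sigma_{\Lam}^{(\gamma)}$, both positive definite, so that $\big(\frac{1}{T}F^\top F\big)^{-1}\big(\frac{1}{\Nx+\Ny}\Lam^{(\gamma)\top}\Lam^{(\gamma)}\big)^{-1}\overset{p}{\rightarrow}\Sigma_F^{-1}(\Sigma_{\Lam}^{(\gamma)})^{-1}$. Since both sides of the asserted equality converge in probability to the same matrix $\Sigma_F^{-1}(\Sigma_{\Lam}^{(\gamma)})^{-1}$, their difference is $o_p(1)$, which is exactly the claim. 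There is no genuine analytic obstacle here — the entire content is the matrix algebra of the rotation $Q^{(\gamma)}$; the only point requiring care is confirming invertibility so that the continuous mapping theorem legitimately applies, which reduces to the positive definiteness of $\Sigma_{\Lam}^{(\gamma)}$ under $\gamma=r\cdot\Nx/\Ny$ and the distinctness of the eigenvalues collected in $D^{(\gamma)}$.
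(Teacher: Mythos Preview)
Your argument is correct but follows a different path from the paper's. You pass to probability limits at the outset via the continuous mapping theorem and then do explicit matrix algebra with the spectral form $Q^{(\gamma)}=\Sigma_F^{-1/2}\Upsilon^{(\gamma)}(D^{(\gamma)})^{1/2}$, reducing $Q^{(\gamma)}(D^{(\gamma)})^{-1}(Q^{(\gamma)})^{-1}$ to $\Sigma_F^{-1}(\Sigma_{\Lam}^{(\gamma)})^{-1}$ through the eigendecomposition identity. The paper instead exploits the \emph{finite-sample} definition $H^{(\gamma)}=(\tilde{D}^{(\gamma)})^{-1}\big(\tfrac{1}{\Nx+\Ny}\tilde{\Lam}^{(\gamma)\top}\Lam^{(\gamma)}\big)\big(\tfrac{1}{T}F^\top F\big)$ to cancel $\tilde{D}^{(\gamma)}$ algebraically, obtaining $(H^{(\gamma)})^{-1}(\tilde{D}^{(\gamma)})^{-1}H^{(\gamma)}=\big(\tfrac{1}{T}F^\top F\big)^{-1}H^{(\gamma)\top}H^{(\gamma)}+o_p(1)$, and then proves $H^{(\gamma)\top}H^{(\gamma)}=\big(\tfrac{1}{\Nx+\Ny}\Lam^{(\gamma)\top}\Lam^{(\gamma)}\big)^{-1}+o_p(1)$ by a symmetric sandwich argument that never touches the eigenvector matrix $\Upsilon^{(\gamma)}$. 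Your route is perhaps cleaner once Lemmas \ref{lemma: D} and \ref{lemma: H,Q} are available; the paper's route has the minor advantage of staying entirely at the level of sample moment matrices and avoiding the explicit spectral representation.
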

\begin{proof}
Based on the definition of $H^{(\gamma)}$ and the fact that $H^{(\gamma)} = (Q^{(\gamma)})^{-1} + o_p(1) = \left(\frac{1}{\Nx+\Ny} \Lam^{(\gamma)\top} \tilde{\Lam}^{(\gamma)} \right)^{-1} + o_p(1)$, we obtain
\begin{align*}
(H^{(\gamma)})^{-1} (\tilde{D}^{(\gamma)})^{-1} H^{(\gamma)} & =   \left(\frac{1}{T}F^\top F\right)^{-1} \left( \frac{1}{\Nx+\Ny}\tilde{\Lam}^{(\gamma)\top} \Lam^{(\gamma)}\right)^{-1} \tilde{D}^{(\gamma)}(\tilde{D}^{(\gamma)})^{-1} H^{(\gamma)}\\
& =  \left(\frac{1}{T}F^\top F\right)^{-1}  H^{(\gamma)\top}  H^{(\gamma)} + o_p(1).
\end{align*}
Additionally, by Theorem \ref{thm: consistency for loadings},
\begin{align*}
H^{(\gamma)\top} H^{(\gamma)} & =  \left(\frac{1}{\Nx+\Ny} \tilde{\Lam}^{(\gamma)\top} \Lam^{(\gamma)} \right)^{-1}  \left(\frac{1}{\Nx+\Ny} \Lam^{(\gamma)\top} \tilde{\Lam}^{(\gamma)} \right)^{-1} + o_p(1) \\
& =\left(\frac{1}{\Nx+\Ny} \Lam^{(\gamma)\top} {\Lam}^{(\gamma)} \right)^{-1} (H^{(\gamma)\top} H^{(\gamma)})^{-1} \left(\frac{1}{\Nx+\Ny} \Lam^{(\gamma)\top} {\Lam}^{(\gamma)} \right)^{-1} + o_p(1).
\end{align*}
Left-multiplying both sides by $\frac{1}{\Nx+\Ny} \Lam^{(\gamma)\top} {\Lam}^{(\gamma)}$, we have
\[
H^{(\gamma)\top} H^{(\gamma)} = \left(\frac{1}{\Nx+\Ny}\Lam^{(\gamma)\top} {\Lam}^{(\gamma)} \right)^{-1}  + o_p(1).
\]
Plugging this into $(H^{(\gamma)})^{-1} (\tilde{D}^{(\gamma)})^{-1} H^{(\gamma)}$, we complete our proof.
\end{proof}

\vspace{5mm}
\begin{lemma} \label{lemma: order of 4 terms factor first}
Suppose $\Ny / \Nx \rightarrow c\in [0,\infty)$ and let $\gamma = r\cdot \Nx/\Ny$. Under Assumptions \ref{assump: obs pattern}, \ref{assump: factor model} and \ref{assump: additional assumptions}, we have for any $i=\Nx+1,\cdots,\Nx+\Ny$, 
\begin{enumerate}
    \item $\frac{1}{\sqrt{\gamma}}\cdot \frac{1}{\Nx+\Ny} \sum_{j=1}^{\Nx+\Ny} \tilde{\Lam}_j^{(\gamma)} \eta_{ij} = O_p \big(\frac{1}{\sqrt{ T\delta_{\Ny, \Talpha}}} \big)$;
    \item $\frac{1}{\sqrt{\gamma}}\cdot \frac{1}{\Nx+\Ny} \sum_{j=1}^{\Nx+\Ny} \tilde{\Lam}_j^{(\gamma)} \xi_{ij} = O_p \big(\frac{1}{\sqrt{ \Talpha}} \big)$;
    \item $\frac{1}{\sqrt{\gamma}}\cdot \frac{1}{\Nx+\Ny} \sum_{j=1}^{\Nx+\Ny} \tilde{\Lam}_j^{(\gamma)} \gamma(i,j) = O_p \big(\frac{1}{\sqrt{ \Ny\delta_{\Ny, \Talpha}}} \big)$;
    \item $\frac{1}{\sqrt{\gamma}}\cdot \frac{1}{\Nx+\Ny} \sum_{j=1}^{\Nx+\Ny} \tilde{\Lam}_j^{(\gamma)} \zeta_{ij} = O_p \big(\frac{1}{\sqrt{ T \delta_{\Ny, \Talpha}}} \big)$.
\end{enumerate}
\end{lemma}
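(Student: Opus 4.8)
The plan is to prove all four bounds with a common two-step template, exploiting that $i$ lies in the target block, so that $\Lam_i^{(\gamma)} = \sqrt{\gamma}\,(\LamY)_i$ and $\ee_{ti} = \sqrt{\gamma}\,(\eY)_{ti}$. This is exactly why the prefactor $1/\sqrt{\gamma}$ appears: each of $\eta_{ij}, \xi_{ij}, \gamma(i,j), \zeta_{ij}$ carries one $\sqrt{\gamma}$ through its $i$-index, and dividing by $\sqrt{\gamma}$ restores the unweighted scale. For each term $\phi_{ij} \in \{\eta_{ij}, \xi_{ij}, \gamma(i,j), \zeta_{ij}\}$ I would insert the decomposition $\tilde{\Lam}_j^{(\gamma)} = H^{(\gamma)}\Lam_j^{(\gamma)} + (\tilde{\Lam}_j^{(\gamma)} - H^{(\gamma)}\Lam_j^{(\gamma)})$ from the proof of Theorem \ref{thm: consistency for loadings} and bound the resulting ``estimation-error'' piece (weighted by $\tilde{\Lam}_j^{(\gamma)} - H^{(\gamma)}\Lam_j^{(\gamma)}$) and ``systematic'' piece (weighted by $H^{(\gamma)}\Lam_j^{(\gamma)}$) separately.

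For the estimation-error piece I would apply Cauchy--Schwarz,
\[
\Lv \frac{1}{\Nx+\Ny}\sum_{j=1}^{\Nx+\Ny}\big(\tilde{\Lam}_j^{(\gamma)} - H^{(\gamma)}\Lam_j^{(\gamma)}\big)\phi_{ij} \Rv \leq \Lp \frac{1}{\Nx+\Ny}\sum_{j=1}^{\Nx+\Ny}\Lv\tilde{\Lam}_j^{(\gamma)} - H^{(\gamma)}\Lam_j^{(\gamma)}\Rv^2 \Rp^{1/2}\Lp \frac{1}{\Nx+\Ny}\sum_{j=1}^{\Nx+\Ny}\phi_{ij}^2\Rp^{1/2},
\]
where the first factor is $O_p(1/\sqrt{\delta_{\Ny,T}})$ by Theorem \ref{thm: consistency for loadings}.1. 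For the second factor I would establish the fixed-$i$ analogues of Lemma \ref{lemma: order of the four terms}: after the $\gamma$-normalization these read $\frac{1}{\gamma(\Nx+\Ny)}\sum_j\eta_{ij}^2 = O_p(1/T)$, $\frac{1}{\gamma(\Nx+\Ny)}\sum_j\zeta_{ij}^2 = O_p(1/T)$ and $\frac{1}{\gamma(\Nx+\Ny)}\sum_j\gamma^2(i,j) = O(1/\Ny)$, which are proved by the identical moment computations as Lemma \ref{lemma: order of the four terms} but with the outer average over $i$ removed and the target-block scaling of $\Lam_i^{(\gamma)}, \ee_{ti}$ made explicit. Multiplying the two factors then yields $O_p(1/\sqrt{T\delta_{\Ny,T}})$ for the $\eta$ and $\zeta$ terms (claims 1 and 4) and $O_p(1/\sqrt{\Ny\delta_{\Ny,T}})$ for the $\gamma(i,j)$ term (claim 3).

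For the systematic piece the idiosyncratic errors enter linearly, so I would bound it by a direct second-moment calculation. For $\eta$ and $\zeta$ the summands are mean-zero in $\ee$ and only weakly correlated across $(t,j)$ (Assumption \ref{assump: factor model}.3), so their systematic parts are of order at most $1/\sqrt{\Ny T}$ and are dominated by the corresponding error pieces. The decisive term is $\xi$: using $\xi_{ij}/\sqrt{\gamma} = \Lam_j^{(\gamma)\top}\frac{1}{|Q^Z_{ij}|}\sum_{t\in Q^Z_{ij}}F_t(\eY)_{ti}$, the systematic part equals $\big(\frac{1}{\Nx+\Ny}\sum_j H^{(\gamma)}\Lam_j^{(\gamma)}\Lam_j^{(\gamma)\top}\big)$ acting on a fixed-$i$, mean-zero time average $\frac{1}{|Q^Z_{ij}|}\sum_t F_t(\eY)_{ti}$; the matrix prefactor converges to $H^{(\gamma)}\Sigma_{\Lam}^{(\gamma)} = O_p(1)$ and the time average is $O_p(1/\sqrt{T})$, giving claim 2 and, in Theorem \ref{thm: asymptotic distribution}, the leading asymptotic variance of the target loadings.

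The hard part will be making the last factorization rigorous: because the observation set $Q^Z_{ij}$ depends on $j$, the fixed-$i$ time average $\frac{1}{|Q^Z_{ij}|}\sum_t F_t(\eY)_{ti}$ cannot simply be pulled out of the $j$-sum, so I would use Assumption \ref{assump: obs pattern} to control the heterogeneity of $|Q^Z_{ij}|$ across $j$ and Assumption \ref{assump: additional assumptions} to supply the moment/mixing conditions that upgrade the averaged rates of Lemma \ref{lemma: order of the four terms} to the fixed-$i$ rates needed here. A secondary but pervasive difficulty is bookkeeping the $\sqrt{\gamma} = \sqrt{r\Nx/\Ny}$ scaling separately over the source block ($j \le \Nx$) and target block ($j > \Nx$), since $\ee_{tj}$ and $\Lam_j^{(\gamma)}$ are $O(1)$ in the former and $O(\sqrt{\gamma})$ in the latter; this asymmetry is precisely what turns the $\gamma(i,j)$ bound into the $1/\sqrt{\Ny\delta_{\Ny,T}}$ rate rather than a $1/\sqrt{T}$ rate.
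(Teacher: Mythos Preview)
Your approach is essentially identical to the paper's: the same split $\tilde{\Lam}_j^{(\gamma)} = H^{(\gamma)}\Lam_j^{(\gamma)} + (\tilde{\Lam}_j^{(\gamma)} - H^{(\gamma)}\Lam_j^{(\gamma)})$, Cauchy--Schwarz on the estimation-error piece together with the fixed-$i$ bounds $\frac{1}{\gamma(\Nx+\Ny)}\sum_j\phi_{ij}^2$ (which the paper also recomputes explicitly), and a direct moment bound on the systematic piece. One small clarification on your ``hard part'': the paper does not try to control the heterogeneity of $|Q^Z_{ij}|$ across $j$ via Assumption \ref{assump: obs pattern} in order to factor the $j$-sum; instead it simply invokes the ready-made high-level bounds in Assumption \ref{assump: additional assumptions} (items 1, 3, and 6 for $\eta$, $\zeta$, and $\xi$ respectively), each of which directly asserts the required rate for the full expression $\frac{1}{\Nx+\Ny}\sum_j \Lam_j^{(\gamma)}\Lam_j^{(\gamma)\top}\frac{1}{|Q^Z_{ij}|}\sum_{t\in Q^Z_{ij}}(\cdot)$ with the $j$-dependent observation set already inside. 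So your anticipated difficulty is resolved by assumption rather than by argument, and your reference to Assumption \ref{assump: additional assumptions} is the right pointer.
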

\begin{proof}
Observe that we can decompose each term $\frac{1}{\sqrt{\gamma}}\cdot \frac{1}{\Nx+\Ny } \sum_{j=1}^{\Nx+\Ny } \tilde{\Lam}_j^{(\gamma)} \phi_{ij}$ with $\phi_{ij} = \eta_{ij}, \xi_{ij}, \gamma(i,j)$ and $\zeta_{ij}$ as
\begin{align*}
\frac{1}{\sqrt{\gamma}}\cdot \frac{1}{\Nx+\Ny } \sum_{j=1}^{\Nx+\Ny } \tilde{\Lam}_j^{(\gamma)} \phi_{ij}
= & \frac{1}{\sqrt{\gamma}}\cdot \frac{1}{\Nx+\Ny} \sum_{j=1}^{\Nx+\Ny} (\tilde{\Lam}_j^{(\gamma)}-H^{(\gamma)}\Lam_j^{(\gamma)}) \phi_{ij} \\ & +\frac{1}{\sqrt{\gamma}}\cdot \frac{1}{\Nx+\Ny} \sum_{j=1}^{\Nx+\Ny} H^{(\gamma)}{\Lam}_j^{(\gamma)} \phi_{ij},
\end{align*}
where the first term $ \frac{1}{\Nx+\Ny} \sum_{j=1}^{\Nx+\Ny} (\tilde{\Lam}_j^{(\gamma)}-H^{(\gamma)}\Lam_j^{(\gamma)}) \phi_{ij}$ can be bounded by
\begin{align*}
& \bigg\| \frac{1}{\Nx+\Ny} \sum_{j=1}^{\Nx+\Ny} (\tilde{\Lam}_j^{(\gamma)}-H^{(\gamma)}\Lam_j^{(\gamma)}) \phi_{ij}\bigg\| \\
\leq & \underbrace{\bigg( \frac{1}{\Nx+\Ny} \sum_{j=1}^{\Nx+\Ny} \|\tilde{\Lam}_j^{(\gamma)}-H^{(\gamma)}\Lam_j^{(\gamma)}\|^2 \bigg)^{1/2}}_{O_p\big(\frac{1}{\sqrt{\delta_{\Ny, \Talpha}} }\big)\ \text{by Theorem \ref{thm: consistency for loadings}}}\cdot \bigg(
\frac{1}{\Nx+\Ny}\sum_{j=1}^{\Nx+\Ny}\phi^2_{ij}\bigg)^{1/2} .
\end{align*}
We analyze $\frac{1}{\sqrt{\gamma}} \frac{1}{\Nx+\Ny } \sum_{j=1}^{\Nx+\Ny } \tilde{\Lam}_j^{(\gamma)} \phi_{ij}$ with $\phi_{ij}= \eta_{ij},\xi_{ij},\gamma(i,j),\zeta_{ij}$ respectively in the following.

1. $\phi_{ij}=\eta_{ij}$: For $i=\Nx+1,\cdots,\Nx+\Ny$, it holds that
\[\E\Ls\frac{1}{\Nx+\Ny}\sum_{j=1}^{\Nx+\Ny}\eta^2_{ij} \Rs \leq \frac{1}{\Nx+\Ny} \sum_{j=1}^{\Nx+\Ny} \E\Lv\Lam_i^{(\gamma)}\Rv^2 \cdot \E\Lv\frac{1}{|Q^Z_{ij}|} \sum_{t\in Q^Z_{ij}}F_t \ee_{tj}\Rv^2 \leq \frac{C\gamma}{\Talpha},
\]
where the last inequality follows from Assumption \ref{assump: factor model}.2 and Assumption \ref{assump: factor model}.4. Additionally, by Assumption \ref{assump: additional assumptions}.1, there is
\begin{align*}
\Lv\frac{1}{\Nx+\Ny}\sum_{j=1}^{\Nx+\Ny}\Lam_j^{(\gamma)} \eta_{ij}  \Rv &= \Lv\frac{1}{\Nx+\Ny}\sum_{j=1}^{\Nx+\Ny}\Lam_j^{(\gamma)} \Lam_i^{(\gamma)\top}\frac{1}{|Q^Z_{ij}|}\sum_{t\in Q^Z_{ij}}F_t\ee_{tj}  \Rv \\
& \leq \Lv\frac{1}{\Nx+\Ny}\sum_{j=1}^{\Nx+\Ny}\Lam_j^{(\gamma)} \frac{1}{|Q^Z_{ij}|}\sum_{t\in Q^Z_{ij}}F_t^\top\ee_{tj}  \Rv  \cdot  \Lv\Lam_i^{(\gamma)} \Rv = O_p\big(\frac{\sqrt{\gamma}}{\sqrt{\Ny \Talpha}} \big).
\end{align*}
Combining these parts, we have $\frac{1}{\sqrt{\gamma}}\cdot \frac{1}{\Nx+\Ny } \sum_{j=1}^{\Nx+\Ny} \tilde{\Lam}_j^{(\gamma)} \phi_{ij} = O_p\big( \frac{1}{\sqrt{T\delta_{\Ny, \Talpha}}} \big)$. 

\vspace{3mm}
2. $\phi_{ij}=\xi_{ij}:$ Similar to the previous part, we can show that $\E\Ls\frac{1}{\Nx+\Ny}\sum_{j=1}^{\Nx+\Ny}\xi^2_{ij} \Rs \leq \frac{C\gamma}{\Talpha}$. According to Assumption \ref{assump: additional assumptions}.6, we have
\[
 \frac{1}{\Nx+\Ny} \sum_{j=1}^{\Nx+\Ny} {\Lam}_j^{(\gamma)} \xi_{ij} = \frac{1}{\Nx+\Ny}\sum_{j=1}^{\Nx+\Ny}\Lam_j^{(\gamma)}\Lam_j^{(\gamma)\top} \frac{1}{|Q^Z_{ij}|}\sum_{t\in Q^Z_{ij}}F_t\ee_{ti} = O_p\big(\frac{\sqrt{\gamma}}{\sqrt{\Talpha}}\big).
\]
Therefore, $\frac{1}{\sqrt{\gamma}}\cdot \frac{1}{\Nx+\Ny} \sum_{j=1}^{\Nx+\Ny} \tilde{\Lam}_j^{(\gamma)} \xi_{ij} = O_p\big( \frac{1}{\sqrt{\Talpha}} \big)$ as claimed. 

\vspace{3mm}
3. $\phi_{ij}=\gamma(i,j):$ Based on Assumption \ref{assump: factor model}.3(c), for any $i=\Nx+1,\cdots,\Nx+\Ny,$
\begin{align*}
    \frac{1}{\Nx+\Ny}\sum_{j=1}^{\Nx+\Ny}\gamma^2(i,j) &= \frac{1}{\Nx+\Ny}\sum_{j=1}^{\Nx+\Ny} \Bigg( \frac{1}{|Q^Z_{ij}|}\sum_{t\in Q^Z_{ij}}\E\Ls\ee_{ti}\ee_{tj}\Rs \Bigg)^2 \\
    & \leq \frac{1}{\Nx+\Ny}\sum_{j=1}^{\Ny} \gamma^2 (\tau_{(i-\Nx),j}^{(\eY)})^2 + \frac{1}{\Nx+\Ny}\sum_{j=1}^{\Nx} \gamma (\tau_{(i-\Nx),j}^{(\eY,\eX)})^2  \leq C \frac{\gamma}{\Ny}.
\end{align*}
Moreover, there is
\begin{align*}
    \E\Lv\frac{1}{\Nx+\Ny}\sum_{j=1}^{\Nx+\Ny}\Lam_j^{(\gamma)} \gamma(i,j)\Rv &\leq \frac{1}{\Nx+\Ny}\sum_{j=1}^{\Nx+\Ny}\E\Lv\Lam_j^{(\gamma)}\Rv \cdot |\gamma(i,j)| \\
    & \leq \frac{C}{\Nx+\Ny}\sum_{j=1}^{\Ny} \gamma^{3/2}\tau_{(i-\Nx),j}^{(\eY)} + \frac{C}{\Nx+\Ny}\sum_{j=1}^{\Nx}\gamma^{1/2} \tau_{(i-\Nx),j}^{(\eY,\eX)} \\
    & \leq C\frac{\sqrt{\gamma}}{\Ny}.
\end{align*}
As a result, we conclude that $\frac{1}{\sqrt{\gamma}}\cdot \frac{1}{\Nx+\Ny} \sum_{j=1}^{\Nx+\Ny} \tilde{\Lam}_j^{(\gamma)} \gamma(i,j) = O_p\big( \frac{1}{\sqrt{\Ny \delta_{\Ny, \Talpha}}} \big)$.

\vspace{3mm}
4. $\phi_{ij}=\zeta_{ij}:$ According to Assumption \ref{assump: factor model}.3(e), for any $i=\Nx+1,\cdots,\Nx+\Ny$ it holds that
\[
\E \Ls \frac{1}{\Nx+\Ny }\sum_{j=1}^{\Nx+\Ny }\zeta_{ij}^2 \Rs = \frac{1}{\Nx+\Ny}\sum_{j=1}^{\Nx+\Ny} \E \Ls\frac{1}{|Q^Z_{ij}|}\sum_{t\in Q^Z_{ij}}[\ee_{ti}\ee_{tj}-\E(\ee_{ti}\ee_{tj})] \Rs^2 \leq \frac{C\gamma}{\Talpha},
\]
and by Assumption \ref{assump: additional assumptions}.3, 
\begin{align*}
 \frac{1}{\Nx+\Ny}\sum_{j=1}^{\Nx+\Ny}\Lam_j^{(\gamma)} \zeta_{ij} &= \frac{1}{\Nx+\Ny}\sum_{j=1}^{\Nx+\Ny}\Lam_j^{(\gamma)} \Lp\frac{1}{|Q^Z_{ij}|}\sum_{t\in Q^Z_{ij}}[\ee_{ti}\ee_{tj}-\E(\ee_{ti}\ee_{tj})] \Rp \\ &= O_p\big( \frac{\sqrt{\gamma}}{\sqrt{\Ny \Talpha}}\big).
\end{align*}
Combining these terms, we have $\frac{1}{\sqrt{\gamma}}\cdot \frac{1}{\Nx+\Ny} \sum_{j=1}^{\Nx+\Ny} \tilde{\Lam}_j^{(\gamma)} \zeta_{ij} = O_p\big( \frac{1}{\sqrt{T\delta_{\Ny, \Talpha}}} \big)$. We complete our proof.
\end{proof}

\vspace{5mm}
\begin{proof} Proof of Theorem \ref{thm: asymptotic distribution}.1: 

For any $i=1,\cdots,\Ny$, we have the decomposition
\[
\sqrt{\Talpha}\Lp(\tilde{\Lam}_y)_i - H^{(\gamma)}(\LamY)_i\Rp =\sqrt{\Talpha} \Lp(\tilde{\Lam}_y)_i - H_{i+\Nx}^{(\gamma)}(\LamY)_i\Rp + \sqrt{\Talpha}\Lp H_{i+\Nx}^{(\gamma)} - H^{(\gamma)}\Rp (\LamY)_i.
\]
For simplicity, we denote $i' = i+\Nx$.\\

Step 1 -- For the first term $\sqrt{\Talpha} \Lp(\tilde{\Lam}_y)_i - H_{i+\Nx}^{(\gamma)}(\LamY)_i\Rp$, observe that
\begin{align*}
    (\tilde{\Lam}_y)_i - H_{i+\Nx}^{(\gamma)} (\LamY)_i &= \frac{1}{\sqrt{\gamma}} \left(\tilde{\Lam}^{(\gamma)}_{i'} - H_{i'}^{(\gamma)} \Lam_{i'}^{(\gamma)}\right) \\
    & =(\tilde{D}^{(\gamma)})^{-1}  \frac{1}{\sqrt{\gamma}}\cdot   \frac{1}{\Nx+\Ny}\sum_{j=1}^{\Nx+\Ny}\Big( \tilde{\Lam}_j^{(\gamma)} \eta_{{i'}j} + \tilde{\Lam}_j^{(\gamma)} \xi_{{i'}j} + \tilde{\Lam}_j^{(\gamma)} \zeta_{{i'}j}  + \tilde{\Lam}_j^{(\gamma)} \gamma(i',j) \Big).
\end{align*}
According to Lemma \ref{lemma: order of 4 terms factor first}, when $\sqrt{\Talpha}/\Ny \rightarrow 0$, there is 
\begin{align*}
\sqrt{\Talpha}\Lp(\tilde{\Lam}_y)_i - H_{i+\Nx}^{(\gamma)} (\LamY)_i\Rp &= \sqrt{\Talpha}  (\tilde{D}^{(\gamma)})^{-1} \frac{1}{\sqrt{\gamma}}\cdot \frac{1}{\Nx+\Ny}\sum_{j=1}^{\Nx+\Ny} \tilde{\Lam}_j^{(\gamma)} \xi_{i'j} + o_p(1) \\
& = \sqrt{\Talpha}\cdot \underbrace{(\tilde{D}^{(\gamma)})^{-1} \frac{1}{\Nx+\Ny} \sum_{j=1}^{\Nx+\Ny}  H^{(\gamma)} \Lam_j^{(\gamma)}\Lam_j^{(\gamma)\top} \frac{1}{|Q^Z_{i'j}|} \sum_{t \in Q^Z_{i'j}} F_t (\eY)_{ti}}_{\omega_{\Lam_i,1}} + o_p(1).
\end{align*}
By Assumption \ref{assump: additional assumptions}.6, for any $i=1, \cdots, N_y$ and $i'=i+N_x,$
\[
 \frac{\sqrt{\Talpha}}{\Nx+\Ny} \sum_{j=1}^{\Nx+\Ny}   \Lam_j^{(\gamma)}\Lam_j^{(\gamma)\top} \frac{1}{|Q^Z_{i'j}|} \sum_{t \in Q^Z_{i'j}} F_t (\eY)_{ti}  \overset{d}{\rightarrow} \Ncal(0, \Gamma_{\LamY,i}^{(\gamma),\text{obs}}).
\]
From Lemma \ref{lemma: D} and Lemma \ref{lemma: H,Q}, $(\tilde{D}^{(\gamma)})^{-1} \overset{p}{\rightarrow} (D^{(\gamma)})^{-1}$ and $H^{(\gamma)} \overset{p}{\rightarrow} (Q^{(\gamma)})^{-1}$. Based on Slutsky's Theorem, it holds that
\begin{align*}
    \sqrt{\Talpha}\Lp(\tilde{\Lam}_y)_i - H_{i+\Nx}^{(\gamma)}(\LamY)_i\Rp  &= \sqrt{T}\cdot \omega_{\Lam_i,1}+o_p(1) \\ &\overset{d}{\rightarrow} \Ncal\Lp 0, (D^{(\gamma)})^{-1}(Q^{(\gamma)})^{-1} \Gamma_{\LamY,i}^{(\gamma),\text{obs}} ((Q^{(\gamma)})^{-1})^\top (D^{(\gamma)})^{-1}\Rp.
\end{align*}

\vspace{2mm}
Step 2 -- For the second term $\sqrt{\Talpha}\Lp H_{i+\Nx}^{(\gamma)} - H^{(\gamma)}\Rp (\LamY)_i$, we have
\begin{align*}
\Lp H_{i+\Nx}^{(\gamma)} - H^{(\gamma)}\Rp (\LamY)_i & =   (\tilde{D}^{(\gamma)})^{-1} \frac{1}{\Nx+\Ny} \sum_{j=1}^{\Nx+\Ny}  \tilde{\Lam}_j^{(\gamma)}  \Lam_j^{(\gamma)\top} \underbrace{\Bigg( \frac{1}{|Q^Z_{i'j}|}\sum_{t \in Q^Z_{i'j}} F_t F_t^\top - \frac{1}{T} \sum_{t=1}^{T} F_tF_t^\top  \Bigg) }_{\Delta_{F,i'j}}(\LamY)_i \\
& = (\tilde{D}^{(\gamma)})^{-1}  \frac{1}{\Nx+\Ny}  \sum_{j=1}^{\Nx+\Ny}  (\tilde{\Lam}_j^{(\gamma)} -H^{(\gamma)}\Lam_j^{(\gamma)}) \Lam_j^{(\gamma)\top} \Delta_{F, i'j} (\LamY)_i \\
 & \quad + \underbrace{(\tilde{D}^{(\gamma)})^{-1}  \frac{1}{\Nx+\Ny}  \sum_{j=1}^{\Nx+\Ny}H^{(\gamma)}\Lam_j^{(\gamma)}\Lam_j^{(\gamma)\top} \Delta_{F, i'j} (\LamY)_i}_{\omega_{\Lam_i,2}}.
\end{align*}
The first part on the RHS can be bounded by 
\begin{align*}
    &\Lv \frac{1}{\Nx+\Ny}  \sum_{j=1}^{\Nx+\Ny}  (\tilde{\Lam}_j^{(\gamma)} -H^{(\gamma)}\Lam_j^{(\gamma)}) \Lam_j^{(\gamma)\top} \Delta_{F, i'j}(\LamY)_i\Rv^2 \\
    \leq & \underbrace{\frac{1}{\Nx+\Ny}  \sum_{j=1}^{\Nx+\Ny}  \Lv\tilde{\Lam}_j^{(\gamma)} -H^{(\gamma)}\Lam_j^{(\gamma)}\Rv^2}_{O_p(\frac{1}{\delta_{\Ny,T}}) \ \text{by Theorem \ref{thm: consistency for loadings}}} \cdot \Lv (\LamY)_i\Rv^2 \cdot  \frac{1}{\Nx+\Ny}  \sum_{j=1}^{\Nx+\Ny}  \Lv\Lam_j^{(\gamma)}\Rv^2 \|\Delta_{F,i' j}\|^2 ,
\end{align*}
where 
\[
\E \left[  \frac{1}{\Nx+\Ny}  \sum_{j=1}^{\Nx+\Ny} \|\Lam_j^{(\gamma)}\|^2 \|\Delta_{F,i'j}\|^2 \right] =  \frac{1}{\Nx+\Ny}  \sum_{j=1}^{\Nx+\Ny} \E\Lv\Lam_j^{(\gamma)}\Rv^2 \E\Lv\Delta_{F,i'j}\Rv^2\leq \frac{C}{\Talpha}.
\]
As a result, $\frac{\sqrt{T}}{\Nx+\Ny}  \sum_{j=1}^{\Nx+\Ny}  (\tilde{\Lam}_j^{(\gamma)} -H^{(\gamma)}\Lam_j^{(\gamma)}) \Lam_j^{(\gamma)\top} \Delta_{F, i'j}(\LamY)_i= o_p(1)$. Consider the second part $\omega_{\Lam_i,2}$. By Assumption \ref{assump: additional assumptions}.8 and Slutsky's theorem, we have 
\begin{align*}
\sqrt{T}\omega_{\Lam_i,2} = &\sqrt{\Talpha} (\tilde{D}^{(\gamma)})^{-1}  \frac{1}{\Nx+\Ny}  \sum_{j=1}^{\Nx+\Ny}H^{(\gamma)}\Lam_j^{(\gamma)}\Lam_j^{(\gamma)\top} \Delta_{F, i'j}(\LamY)_i \\  = & \ (\tilde{D}^{(\gamma)})^{-1} H^{(\gamma)} \frac{\sqrt{\Talpha}}{\Nx+\Ny}   \sum_{j=1}^{\Nx+\Ny} \Lam_j^{(\gamma)}\Lam_j^{(\gamma)\top} \Bigg( \frac{1}{|Q^Z_{i'j}|}\sum_{t \in Q^Z_{i'j}} F_tF_t^\top -  \frac{1}{T} \sum_{t=1}^{T} F_tF_t^\top  \Bigg) (\LamY)_i \\
  \overset{d}{\rightarrow} & \  \Ncal \left(0,  (D^{(\gamma)})^{-1}(Q^{(\gamma)})^{-1} \Gamma_{\LamY,i}^{(\gamma),\text{miss}} ((Q^{(\gamma)})^{-1})^\top (D^{(\gamma)})^{-1} \right) \qquad \mathcal{G}^t - \text{stably},
\end{align*}
where $\Gamma_{\LamY,i}^{(\gamma),\text{miss}} =  h_{i+\Nx}^{(\gamma)}((\LamY)_i).$ \\

Step 3 -- Observe that $\omega_{\Lam_i,1}$ and $\omega_{\Lam_i,2}$ are asymptotically independent. Combining the results from the first two steps, we have
\begin{align*}
\sqrt{\Talpha}\Lp(\tilde{\Lam}_y)_i - H^{(\gamma)}(\LamY)_i\Rp &= \sqrt{T}\Lp \omega_{\Lam_i,1}+\omega_{\Lam_i,2}\Rp +o_p(1)\\ &\overset{d}{\rightarrow}  \Ncal \left( 0, (D^{(\gamma)})^{-1}(Q^{(\gamma)})^{-1} \left(\Gamma_{\LamY,i}^{(\gamma),\text{obs}} + \Gamma_{\LamY,i}^{(\gamma),\text{miss}} \right)((Q^{(\gamma)})^{-1})^\top (D^{(\gamma)})^{-1} \right)    
\end{align*}
$\mathcal{G}^t$ - stably.
If we left-multiply $(\tilde{\Lam}_y)_i - H^{(\gamma)}(\LamY)_i$ by $(H^{(\gamma)})^{-1}$, according to Lemma \ref{lemma: H,Q} and Lemma \ref{lemma: HDH}, it holds that
\begin{align*}
&\sqrt{\Talpha} \Lp(H^{(\gamma)})^{-1} (\tilde{\Lam}_y)_i - (\LamY)_i\Rp \\ \overset{d}{\rightarrow}&\ \Ncal \left( 0, \Sigma_F^{-1}(\Sigma_{\Lam}^{(\gamma)})^{-1}  \left(\Gamma_{\LamY,i}^{(\gamma),\text{obs}} + \Gamma_{\LamY,i}^{(\gamma),\text{miss}}\right)(\Sigma_{\Lam}^{(\gamma)})^{-1} \Sigma_F^{-1} \right)\quad \mathcal{G}^t - \text{stably},    
\end{align*}
or equivalently, 
\[
\sqrt{\Talpha} (\Sigma_{\LamY,i}^{(\gamma)})^{-1/2} \left((H^{(\gamma)})^{-1} (\tilde{\Lam}_y)_i - (\LamY)_i\right) \overset{d}{\rightarrow} \Ncal(0,I_k),
\]
where $\Sigma_{\LamY,i}^{(\gamma)} = \Sigma_F^{-1}(\Sigma_{\Lam}^{(\gamma)})^{-1}  \left(\Gamma_{\LamY,i}^{(\gamma),\text{obs}} + \Gamma_{\LamY,i}^{(\gamma),\text{miss}}\right)(\Sigma_{\Lam}^{(\gamma)})^{-1} \Sigma_F^{-1}.$
\end{proof}

\vspace{5mm}
\subsubsection{Proof of Theorem \ref{thm: asymptotic distribution}.2}
\begin{lemma}   \label{lemma: small order terms}
Suppose that $\Ny/ \Nx\rightarrow c \in [0,\infty)$ and $\gamma = r\cdot \Nx/\Ny$ for some constant $r$. Under Assumptions \ref{assump: obs pattern}, \ref{assump: factor model} and Assumption \ref{assump: additional assumptions}, we have for any $t,$
\begin{enumerate}
\item $\frac{1}{\Nx+\Ny} \sum_{i=1}^{\Nx+\Ny} W^Z_{ti} (\tilde{\Lam}_i^{(\gamma)} - H_i^{(\gamma)} \Lam_i^{(\gamma)}) \ee_{ti} = O_p\big(\frac{1}{{\delta_{\Ny, \Talpha}}}\big)$;
\item $\frac{1}{\Nx+\Ny} \sum_{i=1}^{\Nx+\Ny} W^Z_{ti} (\tilde{\Lam}_i^{(\gamma)} - H^{(\gamma)} \Lam_i^{(\gamma)}) \ee_{ti} = O_p\big(\frac{1}{{\delta_{\Ny, \Talpha}}}\big)$;
\item $\frac{1}{\Nx+\Ny} \sum_{i=1}^{\Nx+\Ny} (\tilde{\Lam}_i^{(\gamma)} - H_i^{(\gamma)} \Lam_i^{(\gamma)}) \Lam_i^{(\gamma)\top} = O_p\big(\frac{1}{{\delta_{\Ny, \Talpha}}}\big)$;
\item $\frac{1}{\Nx+\Ny} \sum_{i=1}^{\Nx+\Ny}W_{ti}^Z  (\tilde{\Lam}_i^{(\gamma)} - H_i^{(\gamma)} \Lam_i^{(\gamma)}) \Lam_i^{(\gamma)\top} = O_p\big(\frac{1}{{\delta_{\Ny, \Talpha}}}\big)$.

\end{enumerate}
\end{lemma}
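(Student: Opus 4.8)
The plan is to prove Part 1 in full and then obtain Parts 2--4 by minor modifications. For Part 1 I would substitute the expansion of $\tilde{\Lam}_i^{(\gamma)}-H_i^{(\gamma)}\Lam_i^{(\gamma)}$ established in the proof of Theorem \ref{thm: consistency for loadings}, namely
\[
\tilde{\Lam}_i^{(\gamma)}-H_i^{(\gamma)}\Lam_i^{(\gamma)}=\frac{1}{\Nx+\Ny}(\tilde{D}^{(\gamma)})^{-1}\sum_{j=1}^{\Nx+\Ny}\tilde{\Lam}_j^{(\gamma)}\big(\eta_{ij}+\xi_{ij}+\zeta_{ij}+\gamma(i,j)\big),
\]
into the target sum. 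After pulling out $\|(\tilde{D}^{(\gamma)})^{-1}\|=O_p(1)$ (Lemma \ref{lemma: D}), this reduces the problem to bounding the four double sums $\frac{1}{(\Nx+\Ny)^2}\sum_{i,j}W_{ti}^Z\tilde{\Lam}_j^{(\gamma)}\phi_{ij}\ee_{ti}$ for $\phi_{ij}\in\{\eta_{ij},\xi_{ij},\zeta_{ij},\gamma(i,j)\}$, each of which I must show is $O_p(1/\delta_{\Ny,\Talpha})$.

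In each double sum I would split $\tilde{\Lam}_j^{(\gamma)}=H^{(\gamma)}\Lam_j^{(\gamma)}+(\tilde{\Lam}_j^{(\gamma)}-H^{(\gamma)}\Lam_j^{(\gamma)})$. The piece carrying the loading-estimation error is handled by Cauchy--Schwarz over the pair $(i,j)$: its squared norm is bounded by $\big(\frac{1}{\Nx+\Ny}\sum_j\|\tilde{\Lam}_j^{(\gamma)}-H^{(\gamma)}\Lam_j^{(\gamma)}\|^2\big)\cdot\big(\frac{1}{\Nx+\Ny}\sum_i W_{ti}^Z\ee_{ti}^2\big)\cdot\big(\frac{1}{(\Nx+\Ny)^2}\sum_{i,j}\phi_{ij}^2\big)$. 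By Theorem \ref{thm: consistency for loadings} the first factor is $O_p(1/\delta_{\Ny,\Talpha})$, the second is $O_p(1)$, and by Lemma \ref{lemma: order of the four terms} the last is $O_p(1/\Talpha)$ (or $O(1/\Ny)$ when $\phi=\gamma$). Hence this piece is $O_p(1/\sqrt{\delta_{\Ny,\Talpha}\Talpha})$, which is $O_p(1/\delta_{\Ny,\Talpha})$ because $\delta_{\Ny,\Talpha}\le\Talpha$; so the error piece meets the required rate.

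The real work is the leading piece $H^{(\gamma)}\frac{1}{(\Nx+\Ny)^2}\sum_{i,j}W_{ti}^Z\Lam_j^{(\gamma)}\phi_{ij}\ee_{ti}$, where a naive Cauchy--Schwarz would only give $O_p(1/\sqrt{\Talpha})$, too weak when $\delta_{\Ny,\Talpha}=\Talpha$. For this I would substitute the explicit definitions (e.g.\ $\xi_{ij}=\frac{1}{|Q^Z_{ij}|}\sum_{s\in Q^Z_{ij}}\Lam_j^{(\gamma)\top}F_s\ee_{si}$) and compute the second moment of the resulting sum directly: summing $\Lam_j^{(\gamma)}\Lam_j^{(\gamma)\top}$ over $j$ concentrates to $\Sigma_{\Lam}^{(\gamma)}$, while the remaining averages over $i$ and over the time indices pick up the weak cross-sectional and serial dependence controlled by Assumption \ref{assump: additional assumptions}; the zero-mean and summable-covariance structure there supplies the extra averaging factor that upgrades $1/\sqrt{\Talpha}$ to the claimed $1/\delta_{\Ny,\Talpha}$. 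The deterministic term $\phi=\gamma(i,j)$ is controlled directly by the covariance-summability bounds of Assumption \ref{assump: factor model}.3(c) exactly as in part 4 of Lemma \ref{lemma: order of the four terms}, contributing the $1/\Ny$ portion of the rate. I expect the genuinely delicate cases to be $\phi=\eta$ and $\phi=\zeta$, where the products $F_s\ee_{sj}\ee_{ti}$ and $(\ee_{si}\ee_{sj}-\E[\ee_{si}\ee_{sj}])\ee_{ti}$ force me to invoke fourth-moment control and the weak-dependence conditions of Assumption \ref{assump: additional assumptions} to show the cross terms average out fast enough; this is where the bulk of the bookkeeping lies.

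For the remaining parts I would proceed by reduction. Part 2 follows from Part 1 by writing $\tilde{\Lam}_i^{(\gamma)}-H^{(\gamma)}\Lam_i^{(\gamma)}=(\tilde{\Lam}_i^{(\gamma)}-H_i^{(\gamma)}\Lam_i^{(\gamma)})+(H_i^{(\gamma)}-H^{(\gamma)})\Lam_i^{(\gamma)}$, so the sum equals the Part-1 sum plus $\frac{1}{\Nx+\Ny}\sum_i W_{ti}^Z(H_i^{(\gamma)}-H^{(\gamma)})\Lam_i^{(\gamma)}\ee_{ti}$; using the representation $(H_i^{(\gamma)}-H^{(\gamma)})\Lam_i^{(\gamma)}=(\tilde{D}^{(\gamma)})^{-1}\frac{1}{\Nx+\Ny}\sum_j\tilde{\Lam}_j^{(\gamma)}\Lam_j^{(\gamma)\top}\Delta_{F,ij}\Lam_i^{(\gamma)}$ from the proof of Theorem \ref{thm: consistency for loadings} together with $\E\|\Delta_{F,ij}\|^2\le C/\Talpha$, this extra term is again $O_p(1/\delta_{\Ny,\Talpha})$. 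For Parts 3 and 4 I would rerun the same substitute--split--second-moment scheme with the noise $\ee_{ti}$ replaced by the bounded-moment factor $\Lam_i^{(\gamma)\top}$ (retaining the weight $W_{ti}^Z$ in Part 4); since $\Lam_i^{(\gamma)}$ is independent of the idiosyncratic errors entering $\phi_{ij}$, these cases are no harder than Part 1.
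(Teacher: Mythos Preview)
Your plan is correct and matches the paper's proof almost exactly: the same substitution of the $\eta/\xi/\zeta/\gamma$ expansion, the same split $\tilde\Lam_j^{(\gamma)}=H^{(\gamma)}\Lam_j^{(\gamma)}+(\tilde\Lam_j^{(\gamma)}-H^{(\gamma)}\Lam_j^{(\gamma)})$, Cauchy--Schwarz on the error piece, and direct second-moment computation on the leading piece using the moment bounds of Assumption~\ref{assump: additional assumptions} (specifically parts 2 and 4 for Parts~3/1, and Assumption~\ref{assump: factor model}.3(c),(d) for the $\gamma(i,j)$ and the mean part of the $\eta$ contribution).

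One point in your Part~2 sketch needs sharpening. Citing only $\E\|\Delta_{F,ij}\|^2\le C/\Talpha$ after inserting the representation of $(H_i^{(\gamma)}-H^{(\gamma)})\Lam_i^{(\gamma)}$ does \emph{not} by itself yield $O_p(1/\delta_{\Ny,\Talpha})$: a triangle-inequality/Cauchy--Schwarz argument using that bound alone gives only $O_p(1/\sqrt{\Talpha})$, which is too weak when $\delta_{\Ny,\Talpha}=\Ny<\Talpha$. You must also exploit the cross-sectional averaging over $i$ of the mean-zero terms $W_{ti}^Z\Lam_i^{(\gamma)}\ee_{ti}$ against $\Delta_{F,ij}$; the paper does this via Assumption~\ref{assump: additional assumptions}.5, which supplies $\E\big\|\frac{1}{\Nx+\Ny}\sum_i\Delta_{F,ij}W_{ti}^Z\Lam_i^{(\gamma)}\ee_{ti}\big\|^4\le C/(T^2\Ny^2)$ and hence the required $O_p(1/\sqrt{T\Ny})=O_p(1/\delta_{\Ny,\Talpha})$. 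This is the same ``extra averaging factor'' you correctly identified as essential for the leading pieces in Part~1, so just make sure you invoke it here as well.
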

\begin{proof}
\noindent  1. It holds that 
\begin{align*}
&\frac{1}{\Nx+\Ny}\sum_{i=1}^{\Nx+\Ny}W^Z_{ti}\Lp\tilde{\Lam}^{(\gamma)}_i- H^{(\gamma)}_i \Lam^{(\gamma)}_i\Rp\ee_{ti} \\
=& (\tilde{D}^{(\gamma)})^{-1}\Bigg(\frac{1}{(\Nx+\Ny)^2}\sum_{i,j=1}^{\Nx+\Ny} W^Z_{ti} \tilde{\Lam}_j^{(\gamma)} \eta_{ij}\ee_{ti} +  \frac{1}{(\Nx+\Ny)^2}\sum_{i,j=1}^{\Nx+\Ny}W^Z_{ti} \tilde{\Lam}_j^{(\gamma)} \xi_{ij}\ee_{ti} \\
& + \frac{1}{(\Nx+\Ny)^2}\sum_{i,j=1}^{\Nx+\Ny}W^Z_{ti} \tilde{\Lam}_j^{(\gamma)} \zeta_{ij}\ee_{ti} + \frac{1}{(\Nx+\Ny)^2}\sum_{i,j=1}^{\Nx+\Ny}W^Z_{ti} \tilde{\Lam}_j^{(\gamma)} \gamma(i,j)\ee_{ti} \Bigg).
\end{align*}
Each $\frac{1}{(\Nx+\Ny)^2}\sum_{i,j=1}^{\Nx+\Ny} W^Z_{ti} \tilde{\Lam}_j^{(\gamma)} \phi_{ij}\ee_{ti}$ with $\phi_{ij} = \eta_{ij}, \xi_{ij},\zeta_{ij}$ and $\gamma(i,j)$ can be decomposed as
\begin{align*}
&\frac{1}{(\Nx+\Ny)^2}\sum_{i,j=1}^{\Nx+\Ny}W^Z_{ti}\tilde{\Lam}_j^{(\gamma)}  \phi_{ij} \ee_{ti}  \\
= &\ \frac{1}{(\Nx+\Ny)^2}\sum_{i,j=1}^{\Nx+\Ny} W^Z_{ti} \Lp\tilde{\Lam}_j^{(\gamma)} - H^{(\gamma)}\Lam_j^{(\gamma)}\Rp \phi_{ij}\ee_{ti} + \frac{1}{(\Nx+\Ny)^2}H^{(\gamma)}\sum_{i,j=1}^{\Nx+\Ny}W^Z_{ti} \Lam_j^{(\gamma)} \phi_{ij}\ee_{ti},
\end{align*}
where the first part on the RHS can be bounded by
\begin{align*}
    &\left\|\frac{1}{(\Nx+\Ny)^2}\sum_{i,j=1}^{\Nx+\Ny} W^Z_{ti}\Lp\tilde{\Lam}_j^{(\gamma)} - H^{(\gamma)}\Lam_j^{(\gamma)}\Rp\phi_{ij} \ee_{ti}\right\|^2\\
    \leq &  \underbrace{ \frac{1}{\Nx+\Ny}\sum_{j=1}^{\Nx+\Ny}\Lv\tilde{\Lam}_j^{(\gamma)} - H^{(\gamma)}\Lam_j^{(\gamma)}\Rv^2}_{O_p(\frac{1}{\delta_{\Ny, \Talpha}})\ \text{by Theorem \ref{thm: consistency for loadings}}}\cdot   \underbrace{ \frac{1}{\Nx+\Ny}\sum_{i=1}^{\Nx+\Ny}(\ee_{ti})^2}_{O_p(1)}\cdot \underbrace{\frac{1}{(\Nx+\Ny)^2}\sum_{i,j=1}^{\Nx+\Ny}\phi_{ij}^2}_{O_p(\frac{1}{\delta_{\Ny, \Talpha}})\ \text{by Lemma \ref{lemma: order of the four terms}}} = O_p(\frac{1}{\delta_{\Ny, \Talpha}^2}).
\end{align*}
We analyze the second part $\frac{1}{(\Nx+\Ny)^2}H^{(\gamma)}\sum_{i,j=1}^{\Nx+\Ny}W^Z_{ti} \Lam_j^{(\gamma)} \phi_{ij}\ee_{ti}$ for $\phi_{ij} = \eta_{ij}, \xi_{ij},\zeta_{ij}$ and $\gamma(i,j)$ in the following.

For $\phi_{ij}=\eta_{ij}$, we have
\begin{align*}
    &\frac{1}{(\Nx+\Ny)^2} \sum_{i,j=1}^{\Nx+\Ny}W^Z_{ti}\Lam_j^{(\gamma)}\eta_{ij}\ee_{ti}\\
    = & \frac{1}{(\Nx+\Ny)^2}\sum_{i,j=1}^{\Nx+\Ny}W^Z_{ti}\Lam_j^{(\gamma)}\frac{1}{|Q^Z_{ij}|}\sum_{s\in Q^Z_{ij}}\Lam_i^{(\gamma)\top} F_s \E\Ls\ee_{sj}\ee_{ti}\Rs\\
    & + \frac{1}{(\Nx+\Ny)^2}\sum_{i,j=1}^{\Nx+\Ny}W^Z_{ti}\Lam_j^{(\gamma)}\frac{1}{|Q^Z_{ij}|}\sum_{s\in Q^Z_{ij}}\Lam_i^{(\gamma)\top} F_s \Lp\ee_{sj}\ee_{ti} - \E\Ls\ee_{sj}\ee_{ti}\Rs\Rp.
\end{align*}
Based on Assumptions \ref{assump: factor model}.1, \ref{assump: factor model}.2 and \ref{assump: factor model}.3(d), it holds that 
\begin{align*}
    &\E\Lv\frac{1}{(\Nx+\Ny)^2}\sum_{i,j=1}^{\Nx+\Ny}W^Z_{ti}\Lam_j^{(\gamma)}\frac{1}{|Q^Z_{ij}|}\sum_{s\in Q^Z_{ij}}\Lam_i^{(\gamma)\top} F_s \E\Ls\ee_{sj}\ee_{ti}\Rs\Rv  \\
     \leq & \frac{1}{(\Nx+\Ny)^2}\sum_{i,j=1}^{\Nx+\Ny} \E \Ls\|\Lam_j^{(\gamma)}\|\|\Lam_i^{(\gamma)}\|\Rs\cdot \frac{1}{|Q^Z_{ij}|}\sum_{s\in Q^Z_{ij}}\E\|F_s\|\cdot \left|\E\Ls \ee_{sj}\ee_{ti}\Rs\right| \\
     \leq & \frac{C\gamma}{(\Nx+\Ny)^2}\sum_{i,j=1}^{\Nx+\Ny} \frac{1}{T}\sum_{s=1}^T\left|\E\Ls\ee_{sj}\ee_{ti}\Rs\right|
     \leq \frac{C}{\Ny \Talpha}.
\end{align*}
Additionally, by Assumption \ref{assump: additional assumptions}.4, we have
\[
\frac{1}{(\Nx+\Ny)^2}\sum_{i,j=1}^{\Nx+\Ny}W^Z_{ti}\Lam_j^{(\gamma)}\frac{1}{|Q^Z_{ij}|}\sum_{s\in Q^Z_{ij}}\Lam_i^{(\gamma)\top} F_s \Lp\ee_{sj}\ee_{ti} - \E\Ls\ee_{sj}\ee_{ti}\Rs\Rp  = O_p\big(\frac{1}{\delta_{\Ny, \Talpha}}\big).
\]
As a result, $\frac{1}{(\Nx+\Ny)^2} \sum_{i,j=1}^{\Nx+\Ny}W^Z_{ti}\Lam_j^{(\gamma)}\eta_{ij}\ee_{ti} = O_p(\frac{1}{\delta_{\Ny, \Talpha}})$. Combining the first part and the fact $\|H^{(\gamma)}\|=O_p(1),$ we have $ \frac{1}{(\Nx+\Ny)^2}\sum_{i,j=1}^{\Nx+\Ny}W^Z_{ti}\tilde{\Lam}_j^{(\gamma)}  \eta_{ij} \ee_{ti}= O_p(\frac{1}{\delta_{\Ny, \Talpha}})$. By similar arguments, we can show that $\frac{1}{(\Nx+\Ny)^2}\sum_{i,j=1}^{\Nx+\Ny}W^Z_{ti}\tilde{\Lam}_j^{(\gamma)}  \xi_{ij} \ee_{ti}= O_p(\frac{1}{\delta_{\Ny, \Talpha}})$. 

\vspace{2mm}
For $\phi_{ij}=\zeta_{ij}$, we have
\begin{align*}
    &\Lv \frac{1}{(\Nx+\Ny)^2}\sum_{i,j=1}^{\Nx+\Ny}W^Z_{ti} \Lam_j^{(\gamma)}\zeta_{ij}\ee_{ti} \Rv\\
    =  & \ \Lv \frac{1}{(\Nx+\Ny)^2}\sum_{i,j=1}^{\Nx+\Ny}W^Z_{ti} \Lam_j^{(\gamma)}  \frac{1}{|Q^Z_{ij}|}\sum_{s\in Q^Z_{ij}}\Lp\ee_{si} \ee_{sj} - \E\Ls\ee_{si}\ee_{sj}\Rs\Rp \ee_{ti} \Rv = O_p\big(\frac{1}{\delta_{\Ny, \Talpha}}\big)
\end{align*}
following from Assumption \ref{assump: additional assumptions}.4. Therefore, $\frac{1}{(\Nx+\Ny)^2}\sum_{i,j=1}^{\Nx+\Ny}W^Z_{ti}\tilde{\Lam}_j^{(\gamma)} \zeta_{ij} \ee_{ti}= O_p(\frac{1}{\delta_{\Ny, \Talpha}}).$ 

Finally for $\phi_{ij}=\gamma(i,j)$, by Assumption \ref{assump: factor model}.3(c), it holds that 
\begin{align*}
    &\E\Lv\frac{1}{(\Nx+\Ny)^2}\sum_{i,j=1}^{\Nx+\Ny}W^Z_{ti}\Lam_j^{(\gamma)}  \gamma(i,j)\ee_{ti}\Rv \\= & \E \Lv \frac{1}{(\Nx+\Ny)^2}\sum_{i,j=1}^{\Nx+\Ny}W^Z_{ti}\Lam_j^{(\gamma)}  \frac{1}{|Q^Z_{ij}|}\sum_{s\in Q^Z_{ij}} \E\Ls\ee_{si}\ee_{sj}\Rs\ee_{ti} \Rv \\
    \leq &\frac{1}{(\Nx+\Ny)^2}\sum_{i,j=1}^{\Nx+\Ny}\E\Ls\Lv\Lam_j^{(\gamma)}\Rv \left|\ee_{ti}\right|\Rs\cdot \frac{1}{|Q^Z_{ij}|}\sum_{s\in Q^Z_{ij}} \left|\E\Ls\ee_{si}\ee_{sj}\Rs\right| \leq \frac{C}{\Ny}.
\end{align*}
So, $\frac{1}{(\Nx+\Ny)^2}\sum_{i,j=1}^{\Nx+\Ny}W^Z_{ti}\tilde{\Lam}_j^{(\gamma)} \gamma(i,j) \ee_{ti}= O_p(\frac{1}{\delta_{\Ny, \Talpha}}).$

Combining the four terms and the fact that $\|(\tilde{D}^{(\gamma)})^{-1}\|=O_p(1),$ we derive our result. \\

\noindent 2. We have the decomposition
\begin{align*}
\frac{1}{\Nx+\Ny}\sum_{i=1}^{\Nx+\Ny}W^Z_{ti}\Lp\tilde{\Lam}_i^{(\gamma)} - H^{(\gamma)} \Lam_i^{(\gamma)}\Rp \ee_{ti} =& \underbrace{ \frac{1}{\Nx+\Ny}\sum_{i=1}^{\Nx+\Ny}W^Z_{ti}(\tilde{\Lam}_i^{(\gamma)} - H_i^{(\gamma)} \Lam_i^{(\gamma)}) \ee_{ti} }_{O_p(\frac{1}{\delta_{\Ny, \Talpha}})\ \text{by Lemma \ref{lemma: small order terms}.1}} \\ &+  \frac{1}{\Nx+\Ny}\sum_{i=1}^{\Nx+\Ny}W^Z_{ti}(H_i^{(\gamma)} - H^{(\gamma)}) \Lam_i^{(\gamma)} \ee_{ti}.
\end{align*}
The leading term of the second part can be bounded
\begin{align*}
     &\Lv\frac{1}{(\Nx+\Ny)^2}(\tilde{D}^{(\gamma)})^{-1}\sum_{i,j=1}^{\Nx+\Ny}W^Z_{ti}H^{(\gamma)}{\Lam}_j^{(\gamma)}\Lam_j^{(\gamma)\top} \Delta_{F,ij}\Lam_i^{(\gamma)} \ee_{ti}\Rv^2 \\
    \leq &\Lv(\Tilde{D}^{(\gamma)})^{-1}\Rv^2\Lv H^{(\gamma)}\Rv^2 \cdot \frac{1}{\Nx+\Ny}\sum_{j=1}^{\Nx+\Ny}\Lv \Lam^{(\gamma)}_{j}\Rv^2 \cdot \\
    & \frac{1}{\Nx+\Ny}\sum_{j=1}^{\Nx+\Ny}\Lv \Lam^{(\gamma)}_{j}\Rv^2  \Lv\frac{1}{\Nx+\Ny}\sum_{i=1}^{\Nx+\Ny} \Delta_{F,ij} W^Z_{ti}\Lam_i^{(\gamma)} \ee_{ti}\Rv^2,
\end{align*}
where $\Delta_{F,ij} = \frac{1}{|Q^Z_{ij}|}\sum_{s\in Q^Z_{ij}}F_sF_s^\top - \frac{1}{T}\sum_{s=1}^T F_sF_s^\top$. According to Assumption \ref{assump: additional assumptions}.5, it holds that $\E\Lv\frac{1}{\Nx+\Ny}\sum_{i=1}^{\Nx+\Ny} \Delta_{F,ij} W^Z_{ti}\Lam_i^{(\gamma)} \ee_{ti}\Rv^4 \leq C/T^2\Ny^2$. As a result, we have
\begin{align*}
     &\E\Ls\frac{1}{\Nx+\Ny}\sum_{j=1}^{\Nx+\Ny}\Lv \Lam^{(\gamma)}_{j}\Rv^2  \Lv\frac{1}{\Nx+\Ny}\sum_{i=1}^{\Nx+\Ny} \Delta_{F,ij} W^Z_{ti}\Lam_i^{(\gamma)} \ee_{ti}\Rv^2 \Rs\\
     \leq & \frac{1}{\Nx+\Ny}\sum_{j=1}^{\Nx+\Ny}\Lp \E\Lv \Lam^{(\gamma)}_{j}\Rv^4 \cdot \E\Lv\frac{1}{\Nx+\Ny}\sum_{i=1}^{\Nx+\Ny} \Delta_{F,ij} W^Z_{ti}\Lam_i^{(\gamma)} \ee_{ti}\Rv^4\Rp^{1/2} \leq  \frac{1}{T\Ny},
\end{align*}
which implies that 
\[
\frac{1}{\Nx+\Ny}\sum_{i=1}^{\Nx+\Ny}W^Z_{ti}(H_i^{(\gamma)} - H^{(\gamma)}) \Lam_i^{(\gamma)} \ee_{ti} = O_p\big(\frac{1}{\delta_{\Ny,T}}\big).
\]
Thus, $\frac{1}{\Nx+\Ny }\sum_{i=1}^{\Nx+\Ny }W^Z_{ti}\Lp\tilde{\Lam}_i^{(\gamma)} - H^{(\gamma)} \Lam_i^{(\gamma)}\Rp \ee_{ti} = O_p(\frac{1}{\delta_{\Ny, \Talpha}})$ as claimed. \\

\noindent 3. $\frac{1}{\Nx+\Ny }\sum_{i=1}^{\Nx+\Ny }\Lp\tilde{\Lam}_i^{(\gamma)}- H_i^{(\gamma)} \Lam_i^{(\gamma)}\Rp\Lam_i^{(\gamma)\top}$ can be decomposed as
\begin{align*}
&\frac{1}{\Nx+\Ny }\sum_{i=1}^{\Nx+\Ny }\Lp\tilde{\Lam}_i^{(\gamma)}- H_i^{(\gamma)} \Lam_i^{(\gamma)}\Rp\Lam_i^{(\gamma)\top }\\
=& \ (\tilde{D}^{(\gamma)})^{-1}\Bigg(\frac{1}{(\Nx+\Ny)^2}\sum_{i,j=1}^{\Nx+\Ny}\tilde{\Lam}_j^{(\gamma)} \Lam_i^{(\gamma)\top} \eta_{ij} +  \frac{1}{(\Nx+\Ny)^2}\sum_{i,j=1}^{\Nx+\Ny}\tilde{\Lam}_j^{(\gamma)} \Lam_i^{(\gamma)\top} \xi_{ij} \\
& + \frac{1}{(\Nx+\Ny)^2}\sum_{i,j=1}^{\Nx+\Ny}\tilde{\Lam}_j^{(\gamma)} \Lam_i^{(\gamma)\top} \zeta_{ij} + \frac{1}{(\Nx+\Ny)^2}\sum_{i,j=1}^{\Nx+\Ny}\tilde{\Lam}_j^{(\gamma)} \Lam_i^{(\gamma)\top} \gamma(i,j) \Bigg).
\end{align*}
For each $\frac{1}{(\Nx+\Ny)^2}\sum_{i,j=1}^{\Nx+\Ny}\tilde{\Lam}_j^{(\gamma)} \Lam_i^{(\gamma)\top} \phi_{ij}$ with $\phi_{ij} = \eta_{ij}, \xi_{ij},\zeta_{ij}$ and $\gamma(i,j)$, we have
\begin{align*}
&\frac{1}{(\Nx+\Ny)^2}\sum_{i,j=1}^{\Nx+\Ny}\tilde{\Lam}_j^{(\gamma)} \Lam_i^{(\gamma)\top} \phi_{ij}\\ =& \frac{1}{(\Nx+\Ny)^2}\sum_{i,j=1}^{\Nx+\Ny} \Lp\tilde{\Lam}_j^{(\gamma)} - H^{(\gamma)}\Lam_j^{(\gamma)}\Rp\Lam_i^{(\gamma)\top} \phi_{ij} + \frac{1}{(\Nx+\Ny)^2}H^{(\gamma)}\sum_{i,j=1}^{\Nx+\Ny} \Lam_j^{(\gamma)}\Lam_i^{(\gamma)\top} \phi_{ij}.    
\end{align*}
The first term on the RHS can be bounded by
\begin{align*}
    &\left\|\frac{1}{(\Nx+\Ny)^2}\sum_{i,j=1}^{\Nx+\Ny} (\tilde{\Lam}_j^{(\gamma)} - H^{(\gamma)}\Lam_j^{(\gamma)})\Lam_i^{(\gamma)\top} \phi_{ij}\right\|^2\\
    \leq &  \underbrace{ \frac{1}{\Nx+\Ny}\sum_{j=1}^{\Nx+\Ny}\Lv\tilde{\Lam}_j^{(\gamma)} - H^{(\gamma)}\Lam_j^{(\gamma)}\Rv^2}_{O_p(\frac{1}{\delta_{\Ny, \Talpha}})\ \text{by Theorem \ref{thm: consistency for loadings}}}\cdot \underbrace{ \frac{1}{\Nx+\Ny }\sum_{i=1}^{\Nx+\Ny }\Lv\Lam_i^{(\gamma)}\Rv^2}_{O_p(1)}\cdot \frac{1}{(\Nx+\Ny )^2}\sum_{i,j=1}^{\Nx+\Ny }\phi_{ij}^2     = O_p\big(\frac{1}{\delta_{\Ny, \Talpha}^2}\big)
\end{align*}
following from Lemma \ref{lemma: order of the four terms}. We analyze the second term $\frac{1}{(\Nx+\Ny)^2}\sum_{i,j=1}^{\Nx+\Ny} \Lam_j^{(\gamma)}\Lam_i^{(\gamma)\top} \phi_{ij}$ in the following.

For $\phi_{ij}=\eta_{ij}$, by Assumption \ref{assump: additional assumptions}.2,
\begin{align*}
    \Lv \frac{1}{(\Nx+\Ny)^2}\sum_{i,j=1}^{\Nx+\Ny }\Lam_j^{(\gamma)} \Lam_i^{(\gamma)\top} \eta_{ij} \Rv &= \Lv\frac{1}{(\Nx+\Ny )^2}\sum_{i,j=1}^{\Nx+\Ny }\Lam_i^{(\gamma)}\Lam_i^{(\gamma)\top }\frac{1}{|Q^Z_{ij}|}\sum_{t\in Q^Z_{ij}}F_t \Lam_j^{(\gamma)\top} \ee_{tj}\Rv \\
    &= O_p\big(\frac{1}{\sqrt{\Ny \Talpha}}\big). 
\end{align*}
Therefore, $\frac{1}{(\Nx+\Ny)^2}\sum_{i,j=1}^{\Nx+\Ny}\tilde{\Lam}_j^{(\gamma)} \Lam_i^{(\gamma)\top} \eta_{ij} = O_p(\frac{1}{\delta_{\Ny, \Talpha}}).$ By the same arguments, we can show that $\frac{1}{(\Nx+\Ny)^2}\sum_{i,j=1}^{\Nx+\Ny}\tilde{\Lam}_j^{(\gamma)} \Lam_i^{(\gamma)\top} \xi_{ij} = O_p(\frac{1}{\delta_{\Ny, \Talpha}}).$

For $\phi_{ij}=\zeta_{ij}$, by Assumption \ref{assump: additional assumptions}.4 it holds that
\begin{align*}
    \Lv \frac{1}{(\Nx+\Ny)^2}\sum_{i,j=1}^{\Nx+\Ny}\Lam_j^{(\gamma)} \Lam_i^{(\gamma)\top} \zeta_{ij} \Rv =& \Lv \frac{1}{(\Nx+\Ny)^2}\sum_{i,j=1}^{\Nx+\Ny} \Lam_j^{(\gamma)} \Lam_i^{(\gamma)\top} \frac{1}{|Q^Z_{ij}|}\sum_{t\in Q^Z_{ij}} \Lp\ee_{ti} \ee_{tj} - \E\Ls \ee_{ti} \ee_{tj}\Rs\Rp  \Rv \\
    =& O_p\big(\frac{1}{\delta_{\Ny, \Talpha}}\big).
\end{align*}
Thus, $\frac{1}{(\Nx+\Ny)^2}\sum_{i,j}^{\Nx+\Ny}\tilde{\Lam}_j^{(\gamma)} \Lam_i^{(\gamma)\top} \zeta_{ij} = O_p(\frac{1}{\delta_{\Ny, \Talpha}}).$ 

Finally, for $\phi_{ij}=\gamma(i,j)$, we have
\begin{align*}
    \E\Lv\frac{1}{(\Nx+\Ny)^2}\sum_{i,j=1}^{\Nx+\Ny}\Lam_j^{(\gamma)} \Lam_i^{(\gamma)\top} \gamma(i,j)\Rv
    &=  \E\Lv \frac{1}{(\Nx+\Ny)^2}\sum_{i,j=1}^{\Nx+\Ny}\Lam_j^{(\gamma)} \Lam_i^{(\gamma)\top} \frac{1}{|Q^Z_{ij}|}\sum_{t\in Q^Z_{ij}} \E\Ls\ee_{ti}\ee_{tj}\Rs \Rv \\
     & \leq\frac{1}{(\Nx+\Ny)^2}\sum_{i,j=1}^{\Nx+\Ny}\E\Ls\|\Lam_j^{(\gamma)}\| \|\Lam_i^{(\gamma)}\|\Rs \cdot \frac{1}{|Q^Z_{ij}|}\sum_{t\in Q^Z_{ij}} \left|\E\Ls\ee_{ti}\ee_{tj}\Rs\right| \\
     & \leq \frac{C}{\Ny},
\end{align*}
where the last equality follows from Assumption \ref{assump: factor model}.3(c).

Combining the four terms and the fact that $\|(\tilde{D}^{(\gamma)})^{-1}\|=O_p(1),$ we complete our proof. \\

\noindent 4. We omit the proof of $\frac{1}{\Nx+\Ny} \sum_{i=1}^{\Nx+\Ny}W_{ti}^Z  \Lp\tilde{\Lam}_i^{(\gamma)} - H_i^{(\gamma)} \Lam_i^{(\gamma)}\Rp \Lam_i^{(\gamma)\top} = O_p(\frac{1}{{\delta_{\Ny, \Talpha}}})$, which is similar to the proof of Lemma \ref{lemma: small order terms}.3.
\end{proof}

\vspace{5mm}
\begin{proof}
Proof of Theorem \ref{thm: asymptotic distribution}.2:

We derive the estimated factors $\tilde{F}_t$ by regressing the observed $Z_{ti}^{(\gamma)}$ on $\tilde{\Lam}_i^{(\gamma)}$, i.e.
\[
\tilde{F}_t = \Lp \sum_{i=1}^{\Nx+\Ny} W_{ti}^Z \tilde{\Lam}_i^{(\gamma)} \tilde{\Lam}_i^{(\gamma)\top} \Rp^{-1} \Lp \sum_{i=1}^{\Nx+\Ny} W_{ti}^Z  Z_{ti}^{(\gamma)} \tilde{\Lam}_i^{(\gamma)} \Rp, \quad t = 1, \cdots, T.
\] 
Similar with the proof of Theorem \ref{thm: consistency for loadings}.1, we resort to the auxiliary $\tilde{F}_t^*$, which is defined as
\[
\tilde{F}_t^* = \Lp \sum_{i=1}^{\Nx+\Ny} W_{ti}^Z H^{(\gamma)}\Lam_i^{(\gamma)} \Lam_i^{(\gamma)\top} H^{(\gamma)\top} \Rp^{-1} \Lp \sum_{i=1}^{\Nx+\Ny} W_{ti}^Z  Z_{ti}^{(\gamma)} \tilde{\Lam}_i^{(\gamma)} \Rp.
\] \\

\noindent Step 1 -- In the first step, we analyze $\tilde{F}_t^*$. We have the decomposition
\begin{align*}
H^{(\gamma)\top}\tilde{F}_t^* = \ &  F_t +  (\tilde{\Sigma}_{\Lam,t}^{(\gamma)})^{-1} \Lp  \frac{1}{\Nx+\Ny} \sum_{i=1}^{\Nx+\Ny} W_{ti}^Z \Lam_i^{(\gamma)} \ee_{ti} \Rp \\
&+ (\tilde{\Sigma}_{\Lam,t}^{(\gamma)})^{-1}(H^{(\gamma)})^{-1}  \Lp \frac{1}{\Nx+\Ny}\sum_{i=1}^{\Nx+\Ny} W_{ti}^Z \Lp\tilde{\Lam}_i^{(\gamma)}-H^{(\gamma)}\Lam_i^{(\gamma)}\Rp\Lp\Lam_i^{(\gamma)\top} F_t + \ee_{ti}\Rp\Rp,
\end{align*} 
where $ \tilde{\Sigma}_{\Lam,t}^{(\gamma)} = \frac{1}{\Nx+\Ny} \sum_{i=1}^{\Nx+\Ny} W_{ti}^Z \Lam_i^{(\gamma)} \Lam_i^{(\gamma)\top}\overset{p}{\rightarrow}\Sigma_{\Lam,t}^{(\gamma)}$ is positive definite by Theorem \ref{thm: consistency for loadings}.

Consider the first part $\frac{1}{\Nx+\Ny} \sum_{i=1}^{\Nx+\Ny} W^Z_{ti} \Lam_i^{(\gamma)} \ee_{ti}$, we have
\[
\frac{1}{\Nx+\Ny} \sum_{i=1}^{\Nx+\Ny} W^Z_{ti} \Lam_{i}^{(\gamma)} \ee_{ti} = \frac{\sqrt{\Nx}}{\Nx+\Ny} \frac{1}{\sqrt{\Nx}}\sum_{i=1}^{\Nx}(\LamX)_{i}(\eX)_{ti}  + \frac{\gamma\sqrt{\Ny}}{\Nx+\Ny} \frac{1}{\sqrt{\Ny}}\sum_{i=1}^{\Ny}W^Y_{ti}(\LamY)_{i}(\eY)_{ti}.
\]
If all the factors in $F_y$ are strong factors in $Y$, then each entry of $\frac{1}{\Nx+\Ny} \sum_{i=1}^{\Nx+\Ny} W^Z_{ti} \Lam_{i}^{(\gamma)} \ee_{ti}$ will converge at the same rate of $\sqrt{\Ny}$, which is determined by the second term. By Assumption \ref{assump: additional assumptions}.7,
\[
\frac{\sqrt{\Ny}}{\Nx+\Ny} \sum_{i=1}^{\Nx+\Ny} W^Z_{ti} \Lam_{i}^{(\gamma)} \ee_{ti}  \overset{d}{\rightarrow}\mathcal{N}\Lp 0,\Gamma_{F,t}^{(\gamma),\text{obs}}\Rp,
\]
where $\Gamma_{F,t}^{(\gamma),\text{obs}}$ is a positive definite matrix. We let $\omega_{F,1} \coloneqq  (\tilde{\Sigma}_{\Lam,t}^{(\gamma)})^{-1} (  \frac{1}{\Nx+\Ny} \sum_{i=1}^{\Nx+\Ny} W_{ti}^Z \Lam_i^{(\gamma)} \ee_{ti} )$. Then 
$\sqrt{\Ny}\cdot \omega_{F,1} \overset{d}{\rightarrow}\ \Ncal( 0, ({\Sigma}_{\Lam,t}^{(\gamma)})^{-1}\Gamma_{F,t}^{(\gamma),\text{obs}}({\Sigma}_{\Lam,t}^{(\gamma)})^{-1} )$. If some factor $F_{t,w}$ is weak in $Y$ whose loading $\sum_{i=1}^{\Ny}(\Lam_y)_{i,w}^2$ grows at the rate $g(\Ny)$ which is sub-linear or constant in $\Ny$, then for $F_{t,w}$ there is 
\begin{align*}
    \frac{1}{\Nx+\Ny} \sum_{i=1}^{\Nx+\Ny} W^Z_{ti} \Lam_{i,w}^{(\gamma)} \ee_{ti} = & \frac{\sqrt{\Nx}}{\Nx+\Ny} \frac{1}{\sqrt{\Nx}}\sum_{i=1}^{\Nx}(\LamX)_{i,w}(\eX)_{ti}  \\ &+ \frac{\gamma\sqrt{g(\Ny)}}{\Nx+\Ny} \frac{1}{\sqrt{g(\Ny)}}\sum_{i=1}^{\Ny}W^Y_{ti}(\LamY)_{i,w}(\eY)_{ti}.
\end{align*}
If $g(\Ny)\Nx/\Ny^2 \rightarrow \infty$, the convergence rate of $\frac{1}{\Nx+\Ny} \sum_{i=1}^{\Nx+\Ny} W^Z_{ti} \Lam_{i,w}^{(\gamma)} \ee_{ti}$ is $O(\Ny/\sqrt{g(\Ny)})$, which is determined by the second term; otherwise, the convergence rate is $O(\sqrt{\Nx})$. Combining these two cases, Assumption \ref{assump: additional assumptions}.7 assumes that
\[
\frac{\sqrt{N_w}}{\Nx+\Ny} \sum_{i=1}^{\Nx+\Ny} W^Z_{ti} \Lam_{i,w}^{(\gamma)} \ee_{ti}  \overset{d}{\rightarrow}\mathcal{N}\Lp 0,\Gamma_{F_w,t}^{(\gamma),\text{obs}}\Rp,
\]
where $N_w= \min\Lp \Ny^2/g(\Ny),\Nx\Rp$ and $\Gamma_{F_w,t}^{(\gamma),\text{obs}}$ is positive definite. If we can assume that the loadings of $F_{t,w}$ are orthogonal to the loadings of other factors, then we have
\begin{align*}
    &\sqrt{N_w}\Lp\omega_{F,1}\Rp_{w} \overset{d}{\rightarrow}\ \Ncal\Lp 0, ({\Sigma}_{\Lam,t,w}^{(\gamma)})^{-1}\Gamma_{F_w,t}^{(\gamma),\text{obs}}({\Sigma}_{\Lam,t,w}^{(\gamma)})^{-1} \Rp,
\end{align*}
where $\Sigma^{(\gamma)}_{\Lam,t,w}$ is the diagonal entry of $\Sigma_{\Lam,t}^{(\gamma)}$ corresponding to the weak factor.

Now, we consider the second part $\frac{1}{\Nx+\Ny}\sum_{i=1}^{\Nx+\Ny} W_{ti}^Z (\tilde{\Lam}_i^{(\gamma)}-H^{(\gamma)}\Lam_i^{(\gamma)})(\Lam_i^{(\gamma)\top} F_t + \ee_{ti})$. Observe that
\begin{align*}
    &\frac{1}{\Nx+\Ny}\sum_{i=1}^{\Nx+\Ny} W_{ti}^Z \Lp\tilde{\Lam}_i^{(\gamma)}-H^{(\gamma)}\Lam_i^{(\gamma)}\Rp \Lam_i^{(\gamma)\top} F_t\\ = &  \underbrace{\frac{1}{\Nx+\Ny}\sum_{i=1}^{\Nx+\Ny} W_{ti}^Z \Lp\tilde{\Lam}_i^{(\gamma)}-H^{(\gamma)}_i\Lam_i^{(\gamma)}\Rp \Lam_i^{(\gamma)\top} F_t}_{ \Delta_{t,1}} + \frac{1}{\Nx+\Ny }\sum_{i=1}^{\Nx+\Ny} W_{ti}^Z (H^{(\gamma)}_i-H^{(\gamma)})\Lam_i^{(\gamma)} \Lam_i^{(\gamma)\top} F_t.
\end{align*}
Let $\Delta_{F,ij}=\frac{1}{|Q^Z_{ij}|}\sum_{s\in Q^Z_{ij}}F_sF_s^\top - \frac{1}{T}\sum_{s=1}^TF_sF_s^\top$. The second part $\frac{1}{\Nx+\Ny}\sum_{i=1}^{\Nx+\Ny} W_{ti}^Z (H^{(\gamma)}_i-H^{(\gamma)})\Lam_i^{(\gamma)} \Lam_i^{(\gamma)\top} F_t$ can be further decomposed as
\begin{align*}
&\frac{1}{\Nx+\Ny}\sum_{i=1}^{\Nx+\Ny} W_{ti}^Z (H^{(\gamma)}_i-H^{(\gamma)}) \Lam_i^{(\gamma)} \Lam_i^{(\gamma)\top} F_t \\
 = \ & \underbrace{(\tilde{D}^{(\gamma)})^{-1} \frac{1}{(\Nx+\Ny)^2} \sum_{i,j=1}^{\Nx+\Ny} \Lp\tilde{\Lam}_j^{(\gamma)} - H^{(\gamma)}\Lam_j^{(\gamma)}\Rp \Lam_j^{(\gamma)\top}  \Delta_{F,ij}  W^Z_{ti} \Lam_i^{(\gamma)} \Lam_i^{(\gamma)\top} F_t}_{\Delta_{t,2}} \\
  & + \underbrace{ (\tilde{D}^{(\gamma)})^{-1}H^{(\gamma)} \frac{1}{(\Nx+\Ny)^2} \sum_{i,j=1}^{\Nx+\Ny} \Lam_j^{(\gamma)}\Lam_j^{(\gamma)\top}  \Delta_{F,ij} W^Z_{ti} \Lam_i^{(\gamma)} \Lam_i^{(\gamma)\top} F_t}_{\Delta_{t,3}}.
\end{align*}
For $\Delta_{t,3}$, we have
\begin{align*}
    \Delta_{t,3}=(\tilde{D}^{(\gamma)})^{-1}H^{(\gamma)} \frac{1}{(\Nx+\Ny)^2} \sum_{i,j=1}^{\Nx+\Ny} \Lam_j^{(\gamma)}\Lam_j^{(\gamma)\top}  \Delta_{F,ij} W^Z_{ti} \Lam_i^{(\gamma)} \Lam_i^{(\gamma)\top} F_t = (\tilde{D}^{(\gamma)})^{-1} H^{(\gamma)} \mathbf{X}_t^{(\gamma)} F_t,
\end{align*}
where $\mathbf{X}_t^{(\gamma)} = \frac{1}{(\Nx+\Ny)^2} \sum_{i,j=1}^{\Nx+\Ny} \Lam_j^{(\gamma)}\Lam_j^{(\gamma)\top}  \Delta_{F,ij}   W^Z_{ti} \Lam_i^{(\gamma)} \Lam_i^{(\gamma)\top}$ is asymptotically normal with convergence rate $\sqrt{\Talpha}$ from Assumption \ref{assump: additional assumptions}.8. We denote $\omega_{F,2} :=  (\tilde{\Sigma}_{\Lam,t}^{(\gamma)})^{-1}(H^{(\gamma)})^{-1} (\tilde{D}^{(\gamma)})^{-1}H^{(\gamma)} \mathbf{X}_t^{(\gamma)} F_t.$
For $\Delta_{t,2}$, since $\E\Lv \Delta_{F,ij}\Rv^2 \leq  \frac{C}{\Talpha}$ and $\frac{1}{\Nx+\Ny} \sum_{j=1}^{\Nx+\Ny} \Lv\tilde{\Lam}_j^{(\gamma)} - H^{(\gamma)}\Lam_j^{(\gamma)} \Rv^2=O_p(\frac{1}{\delta_{\Ny,T}})$ by Theorem \ref{thm: consistency for loadings}, 
\begin{align*}
 &\Lv (\tilde{D}^{(\gamma)})^{-1} \frac{1}{(\Nx+\Ny)^2} \sum_{i,j=1}^{\Nx+\Ny} \Lp\tilde{\Lam}_j^{(\gamma)} - H^{(\gamma)}\Lam_j^{(\gamma)}\Rp \Lam_j^{(\gamma)\top}  \Delta_{F,ij}  W^Z_{ti} \Lam_i^{(\gamma)} \Lam_i^{(\gamma)\top} F_t\Rv^2 \\
 \leq& \ O_p \big (\frac{1}{\delta_{\Ny, \Talpha}}\big) 
 \cdot \Lp \frac{1}{\Nx+\Ny }\sum_{j=1}^{\Nx+\Ny } \Lv\Lam_j^{(\gamma)}\Rv^2 \Lv \frac{1}{\Nx+\Ny } \sum_{i=1}^{\Nx+\Ny }  \Delta_{F,ij}  W^Z_{ti} \Lam_i^{(\gamma)} \Lam_i^{(\gamma)\top} \Rv^2 \Rp \\
 \leq & \  O_p \big (\frac{1}{\delta_{\Ny, \Talpha}}\big) \cdot \Lp \frac{1}{\Nx+\Ny} \sum_{j=1}^{\Nx+\Ny} \Lv\Lam_j^{(\gamma)}\Rv^2 \cdot \bigg( \frac{1}{\Nx+\Ny} \sum_{ i=1}^{\Nx+\Ny} \Lv\Lam_i^{(\gamma)}\Rv^2 \Lv\Delta_{F,ij}\Rv \bigg)^2 \Rp,
\end{align*}
where 
\begin{align*}
&\E \Ls \frac{1}{\Nx+\Ny} \sum_{j=1}^{\Nx+\Ny} \Lv\Lam_j^{(\gamma)}\Rv^2 \cdot \bigg( \frac{1}{\Nx+\Ny} \sum_{ i=1}^{\Nx+\Ny} \Lv\Lam_i^{(\gamma)}\Rv^2 \|\Delta_{F,ij}\| \bigg)^2 \Rs \\
= & \ \frac{1}{(\Nx+\Ny)^3} \sum_{i,j,l=1}^{\Nx+\Ny} \E \Ls\|\Lam_i^{(\gamma)}\|^2 \|\Lam_j^{(\gamma)}\|^2\|\Lam_l^{(\gamma)}\|^2  \Rs \cdot \E \Ls \|\Delta_{F,ij}\|  \|\Delta_{F, lj}\| \Rs \\
 \leq & \ \frac{1}{(\Nx+\Ny)^3} \sum_{i,j,l=1}^{\Nx+\Ny} \E \Ls\|\Lam_i^{(\gamma)}\|^2 \|\Lam_j^{(\gamma)}\|^2\|\Lam_l^{(\gamma)}\|^2  \Rs \cdot \left( \E  \|\Delta_{F,ij}\|^2 \cdot \E  \|\Delta_{F, lj}\|^2 \right)^{1/2}  \leq   \frac{C}{\Talpha} .
\end{align*}
As a result, $\Delta_{t,2} = O_p(\frac{1}{\delta_{\Ny,T}})$. By Lemma \ref{lemma: small order terms}, $\Delta_{t,1}=\frac{1}{\Nx+\Ny}\sum_{i=1}^{\Nx+\Ny} W_{ti}^Z (\tilde{\Lam}_i^{(\gamma)}-H^{(\gamma)}_i\Lam_i^{(\gamma)}) \Lam_i^{(\gamma)\top} F_t = O_p(\frac{1}{\delta_{\Ny,T}})$ and $\frac{1}{\Nx+\Ny}\sum_{i=1}^{\Nx+\Ny} W^Z_{ti} (\tilde{\Lam}_i^{(\gamma)} - H^{(\gamma)}\Lam_i^{(\gamma)})\ee_{ti} = O_p(\frac{1}{\delta_{\Ny, \Talpha}})$. For $\sqrt{T}/\Ny\rightarrow 0,$ they are all small order terms compared to $\Delta_{t,3}$. Therefore, we have $H^{(\gamma)\top} \tilde{F}^*_t = F_t+\omega_{F,1}+\omega_{F,2}+O_p(\frac{1}{\delta_{\Ny,T}})$.\\

Step 2 -- Next, we analyze the difference between $\tilde{F}_t$ and $\tilde{F}^*_t$. We have the decomposition
\begin{align*}
\tilde{F}_t^* - \tilde{F}_t   = &  \Lp \frac{1}{\Nx+\Ny} \sum_{i=1}^{\Nx+\Ny} W_{ti}^Z \tilde{\Lam}_i^{(\gamma)} \tilde{\Lam}_i^{(\gamma)\top} \Rp^{-1}\cdot\\
&\Ls  \frac{1}{\Nx+\Ny} \sum_{i=1}^{\Nx+\Ny} W_{ti}^Z \tilde{\Lam}_i^{(\gamma)} \tilde{\Lam}_i^{(\gamma)\top}  - \frac{1}{\Nx+\Ny}  \sum_{i=1}^{\Nx+\Ny} W_{ti}^Z H^{(\gamma)}\Lam_i^{(\gamma)} \Lam_i^{(\gamma)\top} H^{(\gamma)\top}  \Rs
 \tilde{F}^*_t.
\end{align*}
We let $\Delta_{\Lam,t} = \frac{1}{\Nx+\Ny} \sum_{i=1}^{\Nx+\Ny} W_{ti}^Z \tilde{\Lam}_i^{(\gamma)} \tilde{\Lam}_i^{(\gamma)\top} - \frac{1}{\Nx+\Ny}  \sum_{i=1}^{\Nx+\Ny} W_{ti}^Z H^{(\gamma)}\Lam_i^{(\gamma)} \Lam_i^{(\gamma)\top} H^{(\gamma)\top}$.
According to the proof of Theorem \ref{thm: consistency for loadings}, we have
\begin{align*}
\Delta_{\Lam,t} 
= & \frac{1}{\Nx+\Ny} \sum_{i=1}^{\Nx+\Ny}\left[H^{(\gamma)}  W_{ti}^Z \Lam_i^{(\gamma)} (\tilde{\Lam}_i^{(\gamma)} - H^{(\gamma)}_i\Lam_i^{(\gamma)} )^\top+  W_{ti}^Z  (\tilde{\Lam}_i^{(\gamma)} - H^{(\gamma)}_i\Lam_i^{(\gamma)} )\Lam_i^{(\gamma)\top} H^{(\gamma)\top}  \right]
\\
 & + \frac{1}{\Nx+\Ny}  \sum_{i=1}^{\Nx+\Ny} \left[ W_{ti}^Z H^{(\gamma)} \Lam_i^{(\gamma)} (H_i^{(\gamma)} \Lam_i^{(\gamma)} - H^{(\gamma)} \Lam_i^{(\gamma)})^\top  + W_{ti}^Z  (H^{(\gamma)}_i\Lam_i^{(\gamma)} - H^{(\gamma)} \Lam_i^{(\gamma)})(H^{(\gamma)}\Lam_i^{(\gamma)})^\top \right]  \\
& + \frac{1}{\Nx+\Ny} \sum_{i=1}^{\Nx+\Ny} W_{ti}^Z  ( \tilde{\Lam}_i^{(\gamma)} - H^{(\gamma)}\Lam_i^{(\gamma)})(\tilde{\Lam}_i^{(\gamma)} - H^{(\gamma)}\Lam_i^{(\gamma)})^\top.
\end{align*}
By Lemma \ref{lemma: small order terms} and Theorem \ref{thm: consistency for loadings}, the first and third parts on the RHS are at the order $O_p(\frac{1}{\delta_{\Ny,T}})$. Thus,
$
\Delta_{\Lam,t} 
=  \frac{1}{\Nx+\Ny}  \sum_{i=1}^{\Nx+\Ny} \left[ W_{ti}^Z H^{(\gamma)} \Lam_i^{(\gamma)} \Lam_i^{(\gamma)\top}(H_i^{(\gamma)}  - H^{(\gamma)} )^\top  + W_{ti}^Z  (H^{(\gamma)}_i - H^{(\gamma)} )(H^{(\gamma)}\Lam_i^{(\gamma)}\Lam_i^{(\gamma)\top})^\top \right]  + O_p\big(\frac{1}{\delta_{\Ny,T}}\big).
$
As analyzed in the first step, the leading term of $H^{(\gamma)}_i - H^{(\gamma)}$ is 
\[
\frac{1}{\Nx+\Ny} (\tilde{D}^{(\gamma)})^{-1} \sum_{j=1}^{\Nx+\Ny} H^{(\gamma)}\Lam_j^{(\gamma)} \Lam_j^{(\gamma)\top} \Delta_{F,ij},
\]
and thus, the leading term of $\frac{1}{\Nx+\Ny} \sum_{i=1}^{\Nx+\Ny} W_{ti}^Z  (H^{(\gamma)}_i- H^{(\gamma)}) \Lam_i^{(\gamma)} \Lam_i^{(\gamma)\top} H^{(\gamma)\top}$ is $(\tilde{D}^{(\gamma)})^{-1} H^{(\gamma)} \mathbf{X}_t^{(\gamma)} H^{(\gamma)\top}$. As a result, 
\[
\Delta_{\Lam,t} = (\tilde{D}^{(\gamma)})^{-1}H^{(\gamma)}\mathbf{X}_t^{(\gamma)}H^{(\gamma)\top} + H^{(\gamma)}\mathbf{X}_t^{(\gamma)}H^{(\gamma)\top} ((\tilde{D}^{(\gamma)})^{-1})^\top + O_p(\frac{1}{\delta_{\Ny,T}}).
\]
Note that $\mathbf{X}_t^{(\gamma)}$ is asymptotically normal with convergence rate $\sqrt{\Talpha}$, we have 
\[
\frac{1}{\Nx+\Ny} \sum_{i=1}^{\Nx+\Ny} W_{ti}^Z \tilde{\Lam}_i^{(\gamma)} \tilde{\Lam}_i^{(\gamma)\top} = \frac{1}{\Nx+\Ny} \sum_{i=1}^{\Nx+\Ny} W_{ti}^Z H^{(\gamma)} \Lam_i^{(\gamma)} \Lam_i^{(\gamma)\top} H^{(\gamma)\top} + O_p(\frac{1}{\sqrt{T}}).
\]
According to the first step, $H^{(\gamma)\top}\tilde{F}^*_t = F_t+O_p(\frac{1}{\sqrt{\delta_{\Ny,T}}})$. Combining this with $\Delta_{\Lam,t}$,
we derive
\begin{align*}
H^{(\gamma)\top}(\tilde{F}^*_t-\tilde{F}_t) = &  \underbrace{ (\tilde{\Sigma}_{\Lam,t}^{(\gamma)})^{-1}(H^{(\gamma)})^{-1} \Big[  (\tilde{D}^{(\gamma)})^{-1}H^{(\gamma)}\mathbf{X}_tH^{(\gamma)\top} + H^{(\gamma)}\mathbf{X}_tH^{(\gamma)\top} ((\tilde{D}^{(\gamma)})^{-1})^\top  \Big] (H^{(\gamma)\top})^{-1} F_t}_{\omega_{F,3}}\\
& + O_p(\frac{1}{\delta_{\Ny,T}}).
\end{align*} \\

Step 3 -- In the final step, we analyze the asymptotic distribution of $\tilde{F}_t$, which is determined by $\omega_{F,1}$, $\omega_{F,2}$ and $\omega_{F,3}$ from the first two steps, i.e.,
\[
H^{(\gamma)\top} \tilde{F}_t - F_t = \omega_{F,1} + \omega_{F,2}-\omega_{F,3} + O_p\big(\frac{1}{\delta_{\Ny,T}}\big).
\]
We have $\omega_{F,2}-\omega_{F,3} =- \tilde{\Sigma}_{\Lam,t}^{(\gamma)} \cdot \mathbf{X}_t^{(\gamma)} H^{(\gamma)\top} ((\tilde{D}^{(\gamma)})^{-1})^\top  (H^{(\gamma)\top})^{-1} F_t.$ According to Lemma \ref{lemma: HDH}, there is $(H^{(\gamma)})^{-1} (\tilde{D}^{(\gamma)})^{-1} H^{(\gamma)} \overset{p}{\rightarrow} \left( \frac{F^\top F}{T} \right)^{-1} \left( \frac{\Lam^{(\gamma)\top} \Lam^{(\gamma)}}{\Nx+\Ny} \right)^{-1} $. By Assumption \ref{assump: additional assumptions}.8 and Slutsky's Theorem, it holds that
\[
\sqrt{\Talpha} \Lp\omega_{F,2}-\omega_{F,3}\Rp \overset{d}{\rightarrow} \Ncal \left(0,  ({\Sigma}_{\Lam,t}^{(\gamma)})^{-1} \Gamma^{(\gamma),\text{miss}}_{F,t} ({\Sigma}_{\Lam,t}^{(\gamma)})^{-1}  \right) \quad \mathcal{G}^t - \text{stably},
\]
where $\Gamma^{(\gamma),\text{miss}}_{F,t} =  g_t^{(\gamma)}((\Sigma_{\Lam}^{(\gamma)})^{-1}\Sigma_F^{-1}F_t)$ with function $g_t^{(\gamma)}(\cdot)$ defined in Assumption \ref{assump: additional assumptions}.8. Additionally, $\omega_{F,1}$ and $\omega_{F,2}-\omega_{F,3}$ are asymptotically independent. If all the factors in $F_y$ are strong factors in $Y$, we can deduce that 
\begin{align*}
&\sqrt{\delta_{\Ny,\Talpha}} \Lp H^{(\gamma)\top} \tilde{F}_t - F_t \Rp \\  \overset{d}{\rightarrow}& \ \Ncal \left(0,  ({\Sigma}_{\Lam,t}^{(\gamma)})^{-1} \left[ \text{plim}\bigg(\frac{\delta_{\Ny,\Talpha}}{\Ny  }\Gamma^{(\gamma),\text{obs}}_{F,t}  + \frac{\delta_{\Ny,  \Talpha}}{\Talpha}\Gamma^{(\gamma),\text{miss}}_{F,t} \bigg)\right] ({\Sigma}_{\Lam,t}^{(\gamma)})^{-1}  \right) \quad \mathcal{G}^t - \text{stably}.
\end{align*}
If some factor $F_{t,w}$ is weak in $Y$ and its loadings are orthogonal to the loadings of the other factors, then 
\[
\sqrt{\Talpha} \Lp\omega_{F,2}-\omega_{F,3}\Rp_{w} \overset{d}{\rightarrow} \Ncal \left(0,  ({\Sigma}_{\Lam,t,w}^{(\gamma)})^{-1} \Gamma^{(\gamma),\text{miss}}_{F_w,t} ({\Sigma}_{\Lam,t,w}^{(\gamma)})^{-1}  \right) \quad \mathcal{G}^t - \text{stably},
\]
where $\Gamma^{(\gamma),\text{miss}}_{F_w,t}$ corresponds to the weak factor in $\Gamma^{(\gamma),\text{miss}}_{F,t}$,
and thus,
\begin{align*}
&\sqrt{\delta_{N_w,T}} \Lp (H^{(\gamma)\top} \tilde{F}_t)_{w} - F_{t,w} \Rp \\  \overset{d}{\rightarrow}& \ \Ncal \left(0,  ({\Sigma}_{\Lam,t,w}^{(\gamma)})^{-1} \left[ \text{plim}\bigg(\frac{\delta_{N_w,T}}{N_w }\Gamma^{(\gamma),\text{obs}}_{F_w,t}  + \frac{\delta_{N_w,T}}{\Talpha}\Gamma^{(\gamma),\text{miss}}_{F_w,t} \bigg)\right] ({\Sigma}_{\Lam,t,w}^{(\gamma)})^{-1}  \right) \quad \mathcal{G}^t - \text{stably},
\end{align*}
where $\delta_{N_w,T} = \min(N_w,T)$ and $N_w=\min(\Ny^2/{g(\Ny)},\Nx)$.
\end{proof}

\vspace{5mm}
\subsubsection{Proof of Theorem \ref{thm: asymptotic distribution}.3}
\begin{proof}
For any $t=1,\cdots,T$ and $i=1,\cdots,\Ny  $, we have the decomposition
\begin{align*}
\tilde{C}_{ti} - C_{ti} &= \tilde{F}_t^\top (\tilde{\Lam}_y)_i - F^\top_t (\LamY)_i \\
&= \tilde{F}_t^\top \Lp(\tilde{\Lam}_y)_i - H^{(\gamma)} (\LamY)_i\Rp + \Lp\tilde{F}^\top_tH^{(\gamma)} - F_t^\top\Rp (\LamY)_i .
\end{align*}
From Theorem \ref{thm: asymptotic distribution}.1 and Theorem \ref{thm: asymptotic distribution}.2, it holds that
\begin{align*}
\sqrt{\delta_{\Ny,\Talpha}} (\tilde{C}_{ti} - C_{ti} ) = & \sqrt{\delta_{\Ny ,\Talpha}} F_t^\top (H^{(\gamma)})^{-1}  \Lp \omega_{\Lam,1}+\omega_{\Lam,2}\Rp \\
& +\sqrt{\delta_{\Ny ,\Talpha}}  (\LamY)_i^\top \Lp \omega_{F,1}+\omega_{F,2}-\omega_{F,3}\Rp + o_p(1).
\end{align*}
Plugging the expression of $\omega_{\Lam,1}, \omega_{\Lam,2}, \omega_{F,1}, \omega_{F,2}$ and $\omega_{F,3}$ into the RHS, we obtain
\begin{align*}
&\sqrt{\delta_{\Ny, \Talpha}} (\tilde{C}_{ti} - C_{ti} )\\
=\ &   F_t^\top (H^{(\gamma)})^{-1} (\tilde{D}^{(\gamma)})^{-1} H^{(\gamma)} \frac{\sqrt{\delta_{\Ny ,\Talpha}}}{\Nx+\Ny} \sum_{j=1}^{\Nx+\Ny}   \Lam_j^{(\gamma)}\Lam_j^{(\gamma)\top} \frac{1}{|Q^Z_{i'j}|} \sum_{t \in Q^Z_{i'j}} F_t (\eY)_{ti}  \\
&+F_t^\top (H^{(\gamma)})^{-1}   (\tilde{D}^{(\gamma)})^{-1} H^{(\gamma)} \sqrt{\delta_{\Ny, \Talpha}}X_{i+\Nx}(\LamY)_i
\\
& + (\LamY)_i^\top \sqrt{\delta_{\Ny, \Talpha}}  \bigg( \frac{1}{\Nx+\Ny } \sum_{i=1}^{\Nx+\Ny } W_{ti}^Z \Lam_i^{(\gamma)} \Lam_i^{(\gamma)\top} \bigg)^{-1} \bigg(  \frac{1}{\Nx+\Ny} \sum_{i=1}^{\Nx+\Ny} W_{ti}^Z \Lam_i^{(\gamma)} \ee_{ti}  \bigg) \\
& -(\LamY)_i^\top \sqrt{\delta_{\Ny, \Talpha}}\tilde{\Sigma}_{\Lam,t}^{(\gamma)} \mathbf{X}_t^{(\gamma)} H^{(\gamma)\top} ((\tilde{D}^{(\gamma)})^{-1})^\top  (H^{(\gamma)\top})^{-1} F_t + o_p(1) ,
\end{align*}
where $X_i, \mathbf{X}_t$ are defined as $X_i = \frac{1}{\Nx+\Ny } \sum_{j=1}^{\Nx+\Ny } \Lam_j^{(\gamma)}\Lam_j^{(\gamma)\top} \Big( \frac{1}{|Q^Z_{ij}|}\sum_{s \in Q^Z_{ij}} F_s F_s^\top - \frac{1}{T} \sum_{s=1}^{T} F_sF_s^\top  \Big)$, $\mathbf{X}_t = \frac{1}{\Nx+\Ny } \sum_{i=1}^{\Nx+\Ny } X_i W_{ti}^Z \Lam_i^{(\gamma)}\Lam_i^{(\gamma)\top}$, and $\tilde{\Sigma}_{\Lam,t}^{(\gamma)} = \frac{1}{\Nx+\Ny }\sum_{j=1}^{\Nx+\Ny }W^Z_{tj}\Lam_j^{(\gamma)}\Lam_j^{(\gamma)\top}$. 
Note that $X_i$ and $\mathbf{X}_t$ are correlated and are asymptotically independent of other terms in $\sqrt{\delta_{\Ny,T}} (\tilde{C}_{ti} - C_{ti} )$. Therefore, by Assumption \ref{assump: additional assumptions}, Lemma \ref{lemma: HDH} and proof of Theorems \ref{thm: asymptotic distribution}.1 and \ref{thm: asymptotic distribution}.2, we can conclude that
\begin{align*}
&\sqrt{\delta_{\Ny, \Talpha}} (\tilde{C}_{ti} - C_{ti} ) \overset{d}{\rightarrow}\\
\Ncal &\left( 0, \text{plim} \left( \frac{\delta_{\Ny, \Talpha}}{\Talpha} F_t^\top \Sigma_{F}^{-1}(\Sigma_\Lam^{(\gamma)})^{-1} \Gamma_{\LamY,i}^{(\gamma),\text{obs}}  (\Sigma_\Lam^{(\gamma)})^{-1} \Sigma_{F}^{-1} F_t + \frac{\delta_{\Ny, \Talpha}}{\Ny } (\LamY)_i^\top (\Sigma_{\Lam,t}^{(\gamma)})^{-1}  \Gamma_{F,t}^{(\gamma),\text{obs}} (\Sigma_{\Lam,t}^{(\gamma)})^{-1} (\LamY)_i \right. \right. \\
& + \frac{\delta_{\Ny, \Talpha}}{\Talpha} (\LamY)_i^\top (\Sigma_{\Lam,t}^{(\gamma)})^{-1} \Gamma_{F,t}^{(\gamma),\text{miss}} (\Sigma_{\Lam,t}^{(\gamma)})^{-1}(\LamY)_i  + \frac{\delta_{\Ny, \Talpha}}{\Talpha} F_t^\top \Sigma_{F}^{-1} (\Sigma_\Lam^{(\gamma)})^{-1} 
\Gamma_{\LamY,i}^{(\gamma),\text{miss}}
(\Sigma_\Lam^{(\gamma)})^{-1}  \Sigma_{F}^{-1}  F_t \\
& \left.\left.-2 \cdot \frac{\delta_{\Ny, \Talpha}}{\Talpha} (\LamY)_i^\top (\Sigma_{\Lam,t}^{(\gamma)})^{-1}  \Gamma_{\LamY,F,i,t}^{(\gamma),\text{miss},\text{cov}}(\Sigma_\Lam^{(\gamma)})^{-1}  \Sigma_{F}^{-1}  F_t \right)\right) \quad \mathcal{G}^t - \text{stably},
\end{align*}
where $\Gamma_{\LamY,F,i,t}^{(\gamma),\text{miss},\text{cov}} =  g_{i,t}^{(\gamma),\text{cov}}((\LamY)_i, (\Sigma_\Lam^{(\gamma)})^{-1}\Sigma_F^{-1}F_t)$ with function $g_{i,t}^{(\gamma),\text{cov}}(\cdot)$ defined in Assumption \ref{assump: additional assumptions}.8. Equivalently, we have
\[
\sqrt{\delta_{\Ny, \Talpha}} (\Sigma_{C,ti}^{(\gamma)})^{-1/2} (\tilde{C}_{ti} - C_{ti} ) \overset{d}{\rightarrow} \Ncal(0,1),
\]
where 
\begin{align*}
\Sigma_{C,ti}^{(\gamma )} =&\frac{\delta_{\Ny, \Talpha}}{\Talpha} F_t^\top \Sigma_{F}^{-1}(\Sigma_\Lam^{(\gamma)})^{-1} \Gamma_{\LamY,i}^{(\gamma ),\text{obs}}  (\Sigma_\Lam^{(\gamma)})^{-1} \Sigma_{F}^{-1} F_t + \frac{\delta_{\Ny, \Talpha}}{\Ny } (\LamY)_i^\top (\Sigma_{\Lam,t}^{(\gamma)})^{-1}  \Gamma_{F,t}^{(\gamma),\text{obs}} (\Sigma_{\Lam,t}^{(\gamma)})^{-1} (\LamY)_i  \\
& + \frac{\delta_{\Ny, \Talpha}}{\Talpha} (\LamY)_i^\top (\Sigma_{\Lam,t}^{(\gamma)})^{-1} \Gamma_{F,t}^{(\gamma ),\text{miss}} (\Sigma_{\Lam,t}^{(\gamma)})^{-1}(\LamY)_i  + \frac{\delta_{\Ny, \Talpha}}{\Talpha} F_t^\top \Sigma_{F}^{-1} (\Sigma_\Lam^{(\gamma)})^{-1} 
\Gamma_{\LamY,i}^{(\gamma ),\text{miss}}
(\Sigma_\Lam^{(\gamma)})^{-1}  \Sigma_{F}^{-1}  F_t \\
& -2 \cdot \frac{\delta_{\Ny, \Talpha}}{\Talpha} (\LamY)_i^\top (\Sigma_{\Lam,t}^{(\gamma)})^{-1}  \Gamma_{\LamY,F,i,t}^{(\gamma ),\text{miss},\text{cov}}(\Sigma_\Lam^{(\gamma)})^{-1}  \Sigma_{F}^{-1}  F_t .
\end{align*}
We complete our proof.
\end{proof}

\vspace{8mm}
\subsection{Proof of Proposition \ref{prop: link of assumptions}}
We prove that Assumptions \ref{assump: obs pattern} and \ref{assump: simplified factor model} imply Assumption \ref{assump: factor model}, and Assumptions \ref{assump: obs pattern}, \ref{assump: simplified factor model}, and \ref{assump: simplified obs pattern} imply Assumption \ref{assump: additional assumptions} in the following.

\subsubsection{Assumptions \ref{assump: obs pattern} and \ref{assump: simplified factor model} imply Assumption \ref{assump: factor model}}
1. Assumptions \ref{assump: factor model}.1 and \ref{assump: factor model}.2 hold under Assumptions \ref{assump: obs pattern}, \ref{assump: simplified factor model}.1 and \ref{assump: simplified factor model}.2
\begin{proof}
Since $F_t \overset{i.i.d.}{\sim} (0,\Sigma_F)$, by LLN we have $\frac{1}{T}\sum_{t=1}^T F_tF_t^\top \overset{p}{\rightarrow} \Sigma_F$. Additionally, since $\E\|F_t\|^4$ is bounded, there is 
\begin{align*}
    \EE\Lv\sqrt{T}\Lp\frac{1}{T}\sum_{t=1}^T F_tF_t^\top-\Sigma_F\Rp\Rv^2 &= \sum_{p,q=1}^k T\cdot \EE\Ls\Lp\frac{1}{T}\sum_{t=1}^T F_{t,p}F_{t,q}-(\Sigma_{F})_{pq}\Rp^2 \Rs \\
    & = \sum_{p,q=1}^k \EE\Ls F_{t,p}^2F^2_{t,q}\Rs-(\Sigma_{F})_{pq}^2 \leq C,
\end{align*}
where $F_{t,p}$ denotes the $p$-th factor of $F_t$ and $(\Sigma_{F})_{pq}$ denotes the $(p,q)$-th entry of $\Sigma_F.$
Since the observation matrix $W^Y$ is independent of factors $F$, by similar arguments we can show that $\frac{1}{|Q^Z_{ij}|}\sum_{t\in Q^Z_{ij}} F_tF_t^\top \overset{p}{\rightarrow} \Sigma_F$ and $\EE\Big\|\sqrt{|Q^Z_{ij}|}\Big(\frac{1}{|Q^Z_{ij}|}\sum_{t\in Q^Z_{ij}}F_tF_t^\top-\Sigma_F\Big)\Big\|^2\leq C$. For factor loadings, since $(\LamX)_i \overset{i.i.d.}{\sim}(0,\Sigma_{\LamX})$, by LLN we have $\frac{1}{\Nx}\sum_{i=1}^{\Nx} (\LamX)_i(\LamX)_i^\top \overset{p}{\rightarrow} \Sigma_{\LamX}$. Other assumptions in Assumption \ref{assump: factor model}.2 automatically hold under Assumption \ref{assump: simplified factor model}.2.
\end{proof}

\vspace{3mm}
2. Assumption \ref{assump: factor model}.3 holds under Assumption \ref{assump: simplified factor model}.3
\begin{proof}
When Assumption \ref{assump: simplified factor model}.3 holds, Assumption \ref{assump: factor model}.3(a) automatically holds. Let $I(\cdot)$ be an indicator function where $I(A)=1$ if event $A$ happens and $I(A)=0$ otherwise. Under Assumption \ref{assump: simplified factor model}.3, there are $\gamma^{(\eX)}_{st,i} = \E[(\eX)_{ti}(\eX)_{si}] = \sigma^2_{\eX}\cdot I(t=s)$ and $\gamma^{(\eY)}_{st,i} = \E[(\eY)_{ti}(\eY)_{si}] = \sigma^2_{\eY}\cdot I(t=s)$. Let $\gamma_{st} = (\sigma^2_{\eX}+\sigma^2_{\eY})\cdot I(t=s)$.
It satisfies $|\gamma^{(\eX)}_{st,i}|\leq \gamma_{st}$, $|\gamma^{(\eY)}_{st,i}|\leq \gamma_{st}$ and $\sum_{s=1}^T \gamma_{st}\leq C$ for all $t$. By the same arguments, we can prove Assumption \ref{assump: factor model}.3(c). For Assumption \ref{assump: factor model}.3(d), there is $\tau_{ij,ts}^{(\eX)} = \sigma_{\eX}^2\cdot I(i=j,t=s)$. So $\sum_{j=1}^{\Nx}\sum_{s=1}^T |\tau_{ij,ts}^{(\eX)}|$ is bounded for all $i,t$. Similar arguments hold for $\tau_{ij,ts}^{(\eY)}$ and $\tau_{ij,ts}^{(\eX,\eY)}$. We denote $v_{t,ij}^{(y)} = (\eY)_{ti}(\eY)_{tj}-\EE[(\eY)_{ti}(\eY)_{tj}]$. We have $\EE[v_{t,ij}^{(y)}] = 0$, and since $\EE(\eY)^8_{ti}$ is bounded,
\begin{align*}
    \EE\Ls\frac{1}{|Q^Y_{ij}|^{1/2}}\sum_{t\in Q^Y_{ij}}v_{t,ij}^{(y)} \Rs^4 &= \frac{1}{|Q^Y_{ij}|^2} \sum_{t,s,u,w \in Q^Y_{ij}} \EE \Ls v_{t,ij}^{(y)}v_{s,ij}^{(y)}v_{u,ij}^{(y)}v_{w,ij}^{(y)} \Rs \\
    & = \frac{1}{|Q^Y_{ij}|^2} \Ls 3\sum_{t,s \in Q^Y_{ij}}\EE\Ls(v_{t,ij}^{(y)} )^2(v_{s,ij}^{(y)})^2 \Rs + \sum_{t \in Q^Y_{ij}} \EE\Ls(v_t^{(y)} )^4 \Rs\Rs \leq C.
\end{align*}
By similar arguments, we can prove that $\E \Big [ \frac{1}{T^{1/2}}\sum_{t =1}^T \big((\eX)_{ti}(\eX)_{tj}-\E[(\eX)_{ti}  \cdot(\eX)_{tj}]\big) \Big]^4$, and additionally, $\E \Big [ \frac{1}{|\QY_{jj}|^{1/2}}\sum_{t \in \QY_{jj}}\big((\eX)_{ti}({\eY})_{tj} - \E[(\eX)_{ti}({\eY})_{tj}]\big) \Big]^4$, are bounded.
\end{proof}

\vspace{3mm}
3. Assumption \ref{assump: factor model}.4 holds under Assumption \ref{assump: simplified factor model}.4
\begin{proof}
Since $F, \eY$, and $W^Y$ are independent, it is easy to see that for any $i,j=1,\cdots,\Ny$,
\begin{align*}
    \EE\Lv\frac{1}{\sqrt{|Q^Y_{ij}|}}\sum_{t\in Q^Y_{ij}}F_t (\eY)_{tj} \Rv^2 & = \sum_{p=1}^k \frac{1}{|Q^Y_{ij}|}\sum_{t\in Q^Y_{ij}} \EE[(\eY)^2_{tj}]\cdot \EE[F_{t,p}^2] \leq C.
\end{align*}
Similarly, $\EE\Lv\frac{1}{\sqrt{|Q^Y_{ii}|}}\sum_{t\in Q^Y_{ii}}F_t (\eX)_{tj'} \Rv^2 \leq C$ and $\EE\Lv\frac{1}{\sqrt{T}}\sum_{t=1}^T F_t (\eX)_{tj'} \Rv^2 \leq C$ for any $j'=1,\cdots,\Nx.$
\end{proof}

\vspace{10mm}
\subsubsection{Assumptions \ref{assump: obs pattern}, \ref{assump: simplified factor model} and \ref{assump: simplified obs pattern} imply Assumption \ref{assump: additional assumptions}}
1. Assumption \ref{assump: additional assumptions}.1 holds under Assumptions \ref{assump: obs pattern} and \ref{assump: simplified factor model}
\begin{proof}
Since factors $F$, loadings $\LamX,\LamY$, and errors $\eX,\eY$ are all i.i.d. with zero means, and $|Q^Y_{ij}|/T$ is bounded away from 0 for all $i,j$, we have
\begin{align*} 
   & \E\Lv\sqrt{\frac{T}{\Ny}}\sum_{j=1}^{\Ny} ({\LamY})_j \frac{1}{|\QY_{ij}|}\sum_{t\in \QY_{ij}}F_t^\top ({\eY})_{tj}\Rv^2 =  \sum_{p,q=1}^k \E\Ls\Bigg(\sqrt{\frac{T}{\Ny}}\sum_{j=1}^{\Ny} ({\LamY})_{j,p} \frac{1}{|\QY_{ij}|}\sum_{t\in \QY_{ij}}F_{t,q} ({\eY})_{tj}\Bigg)^2\Rs\\
    = & \sum_{p,q=1}^k \frac{T}{\Ny}\sum_{j,l=1}^{\Ny}\frac{1}{|Q^Y_{ij}|}\frac{1}{|Q^Y_{il}|}\sum_{t\in Q^Y_{ij}}\sum_{s\in Q^Y_{il}}\EE\Big[(\LamY)_{j,p}(\LamY)_{l,p}F_{t,q}F_{s,q}(\eY)_{tj}(\eY)_{sl}\Big] \\
    = & \sum_{p,q=1}^k \frac{T}{\Ny}\sum_{j=1}^{\Ny}\frac{1}{|Q^Y_{ij}|^2}\sum_{t\in Q^Y_{ij}}\EE\Big[(\LamY)_{j,p}^2F_{t,q}^2(\eY)_{tj}^2\Big] \leq   C.
\end{align*}  
Similarly we can prove that $\E\Big\|\sqrt{\frac{T}{\Nx}}\sum_{j=1}^{\Nx} ({\LamX})_j \frac{1}{|\QY_{ii}|}\sum_{t\in \QY_{ii}}F_t^\top (\eX)_{tj}\Big\|^2\leq C$.
\end{proof}

\vspace{3mm}
2. Assumption \ref{assump: additional assumptions}.2 holds under Assumptions \ref{assump: obs pattern} and \ref{assump: simplified factor model}
\begin{proof}
It holds that
\begin{align*}
    & \E \Lv\frac{\sqrt{\Ny T}}{\Ny\Nx }\sum_{i=1}^{\Ny}\sum_{j=1}^{\Nx }({\LamX})_j({\LamX})_j^\top \frac{1}{{|\QY_{ii}|}}\sum_{t\in \QY_{ii}} F_t({\LamY})_i^\top ({\eY})_{ti} \Rv^2 \\
    = & \sum_{p,q=1}^k \frac{T}{\Ny \Nx^2} \E\Ls\Lp\sum_{i=1}^{\Ny}\sum_{j=1}^{\Nx}\sum_{r=1}^k\frac{1}{|Q^Y_{ii}|} \sum_{t\in \QY_{ii}} (\LamX)_{j,p}(\LamX)_{j,r} F_{t,r}({\LamY})_{i,q} ({\eY})_{ti} \Rp^2  \Rs \\
    = & \sum_{p,q,r,m=1}^k \frac{T}{\Ny \Nx^2} \sum_{i=1}^{\Ny}\sum_{j,l=1}^{\Nx}\frac{1}{|Q^Y_{ii}|^2} \sum_{t\in \QY_{ii}}\E\big[(\LamX)_{j,p}(\LamX)_{j,r}(\LamX)_{l,p}(\LamX)_{l,m}\big]\cdot\E[F_{t,r}F_{t,m}]\cdot\E[({\LamY})_{i,q}^2]\cdot\E[(\eY)_{ti}^2]\\
    \leq &\sum_{p,q,r,m=1}^k \sum_{r,m=1}^k\frac{T}{|\QY_{ii}|}C \ \leq \ C,
\end{align*} 
where the first inequality holds since $\LamX$ has bounded fourth moments. By similar arguments, we can prove the other three bounds in Assumption \ref{assump: additional assumptions}.2.
\end{proof}

\vspace{3mm}
3. Assumption \ref{assump: additional assumptions}.3 holds under Assumptions \ref{assump: obs pattern} and \ref{assump: simplified factor model}
\begin{proof}
For simplicity, we let $v_{t,ij}^{(y)} = (\eY)_{ti}(\eY)_{tj}-\E\Ls(\eY)_{ti}(\eY)_{tj}\Rs$. We have $\E[v_{t,ij}^{(y)}] = 0$. According to Assumption \ref{assump: simplified factor model}, $v_{t,ij}^{(y)}$ is independent of $v_{s,hl}^{(y)}$ for any $s\neq t$ and is independent of loadings $\LamY$. As a result,
\begin{align*}
    \E\Lv \sqrt{\frac{T}{\Ny}}
 \sum_{j=1}^{\Ny} ({\LamY})_j \frac{1}{|\QY_{ij}|}\sum_{t\in \QY_{ij}}v_{t,ij}^{(y)} \Rv^2 &= \sum_{p=1}^k \frac{T}{\Ny} \E\Ls\Lp\sum_{j=1}^{\Ny} ({\LamY})_{j,p} \frac{1}{|\QY_{ij}|}\sum_{t\in \QY_{ij}}v_{t,ij}^{(y)}\Rp^2\Rs \\
 & = \sum_{p=1}^k \frac{T}{\Ny}\sum_{j,l=1}^{\Ny} \frac{1}{|\QY_{ij}|} \frac{1}{|\QY_{il}|}\sum_{t\in \QY_{ij}}\sum_{s\in \QY_{il}} \E\Big[ ({\LamY})_{j,p}({\LamY})_{l,p} v_{t,ij}^{(y)}v_{s,il}^{(y)}\Big] \\
 & = \sum_{p=1}^k \frac{T}{\Ny}\sum_{j=1}^{\Ny} \frac{1}{|\QY_{ij}|^2} \sum_{t\in \QY_{ij}} \E\Ls ({\LamY})_{j,p}^2\Rs\cdot \E\Ls ( v_{t,ij}^{(y)})^2\Rs\\
 & \leq C.
\end{align*}
Similarly, we can prove that $\E\Big\| \sqrt{\frac{T}{\Nx}}
 \sum_{j=1}^{\Nx} ({\LamX})_j \frac{1}{|\QY_{ii}|}\sum_{t\in \QY_{ii}}\big((\eX)_{tj}(\eY)_{ti}-\E[(\eX)_{tj}(\eY)_{ti}]\big)\Big\|^2\leq C$.
 
\end{proof}

\vspace{3mm}
4. Assumption \ref{assump: additional assumptions}.4 holds under Assumptions \ref{assump: obs pattern} and \ref{assump: simplified factor model}
\begin{proof}
In the following, we only prove that $\E \Big\|\sqrt{\frac{T}{\Ny^3}}\sum_{i,j=1}^{\Ny}\frac{1}{|\QY_{ij}|}\sum_{s\in \QY_{ij}}\WY_{ti}({\LamY})_j({\eY})_{ti} (({\eY})_{si} ({\eY})_{sj} - \E[({\eY})_{si}({\eY})_{sj}]) \Big\|^2 \leq C$ and other bounds can be proved similarly. As before, we define $v_{t,ij}^{(y)} = (\eY)_{ti}(\eY)_{tj}-\E[(\eY)_{ti}(\eY)_{tj}]$. It holds that
\begin{align*}
    &\E\Lv\sqrt{\frac{T}{\Ny^3}}\sum_{i,j=1}^{\Ny}\frac{1}{|\QY_{ij}|}\sum_{s\in \QY_{ij}}W^Y_{ti}({\LamY})_j({\eY})_{ti} v_{s,ij}^{(y)} \Rv^2 \\
    = & \sum_{p=1}^k \frac{T}{\Ny^3} \sum_{i,j,h,l=1}^{\Ny} \frac{1}{|\QY_{ij}|}\frac{1}{|\QY_{hl}|}\sum_{s\in \QY_{ij}}\sum_{u\in \QY_{hl}}\EE\Ls \WY_{ti}\WY_{th}({\LamY})_{j,p}({\LamY})_{l,p} ({\eY})_{ti}({\eY})_{th} v_{s,ij}^{(y)}v_{u,hl}^{(y)}\Rs \\
    = & \sum_{p=1}^k \frac{T}{\Ny^3} \sum_{i,j,h,l=1}^{\Ny} \frac{1}{|\QY_{ij}|}\frac{1}{|\QY_{hl}|}\sum_{s\in \QY_{ij}\cap\QY_{hl}}\EE\Ls\WY_{ti}\WY_{th}({\LamY})_{j,p}({\LamY})_{l,p} \Rs \cdot\E\Ls({\eY})_{ti}({\eY})_{th} v_{s,ij}^{(y)}v_{s,hl}^{(y)}\Rs.
\end{align*}
Note that if the indices $i,h,j,l$ take four different values, the RHS of the above equation will equal zero. Thus, the RHS of the above equation can be bounded by $C$. 
\end{proof}

\vspace{3mm}
5. Assumption \ref{assump: additional assumptions}.5 holds under Assumptions \ref{assump: obs pattern} and \ref{assump: simplified factor model}
\begin{proof}
For any $i,j=1,\cdots,\Ny,$ we let $\Delta_{F,ij} = \frac{1}{|\QY_{ij}|}\sum_{s\in \QY_{ij}}F_sF_s^\top - \frac{1}{T}\sum_{s=1}^TF_s F_s^\top$. It holds that
\begin{align*}
    \E\Lv\sqrt{\frac{{T}}{\Ny}} \sum_{i=1}^{\Ny} \Delta_{F,ij}\WY_{ti} ({\LamY})_i ({\eY})_{ti}\Rv^2 =& \frac{T}{\Ny} \sum_{i,l=1}^{\Ny} \E\Ls\WY_{ti}\WY_{tl} ({\LamY})_i^\top ({\LamY})_l \Delta_{F,ij} \Delta_{F,lj}({\eY})_{ti}({\eY})_{tl}   \Rs \\
    = & \frac{T}{\Ny} \sum_{i=1}^{\Ny} \E\Big[\WY_{ti} ({\LamY})_i^\top({\LamY})_i \Delta_{F,ij} \Delta_{F,lj}   \Big]\cdot \E\Ls({\eY})_{ti}^2\Rs .
\end{align*}
We will prove in part 8 that $\E\|\Delta_{F,ij}\|^2\leq C/T$. Once this holds, we can bound the RHS of the above equation by $C$. We can prove other bounds following similar arguments.
\end{proof}

\vspace{3mm}
6. Assumption \ref{assump: additional assumptions}.6 holds under Assumptions \ref{assump: obs pattern}, \ref{assump: simplified factor model} and \ref{assump: simplified obs pattern}
\begin{proof}
Since factors $F_t\overset{i.i.d.}{\sim} (0,\Sigma_F)$, idiosyncratic errors $(\eY)_{ti}\overset{i.i.d.}{\sim} (0,\sigma_{e_y}^2)$, and they are independent of the observation pattern, by CLT we have
$\frac{1}{|Q^Y_{ij}|^{1/2}}\sum_{t\in Q^Y_{ij}}F_t(\eY)_{ti} \overset{d}{\rightarrow} \Ncal(0, \sigma_{\eY}^2\Sigma_F)$.
Therefore, by Assumption \ref{assump: obs pattern}.2,
\[
\frac{\sqrt{T}}{|Q^Y_{ij}|}\sum_{t\in Q^Y_{ij}}F_t (\eY)_{ti} \overset{d}{\rightarrow} \Ncal\Lp 0,\frac{1}{q_{ij}}\sigma_{\eY}^2\Sigma_F\Rp, \qquad \forall \ i,j=1,\cdots,\Ny.
\]
Based on Slutsky's Theorem,
\begin{align*}
    \frac{1}{\Nx}\sum_{j=1}^{\Nx} (\LamX)_j (\LamX)_j^\top \frac{\sqrt{T}}{|Q^Y_{ii}|}\sum_{t\in Q^Y_{ii}}F_t (\eY)_{ti} 
    \overset{d}{\rightarrow} \Ncal\Lp 0,\frac{1}{q_{ii}} \Sigma_{\LamX} \sigma_{\eY}^2\Sigma_F \Sigma_{\LamX}\Rp.
\end{align*}
For any $i,j,l$, we have the asymptotic covariance matrix 
\begin{align*}
     \ACov\Bigg(\frac{\sqrt{T}}{|Q^Y_{ij}|}\sum_{t\in Q^Y_{ij}}F_t (\eY)_{ti},\ \frac{\sqrt{T}}{|Q^Y_{il}|}\sum_{t\in Q^Y_{il}}F_t (\eY)_{ti}\Bigg) &= \lim_{T\rightarrow\infty}  \frac{T}{|Q_{ij}^Y|\cdot |Q_{il}^Y|}\sum_{t\in Q^Y_{ij}}\sum_{s\in Q^Y_{il}}\EE\Ls F_t F_s^\top(\eY)_{ti}(\eY)_{si}\Rs \\
    &= \frac{q_{ij,il}}{q_{ij}q_{il}} \sigma^2_{\eY} \Sigma_F.
\end{align*}
This implies that
\begin{align*} 
    \frac{1}{\Ny}\sum_{j=1}^{\Ny} (\LamY)_j (\LamY)_j^\top \frac{\sqrt{T}}{|Q^Y_{ij}|}\sum_{t\in Q^Y_{ij}}F_t (\eY)_{ti} &=  \EE\Ls(\LamY)_j (\LamY)_j^\top\Rs\cdot \frac{1}{\Ny}\sum_{j=1}^{\Ny} 
    \frac{\sqrt{T}}{|Q^Y_{ij}|}\sum_{t\in Q^Y_{ij}}F_t (\eY)_{ti} + o_p(1) \\
    &\overset{d}{\rightarrow} \Ncal\bigg(0, \lim_{\Ny\rightarrow \infty} \frac{1}{\Ny^2}\sum_{j,l=1}^{\Ny}\frac{q_{ij,il}}{q_{ij}q_{il}} \Sigma_{\LamY} \sigma_{\eY}^2\Sigma_F \Sigma_{\LamY}\bigg) \\
    &= \Ncal\Lp 0, \omega_i^{(2,3)} \Sigma_{\LamY} \sigma_{\eY}^2\Sigma_F \Sigma_{\LamY}\Rp,
\end{align*}
where $\omega_i^{(2,3)}$ is defined in Assumption \ref{assump: simplified obs pattern}.2.
Furthermore, the two parts in Assumption \ref{assump: additional assumptions}.6 are jointly asymptotically normal with covariance matrix $\frac{1}{q_{ii}}\Sigma_{\LamX} \sigma_{\eY}^2\Sigma_F \Sigma_{\LamY}.$
Combining these terms, $\Gamma^{(\gamma),\text{obs}}_{{\LamY},i}$ equals to
\[
\Gamma^{(\gamma),\text{obs}}_{{\LamY},i} =  \lim \frac{\Nx^2}{(\Nx+\Ny)^2}  \sigma_{\eY}^2 \Ls \frac{1}{q_{ii}}\Lp\Sigma_{\LamX}+r \Sigma_{\LamY}\Rp \Sigma_F\Lp\Sigma_{\LamX}+r \Sigma_{\LamY}\Rp + \Lp\omega_i^{(2,3)}-\frac{1}{q_{ii}}\Rp r^2 \Sigma_{\LamY}\Sigma_F\Sigma_{\LamY} \Rs.
\]
\end{proof}

\vspace{3mm}
7. Assumption \ref{assump: additional assumptions}.7 holds under Assumptions \ref{assump: obs pattern}, \ref{assump: simplified factor model} and \ref{assump: simplified obs pattern}
\begin{proof}
By Assumptions \ref{assump: simplified factor model}, \ref{assump: simplified obs pattern}, and CLT, we have
\[
\frac{1}{\sqrt{\Nx}}\sum_{i=1}^{\Nx} (\LamX)_i (\eX)_{ti} \overset{d}{\rightarrow} \Ncal(0,\sigma_{\eX}^2\Sigma_{\LamX}),
\]
and  
\[
\frac{1}{\sqrt{\Ny}}\sum_{i=1}^{\Ny} \WY_{ti}(\LamY)_i (\eY)_{ti} \overset{d}{\rightarrow} \Ncal(0, \sigma_{\eY}^2\Sigma_{{\LamY},t}).
\]
Furthermore, $\frac{1}{\sqrt{\Nx}}\sum_{i=1}^{\Nx} (\LamX)_i (\eX)_{ti}$ and $\frac{1}{\sqrt{\Ny}}\sum_{i=1}^{\Ny} \WY_{ti}(\LamY)_i (\eY)_{ti}$ are asymptotically independent. So $\Gamma_{F,t}^{(\gamma),\text{obs}}$ defined in Assumption \ref{assump: additional assumptions}.7 simplifies to 
\begin{align*}
\Gamma_{F,t}^{(\gamma),\text{obs}}
= \lim\frac{\Nx^2}{(\Nx+\Ny)^2}\Lp\frac{\Ny}{\Nx}\sigma_{\eX}^2\Sigma_{\LamX}+ r^2\sigma_{\eY}^2\Sigma_{{\LamY},t}\Rp.
\end{align*}
Suppose there is some weak factor $F_w$ in $Y$ whose loading $\sum_{i=1}^{N_y}(\LamY)^2_{i,w}$ grows at the rate $g(\Ny) = p_w N_y$, where $p_w$ is defined in Assumption \ref{assump: simplified factor model}.2. For this weak factor $F_w$, $p_w$ decays to 0 but is nonzero as $N_y$ grows. We have $\frac{1}{\sqrt{g(\Ny)}} \sum_{i=1}^{\Ny} W^Y_{ti} (\Lam_y)_{i,w} (\eY)_{ti}  \overset{d}{\rightarrow}\Ncal(0,\sigma_{e_y}^2\Sigma_{{\Lam_{y}},t,w})$. Then, there is $\Gamma_{F_w,t}^{(\gamma),\text{obs}}=\lim\frac{\Nx^2}{(\Nx+\Ny)^2}\Lp \frac{ N_w}{ N_x}\sigma_{\eX}^2\Sigma_{\LamX} +r^2 \frac{{ p_wN_w}}{N_y} \sigma_{e_y}^2\Sigma_{{\Lam_{y}},t,w}\Rp$, where $N_w= \min(\Ny^2/g(\Ny),\Nx)$. 
\end{proof}

\vspace{3mm}
8. Assumption \ref{assump: additional assumptions}.8 holds under Assumptions \ref{assump: obs pattern}, \ref{assump: simplified factor model} and \ref{assump: simplified obs pattern}

\begin{proof}
It suffices to show that $\Lp\sqrt{T} \cdot \text{vec}(X_i^{(\gamma)}), \sqrt{T} \cdot \text{vec}(\mathbf{X}_t^{(\gamma)})\Rp$ is asymptotically normal. 

\vspace{3mm}
\emph{Step 1 -- $\sqrt{T} \cdot \text{vec}(X_i^{(\gamma)})$ is asymptotically normal}

\vspace{2mm}
Observe that $X_i^{(\gamma)} = \frac{1}{N_x+N_y}\sum_{j=1}^{\Nx+\Ny}\Lam_j^{(\gamma)}(\Lam_j^{(\gamma)})^\top \Big( \frac{1}{|Q^Z_{ij}|}\sum_{t\in Q^Z_{ij}}F_tF_t^\top - \frac{1}{T}\sum_{t=1}^T F_tF_t^\top \Big)$, where $\Lam^{(\gamma)} = [\LamX; \sqrt{\gamma} \LamY]\in \R^{(\Nx+\Ny)\times k}$ is the combined loadings. For any $k \times k$ matrices $A$ and $B$, there is $\text{vec}(AB) = (I_k \otimes A) \cdot \text{vec}(B).$ Therefore,  $\sqrt{T}\cdot\text{vec}(X_i^{(\gamma)})$ can be written as
\[
\sqrt{T}\cdot \text{vec}(X_i^{(\gamma)}) = \frac{1}{\Nx + \Ny}\sum_{j=1}^{\Nx+\Ny}\Lp I_k \otimes \Lam_j^{(\gamma)}(\Lam_j^{(\gamma)})^\top \Rp  \text{vec}\Lp\frac{\sqrt{T}}{|Q^Z_{ij}|}\sum_{t\in Q^Z_{ij}}F_tF_t^\top - \frac{1}{\sqrt{T}}\sum_{t=1}^TF_tF_t^\top \Rp.
\] 
To simplify notation, we let $v_{ij} = \text{vec}\Big(\frac{\sqrt{T}}{|Q^Z_{ij}|}\sum_{t\in Q^Z_{ij}}F_tF_t^\top - \frac{1}{\sqrt{T}}\sum_{t=1}^TF_tF_t^\top \Big)$. Furthermore, we define $q^Z_{ij}= \lim_{T\rightarrow\infty}|Q^Z_{ij}|/T$ and $q^Z_{ij,hl} =\lim_{T\rightarrow\infty}|Q^Z_{ij}\cap Q^Z_{hl}|/T$ for any $i,j,h,l$. According to CLT,
 \begin{align*}
  v_{ij} &= \sqrt{T}\Lp\frac{1}{|Q^Z_{ij}|} - \frac{1}{T}\Rp\sum_{t\in Q^Z_{ij}}\text{vec}(F_tF_t^\top) - \frac{1}{\sqrt{T}}\sum_{t\notin Q^Z_{ij}} \text{vec}(F_tF_t^\top)  \overset{d}{\rightarrow} \Ncal\left(0, \bigg(\frac{1}{q^Z_{ij}}-1\bigg) \Xi_F \right),  
\end{align*} 
where $\Xi_F = \Var(\text{vec}(F_tF_t^\top))$.
Additionally, the asymptotic covariance of $v_{ij}$ and $v_{hl}$ for any $i,j,h,l$ can be calculated as
\begin{align*}
    & \ACov(v_{ij},v_{hl}) \\
    = \ & \ACov\Bigg(\frac{\sqrt{T}}{|Q^Z_{ij}|}\sum_{t\in Q^Z_{ij}}\text{vec}(F_tF_t^\top) - \frac{1}{\sqrt{T}}\sum_{t=1}^T\text{vec}(F_tF_t^\top), \frac{\sqrt{T}}{|Q^Z_{hl}|}\sum_{t\in Q^Z_{hl}}\text{vec}(F_tF_t^\top) - \frac{1}{\sqrt{T}}\sum_{t=1}^T\text{vec}(F_tF_t^\top) \Bigg) \\
    =\ &\lim_{T\rightarrow\infty} \frac{T}{|Q^Z_{ij}|\cdot |Q^Z_{hl}|} \left|Q^Z_{ij}\cap Q^Z_{hl}\right|\cdot \Var(\text{vec}(F_tF_t^\top)) - 2\Var(\text{vec}(F_tF_t^\top))+ \Var(\text{vec}(F_tF_t^\top)) \\
    =\ & \bigg(\frac{q^Z_{ij,hl}}{q^Z_{ij}q^Z_{hl}}-1 \bigg)\Xi_F.
\end{align*} 
Particularly, we have $\ACov(v_{ij},v_{il}) = (q^Z_{ij,il}/(q^Z_{ij}q^Z_{il})-1)\Xi_F$.
When $i=\Nx+1,\cdots,\Nx+\Ny$, if $j$ or $l$ is chosen from $1,\cdots,\Nx$, then $q^Z_{ij,il}/(q^Z_{ij}q^Z_{il}) = 1/q_{i'i'}$ with $i'=i-\Nx$; otherwise, $q^Z_{ij,il}/(q^Z_{ij}q^Z_{il}) = q_{i'j',i'l'}/(q_{i'j'}q_{i'l'}),$ where $j'=j-\Nx$ and $l'=l-\Nx$. Let $u_{jl} =  (I_k \otimes \Lam_j^{(\gamma)}\Lam_j^{(\gamma)\top})\Xi_F (I_k \otimes \Lam_l^{(\gamma)}\Lam_l^{(\gamma)\top}).$ Observe that $u_{jl}$ is independent with $u_{mn}$ for distinct $j,l,m,n$. Thus, for any $p,q=1,\cdots,k^2,$ we have
 \begin{align*}
    \EE \Ls \frac{1}{(\Nx+\Ny)^2}\sum_{j,l=1}^{\Nx+\Ny} \bigg(\frac{q^Z_{ij,hl}}{q^Z_{ij}q^Z_{hl}}-1 \bigg)\Lp u_{jl,pq}-\E[u_{jl,pq}]\Rp \Rs^2 = O\Big(\frac{1}{\Ny}\Big).
\end{align*} 
When $q_{ij}$ and $q_{ij,hl}$ are independent of $(\LamX)_m(\LamX)_m^\top$ and $(\LamY)_m(\LamY)_m^\top$ for any $i,j,h,l,m$, 
\begin{align*}
    & \ACov\left( \sqrt{T}\cdot \text{vec}(X_i^{(\gamma)}),\sqrt{T}\cdot \text{vec}(X_i^{(\gamma)})\right) \\  =\ & \lim\frac{1}{(\Nx+\Ny)^2}\sum_{j,l=1}^{\Nx+\Ny} \left(I_k \otimes \Lam_j^{(\gamma)}\Lam_j^{(\gamma)\top}\right) \cdot \Cov(v_{ij},v_{il})\cdot \left(I_k \otimes \Lam_l^{(\gamma)}\Lam_l^{(\gamma)\top}\right)\\
    =\ & \lim \frac{1}{(\Nx+\Ny)^2}\sum_{j,l=1}^{\Nx+\Ny} \bigg(\frac{q^Z_{ij,hl}}{q^Z_{ij}q^Z_{hl}}-1 \bigg)\left(I_k \otimes \Lam_j^{(\gamma)}\Lam_j^{(\gamma)\top}\right) \Xi_F \left(I_k \otimes \Lam_l^{(\gamma)}\Lam_l^{(\gamma)\top}\right) \\
    =\ & \lim \frac{\Nx^2}{(\Nx+\Ny)^2}\bigg[\Big(\frac{1}{q_{i'i'}}-1\Big)\Lp I_k \otimes \Lp \Sigma_{\LamX} + r\Sigma_{\LamY}\Rp\Rp\Xi_F\Lp I_k \otimes \Lp \Sigma_{\LamX} + r\Sigma_{\LamY}\Rp\Rp\\
    &+ \Big(\omega_{i'}^{(2,3)}-\frac{1}{q_{i'i'}}\Big)\Lp I_k \otimes r \Sigma_{\LamY}\Rp\Xi_F\Lp I_k \otimes r \Sigma_{\LamY}\Rp\bigg],
\end{align*} 
where $r = \gamma \cdot \Ny/\Nx$ and $i'=i-\Nx$.
Therefore, for any $i=\Nx+1,\cdots,\Nx+\Ny$,
\begin{align*}
& \sqrt{T}\cdot \text{vec}(X_i^{(\gamma)})  \\
\overset{d}{\rightarrow} \ &  \Ncal\bigg( 0,\lim \frac{\Nx^2}{(\Nx+\Ny)^2}\bigg[\Big(\frac{1}{q_{i'i'}}-1\Big)\Lp I_k \otimes \Lp \Sigma_{\LamX} + r\Sigma_{\LamY}\Rp\Rp\Xi_F\Lp I_k \otimes \Lp \Sigma_{\LamX} + r\Sigma_{\LamY}\Rp\Rp\\
&\qquad\qquad\qquad\qquad\qquad+ \Big(\omega_{i'}^{(2,3)}-\frac{1}{q_{i'i'}}\Big)\Lp I_k \otimes r \Sigma_{\LamY}\Rp\Xi_F\Lp I_k \otimes r \Sigma_{\LamY}\Rp\bigg] \bigg).
\end{align*} \\

\emph{Step 2 -- $\sqrt{T} \cdot \text{vec}(\mathbf{X}_t^{(\gamma)})$ is asymptotically normal}

\vspace{2mm}
In Assumption \ref{assump: additional assumptions}.8, we have $\sqrt{T}\cdot\mathbf{X}_t^{(\gamma)} = \frac{1}{(\Nx+\Ny)^2}\sum_{i,j=1}^{\Nx+\Ny} \Lam_j^{(\gamma)}\Lam_j^{(\gamma)\top} \Big(\frac{\sqrt{T}}{|Q^Z_{ij}|}\sum_{t\in Q^Z_{ij}}F_tF_t^\top - \frac{1}{\sqrt{T}}\sum_{t=1}^TF_tF_t^\top \Big) W^Z_{ti} \Lam_i^{(\gamma)}\Lam_i^{(\gamma)\top}$, and its vectorized form can be written as \[\sqrt{T}\cdot \text{vec}(\mathbf{X}_t^{(\gamma)})= \frac{1}{(\Nx+\Ny)^2}\sum_{i,j=1}^{\Nx+\Ny}W^Z_{ti} \big(\Lam_i^{(\gamma)}\Lam_i^{(\gamma)\top} \otimes I_k\big)\big(I_k\otimes \Lam_j^{(\gamma)}\Lam_j^{(\gamma)\top}\big)v_{ij},\]
where $v_{ij}$ is defined in the first step. With similar arguments as in Step 1, we can show the existence of the following limit
\begin{align*}
\lim \frac{1}{(\Nx+\Ny)^4}\sum_{i,j,h,l=1}^{\Nx+\Ny} & \bigg(\frac{q^Z_{ij,hl}}{q^Z_{ij}q^Z_{hl}}-1 \bigg) W^Z_{ti}W^Z_{tl} \left(\Lam_i^{(\gamma)}\Lam_i^{(\gamma)\top} \otimes I_k\right)\left(I_k \otimes \Lam_j^{(\gamma)}\Lam_j^{(\gamma)\top}\right)  \\
& \Xi_F \left(I_k \otimes \Lam_h^{(\gamma)}\Lam_h^{(\gamma)\top}\right)\left(\Lam_l^{(\gamma)}\Lam_l^{(\gamma)\top} \otimes I_k\right).
\end{align*} 
Observe that when $i,j=1,\cdots,\Nx$ or $h,l=1,\cdots,\Nx$, $q^Z_{ij,hl}/(q^Z_{ij}q^Z_{hl}) = 1$; When $i,h=1,\cdots,\Nx$ and $j,l=\Nx+1,\cdots,\Nx+\Ny$, $q^Z_{ij,hl}/(q^Z_{ij}q^Z_{hl}) = q_{j'l'}/(q_{j'j'}q_{l'l'})$ with $j'=j-\Nx$ and $l'=l-\Nx$; When $h=1,\cdots,\Nx$ and $i,j,l=\Nx+1,\cdots,\Nx+\Ny$, $q^Z_{ij,hl}/(q^Z_{ij}q^Z_{hl}) = q_{i'j',l'l'}/(q_{i'j'}q_{l'l'})$ with $i'=i-\Nx$; When $i,j,h,l=\Nx+1,\cdots,\Nx+\Ny$, $q^Z_{ij,hl}/(q^Z_{ij}q^Z_{hl}) = q_{i'j',h'l'}/(q_{i'j'}q_{h'l'})$. By symmetry, other cases of $i,j,h,l$ can be considered similarly. Additionally, observe that for any $k\times k$ matrices $A$ and $B$, there are $(A\otimes I_k)(I_k\otimes B) = A\otimes B$ and $(I_k\otimes A)(B\otimes I_k) = B\otimes A$. As a result, we have
\begin{align*}
    &\ \ACov\left(\sqrt{T}\cdot \text{vec}(\mathbf{X}_t^{(\gamma)}) , \sqrt{T}\cdot \text{vec}(\mathbf{X}_t^{(\gamma)})\right)\\
    = & \lim\frac{1}{(\Nx+\Ny)^4}\sum_{i,j,l,h=1}^{\Nx+\Ny} W^Z_{ti}W^Z_{tl}\left(\Lam_i^{(\gamma)}\Lam_i^{(\gamma)\top} \otimes  \Lam_j^{(\gamma)}\Lam_j^{(\gamma)\top}\right) \Cov(v_{ij},v_{lh}) \left(\Lam_l^{(\gamma)}\Lam_l^{(\gamma)\top}\otimes \Lam_h^{(\gamma)}\Lam_h^{(\gamma)\top}\right) \\
    = & \lim \frac{1}{(\Nx+\Ny)^4}\sum_{i,j,h,l=1}^{\Nx+\Ny}  \bigg(\frac{q^Z_{ij,hl}}{q^Z_{ij}q^Z_{hl}}-1 \bigg)\left(W^Z_{ti}\Lam_i^{(\gamma)}\Lam_i^{(\gamma)\top} \otimes  \Lam_j^{(\gamma)}\Lam_j^{(\gamma)\top}\right) \Xi_F \left(W^Z_{tl}\Lam_l^{(\gamma)}\Lam_l^{(\gamma)\top}\otimes \Lam_h^{(\gamma)}\Lam_h^{(\gamma)\top}\right)\\
    = & \lim  \frac{\Nx^4}{(\Nx+\Ny)^4} \frac{1}{\Ny^2}\sum_{i,j=1}^{\Ny} \Big(\frac{q_{ij}}{q_{ii}q_{jj}}-1\Big) \Lp \Sigma_{\LamX} \otimes r\Sigma_{\LamY} + r\Sigma_{{\LamY},t} \otimes \Sigma_{\LamX} \Rp \Xi_F \Lp \Sigma_{\LamX} \otimes r\Sigma_{\LamY} + r\Sigma_{{\LamY},t} \otimes \Sigma_{\LamX} \Rp \\
    &+\lim  \frac{\Nx^4}{(\Nx+\Ny)^4} \frac{1}{\Ny^3}\sum_{i,j,l=1}^{\Ny} \Big(\frac{q_{jj,il}}{q_{jj}q_{il}}-1\Big) \big[\Lp \Sigma_{\LamX} \otimes r\Sigma_{\LamY} + r\Sigma_{{\LamY},t} \otimes \Sigma_{\LamX} \Rp \Xi_F \Lp  r\Sigma_{{\LamY},t} \otimes r\Sigma_{\LamY} \Rp +\\
    & \Lp  r\Sigma_{{\LamY},t} \otimes r\Sigma_{\LamY} \Rp\Xi_F\Lp \Sigma_{\LamX} \otimes r\Sigma_{\LamY} + r\Sigma_{{\LamY},t} \otimes \Sigma_{\LamX} \Rp\big] \\
    & + \lim  \frac{\Nx^4}{(\Nx+\Ny)^4} \frac{1}{\Ny^4}\sum_{i,j,l,h=1}^{\Ny} \Big(\frac{q_{il,jh}}{q_{jh}q_{il}}-1\Big) \Lp r\Sigma_{\LamY,t} \otimes r\Sigma_{\LamY}  \Rp \Xi_F \Lp r\Sigma_{\LamY,t} \otimes r\Sigma_{\LamY}  \Rp \\
    = & \lim  \frac{\Nx^4}{(\Nx+\Ny)^4} \bigg[  \Lp \Sigma_{\LamX} \otimes r\Sigma_{\LamY} + r\Sigma_{{\LamY},t} \otimes \Sigma_{\LamX} \Rp \Xi_F \Big((\omega^{(1)}-1)\Lp \Sigma_{\LamX} \otimes r\Sigma_{\LamY} + r\Sigma_{{\LamY},t} \otimes \Sigma_{\LamX} \Rp \\
    & + (\omega^{(2)}-1)\Lp  r\Sigma_{{\LamY},t} \otimes r\Sigma_{\LamY} \Rp\Big) +  \Lp r\Sigma_{\LamY,t} \otimes r\Sigma_{\LamY}\Rp \Xi_F \Big((\omega^{(2)}-1)\Lp \Sigma_{\LamX} \otimes r\Sigma_{\LamY} + r\Sigma_{{\LamY},t} \otimes \Sigma_{\LamX} \Rp \\
    &+ (\omega^{(3)}-1)  \Lp r\Sigma_{\LamY,t} \otimes r\Sigma_{\LamY}\Rp\Big)\bigg].
\end{align*} 
It follows that
\begin{align*}
    &\sqrt{T}\cdot \text{vec}(\mathbf{X}_t^{(\gamma)}) \overset{d}{\rightarrow} \\  & \Ncal\bigg(0,\lim \frac{\Nx^4}{(\Nx+\Ny)^4}\bigg[  \Lp \Sigma_{\LamX} \otimes r\Sigma_{\LamY} + r\Sigma_{{\LamY},t} \otimes \Sigma_{\LamX} \Rp \Xi_F \Big((\omega^{(1)}-1)\Lp \Sigma_{\LamX} \otimes r\Sigma_{\LamY} + r\Sigma_{{\LamY},t} \otimes \Sigma_{\LamX} \Rp  \\&+ (\omega^{(2)}-1)\Lp  r\Sigma_{{\LamY},t} \otimes r\Sigma_{\LamY} \Rp\Big) +  \Lp r\Sigma_{\LamY,t} \otimes r\Sigma_{\LamY}\Rp \Xi_F \Big((\omega^{(2)}-1)\Lp \Sigma_{\LamX} \otimes r\Sigma_{\LamY} + r\Sigma_{{\LamY},t} \otimes \Sigma_{\LamX} \Rp\\
    & + (\omega^{(3)}-1)  \Lp r\Sigma_{\LamY,t} \otimes r\Sigma_{\LamY}\Rp\Big)\bigg]\bigg).
\end{align*} \\

\emph{Step 3 -- $\sqrt{T} \cdot \text{vec}(X_i^{(\gamma)})$ and $\sqrt{T} \cdot \text{vec}(\mathbf{X}_t^{(\gamma)})$ are jointly asymptotically normal}

\vspace{2mm}
We only consider the case where $i=\Nx+1,\cdots,\Nx+\Ny$. It is easy to see that $\sqrt{T} \cdot \text{vec}(X_i^{(\gamma)})$ and $\sqrt{T} \cdot \text{vec}(\mathbf{X}_t^{(\gamma)})$ are jointly asymptotically normal since both of their randomnesses come from $v_{ij}$. By similar arguments, we can show the existence of the following limit
\begin{align*}
\lim \frac{1}{(\Nx+\Ny)^3}\sum_{j,l,h=1}^{\Nx+\Ny} \bigg(\frac{q^Z_{ij,hl}}{q^Z_{ij}q^Z_{hl}}-1 \bigg) W^Z_{tl} \left(\Lam_l^{(\gamma)}\Lam_l^{(\gamma)\top} \otimes I_k\right)\left(I_k \otimes \Lam_h^{(\gamma)}\Lam_h^{(\gamma)\top}\right)  \Xi_F \left(I_k \otimes \Lam_j^{(\gamma)}\Lam_j^{(\gamma)\top}\right).
\end{align*} 
Observe that when $h,l=1,\cdots,\Nx$, $q^Z_{ij,hl}/(q^Z_{ij}q^Z_{hl}) = 1$; When $j,h=1,\cdots,\Nx, l =\Nx+1,\cdots,\Nx+\Ny$, $q^Z_{ij,hl}/(q^Z_{ij}q^Z_{hl}) = q_{i'l'}/(q_{i'i'}q_{l'l'})$ with $i'=i-\Nx$ and $l'=l-\Nx$; When $h=1,\cdots,\Nx, j,l=\Nx+1,\cdots,\Nx+\Ny$, $q^Z_{ij,hl}/(q^Z_{ij}q^Z_{hl}) = q_{i'j',l'l'}/(q_{i'j'}q_{l'l'})$ with $j'=j-\Nx$; When $j,h,l=\Nx+1,\cdots,\Nx+\Ny$, $q^Z_{ij,hl}/(q^Z_{ij}q^Z_{hl}) = q_{i'j',h'l'}/(q_{i'j'}q_{h'l'})$, where $h'=h-\Nx$. Other cases can be similarly considered. As a result, we have
\begin{align*}
    & \ACov\left(\sqrt{T}\cdot \text{vec}(\mathbf{X}_t^{(\gamma)}) , \sqrt{T}\cdot \text{vec}(X_i^{(\gamma)})\right)\\
    = & \lim\frac{1}{(\Nx+\Ny)^3}\sum_{j,h,l=1}^{\Nx+\Ny} W^Z_{tl}\left(\Lam_l^{(\gamma)}\Lam_l^{(\gamma)\top} \otimes I_k\right)\left(I_k\otimes \Lam_h^{(\gamma)}\Lam_h^{(\gamma)\top}\right) \Cov(v_{lh},v_{ij}) \left(I_k\otimes \Lam_j^{(\gamma)}\Lam_j^{(\gamma)\top}\right) \\
    = & \lim \frac{1}{(\Nx+\Ny)^3}\sum_{j,l,h=1}^{\Nx+\Ny} \bigg(\frac{q^Z_{ij,hl}}{q^Z_{ij}q^Z_{hl}}-1 \bigg)  \left(W^Z_{tl}\Lam_l^{(\gamma)}\Lam_l^{(\gamma)\top} \otimes I_k\right)\left(I_k \otimes \Lam_h^{(\gamma)}\Lam_h^{(\gamma)\top}\right)  \Xi_F \left(I_k \otimes \Lam_j^{(\gamma)}\Lam_j^{(\gamma)\top}\right)\\
    = & \lim\frac{\Nx^3}{(\Nx+\Ny)^3}\bigg[ \Lp\Sigma_{\LamX}\otimes r\Sigma_{\LamY} + r\Sigma_{\LamY,t}\otimes \Sigma_{\LamX}\Rp \Xi_F\Lp (\omega_{i'}^{(1)}-1)(I_k\otimes \Sigma_{\LamX}) + (\omega_{i'}^{(2,2)}-1) \big(I_k \otimes r\Sigma_{\LamY}\big)\Rp \\
    & \qquad + \Lp r\Sigma_{\LamY,t}\otimes r\Sigma_{\LamY}\Rp \Xi_F \Lp(\omega_{i'}^{(2,1)}-1)(I_k\otimes \Sigma_{\LamX}) + (\omega_{i'}^{(3)}-1) \big(I_k\otimes r\Sigma_{\LamY}\big)\Rp \bigg],
\end{align*} 
where $\omega_i^{(1)}, \omega_i^{(2,1)}, \omega_i^{(2,2)}$ and $\omega_i^{(3)}$ are defined in Assumption \ref{assump: simplified obs pattern}.
For the special case where $\omega_i^{(1)}=\omega_i^{(2,1)}=\omega_i^{(2,2)}=\omega_i^{(3)} = \omega_i,$ the asymptotic covariance matrix is simplified to
\begin{align*}
    & \ACov\left(\sqrt{T}\cdot \text{vec}(\mathbf{X}_t^{(\gamma)}) , \sqrt{T}\cdot \text{vec}(X_i^{(\gamma)})\right)\\
    = & (\omega_{i'}-1) \cdot\lim\frac{\Nx^3}{(\Nx+\Ny)^3} \Lp\Lp\Sigma_{\LamX}+r\Sigma_{\LamY,t}\Rp\otimes r\Sigma_{\LamY} + r\Sigma_{\LamY,t}\otimes \Sigma_{\LamX}\Rp\Xi_F\Lp I_k\otimes \Lp\Sigma_{\LamX} +   r\Sigma_{\LamY}\Rp\Rp.
\end{align*} 
\end{proof}

\vspace{8mm}
\subsection{Proof of Corollary \ref{thm: simplified factor model}}
\begin{proof}
According to Proposition \ref{prop: link of assumptions}, Theorem \ref{thm: asymptotic distribution} holds under the simplified assumptions in Corollary \ref{thm: simplified factor model}. As a result, we just need to calculate the asymptotic variances in Theorem \ref{thm: asymptotic distribution} under the simplified model.

1. The asymptotic variance of loadings: 

We have $\Sigma_{\Lam}^{(\gamma)} = \lim \frac{\Nx}{\Nx+\Ny}\Lp\Sigma_{\LamX}+r\Sigma_{\LamY}\Rp$ and $\Sigma_{\Lam,t}^{(\gamma)} = \lim \frac{\Nx}{\Nx+\Ny}\Lp\Sigma_{\LamX}+r\Sigma_{\LamY,t}\Rp$, where $r = \gamma\cdot \Ny/\Nx$. According to the proof of Proposition \ref{prop: link of assumptions}, $\Gamma^{(\gamma),\text{obs}}_{{\LamY},i}$ defined in Assumption \ref{assump: additional assumptions}.6 equals to
\[
\Gamma^{(\gamma),\text{obs}}_{{\LamY},i} =  \lim \frac{\Nx^2}{(\Nx+\Ny)^2} \cdot \sigma_{\eY}^2 \Ls \frac{1}{q_{ii}}\Lp\Sigma_{\LamX}+r \Sigma_{\LamY}\Rp \Sigma_F\Lp\Sigma_{\LamX}+r \Sigma_{\LamY}\Rp + \Lp\omega_i^{(2,3)}-\frac{1}{q_{ii}}\Rp r^2 \Sigma_{\LamY}\Sigma_F\Sigma_{\LamY} \Rs.
\]
As a result, the first part of $\Sigma^{(\gamma)}_{\LamY,i}$ is
\begin{align*}
    &\Sigma_F^{-1}(\Sigma_{\Lam}^{(\gamma)})^{-1}\Gamma^{(\gamma),\text{obs}}_{{\LamY},i} (\Sigma_{\Lam}^{(\gamma)})^{-1}\Sigma_F^{-1} \\ =& \frac{1}{q_{ii}}\sigma_{\eY}^2\Sigma_F^{-1} + \Lp \omega_i^{(2,3)}-\frac{1}{q_{ii}}\Rp \sigma_{\eY}^2 r^2 \Sigma_F^{-1}\Lp\Sigma_{\LamX}+r\Sigma_{\LamY}\Rp^{-1}\Sigma_{\LamY}\Sigma_F\Sigma_{\LamY}\Lp\Sigma_{\LamX}+r\Sigma_{\LamY}\Rp^{-1}\Sigma_F^{-1}.
\end{align*}

Next, consider $\Gamma_{{\LamY},i}^{(\gamma),\text{miss}}=h^{(\gamma)}_{i+\Nx}((\LamY)_i)$ defined in Theorem \ref{thm: asymptotic distribution}.1. From the proof of Proposition \ref{prop: link of assumptions}, for any $i=1,\cdots,\Ny$,
\begin{align*}
& \sqrt{T}\cdot \text{vec}(X_{i+N_x}^{(\gamma)})  \\
\overset{d}{\rightarrow} \ &  \Ncal\bigg( 0,\lim \frac{\Nx^2}{(\Nx+\Ny)^2}\bigg[\Big(\frac{1}{q_{ii}}-1\Big)\Lp I_k \otimes \Lp \Sigma_{\LamX} + r\Sigma_{\LamY}\Rp\Rp\Xi_F\Lp I_k \otimes \Lp \Sigma_{\LamX} + r\Sigma_{\LamY}\Rp\Rp\\
&\qquad\qquad\qquad\qquad\qquad+ \Big(\omega_{i}^{(2,3)}-\frac{1}{q_{ii}}\Big)\Lp I_k \otimes r \Sigma_{\LamY}\Rp\Xi_F\Lp I_k \otimes r \Sigma_{\LamY}\Rp\bigg] \bigg),
\end{align*}
where $\Xi_F= \Var(\text{vec}(F_tF_t^\top)).$ Therefore, $\Gamma_{{\LamY},i}^{(\gamma),\text{miss}}$ can be calculated as
\begin{align*}
    \Gamma_{{\LamY},i}^{(\gamma),\text{miss}}=&\lim  ((\LamY)_i^\top \otimes I_k)\E \Ls T\cdot \text{vec}(X_{i+\Nx}) \text{vec}(X_{i+\Nx})^\top\Rs ((\LamY)_i \otimes I_k) \\
    = & \lim \frac{\Nx^2}{(\Nx+\Ny)^2}\bigg[\Big(\frac{1}{q_{ii}}-1\Big)\Lp (\LamY)_i^\top \otimes \Lp \Sigma_{\LamX} + r\Sigma_{\LamY}\Rp\Rp\Xi_F\Lp (\LamY)_i \otimes \Lp \Sigma_{\LamX} + r\Sigma_{\LamY}\Rp\Rp\\
    & \ \ \quad\qquad\qquad\qquad+ \Big(\omega_{i}^{(2,3)}-\frac{1}{q_{ii}}\Big)\Lp (\LamY)_i^\top \otimes r \Sigma_{\LamY}\Rp\Xi_F\Lp (\LamY)_i \otimes r \Sigma_{\LamY}\Rp\bigg].
\end{align*}
Thus, the second part of $\Sigma^{(\gamma)}_{\LamY,i}$ is
\begin{align*}
    &\Sigma_F^{-1}(\Sigma_{\Lam}^{(\gamma)})^{-1}\Gamma^{(\gamma),\text{miss}}_{{\LamY},i} (\Sigma_{\Lam}^{(\gamma)})^{-1}\Sigma_F^{-1} \\
    = & \Lp \frac{1}{q_{ii}}-1\Rp \Sigma_F^{-1}((\LamY)_i^\top \otimes I_k) \Xi_F \Lp(\LamY)_i \otimes I_k\Rp \Sigma_F^{-1} \\
    &+ \Lp\omega_i^{(2,3)}-\frac{1}{q_{ii}}\Rp\Sigma_F^{-1}\Lp \Sigma_{\LamX} + r\Sigma_{\LamY}\Rp^{-1} \Lp (\LamY)_i^\top \otimes r \Sigma_{\LamY}\Rp\Xi_F\Lp (\LamY)_i \otimes r \Sigma_{\LamY}\Rp \Lp \Sigma_{\LamX} + r\Sigma_{\LamY}\Rp^{-1}\Sigma_F^{-1}.
\end{align*}
Combining the two parts, we can deduce that
\[
\sqrt{T}(\Sigma_{\LamY,i}^{(\gamma)})^{-1/2}\left((H^{(\gamma)})^{-1}(\tilde{\Lam}_y)_i-(\LamY)_i\right)\overset{d}{\rightarrow}\Ncal \left(0, I_k \right),
\]
where
\begin{align*}
    \Sigma_{\LamY,i}^{(\gamma)} =& \frac{1}{q_{ii}}\sigma_{\eY}^2\Sigma_F^{-1} + \Lp\frac{1}{q_{ii}}-1\Rp \Sigma_F^{-1}((\LamY)_i^\top \otimes I_k)\Xi_F ((\LamY)_i \otimes I_k)\Sigma_F^{-1} + \Lp\omega_i^{(2,3)}-\frac{1}{q_{ii}}\Rp\Sigma_F^{-1} \\
    & \Lp \Sigma_{\LamX} + r\Sigma_{\LamY}\Rp^{-1}\Ls \sigma_{\eY}^2r^2\Sigma_{\LamY}\Sigma_F\Sigma_{\LamY} + \Lp (\LamY)_i^\top \otimes r \Sigma_{\LamY}\Rp\Xi_F\Lp (\LamY)_i \otimes r \Sigma_{\LamY}\Rp \Rs \Lp \Sigma_{\LamX} + r\Sigma_{\LamY}\Rp^{-1}\Sigma_F^{-1}.
\end{align*} \\

\noindent 2. The asymptotic variance of factors:

Based on the proof of Proposition \ref{prop: link of assumptions}, we have
\begin{align*}
\Gamma_{F,t}^{(\gamma),\text{obs}}
= \lim\frac{\Nx^2}{(\Nx+\Ny)^2}\Lp\frac{\Ny}{\Nx}\sigma_{\eX}^2\Sigma_{\LamX}+ r^2\sigma_{\eY}^2\Sigma_{{\LamY},t}\Rp.    
\end{align*}
As a result, the first part of $\Sigma_{F,t}^{(\gamma)}$ can be calculated as
\begin{align*}
&(\Sigma_{{\Lam},t}^{(\gamma)})^{-1}\Gamma_{F,t}^{(\gamma),\text{obs}}(\Sigma_{{\Lam},t}^{(\gamma)})^{-1} = \lim \Lp\Sigma_{\LamX}+r\Sigma_{{\LamY},t}\Rp^{-1} \Lp\frac{\Ny}{\Nx}\sigma_{\eX}^2\Sigma_{\LamX}+ r^2\sigma_{\eY}^2\Sigma_{{\LamY},t}\Rp\Lp\Sigma_{\LamX}+r\Sigma_{{\LamY},t}\Rp^{-1}.
\end{align*}
Consider the second part where $\Gamma^{(\gamma),\text{miss}}_{F,t}=g_t^{(\gamma)}\Lp(\Sigma_\Lam^{(\gamma)})^{-1}\Sigma_F^{-1}F_t\Rp$. We have proved in Proposition \ref{prop: link of assumptions}
\begin{align*}
    &\sqrt{T}\cdot \text{vec}(\mathbf{X}_t^{(\gamma)}) \overset{d}{\rightarrow} \\  & \Ncal\bigg(0,\lim \frac{\Nx^4}{(\Nx+\Ny)^4}\bigg[  \Lp \Sigma_{\LamX} \otimes r\Sigma_{\LamY} + r\Sigma_{{\LamY},t} \otimes \Sigma_{\LamX} \Rp \Xi_F \Big((\omega^{(1)}-1)\Lp \Sigma_{\LamX} \otimes r\Sigma_{\LamY} + r\Sigma_{{\LamY},t} \otimes \Sigma_{\LamX} \Rp  \\&+ (\omega^{(2)}-1)\Lp  r\Sigma_{{\LamY},t} \otimes r\Sigma_{\LamY} \Rp\Big) +  \Lp r\Sigma_{\LamY,t} \otimes r\Sigma_{\LamY}\Rp \Xi_F \Big((\omega^{(2)}-1)\Lp \Sigma_{\LamX} \otimes r\Sigma_{\LamY} + r\Sigma_{{\LamY},t} \otimes \Sigma_{\LamX} \Rp\\
    & + (\omega^{(3)}-1)  \Lp r\Sigma_{\LamY,t} \otimes r\Sigma_{\LamY}\Rp\Big)\bigg]\bigg).
\end{align*}
Therefore, we have
\begin{align*}
&\Gamma^{(\gamma),\text{miss}}_{F,t} \\
=& \lim  \Lp I_k \otimes F_t^\top\Sigma_{F}^{-1}(\Sigma_\Lam^{(\gamma)})^{-1}\Rp\E\Ls T\cdot\text{vec}(\mathbf{X}_t^{(\gamma)})\text{vec}(\mathbf{X}_t^{(\gamma)})^\top\Rs \Lp I_k \otimes (\Sigma_\Lam^{(\gamma)})^{-1}\Sigma_{F}^{-1}F_t\Rp\\
= & \lim  \frac{\Nx^2}{(\Nx+\Ny)^2}  \Lp I_k \otimes F_t^\top\Sigma_{F}^{-1}(\Sigma_{\LamX}+ r \Sigma_{\LamY})^{-1}\Rp \bigg[  \Lp \Sigma_{\LamX} \otimes r\Sigma_{\LamY} + r\Sigma_{{\LamY},t} \otimes \Sigma_{\LamX} \Rp \Xi_F \Big((\omega^{(1)}-1)\\
&\Lp \Sigma_{\LamX} \otimes r\Sigma_{\LamY} + r\Sigma_{{\LamY},t} \otimes \Sigma_{\LamX} \Rp + (\omega^{(2)}-1)\Lp  r\Sigma_{{\LamY},t} \otimes r\Sigma_{\LamY} \Rp\Big) +  \Lp r\Sigma_{\LamY,t} \otimes r\Sigma_{\LamY}\Rp \Xi_F \Big((\omega^{(2)}-1)\\
& \Lp \Sigma_{\LamX} \otimes r\Sigma_{\LamY} + r\Sigma_{{\LamY},t} \otimes \Sigma_{\LamX} \Rp+ (\omega^{(3)}-1)  \Lp r\Sigma_{\LamY,t} \otimes r\Sigma_{\LamY}\Rp\Big)\bigg]
\Lp I_k \otimes (\Sigma_{\LamX}+r\Sigma_{\LamY})^{-1}\Sigma_{F}^{-1}F_t\Rp.
\end{align*}
If all the factors in $F_y$ are strong factors in $Y$, combining the two parts, we have
\[
\sqrt{\delta_{\Ny,T}} (\Sigma_{F,t}^{(\gamma)})^{-1/2} \Lp H^{(\gamma)\top} \tilde{F}_t -F_t \Rp\overset{d}{\rightarrow} \Ncal(0,I_k),
\]
where 
\begin{align*}
    \Sigma_{F,t}^{(\gamma)} = & \Lp\Sigma_{\LamX}+r\Sigma_{{\LamY},t}\Rp^{-1}\Bigg[\frac{\delta_{\Ny T}}{\Ny}\Lp\frac{\Ny}{\Nx}\sigma_{\eX}^2\Sigma_{\LamX}+ r^2\sigma_{\eY}^2\Sigma_{{\LamY},t}\Rp + \frac{\delta_{\Ny T}}{T} \Lp I_k \otimes F_t^\top\Sigma_{F}^{-1}\big(\Sigma_{\LamX}+ r \Sigma_{\LamY}\big)^{-1}\Rp \\
    &\bigg[  \Lp \Sigma_{\LamX} \otimes r\Sigma_{\LamY} + r\Sigma_{{\LamY},t} \otimes \Sigma_{\LamX} \Rp \Xi_F \Big((\omega^{(1)}-1)\Lp \Sigma_{\LamX} \otimes r\Sigma_{\LamY} + r\Sigma_{{\LamY},t} \otimes \Sigma_{\LamX} \Rp + (\omega^{(2)}-1)\\
    &\Lp  r\Sigma_{{\LamY},t} \otimes r\Sigma_{\LamY} \Rp\Big) +  \Lp r\Sigma_{\LamY,t} \otimes r\Sigma_{\LamY}\Rp \Xi_F \Big((\omega^{(2)}-1) \Lp \Sigma_{\LamX} \otimes r\Sigma_{\LamY} + r\Sigma_{{\LamY},t} \otimes \Sigma_{\LamX} \Rp+ (\omega^{(3)}-1) \\
    &\Lp r\Sigma_{\LamY,t} \otimes r\Sigma_{\LamY}\Rp\Big)\bigg]
    \Lp I_k \otimes (\Sigma_{\LamX}+r\Sigma_{\LamY})^{-1}\Sigma_{F}^{-1}F_t\Rp\Bigg]\Lp\Sigma_{\LamX}+r\Sigma_{{\LamY},t}\Rp^{-1}.
\end{align*}
If some factor $F_w$ is weak in $Y$, then the asymptotic distribution of the estimation of this weak factor is
\[
\sqrt{\delta_{N_w,T}} (\Sigma_{F_w,t}^{(\gamma)})^{-1/2} \Lp (H^{(\gamma)\top} \tilde{F}_t)_w -F_{t,w} \Rp\overset{d}{\rightarrow} \Ncal(0,I_k).
\]
We have
\begin{align*}
    \Sigma_{F_w,t}^{(\gamma)}  = &  (\Sigma_{\LamX}+r\Sigma_{{\LamY},t})_w^{-1}\Ls\frac{\delta_{N_w,T}}{N_w}\Lp\frac{N_w}{\Nx}\sigma_{\eX}^2\Sigma_{\LamX,w}+ r^2\frac{p_w N_w}{\Ny}\sigma_{\eY}^2\Sigma_{{\LamY},t,w}\Rp \Rs (\Sigma_{\LamX}+r\Sigma_{{\LamY},t})^{-1}_w \\  & + \frac{\delta_{N_w,T}}{T}\cdot \Sigma_{F_w,t}^{(\gamma),miss},
\end{align*}
where $N_w = \min(\Ny/p_w, \Nx)$,
$\frac{1}{\Ny p_w}\sum_{i=1}^{\Ny}W^Y_{ti}(\LamY)_{i,w}(\LamY)_{i,w}^\top \overset{p}{\rightarrow} \Ncal(0,\Sigma_{{\LamY},t,w})$, $(\Sigma_{\LamX}+r\Sigma_{{\LamY},t})_w^{-1}$, $\Sigma_{\LamX,w}$ and $\Sigma^{(\gamma),\text{miss}}_{F_{w},t}$ are respectively the diagonal block in $(\Sigma_{\LamX}+r\Sigma_{{\LamY},t})^{-1}$, $\Sigma_{\LamX}$ and $\Sigma^{(\gamma),\text{miss}}_{F,t}$ corresponding to the weak factors. \\

\noindent 3. The asymptotic variance of common components:

For $\Gamma_{\LamY,F,i,t}^{(\gamma),\text{miss},\text{cov}} = g_{i,t}^{(\gamma),\text{cov}}\big((\LamY)_i, (\Sigma_\Lam^{(\gamma)})^{-1}\Sigma_F^{-1}F_t\big)$, define $\Psi_{i,t}^{(\gamma),\text{cov}} = \lim \EE\Ls T\cdot \text{vec}(X_{i+\Nx}^{(\gamma)})\text{vec}(\textbf{X}_t^{(\gamma)})^\top\Rs$ with $i=1,\cdots,\Ny$. It holds that $\Gamma_{\LamY,F,i,t}^{(\gamma),\text{miss},\text{cov}} = \big(I_k \otimes F_t^\top\Sigma_{F}^{-1}\big(\Sigma_\Lam^{(\gamma)}\big)^{-1}\big)\Psi_{i,t}^{(\gamma),{\rm cov}} ((\LamY)_i \otimes I_k)$. For general $\omega_i^{(1)}, \omega_i^{(2,1)}, \omega_i^{(2,2)}$ and $\omega_i^{(3)}$, we have
\begin{align*}
    \Psi_{i,t}^{(\gamma),{\rm cov}} = & \lim \frac{\Nx^3}{(\Nx+\Ny)^3}\bigg[ \Lp\Sigma_{\LamX}\otimes r\Sigma_{\LamY} + r\Sigma_{\LamY,t}\otimes \Sigma_{\LamX}\Rp\Xi_F\Big( (\omega_i^{(1)}-1)(I_k\otimes \Sigma_{\LamX}) + (\omega_i^{(2,2)}-1)  \\
    &(I_k \otimes r\Sigma_{\LamY})\Big)  + (r\Sigma_{\LamY,t}\otimes r\Sigma_{\LamY}) \Xi_F \Big((\omega_i^{(2,1)}-1)(I_k\otimes \Sigma_{\LamX}) + (\omega_i^{(3)}-1) (I_k\otimes r\Sigma_{\LamY})\Big) \bigg].
\end{align*}
As a result, there is
\begin{align*}
    &\Sigma_{\LamY,F,i,t}^{(\gamma),{\rm miss,cov}} \\= & (\Sigma_{{\Lam},t}^{(\gamma)})^{-1}\Gamma_{\LamY,F,i,t}^{(\gamma),\text{miss},\text{cov}}(\Sigma_{\Lam}^{(\gamma)})^{-1}\Sigma_F^{-1} \\
    = &\Lp\Sigma_{\LamX}+r\Sigma_{\LamY,t}\Rp^{-1} \Lp I_k \otimes F_t^\top\Sigma_{F}^{-1}\big(\Sigma_{\LamX}+ r \Sigma_{\LamY}\big)^{-1}\Rp\Big[ \Lp\Sigma_{\LamX}\otimes r\Sigma_{\LamY} + r\Sigma_{\LamY,t}\otimes  \Sigma_{\LamX}\Rp\Xi_F \\
    &\Big( (\omega_i^{(1)}-1)((\LamY)_i\otimes \Sigma_{\LamX}) + (\omega_i^{(2,2)}-1) ((\LamY)_i \otimes r\Sigma_{\LamY})\Big)+ (r\Sigma_{\LamY,t}\otimes r\Sigma_{\LamY})\Xi_F\\ &   \Lp(\omega_i^{(2,1)}-1)((\LamY)_i\otimes \Sigma_{\LamX}) + (\omega_i^{(3)}-1) \Lp(\LamY)_i\otimes r\Sigma_{\LamY}\Rp\Rp \Big]\Lp\Sigma_{\LamX}+r\Sigma_{\LamY}\Rp^{-1}\Sigma_F^{-1}.
\end{align*}
In the special case where $\omega_i^{(1)}=\omega_i^{(2,1)}=\omega_i^{(2,2)}=\omega_i^{(3)} = \omega_i,$ $\Sigma_{\LamY,F,i,t}^{(\gamma),{\rm miss,cov}}$ can be simplified as
\begin{align*}
    \Sigma_{\LamY,F,i,t}^{(\gamma),{\rm miss,cov}} = & (\omega_i-1)r \Lp\Sigma_{\LamX}+r\Sigma_{{\LamY},t}\Rp^{-1} \Big[\Lp I_k \otimes F_t^\top\Sigma_{F}^{-1}\big(\Sigma_{\LamX}+ r \Sigma_{\LamY}\big)^{-1}\Rp( \Sigma_{\LamX} \otimes \Sigma_{\LamY})+ \\
    &\Sigma_{{\LamY},t}\otimes (F_t^\top \Sigma_F^{-1})  \Big] \Xi_F \Lp (\LamY)_i \otimes I_k\Rp\Sigma_F^{-1}.
\end{align*}
This completes the proof.
\end{proof}

\vspace{8mm}
\subsection{Proof of Proposition \ref{prop: consistency example}}
\begin{proof}
We prove Proposition \ref{prop: consistency example} as a special case of the general Theorem \ref{thm: consistency for loadings}. In the following, we prove that the assumptions in Proposition \ref{prop: consistency example} imply the general model specified by Assumptions \ref{assump: factor shift}, \ref{assump: obs pattern}, \ref{assump: factor model} and \ref{assump: additional assumptions}.

Under the data generating process described in Section \ref{subsec: consistency effect}, the first factor is a strong factor in target $Y$ and is not contained in auxiliary panel $X$, so Assumption \ref{assump: factor shift} holds. Since there is no missing observation in $Y$, Assumption \ref{assump: obs pattern} automatically holds. The factors $F_t \overset{i.i.d.}{\sim} (0,\Sigma_F)$, where $\Sigma_F$ is a diagonal matrix with diagonal elements equal to $\sigma_F^2$. The loadings in $X$ follow
\[
\frac{1}{\Nx}\sum_{i=1}^{\Ny}(\LamX)_i (\LamX)_i^\top \overset{p}{\rightarrow} \Sigma_{\LamX} = \begin{bmatrix}
    0 & 0 \\
    0 & \sigma_{\LamX}^2
\end{bmatrix}.
\]
And for loadings in $Y$, we have
\[
\frac{1}{\Ny}\sum_{i=1}^{\Ny} (\LamY)_i (\LamY)_i^\top \overset{p}{\rightarrow} \Sigma_{\LamY} = \begin{bmatrix}
    \sigma_{\LamY}^2 & 0 \\
    0 & 0
\end{bmatrix}.
\]
Thus, $\Sigma_{\LamX} + \Sigma_{\LamY}$ is a positive definite matrix. Additionally, the factors and loadings have bounded fourth moments. These conditions imply Assumptions \ref{assump: factor model}.1 and \ref{assump: factor model}.2. In the data generating process in Section \ref{subsec: consistency effect}, the idiosyncratic errors have bounded eighth moments and are drawn i.i.d. from $(\eX)_{ti}\overset{i.i.d.}{\sim}(0,\sigma_{\eX}^2)$ and $(\eY)_{ti}\overset{i.i.d.}{\sim}(0,\sigma_{\eY}^2)$. As a result, Assumptions \ref{assump: factor model}.3 and \ref{assump: factor model}.4 hold. It is straightforward to show that the moment conditions in Assumption \ref{assump: additional assumptions} hold under this model.
\end{proof}

\vspace{8mm}
\subsection{Proof of Proposition \ref{prop: efficiency example}}
\begin{proof}
The data generating process and observation pattern in Section \ref{subsec: efficiency effect} are a special case of the simplified factor model in Assumptions \ref{assump: simplified factor model} and \ref{assump: simplified obs pattern}. 

Since factors, loadings, and idiosyncratic errors are i.i.d. distributed, we automatically obtain Assumption \ref{assump: simplified factor model}. Under the missing-at-random observation pattern with observed probability $p$, we can show that the quantities in Assumption \ref{assump: simplified obs pattern} satisfy $\omega_i^{(2,3)} = 1/p$, $\omega_i^{(1)} = \omega_i^{(2,1)} = \omega_i^{(2,2)} = \omega_i^{(3)} = 1,$ and $\omega^{(1)} = \omega^{(2)} = \omega^{(3)} = 1$. Plugging these parameters into the asymptotic variance of the estimated common components of $Y$ in Corollary \ref{thm: simplified factor model}.3, we have
\begin{align*}
\Sigma_{C,ti}^{(\gamma)} = & \frac{\delta_{{\Ny}T}}{T}\frac{\sigma_{\eY}^2}{p\sigma_F^2}F_t^2 + \frac{\delta_{{\Ny}T}}{T}\left(\frac{1}{p}-1\right) \sigma_F^{-4}\Var(F_t^2)({\LamY})_i^2F_t^2 \\
& + \frac{\delta_{{\Ny}T}}{{\Ny}}({\LamY})_i^2\left(\sigma_{\LamX}^2 + \gamma\frac{{\Ny}}{{\Nx}} p\sigma_{\LamY}^2\right)^{-2} \left(\frac{{\Ny}}{{\Nx}}\sigma_{\LamX}^2\sigma_{\eX}^2 + \gamma^2\frac{\Ny^2}{\Nx^2} p\sigma_{\LamY}^2\sigma_{\eY}^2 \right).
\end{align*}
The partial derivative of $\Sigma_{C,ti}^{(\gamma)}$ with respect to $\gamma$ equals to
\begin{align*}
\frac{\partial \Sigma_{C,ti}^{(\gamma)}}{\partial \gamma} = &\frac{\delta_{{\Ny}T}}{{\Ny}}({\LamY})_i^2\left(\sigma_{\LamX}^2 + \gamma\frac{{\Ny}}{{\Nx}} p\sigma_{\LamY}^2\right)^{-2} \Bigg[ -2\frac{\Ny}{\Nx}p\sigma_{\LamY}^2 \left(\sigma_{\LamX}^2 + \gamma\frac{{\Ny}}{{\Nx}} p\sigma_{\LamY}^2\right)^{-1} \\ &\cdot \left(\frac{{\Ny}}{{\Nx}}\sigma_{\LamX}^2\sigma_{\eX}^2 + \gamma^2\frac{\Ny^2}{\Nx^2} p\sigma_{\LamY}^2\sigma_{\eY}^2 \right) + 2\gamma \frac{\Ny^2}{\Nx^2} p\sigma_{\LamY}^2\sigma_{\eY}^2 \Bigg].
\end{align*}
We let $\partial \Sigma_{C,ti}^{(\gamma)}/ \partial \gamma = 0$ and obtain $\gamma^* = \sigma_{\eX}^2/\sigma_{\eY}^2$. Furthermore, $\partial^2 \Sigma_{C,ti}^{(\gamma^*)}/ \partial \gamma^2 > 0$. As a result, the optimized $\gamma$ that minimizes $\Sigma_{C,ti}^{(\gamma)}$ is $\gamma^* = \sigma_{\eX}^2/\sigma_{\eY}^2$ for any $i$ and $t$. This completes the proof.
\end{proof}

\vspace{8mm}
\subsection{Proof of Proposition \ref{prop: finite Y 1}}
\begin{proof}
    If we select $\gamma = r$ for some constant $r$, then target-PCA degenerates to the PCA estimator in \cite{xiong2019large} applied only to auxiliary data $X$. Therefore, when all the factors can be identified in $X$, Theorem \ref{thm: consistency for loadings} holds with convergence rate $\delta_{\Nx,T} = \min(\Nx,T)$, Theorem \ref{thm: asymptotic distribution}.2 holds with convergence rate $\sqrt{\delta_{\Nx,T}}$, and the asymptotic variance is independent of $Y$.
\end{proof}

\vspace{8mm}
\subsection{Proof of Proposition \ref{prop: finite Y 2}}
\begin{proof}
    The proof of Proposition \ref{prop: finite Y 2} is analogous to the proof of Theorems \ref{thm: consistency for loadings} and \ref{thm: asymptotic distribution} and follows the same arguments. Let $\delta_{\Nx,T,M} = \min(\delta_{\Nx,T},M)$. When $M\rightarrow \infty$ and $\gamma = r\cdot \Nx$ for some constant $r$, the upper bound for $(\Nx+\Ny)^{-2}\sum_{i,j=1}^{\Nx+\Ny} \gamma^2(i,j)$ in Lemma \ref{lemma: order of the four terms} changes to $C/\delta_{\Nx,T,M}$, while Lemmas \ref{lemma: D} and \ref{lemma: H,Q} remain the same. As a result, Theorem \ref{thm: consistency for loadings} holds with convergence rate $\delta_{\Nx,T,M}$. For Theorem \ref{thm: asymptotic distribution}, we replace the convergence rate $\delta_{\Ny,T}$ in Lemma \ref{lemma: small order terms} by $\delta_{\Nx,T,M}$. With modified assumptions, we can prove the statements of Theorem \ref{thm: asymptotic distribution}.2 with convergence rate $\sqrt{\delta_{\Nx,T,M}}$.
\end{proof}

\vspace{8mm}
\subsection{Proof of Proposition \ref{thm: asymptotic distribution weak signals}}
\begin{proof}
The proof of Proposition \ref{thm: asymptotic distribution weak signals} is analogous to the proof of Theorem \ref{thm: asymptotic distribution}. Specifically, when the number of time periods, for which any two and four units are observed, is proportional to $T^{\alpha}$, the rates based on $T$ in Lemmas \ref{lemma: order of the four terms}, \ref{lemma: order of 4 terms factor first} and \ref{lemma: small order terms} will simply be replaced by $T^\alpha$, while Lemmas \ref{lemma: D}, \ref{lemma: H,Q} and \ref{lemma: HDH} continue to hold. This will change the convergence rate in Theorem \ref{thm: consistency for loadings} to $\delta_{N_y, T^\alpha} = \min(\Ny, T^\alpha)$ and the convergence rate in Theorem \ref{thm: asymptotic distribution} to $\sqrt{T^\alpha}$ for the loadings of $Y$, and $\sqrt{\delta_{N_y, T^\alpha}}$ for the factors and common components of $Y$.
\end{proof}

\end{onehalfspacing}

\singlespacing
\bibliographystyle{econometrica}
{\small
\bibliography{reference}
}
\onehalfspacing